\patchcmd{\thebibliography}{\section*{\refname}}{}{}{}
\newcommand{\floor}[1]{\lfloor #1 \rfloor}
\newcommand*\diff{\mathop{}\!\mathrm{d}}
\newtheoremstyle{theoremstyle}
  {3pt} 
  {3pt} 
  {\em} 
  {} 
  {\bfseries} 
  {.} 
  {.5em} 
  {} 
\theoremstyle{definition}
\newtheorem{definition}{Definition}[section]
\newtheorem{theorem}[definition]{Theorem}
\newtheorem{proposition}{Proposition}[section]
\newtheorem{remark}{Remark}[section]
\newtheorem{lemma}{Lemma}[section]
\newtheorem{notation}{Notation}[section]
\newtheorem{corollary}{Corollary}[section]
\newtheorem{assumption}{Assumption}[section]
\numberwithin{equation}{section}
\newcounter{mysubequations}
\newcommand*\samethanks[1][\value{footnote}]{\footnotemark[#1]}
\providecommand{\keywords}[1]
{
  \small	
  \textbf{\textit{Keywords:}} #1
}
\providecommand{\mscclass}[1]
{
  \small	
  \textbf{\textit{MSC2020 classification:}} #1
}
\newtcbox{\mymath}[1][]{
    nobeforeafter, math upper, tcbox raise base,
    enhanced, colframe=black,
     standard jigsaw,
    opacityback=0,
     boxrule=0.5pt,
    #1}
\title{Approximation Rates for Deep Calibration of (Rough) Stochastic Volatility Models}
\author{
Francesca Biagini\thanks{Workgroup Financial and Insurance Mathematics, Department of Mathematics, Ludwig-Maximilians Uni-
versit\"at, Theresienstrasse 39, 80333 Munich, Germany. Emails: biagini@math.lmu.de, walter@math.lmu.de.} \and Lukas Gonon\thanks{Department of Mathematics, Imperial College London, London, SW7 1NE UK. Email:
l.gonon@imperial.ac.uk.} \and Niklas Walter\samethanks[1]\\
}
\date{\today}
\begin{document}

\maketitle

\begin{abstract}
    We derive quantitative error bounds for deep neural networks (DNNs) approximating option prices on a $d$-dimensional risky asset as functions of the underlying model parameters, payoff parameters and initial conditions. We cover a general class of stochastic volatility models of Markovian nature as well as the rough Bergomi model. In particular, under suitable assumptions we show that option prices can be learned by DNNs up to an arbitrary small error $\varepsilon \in (0,1/2)$ while the network size grows only sub-polynomially in the asset vector dimension $d$ and the reciprocal $\varepsilon^{-1}$ of the accuracy. Hence, the approximation does not suffer from the curse of dimensionality. As quantitative approximation results for DNNs applicable in our setting are formulated for functions on compact domains, we first consider the case of the asset price restricted to a compact set, then we extend these results to the general case by using convergence arguments for the option prices.
\end{abstract}


\keywords{deep neural network, volatility modelling, rough volatility, calibration, expression rate, curse of dimensionality, function approximation}

\mscclass{60H35, 68T07, 91G60}



\section{Introduction}
In this paper, we derive quantitative error bounds for the approximation of option pricing functions using deep neural networks (DNNs). In particular, we extend results in \cite{Elbrächter2022, Gonon2021a} to a calibration setting, establishing a mathematical relationship between a target accuracy and the corresponding size of the approximating network.\par 

The calibration of parametric financial models to available market data is a classical problem considered in financial mathematics and one of the main tasks faced by the industry. It can be seen as a reverse optimisation problem aiming to find parameter values such that the model output is as close as possible to what can be observed in the market. Attaining its solution usually requires multiple price simulations for different input model parameters or efficient approximate valuation techniques. As a consequence, the procedure might lead to extensive computation times restricting the use of complex models or the application to high-dimensional products like basket options. Our contribution broadens the existing body of research on applying machine learning techniques to improve the calibration process, providing a solid quantitative rationale for its application.\par 

Due to the use of deep learning techniques, this area is often referred to as \textit{deep calibration}. The concept itself can be divided into two main streams of ideas. The first one relies on learning the relationship between observable market quotes and the optimal model parameters directly. This approach was applied in the pioneering paper \cite{Hernandez2016} to calibrate the popular Hull and White short rate model introduced in \cite{Hull1990}, and also in \cite{Dimitroff2018} where convolutional neural networks were used to learn the optimal model parameters. One disadvantage of this method is that the corresponding map can possibly be quite irregular. The second approach consists in a two step procedure, see \cite{Bayer2018, Buechel2020, Horvath2020} and \cite{Liu2019}. In a first step, the relationship between the corresponding model parameters and the option prices of the underlying asset is learned by a neural network. Once the network is trained, it provides a deterministic function which can then be calibrated by classical methods. While \cite{Horvath2020} employs convolutional neural networks for approximating the pricing function, we follow the approach from \cite{Buechel2020} and \cite{Liu2019} and use feedforward neural networks to be more flexible with respect to the dimension of the output prices. \par 

More precisely, consider a $d$-dimensional risky asset price process $X^{x,v,\theta}$ with stochastic volatility process $V^{v,\theta}$ for a finite time horizon $T>0$ under a risk-neutral pricing measure $\mathbb{Q}$, where $x,v$ are the initial values of $X^{x,v,\theta}$ and $V^{v,\theta}$, respectively. In particular, we suppose that $X^{x,v,\theta}$ and $V^{v,\theta}$ are described by some parameters $\theta$ lying in a space $\Theta$ depending on the underlying process' dynamics. Following the two-step approach for model calibration described above, the core problem is to approximate $\mathbb{E}[\varphi(X^{x,v,\theta}_T,K)]$ as a function of $(x, v, \theta, K)$ by a DNN, where $\varphi$ is a payoff functional and $K$ some payoff parameter. We show that for any suitable probability measure $\mu^d$, appropriate constants $\mathfrak{C},\mathfrak{p},\mathfrak{q}>0$ and arbitrary target accuracy $\varepsilon\in(0,1/2)$ there exists a DNN $U_{\varepsilon,d,m}$ such that the $L^2(\mu^d)$-approximation error is at most $\varepsilon$ while the size of the DNN is bounded from above by $\mathfrak{C} d^{\mathfrak{p}} \varepsilon^{-\mathfrak{q}}$. In particular, the latter shows that the approximation does not suffer from the curse of dimensionality (CoD), meaning that the DNN size grows at most polynomially in the dimension of the risky assets, parameters and the inverse of the error rate.

We start by considering a Markovian stochastic volatility model.  Over the years a broad class of such models has been established including the popular Heston model \cite{Heston1993}, SABR model \cite{Hagan2002} and GARCH model \cite{Bollerslev1986}. For a comprehensive survery we refer to \cite{Shephard2005}. We start by modelling the process $(X^{x,v},V^v)$ in a general setting by using a stochastic differential equation (SDE) with measurable coefficient function satisfying a Lipschitz condition and being of sub-linear growth. This type of SDE was studied intensively in the literature and we refer to \cite{Applebaum2009, Oksendal2003} and \cite{Protter2005} for the existence and uniqueness of solutions. We apply these results to a possibly high-dimensional Heston model which allows a complex cross-correlation structure. It is worth noting that the square root coefficient function in the Heston model does not satisfy a Lipschitz assumption. Therefore, we introduce a stopping time ensuring that the variance process is bounded away from zero so that the prior results for the approximation of the pricing function are applicable. We are able to address the general case by providing a convergence result. In the last part we prove similar DNN approximation results for a version of the rough Bergomi model introduced in \cite{Bayer2015}. The latter falls into the class of so called rough volatility models. Those are characterised by their non-Markovian form and were developed because of the observation that high-frequency data of log-volatility behaves like a  fractional Brownian motion (fBM) with Hurst index $H<1/2$, see \cite{Gatheral2018}. For a technical introduction to fBMs we refer to \cite{Biagini2008} and \cite{Nualart2009}. Besides the rough Bergomi model, especially the rough Heston model \cite{ElEuch2018,ElEuch2018Hedge,ElEuch2019} and rough SABR model \cite{Akahori2017, Fukasawa2022} gained popularity in the recent years.

Our work builds on quantitative approximation results for DNNs with the ReLU activation function. For a general introduction to DNNs we refer to \cite{Goodfellow2016}. First studies about the approximation property of DNNs can be found in \cite{Cybenko1989,Hornik1989,Mhaskar1993}, normally referred to as \textit{universal approximation theorems}. Quantitative results establishing a connection between the size and depth of a DNN necessary to achieve a desired accuracy have been obtained later on for different classes of functions to approximate, see for example \cite{Barron1993,Mhaskar1993,Yarotsky2017}. In \cite{Gühring2020,Yarotsky2017} upper bounds for size and depth are presented for DNNs with the ReLU activation function for the Sobolev spaces $\mathcal{W}^{n,p}([0,1]^d)$ with $1\leq p \leq \infty$. For further studies on different function spaces, we refer to \cite{Elbrachter2021}. In \cite{Mhaskar2016} such expression rates are established for convolutional neural networks, and \cite{Bölcskei2017} obtains a link between neural network approximation rates and the complexity of function classes in $L^2(\mathbb{R}^d)$. Besides the approximation property of DNNs for different classes of functions, in the recent years many works focused on approximating solutions to partial differential equations (PDEs) without suffering from the CoD. In particular, we refer to \cite{Elbrächter2022,Gonon2021,Grohs2023,Hutzenthaler2020,Kutyniok2022,Reisinger2020} and the survey done in \cite{Beck2023}.

While the idea of using deep learning techniques for the calibration of stochastic volatility models was already considered before in \cite{Bayer2018, Buechel2020, Dimitroff2018, Hernandez2016, Horvath2020, Liu2019}, these works focus on the implementation of the approximation methods. In this paper, we derive explicit error bounds for the approximation of the parametric pricing function. In the first part of the paper we build upon the findings in \cite{Gonon2021,Reisinger2020} to analyse a general stochastic volatility model and incorporate the corresponding model parameters as input variables for the approximating DNN. Moreover, we apply the results to a high-dimensional Heston model to highlight their range of applicability. Going one step further, we also develop similar results for the rough Bergomi model from  \cite{Bayer2015} to cover rough volatility dynamics driven by fractional processes. In particular, we demonstrate how the approximation scheme for Brownian semistationary processes proposed in \cite{Bennedsen2017} can be used to develop a pathwise DNN approximation.
                                          
The paper is structured as follows. In Section \ref{Section3}, we consider a general stochastic volatility model for the risky asset process $X^{x,v}$ under a pricing measure $\mathbb{Q}$, where the coefficient functions satisfy some Lipschitz and sub-linear growth condition. In Section \ref{Section4} we apply the results from Section \ref{Section3} to a possibly high-dimensional Heston model. The non-Lipschitz behaviour of the square-root function is addressed by first introducing a stopping time and studying the approximation results for a stopped process. A martingale convergence result states that the same results hold asymptotically when the stopping time goes to infinity. In the last Section \ref{Section5} we consider the rBergomi model to show that DNN approximation results can also be obtained for a rough volatility models. In Section \ref{Section2}, we give a brief introduction to DNNs and the relevant mathematical notation and properties. 


\section{Deep Calibration for Stochastic Volatility Models with global Lipschitz Coefficients}\label{Section3}
Consider a finite time horizon $T>0$. Let $(\Omega,\mathcal{F},\mathbb{F}, \mathbb{Q})$ be a filtered probability space, where $\mathbb{F}=(\mathcal{F}_t)_{t\in[0,T]}$ is assumed to satisfy the usual condition of right-continuity and completeness. Let $W = (W_t)_{t\in[0,T]}$ and $B = (B_t)_{t\in[0,T]}$ be two possibly correlated $r$-dimensional standard $\mathbb{F}$-Brownian motions with correlation matrix $\rho = (\rho_{i,j})_{i,j=1,...,r}$, $\rho_{i,j}\in(-1,1)$. We consider a $d$-dimensional discounted stock price process $X^{x,v} = (X^{x,v}_t)_{t\in[0,T]}$ and a $d$-dimensional variance process $V^v = (V^v_t)_{t\in[0,T]}$ such that
\begin{equation}\label{ModelDynamics}
    \begin{cases}
        \diff{X}^{x,v}_t = \sigma^\theta(X^{x,v}_t, V^{v}_t)\diff{W}_t,\; X^{x,v}_0 = x,
        \\ \diff{V}^v_t = \bar{\mu}^\theta(V^{v}_t)\diff{t} + \bar{\sigma}^\theta(V^{v}_t)\diff{B}_t,\; V^v_0 = v,
        \\ \diff{\langle B^i,W^j \rangle}_t = \rho_{i,j} \diff{t},\; i,j = 1,...,d,
    \end{cases}
\end{equation}

where $x\in\mathbb{R}^d$, $x_i\geq 0$ for all $i=1,...,d$, $v\in\mathbb{R}^d$, $v_i\geq 0$ for all $i=1,...,d$ and for some measurable coefficient functions $\bar{\mu}^\theta:\mathbb{R}^d \rightarrow \mathbb{R}^d$, $\sigma^\theta:\mathbb{R}^d\times \mathbb{R}^d \rightarrow \mathbb{R}^{d\times r}$ and $\bar{\sigma}^\theta:\mathbb{R}^d \rightarrow \mathbb{R}^{d\times r}$. The superscript $\theta$ indicates the dependence of the coefficient functions on some parameter vector 
\begin{equation}\label{ParameterVector}
	\theta\in\Theta\subseteq\mathbb{R}^{md}
\end{equation}

for some $m\in\mathbb{N}$. To simplify the notation we omit the dependence on $\theta$ in $X^{x,v}$ and $V^v$. We call $\Theta$ the parameter space of the model. Note that the dimension of $\Theta$ grows proportionally to the state space dimension $d$. We introduce the variable 
$$
    \bar{d} := \max(d,r),
$$
which captures the complexity of the SDE \eqref{ModelDynamics}. The goal of this paper is to construct a DNN approximating prices of European options on the risky asset $X^{x,v}$ as a function of the model parameters, the initial value of the stock price and the variance process. In particular, we show that the DNN does not suffer under the curse of dimensionality (CoD), which means that its size grows at most polynomially in $\bar{d}$.

\begin{remark}
Note that we work directly under the assumption that the discounted stock price is a local martingale under $\mathbb{Q}$. However, our approach holds without loss of generality in a more general setting when $X^{x,v}$ is assumed to be a semimartingale under suitable assumptions on its drift under a physical measure $\mathbb{P}$.
\end{remark}

The following regularity assumptions for the coefficient functions guarantee existence and uniqueness of the solutions of the stochastic differential equation \eqref{ModelDynamics} and strong convergence of the Euler-Maruyama scheme introduced in the next section.
\begin{assumption}[Structural Assumptions]\label{StrucAssumptionV}
    There exist constants $L, K_1, K_2 > 0$ not depending on $d,r\in\mathbb{N}$ and $\theta\in\Theta$ such that
    \begin{enumerate}
        \item For all $x,y,u,v\in\mathbb{R}^d$ it holds
        \begin{align}
            \Vert \sigma^\theta(x,u) - \sigma^\theta(y,v)\Vert^2_F + \Vert \bar{\mu}^\theta(u) - \bar{\mu}^\theta(v) \Vert^2 \nonumber + \Vert \bar{\sigma}^\theta(u) - \bar{\sigma}^\theta(v)\Vert^2_F \leq L\left( \Vert x-y \Vert^2 + \Vert u-v \Vert^2 \right),
        \end{align}
        
        where $\Vert \cdot \Vert_F$ denotes the Frobenius norm.
        \item For all $x,v\in\mathbb{R}^d$, $i\in\lbrace 1,...,d \rbrace$ and $j\in\{1,...,r\}$ it holds
        \begin{equation}
           \vert \sigma^\theta_{i,j}(x,v)\vert^2 \leq K_1(1 + \Vert x\Vert^2 + \Vert v \Vert^2)
        \end{equation}
        
        and
        \begin{equation}
            \vert \bar{\mu}^\theta_i(v) \vert^2 + \vert \bar{\sigma}^\theta_{i,j}(v)\vert^2 \leq K_2(1 + \Vert v \Vert^2).
        \end{equation}

    \end{enumerate}
\end{assumption}

Under Assumption \ref{StrucAssumptionV} the SDE \eqref{ModelDynamics} admits a pathwise unique and continuous solution $\mathbb{Q}\text{-a.s.}$\hspace{-0.5mm} by Theorem 6.2.9 in \cite{Applebaum2009}. In the remainder of this section we prove some basic properties of the solution processes $X^{x,v}$ and $V^v$. 

\begin{remark}
    Throughout this work we make frequent use of the fact that for any $n\in\mathbb{N}$ and $x_1,...,x_n \in \mathbb{R}$ it holds
    \begin{equation}\label{SquaredTriangular}
        \vert x_1 + ... + x_n \vert^2 \leq n(\vert x_1 \vert^2 + ... + \vert x_n \vert^2).
    \end{equation}
\end{remark}

\begin{lemma}\label{LemmaL2XV}
It holds
    \begin{equation}
        \mathbb{E}\left[\int_0^T  \Vert X^{x,v}_t \Vert^2 \diff{t}\right] < \infty \text{ and } \mathbb{E}\left[ \int_0^T  \Vert V^v_t \Vert^2 \diff{t} \right] <\infty.
    \end{equation}
\end{lemma}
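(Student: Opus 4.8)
The plan is to establish a priori second-moment bounds
\[
\sup_{t\in[0,T]}\mathbb{E}\!\left[\|V^v_t\|^2\right]<\infty
\quad\text{and}\quad
\sup_{t\in[0,T]}\mathbb{E}\!\left[\|X^{x,v}_t\|^2\right]<\infty,
\]
from which the claim follows immediately by integrating over $[0,T]$. Since Theorem 6.2.9 in \cite{Applebaum2009} only provides a pathwise unique \emph{continuous} solution of \eqref{ModelDynamics}, square-integrability is not known a priori, so the estimates have to be carried out along a localizing sequence of stopping times and then passed to the limit.

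First I would treat the variance process $V^v$, which solves an autonomous SDE. Set $\tau_n := \inf\{t\in[0,T]:\|V^v_t\|\geq n\}\wedge T$. Applying Itô's formula to $t\mapsto\|V^v_{t\wedge\tau_n}\|^2$ and taking expectations, the stochastic integral against $B$ has zero expectation because its integrand is bounded on $[0,\tau_n]$ and hence the stopped integral is a true martingale. Using the sublinear growth bound $|\bar\mu^\theta_i(v)|^2+|\bar\sigma^\theta_{i,j}(v)|^2\leq K_2(1+\|v\|^2)$ from Assumption \ref{StrucAssumptionV}, together with \eqref{SquaredTriangular} and Cauchy--Schwarz for the drift contribution, one arrives at an inequality of the form
\[
\mathbb{E}\!\left[\|V^v_{t\wedge\tau_n}\|^2\right]\leq \|v\|^2 + C\int_0^t\Bigl(1+\mathbb{E}\!\left[\|V^v_{s\wedge\tau_n}\|^2\right]\Bigr)\,ds,
\]
with $C$ depending only on $d,r,K_2,T$. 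Gronwall's inequality gives $\mathbb{E}[\|V^v_{t\wedge\tau_n}\|^2]\leq(\|v\|^2+CT)e^{CT}$ uniformly in $n$ and $t$; letting $n\to\infty$ and invoking Fatou's lemma (using $\|V^v_{t\wedge\tau_n}\|^2\to\|V^v_t\|^2$ a.s.\ by path continuity) yields the desired uniform bound for $V^v$, and integrating over $[0,T]$ proves the second assertion.

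Next I would treat $X^{x,v}$, which has no drift: $\diff X^{x,v}_t=\sigma^\theta(X^{x,v}_t,V^v_t)\diff W_t$. With the analogous localizing sequence $\varsigma_n:=\inf\{t\in[0,T]:\|X^{x,v}_t\|\geq n\}\wedge T$, applying Itô's formula to $\|X^{x,v}_{t\wedge\varsigma_n}\|^2$ and using the growth bound $|\sigma^\theta_{i,j}(x,v)|^2\leq K_1(1+\|x\|^2+\|v\|^2)$ leads to
\[
\mathbb{E}\!\left[\|X^{x,v}_{t\wedge\varsigma_n}\|^2\right]\leq \|x\|^2 + C'\int_0^t\Bigl(1+\mathbb{E}\!\left[\|X^{x,v}_{s\wedge\varsigma_n}\|^2\right]+\mathbb{E}\!\left[\|V^v_s\|^2\right]\Bigr)\,ds.
\]
By the first step, $\int_0^T\mathbb{E}[\|V^v_s\|^2]\,ds<\infty$, so this term is a finite constant; Gronwall's inequality followed by Fatou's lemma then gives the uniform second-moment bound for $X^{x,v}$, and integration over $[0,T]$ completes the proof.

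The only genuinely delicate point is the localization argument: one must check that the stopped stochastic integrals against $W$ and $B$ are honest martingales with vanishing expectation, and that the limit $n\to\infty$ is legitimate — both handled by the boundedness of the integrands up to the stopping time and by Fatou's lemma. Everything else is a routine application of the structural estimates in Assumption \ref{StrucAssumptionV}, the elementary inequality \eqref{SquaredTriangular}, and Gronwall's lemma.
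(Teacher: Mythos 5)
Your argument is correct and complete in substance, but it is worth noting that the paper does not actually carry out this proof: it simply states that the claim ``follows directly from Lemma 4.1 in \cite{Gonon2021}'', i.e.\ the integrability is outsourced to a cited result. What you have written is essentially the standard self-contained argument that such a citation encapsulates --- localization, It\^o's formula for $\Vert\cdot\Vert^2$, the sublinear growth bounds from Assumption \ref{StrucAssumptionV}, Gronwall, and Fatou --- and it is the natural order of proof here, since the paper's own Lemmas \ref{LemmaBoundSecMomV} and \ref{LemmaBoundSecMomX} invoke Lemma \ref{LemmaL2XV} precisely to justify that their Gronwall functionals lie in $L^1([0,T])$; your a priori bounds make that dependence explicit rather than circular. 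One small imprecision: for the $X$-equation the integrand of the stopped stochastic integral, $\langle X^{x,v}_s,\sigma^\theta(X^{x,v}_s,V^v_s)\,\mathrm{d}W_s\rangle$ on $\{s\le\varsigma_n\}$, is \emph{not} bounded, because $\varsigma_n$ controls only $\Vert X^{x,v}_s\Vert$ and not $\Vert V^v_s\Vert$; the martingale property there should instead be justified by square-integrability of the integrand, which follows from $\int_0^T\mathbb{E}[\Vert V^v_s\Vert^2]\,\mathrm{d}s<\infty$ established in your first step (or, alternatively, by localizing with $\tau_n\wedge\varsigma_n$). With that justification adjusted, the proof stands.
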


\begin{proof}
    Follows directly from Lemma 4.1. in \cite{Gonon2021}.
\end{proof}

Based on this result, we are able to find bounds for the second moments of the solution processes. In particular, we show that the second moments of $X^{x,v}$ and $V^v$ grow at most polynomially in $\Vert v \Vert$ and $\bar{d}$ as well as $\Vert x \Vert$,$\Vert v\Vert$ and $\bar{d}$, respectively. The proof follows by the same arguments as Lemma 4.1. in \cite{Gonon2021}.

\begin{lemma}\label{LemmaBoundSecMomV}
Suppose Assumption \ref{StrucAssumptionV} holds. Then there exist constants $c_1, c_2 > 0$ such that
\begin{equation}
    \mathbb{E}\left[\sup_{t\in[0,T]}\Vert V^v_t \Vert^2\right] \leq c_1 \Vert v \Vert^2 + c_2 \bar{d}^2,
\end{equation}

where the constants $c_1,c_2$ only depend on $T$, $L$ and $K_2$.
\end{lemma}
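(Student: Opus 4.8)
The plan is to carry over the classical a priori moment estimate for It\^o SDEs with sub-linearly growing coefficients, essentially reproducing the proof of Lemma 4.1 in \cite{Gonon2021}, while keeping explicit track of how the sums over the $d$ coordinates and the $r$ driving Brownian motions enter the constants. First I would write the variance process in integral form,
\begin{equation*}
	V^v_t = v + \int_0^t \bar{\mu}^\theta(V^v_s)\,\diff s + \int_0^t \bar{\sigma}^\theta(V^v_s)\,\diff B_s, \qquad t\in[0,T],
\end{equation*}
and apply \eqref{SquaredTriangular} with $n=3$ to get, for every $u\in[0,T]$,
\begin{equation*}
	\sup_{t\in[0,u]} \Vert V^v_t \Vert^2 \le 3\Vert v \Vert^2 + 3\sup_{t\in[0,u]}\Big\Vert \int_0^t \bar{\mu}^\theta(V^v_s)\,\diff s \Big\Vert^2 + 3\sup_{t\in[0,u]}\Big\Vert \int_0^t \bar{\sigma}^\theta(V^v_s)\,\diff B_s \Big\Vert^2 .
\end{equation*}
Taking expectations, the drift term is handled by the Cauchy--Schwarz inequality in $\diff s$, yielding $\sup_{t\le u}\Vert\int_0^t\bar\mu^\theta(V^v_s)\diff s\Vert^2 \le T\int_0^u\Vert\bar\mu^\theta(V^v_s)\Vert^2\,\diff s$. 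The stochastic-integral term is controlled by applying Doob's $L^2$-maximal inequality (equivalently the Burkholder--Davis--Gundy inequality) to each of the $d$ real-valued martingales $t\mapsto\sum_{j=1}^r\int_0^t\bar\sigma^\theta_{i,j}(V^v_s)\diff B^j_s$, whose quadratic variation $\int_0^\cdot\sum_{j,k}\bar\sigma^\theta_{i,j}(V^v_s)\bar\sigma^\theta_{i,k}(V^v_s)\rho_{j,k}\,\diff s$ is bounded using $|\rho_{j,k}|<1$; summing over $i$ gives $\mathbb{E}[\sup_{t\le u}\Vert\int_0^t\bar\sigma^\theta(V^v_s)\diff B_s\Vert^2]\le C_0\,\mathbb{E}[\int_0^u\Vert\bar\sigma^\theta(V^v_s)\Vert_F^2\,\diff s]$ with $C_0$ depending only on $r$ and on $\rho$.

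Next I would insert the sub-linear growth bounds of Assumption~\ref{StrucAssumptionV}(2): summing $|\bar\mu^\theta_i(v)|^2\le K_2(1+\Vert v\Vert^2)$ over $i\le d$ and $|\bar\sigma^\theta_{i,j}(v)|^2\le K_2(1+\Vert v\Vert^2)$ over $i\le d,\ j\le r$, and using $d\le\bar d$ and $dr\le\bar d^2$, one obtains $\Vert\bar\mu^\theta(v)\Vert^2+\Vert\bar\sigma^\theta(v)\Vert_F^2\le \mathfrak{c}\,\bar d^2(1+\Vert v\Vert^2)$ for a constant $\mathfrak{c}$ depending only on $K_2$ (and, through $C_0$, on the correlation). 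Writing $g(u):=\mathbb{E}[\sup_{t\in[0,u]}\Vert V^v_t\Vert^2]$ and using $\mathbb{E}[\Vert V^v_s\Vert^2]\le g(s)$, the estimate from the previous step collapses to an integral inequality of the form $g(u)\le 3\Vert v\Vert^2 + \mathfrak{a}\,\bar d^2 + \mathfrak{b}\int_0^u g(s)\,\diff s$ on $[0,T]$, with $\mathfrak{a},\mathfrak{b}>0$ depending only on $T$, $L$ and $K_2$, and Gr\"onwall's lemma then yields the claimed bound $g(T)\le c_1\Vert v\Vert^2 + c_2\bar d^2$. Before invoking Gr\"onwall one must ensure $g$ is finite: Lemma~\ref{LemmaL2XV} only gives $\int_0^T\mathbb{E}[\Vert V^v_t\Vert^2]\diff t<\infty$, so I would first run the whole estimate for the stopped process $V^v_{\cdot\wedge\tau_n}$ with $\tau_n:=\inf\{t\ge 0:\Vert V^v_t\Vert\ge n\}$ (for which the supremum is at most $n$, so $g_n$ is finite), obtain a bound independent of $n$, and pass to the limit $n\to\infty$ using $\tau_n\uparrow\infty$ (which holds since $V^v$ has continuous paths) and monotone convergence.

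The Cauchy--Schwarz, Doob/BDG, stopping-time and Gr\"onwall steps are all routine; the point that needs care is the dimension bookkeeping -- tracking the factors produced by the coordinate and Brownian-motion sums, the Frobenius norms, and the correlation structure so that they combine exactly to the power $\bar d^2$ in the final bound, while the remaining constants $c_1,c_2$ stay free of $d$ and $r$ and depend only on $T$, $L$ and $K_2$. This is the step I would verify most carefully against the corresponding argument in \cite{Gonon2021}.
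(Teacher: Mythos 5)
Your overall architecture (integral form, triangle inequality with $n=3$, Cauchy--Schwarz for the drift, Doob for the martingale part, Gr\"onwall, plus a localization to justify finiteness of $g$) matches the paper's proof, and your stopping-time argument for the finiteness of $g$ is in fact more careful than the paper's appeal to Lemma \ref{LemmaL2XV}. However, there is a genuine gap in the step where you insert the growth bounds. You estimate $\Vert\bar\mu^\theta(v)\Vert^2+\Vert\bar\sigma^\theta(v)\Vert_F^2\le \mathfrak{c}\,\bar d^2(1+\Vert v\Vert^2)$ by summing the entry-wise sub-linear growth bounds of Assumption \ref{StrucAssumptionV}(2), and then claim this collapses to $g(u)\le 3\Vert v\Vert^2+\mathfrak{a}\bar d^2+\mathfrak{b}\int_0^u g(s)\,\diff s$ with $\mathfrak{b}$ depending only on $T,L,K_2$. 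It does not: with your bound, the coefficient multiplying $\Vert V^v_s\Vert^2$ inside the time integral is itself of order $\bar d^2$, so $\mathfrak{b}\propto\bar d^2$ and Gr\"onwall produces a factor $\exp(\mathfrak{c}\bar d^2 T)$ --- exponential in the dimension, which destroys the claim that $c_1,c_2$ are independent of $d$ and $r$. The paper avoids this by never using the entry-wise growth bound on the $\Vert v\Vert^2$-part: it writes $\Vert\bar\mu^\theta(v)\Vert^2\le 2\Vert\bar\mu^\theta(v)-\bar\mu^\theta(0)\Vert^2+2\Vert\bar\mu^\theta(0)\Vert^2\le 2L\Vert v\Vert^2+2dK_2$ (and analogously $\le 2L\Vert v\Vert^2+2drK_2$ for $\bar\sigma$ in Frobenius norm), exploiting that the Lipschitz constant $L$ in Assumption \ref{StrucAssumptionV}(1) bounds the full vector/Frobenius norms and is dimension-free. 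This puts the dimension factors $d$, $dr\le\bar d^2$ only into the additive term and keeps the Gr\"onwall coefficient equal to a multiple of $L$. You list $L$ among the dependencies of your constants but never actually invoke the Lipschitz condition, which is the missing ingredient.

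A secondary point: you treat the components of $B$ as mutually correlated with matrix $\rho$, producing a quadratic variation $\int\sum_{j,k}\bar\sigma_{i,j}\bar\sigma_{i,k}\rho_{j,k}\,\diff s$ and a constant $C_0$ depending on $r$ and $\rho$. In the model of Section \ref{Section3}, $\rho$ is the \emph{cross}-correlation between $W$ and $B$; each of $B$ and $W$ is a standard $r$-dimensional Brownian motion with independent components, so It\^o's isometry gives exactly $\int_0^t\mathbb{E}[\Vert\bar\sigma^\theta(V^v_s)\Vert_F^2]\diff s$ with no extra factor. Your reading would again inject an $r$-dependence (and hence a $\bar d$-dependence) into the constants that the statement does not permit.
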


\begin{proof}
    Define the function $G(t):= \mathbb{E}\left[\sup_{s\in[0,t]}\Vert V^v_s \Vert^2\right]$ for $t\in[0,T]$. By the Cauchy-Schwarz inequality, Doob's maximal inequality, Itô's isometry, Lemma \ref{LemmaL2XV} and \eqref{SquaredTriangular} it follows
    \begin{align}\label{Lemma4.2.Bound1}
        G(t) &\leq 3\Vert v \Vert^2 + 3\mathbb{E}\left[\sup_{s\in[0,t]}\left\Vert \int_0^s \bar{\mu}^\theta(V^v_r)\diff{r}\right\Vert^2\right] + 3 \mathbb{E}\left[\sup_{s\in[0,t]}\left\Vert\int_0^s \bar{\sigma}^\theta(V^v_r)\diff{B}_r\right\Vert^2\right] \nonumber\\
        &\leq 3\Vert v \Vert^2 + 3t \mathbb{E}\left[\int_0^t \left\Vert\bar{\mu}^\theta(V^v_s)\right\Vert^2\diff{s}\right] + 12 \mathbb{E}\left[\left\Vert\int_0^t \bar{\sigma}^\theta(V^v_s)\diff{B}_s\right\Vert^2\right]\nonumber\\
        &= 3\Vert v \Vert^2 + 3t\int_0^t \mathbb{E}\left[\left\Vert  \bar{\mu}^\theta(V^v_s)\right\Vert^2\right]\diff{s} + 12 \int_0^t\mathbb{E}\left[\left\Vert \bar{\sigma}^\theta(V^v_s)\right\Vert^2_F\right]\diff{s}.
    \end{align}
    
    Assumption \ref{StrucAssumptionV} yields that
    \begin{align}\label{Lemma4.2.Bound2}
        \int_0^t \mathbb{E}\left[\left\Vert  \bar{\mu}^\theta(V^v_s)\right\Vert^2\right]\diff{s} &\leq 2\int_0^t \mathbb{E}\left[\left\Vert \bar{\mu}^\theta(V^v_s) - \bar{\mu}^\theta(0)\right\Vert^2\right]\diff{s} + 2\int_0^t \mathbb{E}\left[\left\Vert  \bar{\mu}^\theta(0)\right\Vert^2\right]\diff{s}\nonumber\\
        &\leq 2 L\int_0^t \mathbb{E}\left[\Vert V^v_s \Vert^2 \right] \diff{s} + 2 d K_2 t \nonumber\\
        &\leq 2L \int_0^t G(s)\diff{s} + 2 d K_2 t
    \end{align}
    
    and
    \begin{align}\label{Lemma4.2.Bound3}
        \int_0^t\mathbb{E}\left[\left\Vert \bar{\sigma}^\theta(V^v_s)\right\Vert^2_F\right]\diff{s} &\leq 2\int_0^t\mathbb{E}\left[\left\Vert \bar{\sigma}^\theta(V^v_s) - \bar{\sigma}^\theta(0) \right\Vert^2_F\right]\diff{s} + 2 \int_0^t\mathbb{E}\left[\left\Vert \bar{\sigma}^\theta(0) \right\Vert^2_F\right]\diff{s} \nonumber\\
        &\leq 2 L \int_0^t\mathbb{E}\left[\left\Vert V^v_s \right\Vert^2\right]\diff{s} + 2t dr K_2\nonumber\\
        &\leq 2 L \int_0^t G(s)\diff{s} + 2 dr K_2 t .
    \end{align}
    
    Therefore, by Lemma \ref{LemmaL2XV}, \eqref{Lemma4.2.Bound2} and \eqref{Lemma4.2.Bound3} it follows that $G(t)\in L^1([0,T])$. By \eqref{Lemma4.2.Bound1}, \eqref{Lemma4.2.Bound2} and \eqref{Lemma4.2.Bound3} we obtain that
    \begin{equation*}
        G(t) \leq 3\Vert v \Vert^2 + 24 \Bar{d}^2 K_2 T^2 + 3\max(2T,8)L\left(\int_0^t G(s) \diff{s}\right).
    \end{equation*}
    
    So by Gronwall's inequality as stated in Proposition 6.1.4 in \cite{Applebaum2009} we get 
    \begin{equation}
         G(t) \leq 3\left(\Vert v\Vert^2 + 8 \Bar{d}^2 K_2 T^2\right)\exp(3\max(2T,8)LT).
    \end{equation}
    
    Finally, for all $t\in[0,T]$ we obtain
    \begin{equation}
        G(t) \leq c_1 \Vert v\Vert^2 + c_2 \bar{d}^2,
    \end{equation}
    
    where
    \begin{equation*}
        c_1 := 3\exp(3\max(2T,8)LT)\text{ and } c_2 := 24 K_2 T^2 \exp(3\max(2T,8)LT).
    \end{equation*}
\end{proof}

A similar result can be found for the process $X^{x,v}$. However, due to its additional dependence on $V^v$, its second moment is increasing faster in $\bar{d}$, but still at most polynomial.

\begin{lemma}\label{LemmaBoundSecMomX}
Suppose Assumption \ref{StrucAssumptionV} holds. Then there exist constants $c_3, c_4, c_5 > 0$ such that
\begin{equation}
    \mathbb{E}\left[ \sup_{t\in[0,T]} \Vert X^{x,v}_t \Vert^2\right] \leq c_3 \Vert x \Vert^2 + c_4 \Vert v \Vert^2 \bar{d}^2 + c_5 \bar{d}^4,
\end{equation}

where the constants $c_3,c_4,c_5$ only depend on $T, L$ and $K_1$.
\end{lemma}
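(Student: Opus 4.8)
The plan is to run the same Gronwall-type argument as in the proof of Lemma~\ref{LemmaBoundSecMomV}, now exploiting that $X^{x,v}$ is drift-free: from \eqref{ModelDynamics} we have $X^{x,v}_t = x + \int_0^t \sigma^\theta(X^{x,v}_s,V^v_s)\diff W_s$. Set $H(t) := \mathbb{E}\big[\sup_{s\in[0,t]}\Vert X^{x,v}_s\Vert^2\big]$. Before applying the maximal inequality one should record that $\int_0^\cdot \sigma^\theta(X^{x,v}_s,V^v_s)\diff W_s$ is a genuine $L^2$-martingale: by Assumption~\ref{StrucAssumptionV}(2) and $dr\leq\bar d^2$,
\[
  \mathbb{E}\Big[\int_0^T\Vert\sigma^\theta(X^{x,v}_s,V^v_s)\Vert_F^2\diff s\Big]\leq \bar d^2 K_1\int_0^T\big(1+\mathbb{E}[\Vert X^{x,v}_s\Vert^2]+\mathbb{E}[\Vert V^v_s\Vert^2]\big)\diff s<\infty,
\]
which is finite by Lemma~\ref{LemmaL2XV} and Lemma~\ref{LemmaBoundSecMomV}. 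Hence, using \eqref{SquaredTriangular} with $n=2$, Doob's $L^2$ maximal inequality and Itô's isometry (the components of $W$ being independent among themselves), I obtain $H(t) \leq 2\Vert x\Vert^2 + 8\int_0^t \mathbb{E}\big[\Vert \sigma^\theta(X^{x,v}_s,V^v_s)\Vert_F^2\big]\diff s$.

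The key step is the estimate on the integrand, split so that only the $X$-dependence is handled by the Lipschitz bound:
\[
  \Vert\sigma^\theta(X^{x,v}_s,V^v_s)\Vert_F^2 \leq 2\Vert\sigma^\theta(X^{x,v}_s,V^v_s)-\sigma^\theta(0,V^v_s)\Vert_F^2 + 2\Vert\sigma^\theta(0,V^v_s)\Vert_F^2 \leq 2L\Vert X^{x,v}_s\Vert^2 + 2\bar d^2 K_1\big(1+\Vert V^v_s\Vert^2\big),
\]
where the first term is controlled via Assumption~\ref{StrucAssumptionV}(1) (with second argument held fixed at $V^v_s$) and the second via Assumption~\ref{StrucAssumptionV}(2) together with $dr\leq\bar d^2$. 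Taking expectations, bounding $\mathbb{E}[\Vert X^{x,v}_s\Vert^2]\leq H(s)$ and inserting $\mathbb{E}[\Vert V^v_s\Vert^2]\leq c_1\Vert v\Vert^2 + c_2\bar d^2$ from Lemma~\ref{LemmaBoundSecMomV}, together with $t\le T$ this yields
\[
  H(t) \leq 2\Vert x\Vert^2 + 16\bar d^2 K_1 T\big(1+c_1\Vert v\Vert^2 + c_2\bar d^2\big) + 16L\int_0^t H(s)\diff s .
\]
The prefactor $16L$ of the integral is dimension-free precisely because the self-referential $X$-term was bounded by the global Lipschitz constant and not by the sub-linear growth bound; the price paid is that the growth bound on $\sigma^\theta(0,V^v_s)$ contributes the $\bar d^2\Vert v\Vert^2$ and $\bar d^4$ terms.

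Finally, the martingale bound above gives $H(t)\leq 2\Vert x\Vert^2 + 8\,\mathbb{E}\big[\int_0^T\Vert\sigma^\theta(X^{x,v}_s,V^v_s)\Vert_F^2\diff s\big]<\infty$ uniformly in $t\in[0,T]$, so $H$ is bounded, in particular $H\in L^1([0,T])$, and Gronwall's inequality (Proposition~6.1.4 in \cite{Applebaum2009}) applies:
\[
  H(T) \leq \Big(2\Vert x\Vert^2 + 16\bar d^2 K_1 T\big(1+c_1\Vert v\Vert^2+c_2\bar d^2\big)\Big)\exp(16LT).
\]
Since $\bar d\geq 1$, absorbing the $\bar d^2$ term into the $\bar d^4$ term and collecting constants gives the claim with $c_3 := 2\exp(16LT)$, $c_4 := 16K_1 T c_1\exp(16LT)$ and $c_5 := 16K_1T(1+c_2)\exp(16LT)$, which depend only on $T,L,K_1$ (and, through $c_1,c_2$, on $K_2$). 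I expect the main obstacle to be the bookkeeping in the second step: one must be careful to invoke the Lipschitz bound — not the sub-linear growth bound — on the term containing $X^{x,v}_s$, since otherwise the Gronwall prefactor would pick up a factor $\bar d^2$ and the resulting estimate would be exponential, rather than polynomial, in $\bar d$; checking the a priori finiteness and integrability of $H$ needed to legitimise Gronwall is the other point requiring care.
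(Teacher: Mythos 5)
Your proof is correct and follows essentially the same route as the paper's: the same Doob/Itô bound $H(t)\leq 2\Vert x\Vert^2+8\int_0^t\mathbb{E}[\Vert\sigma^\theta(X^{x,v}_s,V^v_s)\Vert_F^2]\diff s$, the same split of $\sigma^\theta$ into a Lipschitz increment and a growth term at $(0,V^v_s)$, and the same Gronwall step, yielding identical constants $c_3,c_4,c_5$. Your extra care in justifying the a priori integrability of $H$ and your (correct) remark that $c_4,c_5$ also depend on $K_2$ through $c_1,c_2$ are welcome refinements rather than deviations.
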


\begin{proof}
   The proof follows by the same arguments as Lemma \ref{LemmaBoundSecMomV}. For completeness, we include it in the Appendix \ref{ProofSection3}.
\end{proof}


\subsection{Discrete-time approximation of the system}
In this section we study a discrete-time approximation for the dynamics given in \eqref{ModelDynamics}. For the finite time horizon $T>0$ consider now the discrete time points
\begin{equation}\label{DiscreteTimePoints}
    0 = t_0 < ...< t_k < ... < t_N = T
\end{equation}

for some $N\in\mathbb{N}$. We assume that $t_k = kh$ for $h := T/N$ and $k=0,...,N$. Consider the Euler-Maruyama scheme given by

\begin{equation}\label{EulerX}
    \hat{X}^{x,v}_{t_{k+1}} = \hat{X}^{x,v}_{t_k} + \sigma^\theta(\hat{X}^{x,v}_{t_k}, \hat{V}^v_{t_k})(W_{t_{k+1}} - W_{t_k}),\; \hat{X}_0 = x
\end{equation}

and
\begin{equation}\label{EulerV}
    \hat{V}^v_{t_{k+1}} = \hat{V}^v_{t_k} + \bar{\mu}^\theta(\hat{V}^v_{t_k})h + \bar{\sigma}^\theta(\hat{V}^v_{t_k})(B_{t_{k+1}} - B_{t_k}),\; \hat{V}_0 = v.
\end{equation}

Moreover, we introduce the notation $\floor{t}:= \max\lbrace t_k\mid k=0,...,N,\; t_k \leq t \rbrace$. Then the continuous-time Euler-Maruyama approximation is defined as the solution to
\begin{equation}\label{ContEuler}
    \begin{cases}
        \diff\bar{X}^{x,v}_t = \sigma^\theta(\bar{X}^{x,v}_{\floor{t}}, \bar{V}^v_{\floor{t}})\diff{W}_t,\; \bar{X}_0 = x\\[2mm]
        \diff{\bar{V}^v_t} = \bar{\mu}^\theta(\bar{V}^v_{\floor{t}})\diff{t} + \bar{\sigma}^\theta(\bar{V}^v_{\floor{t}})\diff{B}_t,\; \bar{V}_0 = v.
    \end{cases}
\end{equation}

One can directly see that the solution to \eqref{ContEuler} coincides with \eqref{EulerX} and \eqref{EulerV} at any time point $t=t_k$. The following results are well-known in the literature.

\begin{lemma}\label{L2floor}
It holds
    \begin{equation}
        \int_0^T \mathbb{E}\left[ \Vert \bar{X}^{x,v}_{\floor{t}} \Vert^2 \right]\diff{t}<\infty \text{ and } \int_0^T \mathbb{E}\left[ \Vert \bar{V}^v_{\floor{t}} \Vert^2 \right]\diff{t}<\infty.
    \end{equation}
\end{lemma}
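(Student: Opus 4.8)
The plan is to reduce everything to the corresponding statement for the true solution processes, Lemma~\ref{LemmaBoundSecMomV} and Lemma~\ref{LemmaBoundSecMomX}, by mimicking their proofs for the Euler--Maruyama approximation. First I would observe that since $\bar{X}^{x,v}$ and $\bar{V}^v$ are continuous (they solve the SDE \eqref{ContEuler} with piecewise-constant coefficients along the grid), one has for $k$ with $t_k \le t$ that $\bar{X}^{x,v}_{\floor{t}} = \bar{X}^{x,v}_{t_k}$, so it suffices to control $\mathbb{E}[\Vert \bar{X}^{x,v}_{t_k}\Vert^2]$ and $\mathbb{E}[\Vert \bar{V}^v_{t_k}\Vert^2]$ uniformly in $k=0,\dots,N$; integrating a uniform bound over $[0,T]$ then immediately gives finiteness. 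Equivalently, since the floor-process takes only finitely many values, it is enough to show $\max_{k}\mathbb{E}[\Vert \bar{V}^v_{t_k}\Vert^2] < \infty$ and $\max_k \mathbb{E}[\Vert \bar{X}^{x,v}_{t_k}\Vert^2] < \infty$.

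Next I would set up a discrete Gronwall argument for $\hat V^v$. Writing $\hat V^v_{t_{k+1}} = \hat V^v_{t_k} + \bar\mu^\theta(\hat V^v_{t_k})h + \bar\sigma^\theta(\hat V^v_{t_k})(B_{t_{k+1}}-B_{t_k})$ as in \eqref{EulerV}, I would square, use \eqref{SquaredTriangular} with $n=3$, take expectations, and exploit that the Brownian increment $B_{t_{k+1}}-B_{t_k}$ is independent of $\mathcal F_{t_k}$ (so the cross term vanishes and the martingale term contributes $h\,\mathbb{E}[\Vert\bar\sigma^\theta(\hat V^v_{t_k})\Vert_F^2]$ up to the correlation factor). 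The sub-linear growth bounds in Assumption~\ref{StrucAssumptionV}, combined with the triangle-inequality trick used in \eqref{Lemma4.2.Bound2}--\eqref{Lemma4.2.Bound3} to pass from $\bar\mu^\theta(\hat V^v_{t_k})$ to $\bar\mu^\theta(\hat V^v_{t_k}) - \bar\mu^\theta(0)$ plus a constant, then yield a recursion of the form $g_{k+1} \le g_k(1 + Ch) + Ch\bar d^2 + C\Vert v\Vert^2 h$ for $g_k := \mathbb{E}[\Vert \hat V^v_{t_k}\Vert^2]$ with $C$ depending only on $T,L,K_2$; this is bounded by $(g_0 + \text{const})e^{CT}$ via the discrete Gronwall lemma, uniformly in $N$ and $k$. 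The same scheme applied to \eqref{EulerX}, now using the just-obtained bound on $\mathbb{E}[\Vert \hat V^v_{t_k}\Vert^2]$ inside the growth condition for $\sigma^\theta$, gives the corresponding uniform bound for $\mathbb{E}[\Vert \hat X^{x,v}_{t_k}\Vert^2]$.

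Since the statement only claims finiteness (not explicit polynomial dependence on $\bar d$), one could alternatively just invoke the standard strong-convergence theory for the Euler--Maruyama scheme under Assumption~\ref{StrucAssumptionV} (e.g.\ via the references cited, or Lemma~4.1 in \cite{Gonon2021} applied to the continuous-time Euler approximation), which directly gives $\sup_N \mathbb{E}[\sup_{t\in[0,T]}\Vert\bar X^{x,v}_t\Vert^2] < \infty$ and likewise for $\bar V^v$, and the claim follows a fortiori. I expect the only mild obstacle to be bookkeeping: making sure the constants in the discrete Gronwall recursion do not blow up as $N\to\infty$ (they do not, because each step contributes a factor $1+Ch$ and there are $N = T/h$ steps, so $(1+Ch)^N \le e^{CT}$), and handling the correlation matrix $\rho$ when bounding the quadratic variation of the discrete stochastic integral — but $|\rho_{i,j}|<1$ makes this harmless. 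Given the "well-known in the literature" framing, a short proof citing the analogue of Lemma~\ref{LemmaBoundSecMomV} for the Euler scheme, or the cited existence/convergence references, is the natural write-up.
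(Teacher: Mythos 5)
Your proposal is correct, and in fact the paper offers no proof of this lemma at all: it simply declares the result ``well-known in the literature'' and moves on. Your discrete Gronwall/induction argument is the standard way to fill this in, and every step checks out: $\bar{V}^v_{\floor{t}}$ and $\bar{X}^{x,v}_{\floor{t}}$ take only the grid values $\hat{V}^v_{t_k}$, $\hat{X}^{x,v}_{t_k}$, so it suffices to bound $\mathbb{E}[\Vert\hat{V}^v_{t_k}\Vert^2]$ and $\mathbb{E}[\Vert\hat{X}^{x,v}_{t_k}\Vert^2]$ for $k=0,\dots,N$; squaring the one-step recursions \eqref{EulerV} and \eqref{EulerX}, using the measurability of $\hat{V}^v_{t_k},\hat{X}^{x,v}_{t_k}$ with respect to $\mathcal{F}_{t_k}$ and the independence of the increments to kill the cross terms, and invoking the linear-growth bounds of Assumption \ref{StrucAssumptionV} gives $g_{k+1}\le(1+Ch)g_k+Ch(\cdot)$ and hence finiteness by induction (the Lipschitz-plus-value-at-zero trick you mention is not even needed here, since part 2 of the assumption already gives the growth bounds directly). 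Two minor remarks: since $N$ is fixed in the statement, mere finiteness follows from the finite induction alone and the uniformity in $N$ that you carefully track is a bonus rather than a necessity; and continuity of $\bar{X}^{x,v}$ is irrelevant to the identification $\bar{X}^{x,v}_{\floor{t}}=\hat{X}^{x,v}_{t_k}$, which holds by definition of $\floor{t}$ and the coincidence of \eqref{ContEuler} with the discrete scheme at grid points. Your fallback of citing the standard second-moment bounds for the Euler--Maruyama scheme under Assumption \ref{StrucAssumptionV} is essentially what the paper implicitly intends.
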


The next two propositions show the strong convergence of the interpolation scheme \eqref{ContEuler} under Assumption \ref{StrucAssumptionV} and that the approximation of the processes $X^{x,v}$ and $V^v$ is not affected by the CoD. Hence, we obtain that the approximation error grows at most polynomial in $\bar{d}$. The proof of these results is a modification of, e.g., Lemma 4.3 in \cite{Gonon2021}. For the reader's convenience we include the proof in the Appendix.

\begin{proposition}\label{DiffVNormalBar}
Suppose Assumption \ref{StrucAssumptionV} holds. Then there exist constants $c_6, c_7 > 0$ such that
\begin{equation}
    \mathbb{E}\left[ \sup_{t\in[0,T]} \Vert V^v_t - \bar{V}^v_t \Vert^2 \right]\leq h(c_6 \Vert v \Vert^2 \bar{d}^2 + c_7 \bar{d}^4),
\end{equation}
where $h=T/N$ and the constants $c_6,c_7$ only depend on $L,T$ and $K_2$.
\end{proposition}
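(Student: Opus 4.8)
The plan is to run a Gronwall argument for $e(t) := \mathbb{E}\!\left[\sup_{s\in[0,t]}\Vert V^v_s - \bar V^v_s\Vert^2\right]$, tracking at every step the dependence on $\Vert v\Vert$ and $\bar d$. Subtracting the integral forms of \eqref{ModelDynamics} and \eqref{ContEuler},
\[
    V^v_t - \bar V^v_t = \int_0^t \left(\bar\mu^\theta(V^v_s) - \bar\mu^\theta(\bar V^v_{\floor{s}})\right)\diff s + \int_0^t \left(\bar\sigma^\theta(V^v_s) - \bar\sigma^\theta(\bar V^v_{\floor{s}})\right)\diff B_s ,
\]
I would take the supremum over $[0,t]$, split by \eqref{SquaredTriangular} into a drift and a diffusion part, bound the drift part with the Cauchy--Schwarz inequality and the diffusion part with Doob's maximal inequality and It\^o's isometry, and then insert the intermediate point $\bar V^v_s$ into each integrand, i.e.\ $\bar\mu^\theta(V^v_s) - \bar\mu^\theta(\bar V^v_{\floor{s}}) = (\bar\mu^\theta(V^v_s) - \bar\mu^\theta(\bar V^v_s)) + (\bar\mu^\theta(\bar V^v_s) - \bar\mu^\theta(\bar V^v_{\floor{s}}))$ and analogously for $\bar\sigma^\theta$. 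The Lipschitz bound in Assumption \ref{StrucAssumptionV}(1) then yields, for a constant $C_L$ depending only on $L$ and $T$,
\[
    e(t) \le C_L\int_0^t e(s)\,\diff s + C_L\int_0^t \mathbb{E}\!\left[\Vert \bar V^v_s - \bar V^v_{\floor{s}}\Vert^2\right]\diff s .
\]
Finiteness of $e(t)$, needed to invoke Gronwall, follows from $e(t)\le 2\mathbb{E}[\sup_{s\le t}\Vert V^v_s\Vert^2] + 2\mathbb{E}[\sup_{s\le t}\Vert\bar V^v_s\Vert^2]$, Lemma \ref{LemmaBoundSecMomV}, and the corresponding finiteness for the Euler scheme (a standard consequence of Lemma \ref{L2floor} and Assumption \ref{StrucAssumptionV}, or of repeating the Gronwall argument of Lemma \ref{LemmaBoundSecMomV} for $\bar V^v$).

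The heart of the proof is the one-step increment estimate. For $s\in[t_k,t_{k+1})$ one has $\bar V^v_s - \bar V^v_{\floor{s}} = \bar\mu^\theta(\bar V^v_{t_k})(s-t_k) + \bar\sigma^\theta(\bar V^v_{t_k})(B_s - B_{t_k})$, so using \eqref{SquaredTriangular}, the $\mathcal F_{t_k}$-measurability of $\bar V^v_{t_k}$, the covariance of the increment $B_s - B_{t_k}$, and $s-t_k\le h$,
\[
    \mathbb{E}\!\left[\Vert \bar V^v_s - \bar V^v_{\floor{s}}\Vert^2\right] \le 2h^2\,\mathbb{E}\!\left[\Vert\bar\mu^\theta(\bar V^v_{t_k})\Vert^2\right] + 2h\,\mathbb{E}\!\left[\Vert\bar\sigma^\theta(\bar V^v_{t_k})\Vert_F^2\right].
\]
Assumption \ref{StrucAssumptionV}(2) gives $\mathbb{E}[\Vert\bar\mu^\theta(\bar V^v_{t_k})\Vert^2]\le dK_2(1+\mathbb{E}[\Vert\bar V^v_{t_k}\Vert^2])$ and $\mathbb{E}[\Vert\bar\sigma^\theta(\bar V^v_{t_k})\Vert_F^2]\le drK_2(1+\mathbb{E}[\Vert\bar V^v_{t_k}\Vert^2])$, while $\mathbb{E}[\Vert\bar V^v_{t_k}\Vert^2]\le 2\mathbb{E}[\Vert V^v_{t_k}\Vert^2]+2e(t_k)\le 2c_1\Vert v\Vert^2 + 2c_2\bar d^2 + 2e(s)$ by Lemma \ref{LemmaBoundSecMomV} and monotonicity of $e$. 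Since $h\le T$, $d\le\bar d$, $r\le\bar d$ and $\bar d\ge1$, this collapses to
\[
    \mathbb{E}\!\left[\Vert \bar V^v_s - \bar V^v_{\floor{s}}\Vert^2\right] \le h\left(\tilde c_6 \Vert v\Vert^2 \bar d^2 + \tilde c_7 \bar d^4\right) + \tilde c_8\, h\, e(s)
\]
with constants depending only on $L,T,K_2$; note that the diffusion term contributes only the factor $dr\le\bar d^2$, because the components of $B$ are uncorrelated and the cross-correlation $\rho$ with $W$ is irrelevant in this estimate.

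Finally I would substitute this bound into the inequality for $e(t)$, absorb $C_L\tilde c_8 h\le C_L\tilde c_8 T$ into the Gronwall constant, and obtain
\[
    e(t) \le h\left(\hat c_6\Vert v\Vert^2\bar d^2 + \hat c_7\bar d^4\right) + \hat c_8\int_0^t e(s)\,\diff s ,
\]
so Gronwall's inequality (Proposition 6.1.4 in \cite{Applebaum2009}) gives $e(T)\le h(c_6\Vert v\Vert^2\bar d^2 + c_7\bar d^4)$ with $c_6,c_7$ depending only on $L,T,K_2$, which is the assertion. I expect the only genuine difficulty to be bookkeeping: propagating the $(\Vert v\Vert,\bar d)$-dependence with the correct exponents through the chain of estimates, in particular verifying that the diffusion increment costs only $dr\le\bar d^2$ and establishing the uniform second moment bound for the Euler scheme in exactly the $\Vert v\Vert^2+\bar d^2$ form of Lemma \ref{LemmaBoundSecMomV}.
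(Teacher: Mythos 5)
Your overall strategy (Gronwall for $e(t)=\mathbb{E}[\sup_{s\le t}\Vert V^v_s-\bar V^v_s\Vert^2]$, splitting drift/diffusion via Cauchy--Schwarz, Doob and It\^o's isometry) matches the paper's, but you insert a different intermediate point: you write $\bar\mu^\theta(V^v_s)-\bar\mu^\theta(\bar V^v_{\floor{s}})$ through $\bar V^v_s$, so the one-step increment you must control is that of the \emph{Euler} process, whereas the paper inserts $V^v_{\floor{s}}$ and controls the one-step increment of the \emph{exact} process, for which Lemma \ref{LemmaBoundSecMomV} is directly available. That difference is where a genuine gap appears. To bound $\mathbb{E}[\Vert\bar V^v_s-\bar V^v_{\floor{s}}\Vert^2]$ you need $\mathbb{E}[\Vert\bar V^v_{t_k}\Vert^2]$, and you obtain it via $2\mathbb{E}[\Vert V^v_{t_k}\Vert^2]+2e(t_k)$. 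But the linear-growth bounds carry the factors $d$ and $dr$, so the coefficient of the resulting $e(s)$ term is $4K_2h(hd+dr)\le 4K_2(T+1)h\bar d^2$ --- your $\tilde c_8$ is \emph{not} independent of $\bar d$; it contains $\bar d^2$. After feeding this into Gronwall, the exponential factor becomes $\exp\bigl(C_L(1+\tilde c_8 h\bar d^2)T\bigr)$, which for general $h=T/N$ (e.g.\ $N$ fixed, $\bar d\to\infty$) grows like $e^{c\bar d^2}$ and destroys the claimed form $h(c_6\Vert v\Vert^2\bar d^2+c_7\bar d^4)$ with $c_6,c_7$ depending only on $L,T,K_2$. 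Since dimension-independence of the constants is the entire point of the proposition, this is not a cosmetic issue.

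The fix is available within your own setup: prove a standalone uniform second-moment bound $\mathbb{E}[\sup_{t\in[0,T]}\Vert\bar V^v_t\Vert^2]\le c_1'\Vert v\Vert^2+c_2'\bar d^2$ for the Euler scheme (by repeating the Gronwall argument of Lemma \ref{LemmaBoundSecMomV}, as you yourself suggest when discussing finiteness of $e$), and use \emph{that} in the increment estimate instead of routing through $e(s)$. Then the increment bound has no $e(s)$ feedback, the Gronwall coefficient is $\bar d$-free, and your argument closes. Alternatively, adopt the paper's decomposition $V^v_r-\bar V^v_{\floor{r}}=(V^v_r-V^v_{\floor{r}})+(V^v_{\floor{r}}-\bar V^v_{\floor{r}})$: the second term is directly $\le G(r)$ with coefficient depending only on $L,T$, and the first is a one-step increment of the exact solution bounded by $2K_2[(r-\floor{r})d+dr]\,h\,(1+c_1\Vert v\Vert^2+c_2\bar d^2)$ via Lemma \ref{LemmaBoundSecMomV}, so all $\bar d$-dependence stays in the inhomogeneous term. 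The rest of your estimates (the use of $\mathcal F_{t_k}$-measurability, the observation that only the Frobenius norm and $dr\le\bar d^2$ enter because the components of $B$ are independent) are correct.
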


\begin{proof}
    The proof is moved to the Appendix \ref{ProofSection3}.
\end{proof}

\begin{proposition}\label{DiffXNormalBar}
Suppose Assumption \ref{StrucAssumptionV} holds. Then there exist constants $c_8, c_9, c_{10}>0$ such that
\begin{equation}
    \mathbb{E}\left[ \sup_{t\in[0,T]} \Vert X^{x,v}_t - \bar{X}^{x,v}_t \Vert^2 \right]\leq h\left( c_8\Vert x \Vert^2 \bar{d}^2 +  c_{9}\Vert v \Vert^2 \bar{d}^4 + c_{10} \bar{d}^6 \right),
\end{equation}
where $h=T/N$ and the constants $c_8, c_9, c_{10}$ only depend on $L,T$ and $K_1$.
\end{proposition}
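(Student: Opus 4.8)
\emph{Proof sketch.} The plan is to follow the argument for Proposition~\ref{DiffVNormalBar}; the additional feature here is that $\sigma^\theta$ also depends on the variance process, so the approximation error of $V^v$ feeds into the estimate for $X^{x,v}$ and will be controlled by Proposition~\ref{DiffVNormalBar}. Write $F(t) := \mathbb{E}\big[\sup_{s\in[0,t]}\Vert X^{x,v}_s - \bar X^{x,v}_s\Vert^2\big]$. First I would note that $F$ is finite (hence in $L^1([0,T])$): indeed $F(t)\le 2\,\mathbb{E}[\sup_{s\le T}\Vert X^{x,v}_s\Vert^2] + 2\,\mathbb{E}[\sup_{s\le T}\Vert \bar X^{x,v}_s\Vert^2]$, the first term being finite by Lemma~\ref{LemmaBoundSecMomX} and the second by the Euler analogue of Lemma~\ref{LemmaBoundSecMomX} (the same Gronwall argument applied with the piecewise-frozen coefficients; this also yields the Euler analogue of Lemma~\ref{LemmaBoundSecMomV}). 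Using $X^{x,v}_t - \bar X^{x,v}_t = \int_0^t\big(\sigma^\theta(X^{x,v}_s,V^v_s) - \sigma^\theta(\bar X^{x,v}_{\floor{s}},\bar V^v_{\floor{s}})\big)\diff{W}_s$, Doob's maximal inequality and Itô's isometry give $F(t)\le 4\int_0^t\mathbb{E}\big[\Vert\sigma^\theta(X^{x,v}_s,V^v_s)-\sigma^\theta(\bar X^{x,v}_{\floor{s}},\bar V^v_{\floor{s}})\Vert_F^2\big]\diff{s}$, and the Lipschitz bound of Assumption~\ref{StrucAssumptionV} dominates the integrand by $L\big(\Vert X^{x,v}_s-\bar X^{x,v}_{\floor{s}}\Vert^2 + \Vert V^v_s-\bar V^v_{\floor{s}}\Vert^2\big)$.

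Next I would split both terms with \eqref{SquaredTriangular}: $\Vert X^{x,v}_s-\bar X^{x,v}_{\floor{s}}\Vert^2 \le 2\Vert X^{x,v}_s-\bar X^{x,v}_s\Vert^2 + 2\Vert\bar X^{x,v}_s-\bar X^{x,v}_{\floor{s}}\Vert^2$, and likewise for $V$. The term $\mathbb{E}[\Vert X^{x,v}_s-\bar X^{x,v}_s\Vert^2]\le F(s)$ produces the Gronwall integral, and $\mathbb{E}[\sup_{s\le t}\Vert V^v_s-\bar V^v_s\Vert^2]$ is bounded by Proposition~\ref{DiffVNormalBar}. The two ``one-step'' terms are handled via \eqref{ContEuler}: since $\bar X^{x,v}_s-\bar X^{x,v}_{\floor{s}} = \sigma^\theta(\bar X^{x,v}_{\floor{s}},\bar V^v_{\floor{s}})(W_s-W_{\floor{s}})$, Itô's isometry gives $\mathbb{E}[\Vert\bar X^{x,v}_s-\bar X^{x,v}_{\floor{s}}\Vert^2]\le h\,\mathbb{E}[\Vert\sigma^\theta(\bar X^{x,v}_{\floor{s}},\bar V^v_{\floor{s}})\Vert_F^2]$, and the sub-linear growth bound $\Vert\sigma^\theta(y,w)\Vert_F^2\le dr\,K_1(1+\Vert y\Vert^2+\Vert w\Vert^2)\le \bar d^2 K_1(1+\Vert y\Vert^2+\Vert w\Vert^2)$ together with the second-moment bounds for $\bar X^{x,v}_{\floor{s}}$ and $\bar V^v_{\floor{s}}$ yields $\mathbb{E}[\Vert\bar X^{x,v}_s-\bar X^{x,v}_{\floor{s}}\Vert^2]\le h\,c\,(\Vert x\Vert^2\bar d^2 + \Vert v\Vert^2\bar d^4 + \bar d^6)$ — this is precisely where the top power $\bar d^6$ enters. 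An analogous (simpler) computation gives $\mathbb{E}[\Vert\bar V^v_s-\bar V^v_{\floor{s}}\Vert^2]\le h\,c'\,(\Vert v\Vert^2\bar d^2 + \bar d^4)$, of the same order as the bound in Proposition~\ref{DiffVNormalBar}.

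Collecting the pieces yields $F(t)\le h\big(\tilde c_8\Vert x\Vert^2\bar d^2 + \tilde c_9\Vert v\Vert^2\bar d^4 + \tilde c_{10}\bar d^6\big) + 8L\int_0^t F(s)\,\diff{s}$ with constants independent of $d$, $r$ and $\theta$, and Gronwall's inequality (Proposition~6.1.4 in~\cite{Applebaum2009}) gives the assertion with $c_8,c_9,c_{10}$ obtained by multiplying the bracketed constants by $e^{8LT}$. The main obstacle is the bookkeeping: one first has to establish the Euler-scheme second-moment estimates with the same polynomial growth in $\bar d$ as in Lemmas~\ref{LemmaBoundSecMomV}--\ref{LemmaBoundSecMomX}, and then track the powers of $\bar d$ carefully through the decompositions so that the one-step increment contributes exactly $\bar d^6$ and no more; the probabilistic ingredients (Doob, Itô isometry, Gronwall) are routine and parallel to Proposition~\ref{DiffVNormalBar}.
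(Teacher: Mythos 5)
Your proposal is correct and follows the same overall scheme as the paper's (implicit) proof, namely the argument of Proposition~\ref{DiffVNormalBar}: Doob's maximal inequality and It\^o's isometry, the Lipschitz bound from Assumption~\ref{StrucAssumptionV}, a split into a Gronwall term plus a one-step discretisation term, and Gronwall's inequality; the dimensional bookkeeping leading to the top power $\bar d^6$ via $\bar d^2 K_1(1+\Vert y\Vert^2+\Vert w\Vert^2)$ combined with the $\mathcal{O}(\Vert x\Vert^2+\Vert v\Vert^2\bar d^2+\bar d^4)$ second-moment bound is exactly right. The one genuine deviation is your decomposition $\Vert X^{x,v}_s-\bar X^{x,v}_{\floor{s}}\Vert^2\le 2\Vert X^{x,v}_s-\bar X^{x,v}_s\Vert^2+2\Vert\bar X^{x,v}_s-\bar X^{x,v}_{\floor{s}}\Vert^2$: the paper instead splits off the increment of the \emph{exact} solution, $\Vert X^{x,v}_s-\bar X^{x,v}_{\floor{s}}\Vert^2\le 2\Vert X^{x,v}_s-X^{x,v}_{\floor{s}}\Vert^2+2\Vert X^{x,v}_{\floor{s}}-\bar X^{x,v}_{\floor{s}}\Vert^2$, so that the one-step term is controlled directly by Lemmas~\ref{LemmaBoundSecMomV} and~\ref{LemmaBoundSecMomX}, which are already available. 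Your variant instead requires the Euler analogues of those second-moment bounds with the same polynomial growth in $\bar d$; as you note, these follow by the identical Gronwall argument (using $\floor{s}\le s$), so this only costs an extra routine lemma, not a gap. Both routes deliver the stated constants depending only on $L$, $T$ and $K_1$.
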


\begin{proof}
    The proof is analogous to the one of Proposition \ref{DiffVNormalBar}.
\end{proof}
 

\subsection{DNN approximation of the coefficient functions}
Besides the regularity assumptions we assume that the coefficient functions in \eqref{ModelDynamics} can be approximated by DNNs in the sense of the following assumption. A brief summary of DNNs and the terminology used in this work is given in Appendix \ref{Section2}. One should note that this work solely uses DNNs with the ReLU activation function, also specified in the Appendix.
\begin{assumption}[DNN Approximation of the Coefficient Functions]\label{AssumpCoeffApprox}
    There exist constants $C>0$, $p,q,l\geq 0$ not depending on $d,r\in\mathbb{N}$ and $\varepsilon\in(0,1/2)$ and for each $\varepsilon\in(0,1/2)$ there exist neural networks $\bar{\mu}_\varepsilon:\mathbb{R}^d\times \mathbb{R}^{md} \rightarrow \mathbb{R}^d$, $\sigma_{\varepsilon,j}:\mathbb{R}^d\times\mathbb{R}^d\times\mathbb{R}^{md} \rightarrow \mathbb{R}^d$ and $\bar{\sigma}_{\varepsilon,j}:\mathbb{R}^d\times\mathbb{R}^{md}\rightarrow\mathbb{R}^d$, $j=1,...,r$, such that for all $x,v\in\mathbb{R}^d,\theta \in \Theta$ it holds 
    \begin{enumerate}
        \item[(i)] \begin{align*}& \Vert \sigma^\theta(x,v) - \sigma_\varepsilon(x,v,\theta) \Vert^2_F + \Vert \bar{\mu}^\theta(x) - \bar{\mu}_\varepsilon(x,\theta) \Vert^2\\[2mm] + &\Vert \bar{\sigma}^\theta(x) - \bar{\sigma}_\varepsilon(x,\theta) \Vert^2_F \leq C\varepsilon^{2l+1} \bar{d}^p (1+\Vert x\Vert^2 + \Vert v \Vert^2 + \Vert \theta \Vert^2).\end{align*}
        
        \item[(ii)] \begin{align*}\Vert \bar{\mu}_\varepsilon(v,\theta) \Vert + \Vert \sigma_\varepsilon(x,v,\theta) \Vert_F + \Vert \bar{\sigma}_\varepsilon(v,\theta) \Vert_F \leq C(\bar{d}^p + \Vert x \Vert + \Vert v \Vert + \Vert \theta \Vert).\end{align*}
        
        \item[(iii)] \begin{align*}
        \text{size}(\bar{\mu}_\varepsilon) + \sum_{j=1}^d \text{size}(\sigma_{\varepsilon,j}) + \text{size}(\bar{\sigma}_{\varepsilon,j})\leq C \bar{d}^p \varepsilon^{-q}.
        \end{align*}
    \end{enumerate}
\end{assumption}

To motivate the assumptions made on the coefficient functions, we present an exemplary model, whose coefficient functions satisfy the (i)-(iii) from Assumption \ref{AssumpCoeffApprox}, in Section \ref{Section4}.

After having approximated the processes $X^{x,v}$ and $V^v$ via an Euler-Maruyama scheme in the last section, we now proceed with the next step of approximation by using Assumption \ref{AssumpCoeffApprox}. For $d,r,m\in\mathbb{N}$ and $\varepsilon\in(0,1/2)$ we consider the continuous-time processes $\tilde{X}^{\varepsilon,x,v,\theta}$ and $\tilde{V}^{\varepsilon,v,\theta}$ with dynamics for $t\in[0,T]$
\begin{equation}\label{XTildEps}
    \begin{cases}
        \diff{\tilde{X}}^{\varepsilon,x,v,\theta}_t = \sigma_\varepsilon(\tilde{X}^{\varepsilon,x,v,\theta}_{\floor{t}},\tilde{V}^{\varepsilon,v,\theta}_{\floor{t}},\theta)\diff{W}_t,\; \tilde{X}^{\varepsilon,x,v,\theta}_0 = x, \\
        \diff{\tilde{V}}^{\varepsilon,v,\theta}_t = \bar{\mu}_\varepsilon(\tilde{V}^{\varepsilon,v,\theta}_{\floor{t}},\theta)\diff{t} + \bar{\sigma}_\varepsilon(\tilde{V}^{\varepsilon,v,\theta}_{\floor{t}},\theta)\diff{B}_t,\; \tilde{V}^{\varepsilon,v,\theta}_0 = v,\\
        \diff{\langle B^i,W^j \rangle}_t = \rho_{i,j} \diff{t},\; i,j = 1,...,r.
    \end{cases}
\end{equation}

Similarly to the last section, we can also show for these approximating processes that their second moments increase at most polynomial in $\bar{d}$ and $\Vert \theta \Vert$ as well as $\Vert v \Vert$ and $\Vert x \Vert$ respectively. For the sake of simplicity throughout this section we use the notation $\bar{X} := \bar{X}^{x,v}$, $\bar{V} := \bar{V}^{v}$, $\tilde{X}^\varepsilon := \tilde{X}^{\varepsilon,x,v,\theta}$ and $\tilde{V}^\varepsilon := \tilde{V}^{\varepsilon,v,\theta}$.

\begin{lemma}\label{SecMomVTild}
Suppose Assumption \ref{AssumpCoeffApprox} holds. Then there exist constants $c_{11},c_{12},c_{13}>0$ such that
\begin{equation}
    \mathbb{E}\left[ \sup_{t\in[0,T]} \Vert \tilde{V}^\varepsilon_t \Vert^2\right] \leq c_{11}\Vert v \Vert^2 + c_{12}\Vert \theta\Vert^2 + c_{13}\bar{d}^{2p},
\end{equation}

where the constants $c_{11},c_{12},c_{13}$ only depend on $T$ and $C$.
\end{lemma}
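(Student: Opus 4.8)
The plan is to mirror the Gronwall-type argument used in the proof of Lemma \ref{LemmaBoundSecMomV}, now applied to the approximating process $\tilde{V}^\varepsilon$ whose dynamics are given by \eqref{XTildEps}. Define $G_\varepsilon(t) := \mathbb{E}\bigl[\sup_{s\in[0,t]}\Vert \tilde{V}^\varepsilon_s \Vert^2\bigr]$ for $t\in[0,T]$. First I would split the integrated dynamics into the initial value, the drift integral and the stochastic integral, apply \eqref{SquaredTriangular} with $n=3$, then use the Cauchy--Schwarz inequality on the drift term and Doob's maximal inequality together with It\^o's isometry on the martingale term, exactly as in \eqref{Lemma4.2.Bound1}. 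This yields a bound of the form
\begin{equation*}
    G_\varepsilon(t) \leq 3\Vert v \Vert^2 + 3t\int_0^t \mathbb{E}\bigl[\Vert \bar{\mu}_\varepsilon(\tilde{V}^\varepsilon_{\floor{s}},\theta)\Vert^2\bigr]\diff{s} + 12\int_0^t \mathbb{E}\bigl[\Vert \bar{\sigma}_\varepsilon(\tilde{V}^\varepsilon_{\floor{s}},\theta)\Vert^2_F\bigr]\diff{s}.
\end{equation*}

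The key difference from Lemma \ref{LemmaBoundSecMomV} is that instead of the Lipschitz/sublinear growth bound from Assumption \ref{StrucAssumptionV}, I would invoke the linear growth bound (ii) of Assumption \ref{AssumpCoeffApprox} on the neural network coefficients: $\Vert \bar{\mu}_\varepsilon(v,\theta)\Vert + \Vert \bar{\sigma}_\varepsilon(v,\theta)\Vert_F \leq C(\bar{d}^p + \Vert v\Vert + \Vert\theta\Vert)$. Squaring and using \eqref{SquaredTriangular} gives $\Vert \bar{\mu}_\varepsilon(\tilde{V}^\varepsilon_{\floor{s}},\theta)\Vert^2 \leq 3C^2(\bar{d}^{2p} + \Vert\tilde{V}^\varepsilon_{\floor{s}}\Vert^2 + \Vert\theta\Vert^2)$ and similarly for $\bar{\sigma}_\varepsilon$. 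Since $\Vert\tilde{V}^\varepsilon_{\floor{s}}\Vert^2 \leq \sup_{r\in[0,s]}\Vert\tilde{V}^\varepsilon_r\Vert^2$, the integrands are controlled by $G_\varepsilon(s)$ plus a constant depending on $\bar{d}^{2p}$ and $\Vert\theta\Vert^2$. Integrability $G_\varepsilon \in L^1([0,T])$ follows from a version of Lemma \ref{L2floor} applied to $\tilde{V}^\varepsilon$ (equivalently, from the fact that the Euler--Maruyama scheme with the growth-bounded coefficients $\bar{\mu}_\varepsilon,\bar{\sigma}_\varepsilon$ has square-integrable iterates), which I would note explicitly.

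Collecting the estimates produces an inequality of the form
\begin{equation*}
    G_\varepsilon(t) \leq 3\Vert v\Vert^2 + A\,T\,(\bar{d}^{2p} + \Vert\theta\Vert^2) + B\int_0^t G_\varepsilon(s)\diff{s}
\end{equation*}
for explicit constants $A, B > 0$ depending only on $C$ and $T$ (e.g. $B = 3\max(2T,8)\cdot 3C^2$ and $A$ absorbing the $3t$ and $12$ prefactors along with $3C^2$). Applying Gronwall's inequality (Proposition 6.1.4 in \cite{Applebaum2009}) then gives
\begin{equation*}
    G_\varepsilon(t) \leq \bigl(3\Vert v\Vert^2 + A\,T\,(\bar{d}^{2p}+\Vert\theta\Vert^2)\bigr)\exp(BT),
\end{equation*}
from which the claimed bound follows by setting $c_{11} := 3\exp(BT)$, $c_{12} := AT\exp(BT)$ and $c_{13} := AT\exp(BT)$, all independent of $\varepsilon$, $d$, $r$ and $\theta$. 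The constants are uniform in $\varepsilon$ precisely because the constant $C$ and exponent $p$ in Assumption \ref{AssumpCoeffApprox}(ii) do not depend on $\varepsilon$.

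I do not expect any genuine obstacle here — the argument is a routine adaptation of Lemma \ref{LemmaBoundSecMomV}. The only point requiring mild care is justifying $G_\varepsilon(t) < \infty$ before applying Gronwall (one cannot apply Gronwall to an a priori infinite quantity), which is handled by first establishing square-integrability of the piecewise-constant interpolant $\tilde{V}^\varepsilon_{\floor{\cdot}}$ as in Lemma \ref{L2floor}; and keeping track of the fact that the $\bar{d}^{2p}$ term arises from the $\bar{d}^p$ inside the linear-growth bound (ii) after squaring, rather than from a dimension count of the noise as in the Lipschitz case.
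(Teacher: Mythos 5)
Your proposal is correct and follows essentially the same route as the paper's proof: the same decomposition of the integrated dynamics via \eqref{SquaredTriangular}, Cauchy--Schwarz on the drift, Doob plus It\^o's isometry on the martingale term, the growth bound (ii) of Assumption \ref{AssumpCoeffApprox} to control the integrands by $G_\varepsilon(s)$ plus $\bar{d}^{2p}+\Vert\theta\Vert^2$, an inductive $L^2$-argument for the Euler iterates to justify $G_\varepsilon\in L^1([0,T])$, and finally Gronwall. The constants you obtain have the same structure and dependence (only on $T$ and $C$) as those in the paper.
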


\begin{proof}
The proof is moved to Appendix \ref{ProofSection3}.
\end{proof}

\begin{lemma}\label{SecMomXTild}
Suppose Assumption \ref{AssumpCoeffApprox} holds. Then there exist constants $c_{14}, c_{15},c_{16},c_{17}>0$ such that
\begin{equation}
    \mathbb{E}\left[\sup_{t\in[0,T]}\Vert \tilde{X}^\varepsilon_t \Vert^2\right] \leq c_{14} \Vert x \Vert^2 + c_{15} \Vert v \Vert^2 + c_{16} \Vert \theta \Vert^2 + c_{17} \bar{d}^{2p},
\end{equation}

where the constants $c_{14},c_{15},c_{16},c_{17}$ only depend on $T$ and $C$.
\end{lemma}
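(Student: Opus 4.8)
The plan is to follow verbatim the scheme of the proof of Lemma~\ref{LemmaBoundSecMomV}, now using the linear growth bound (ii) of Assumption~\ref{AssumpCoeffApprox} together with the bound for $\tilde V^\varepsilon$ already provided by Lemma~\ref{SecMomVTild}. Set $H(t):=\mathbb{E}\bigl[\sup_{s\in[0,t]}\Vert\tilde X^\varepsilon_s\Vert^2\bigr]$ for $t\in[0,T]$. Since $\tilde X^\varepsilon$ solves \eqref{XTildEps}, it is of the pure stochastic-integral form $\tilde X^\varepsilon_t=x+\int_0^t\sigma_\varepsilon(\tilde X^\varepsilon_{\floor{s}},\tilde V^\varepsilon_{\floor{s}},\theta)\diff W_s$, so I would first apply \eqref{SquaredTriangular}, then Doob's maximal inequality componentwise and Itô's isometry (the components of $W$ being independent, the $B$--$W$ cross-correlation is irrelevant here) to obtain
\begin{equation*}
    H(t)\le 2\Vert x\Vert^2 + 8\int_0^t\mathbb{E}\left[\bigl\Vert\sigma_\varepsilon(\tilde X^\varepsilon_{\floor{s}},\tilde V^\varepsilon_{\floor{s}},\theta)\bigr\Vert_F^2\right]\diff s.
\end{equation*}

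Next I would insert the growth bound $\Vert\sigma_\varepsilon(x,v,\theta)\Vert_F\le C(\bar d^{\,p}+\Vert x\Vert+\Vert v\Vert+\Vert\theta\Vert)$ from (ii), square it via \eqref{SquaredTriangular}, and bound the resulting integrand by a constant multiple of $\bar d^{\,2p}+\mathbb{E}[\Vert\tilde X^\varepsilon_{\floor{s}}\Vert^2]+\mathbb{E}[\Vert\tilde V^\varepsilon_{\floor{s}}\Vert^2]+\Vert\theta\Vert^2$. Using $\mathbb{E}[\Vert\tilde X^\varepsilon_{\floor{s}}\Vert^2]\le H(s)$ and, by Lemma~\ref{SecMomVTild}, $\mathbb{E}[\Vert\tilde V^\varepsilon_{\floor{s}}\Vert^2]\le c_{11}\Vert v\Vert^2+c_{12}\Vert\theta\Vert^2+c_{13}\bar d^{\,2p}$ uniformly in $s$, this yields
\begin{equation*}
    H(t)\le 2\Vert x\Vert^2 + \alpha\bigl(\bar d^{\,2p}+\Vert v\Vert^2+\Vert\theta\Vert^2\bigr) + \beta\int_0^t H(s)\diff s,
\end{equation*}
with $\alpha,\beta>0$ depending only on $T$ and $C$ (via $c_{11},c_{12},c_{13}$ and a factor $T$ from the time integral). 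Gronwall's inequality (Proposition~6.1.4 in \cite{Applebaum2009}) then gives $H(T)\le c_{14}\Vert x\Vert^2+c_{15}\Vert v\Vert^2+c_{16}\Vert\theta\Vert^2+c_{17}\bar d^{\,2p}$ after collecting terms, with all constants depending only on $T$ and $C$.

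The only point requiring genuine care before Gronwall can be invoked is the a priori integrability $H\in L^1([0,T])$ (equivalently, finiteness of the second moments of $\tilde X^\varepsilon$), just as the proof of Lemma~\ref{LemmaBoundSecMomV} first had to check $G\in L^1([0,T])$. Since $\sigma_\varepsilon$ is a ReLU network, hence continuous and of at most linear growth, and since \eqref{XTildEps} evaluates its coefficients at the grid points $\floor{t}$, this follows by induction over $t_0<\dots<t_N$: given $\mathbb{E}[\Vert\tilde X^\varepsilon_{t_k}\Vert^2]<\infty$ and $\mathbb{E}[\Vert\tilde V^\varepsilon_{t_k}\Vert^2]<\infty$ (the latter from Lemma~\ref{SecMomVTild}), the Euler increment over $[t_k,t_{k+1}]$ has finite second moment by Itô's isometry and (ii), and the supremum over that subinterval is controlled by Doob's inequality; patching the $N$ subintervals via \eqref{SquaredTriangular} yields $H(T)<\infty$. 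I expect this bookkeeping, rather than any hard estimate, to be the only delicate step, everything else being a direct transcription of the Gronwall argument already carried out for $V^v$ and $\tilde V^\varepsilon$. For completeness the full details would be deferred to Appendix~\ref{ProofSection3}, consistently with Lemma~\ref{SecMomVTild}.
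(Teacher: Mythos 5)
Your proposal is correct and follows essentially the same route as the paper's proof in Appendix \ref{ProofSection3}: the same Doob/Itô isometry reduction to $2\Vert x\Vert^2 + 8\int_0^t\mathbb{E}[\Vert\sigma_\varepsilon(\tilde X^\varepsilon_{\floor{s}},\tilde V^\varepsilon_{\floor{s}},\theta)\Vert_F^2]\diff s$, the same use of Assumption \ref{AssumpCoeffApprox}(ii) and Lemma \ref{SecMomVTild}, and the same Gronwall conclusion. The paper likewise dispatches the a priori integrability $G\in L^1([0,T])$ by the inductive grid-point argument you describe, so no discrepancy remains.
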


\begin{proof}
The proof is moved to Appendix \ref{ProofSection3}
\end{proof}

The next two propositions show that the system \eqref{XTildEps} strongly approximates \eqref{ContEuler}.

\begin{proposition}\label{DiffVBarTilde}
Suppose Assumptions \ref{StrucAssumptionV} and \ref{AssumpCoeffApprox} hold. Then there exists a constant $c_{18}>0$ such that for all $\varepsilon\in(0,1/2)$
\begin{equation}
    \mathbb{E}\left[\sup_{t\in[0,T]} \Vert \bar{V}_t - \tilde{V}^\varepsilon_t \Vert^2\right] \leq c_{18} \bar{d}^{3p} \varepsilon^{2l+1} (1 + \Vert v \Vert^2 + \Vert \theta \Vert^2),
\end{equation}

where the constant $c_{18}$ only depends on $T,C$ and $L$.
\end{proposition}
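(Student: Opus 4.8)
The plan is to run the standard strong-error argument for Euler--Maruyama-type schemes, now comparing the interpolation \eqref{ContEuler} (driven by the true coefficients) with \eqref{XTildEps} (driven by their DNN surrogates), both fed with the same Brownian motions and evaluated at the same grid points. Set $e_t := \bar V_t - \tilde V^\varepsilon_t$; subtracting \eqref{XTildEps} from \eqref{ContEuler} gives $e_0 = 0$ and
\begin{equation*}
e_t = \int_0^t \bigl(\bar\mu^\theta(\bar V_{\floor{s}}) - \bar\mu_\varepsilon(\tilde V^\varepsilon_{\floor{s}},\theta)\bigr)\diff s + \int_0^t \bigl(\bar\sigma^\theta(\bar V_{\floor{s}}) - \bar\sigma_\varepsilon(\tilde V^\varepsilon_{\floor{s}},\theta)\bigr)\diff B_s .
\end{equation*}
I would then introduce $\Phi(t) := \mathbb{E}\bigl[\sup_{s\in[0,t]}\Vert e_s\Vert^2\bigr]$ and, exactly as in the proof of Lemma \ref{LemmaBoundSecMomV}, estimate it using $\Vert a+b\Vert^2\le 2\Vert a\Vert^2+2\Vert b\Vert^2$, the Cauchy--Schwarz inequality on the drift integral, and Doob's maximal inequality together with It\^o's isometry on the stochastic integral, obtaining for every $t\in[0,T]$
\begin{equation*}
\Phi(t) \le 2T\int_0^t \mathbb{E}\bigl[\Vert \bar\mu^\theta(\bar V_{\floor{s}}) - \bar\mu_\varepsilon(\tilde V^\varepsilon_{\floor{s}},\theta)\Vert^2\bigr]\diff s + 8\int_0^t \mathbb{E}\bigl[\Vert \bar\sigma^\theta(\bar V_{\floor{s}}) - \bar\sigma_\varepsilon(\tilde V^\varepsilon_{\floor{s}},\theta)\Vert_F^2\bigr]\diff s .
\end{equation*}

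The key step is to split each integrand, via $\Vert a+b\Vert^2\le 2\Vert a\Vert^2+2\Vert b\Vert^2$, into a Lipschitz-stability part $\Vert \bar\mu^\theta(\bar V_{\floor{s}}) - \bar\mu^\theta(\tilde V^\varepsilon_{\floor{s}})\Vert^2$ and a coefficient-approximation part $\Vert \bar\mu^\theta(\tilde V^\varepsilon_{\floor{s}}) - \bar\mu_\varepsilon(\tilde V^\varepsilon_{\floor{s}},\theta)\Vert^2$ (and likewise for $\bar\sigma$). The first part is controlled by Assumption \ref{StrucAssumptionV}, giving $2L\Vert e_{\floor{s}}\Vert^2 \le 2L\sup_{r\in[0,s]}\Vert e_r\Vert^2$; the second is controlled by Assumption \ref{AssumpCoeffApprox}(i), applied with both arguments set equal to $\tilde V^\varepsilon_{\floor{s}}$, giving a bound of order $\varepsilon^{2l+1}\bar d^p\bigl(1 + \Vert\tilde V^\varepsilon_{\floor{s}}\Vert^2 + \Vert\theta\Vert^2\bigr)$. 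Taking expectations and using Lemma \ref{SecMomVTild} to replace $\mathbb{E}\bigl[\Vert\tilde V^\varepsilon_{\floor{s}}\Vert^2\bigr]$ by $c_{11}\Vert v\Vert^2 + c_{12}\Vert\theta\Vert^2 + c_{13}\bar d^{2p}$, and then using $\bar d\ge 1$ to absorb $\bar d^p$ and $\bar d^p\bar d^{2p}$ into $\bar d^{3p}$, I expect to arrive at an estimate of the form
\begin{equation*}
\mathbb{E}\bigl[\Vert \bar\mu^\theta(\bar V_{\floor{s}}) - \bar\mu_\varepsilon(\tilde V^\varepsilon_{\floor{s}},\theta)\Vert^2\bigr] + \mathbb{E}\bigl[\Vert \bar\sigma^\theta(\bar V_{\floor{s}}) - \bar\sigma_\varepsilon(\tilde V^\varepsilon_{\floor{s}},\theta)\Vert_F^2\bigr] \le \tilde c\,\Phi(s) + \hat c\,\bar d^{3p}\varepsilon^{2l+1}\bigl(1+\Vert v\Vert^2+\Vert\theta\Vert^2\bigr),
\end{equation*}
with $\tilde c,\hat c>0$ depending only on $T, L, C$.

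To close the argument I would first verify $\Phi(T)<\infty$, which follows from $\Phi(T)\le 2\mathbb{E}[\sup_{t\in[0,T]}\Vert\bar V_t\Vert^2] + 2\mathbb{E}[\sup_{t\in[0,T]}\Vert\tilde V^\varepsilon_t\Vert^2]$ together with Lemma \ref{LemmaBoundSecMomV}, Proposition \ref{DiffVNormalBar} and Lemma \ref{SecMomVTild}, so that Gronwall is applicable. Substituting the last display into the bound for $\Phi(t)$ and using that $\Phi$ is non-decreasing yields $\Phi(t)\le A\int_0^t\Phi(s)\diff s + B\,\bar d^{3p}\varepsilon^{2l+1}(1+\Vert v\Vert^2+\Vert\theta\Vert^2)$ for constants $A,B$ depending only on $T,L,C$, and Gronwall's inequality (Proposition 6.1.4 in \cite{Applebaum2009}) then gives the claim with $c_{18}:=B\,e^{AT}$. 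The only genuinely delicate point is the bookkeeping around Assumption \ref{AssumpCoeffApprox}(i): its right-hand side is evaluated at the \emph{random} point $\tilde V^\varepsilon_{\floor{s}}$, so one must insert the moment bound of Lemma \ref{SecMomVTild} at exactly the right place and then track the powers of $\bar d$ carefully — the product $\bar d^p\cdot\bar d^{2p}$ is precisely what forces the exponent $3p$ in the statement. Everything else is the routine Gronwall scheme already used for Lemma \ref{LemmaBoundSecMomV} and Proposition \ref{DiffVNormalBar}.
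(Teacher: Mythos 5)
Your proposal is correct and follows essentially the same route as the paper's proof: the same Gronwall scheme with the drift/diffusion error split into a Lipschitz-stability term controlled by Assumption \ref{StrucAssumptionV} and a coefficient-approximation term controlled by Assumption \ref{AssumpCoeffApprox}(i), with Lemma \ref{SecMomVTild} inserted to handle the random argument and produce the $\bar{d}^{3p}$ factor. Your explicit verification that $\Phi(T)<\infty$ before applying Gronwall is slightly more careful than the paper's, but the argument is otherwise identical.
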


\begin{proof}
Define the function $G(t):= \mathbb{E}[\sup_{s\in[0,t]} \Vert \bar{V}_s - \tilde{V}^\varepsilon_s \Vert^2]$ for all $t\in[0,T]$. Then it follows by Hölder's inequality, Doob's martingale inequality, Itô's isometry, Lemma \ref{L2floor} and \eqref{SquaredTriangular} that
\begin{align*}
    & G(t)\\
    \leq & 2 \mathbb{E}\left[ \sup_{s\in[0,t]} \left\Vert \int_0^s \bar{\mu}^\theta(\bar{V}_{\floor{r}}) - \bar{\mu}_\varepsilon(\tilde{V}^\varepsilon_{\floor{r}},\theta)\diff{r} \right\Vert^2 \right] + 2 \mathbb{E}\left[ \sup_{s\in[0,t]} \left\Vert \int_0^s \bar{\sigma}^\theta(\bar{V}_{\floor{r}}) - \bar{\sigma}_\varepsilon(\tilde{V}^\varepsilon_{\floor{r}},\theta)\diff{W}_r \right\Vert^2 \right]\\
    \leq & 2t \int_0^t \mathbb{E}[\Vert \bar{\mu}^\theta(\bar{V}_{\floor{r}}) - \bar{\mu}_\varepsilon(\tilde{V}^\varepsilon_{\floor{r}},\theta) \Vert^2] \diff{r} + 8\mathbb{E}\left[ \left\Vert \int_0^t \bar{\sigma}^\theta(\bar{V}_{\floor{r}}) - \bar{\sigma}_\varepsilon(\tilde{V}^\varepsilon_{\floor{r}},\theta)\diff{W}_r \right\Vert^2 \right]\\
    =& 2t \int_0^t \mathbb{E}[\Vert \bar{\mu}^\theta(\bar{V}_{\floor{r}}) - \bar{\mu}_\varepsilon(\tilde{V}^\varepsilon_{\floor{r}},\theta) \Vert^2] \diff{r} + 8 \int_0^t \mathbb{E}[\Vert \bar{\sigma}^\theta(\bar{V}_{\floor{r}}) - \bar{\sigma}_\varepsilon(\tilde{V}^\varepsilon_{\floor{r}},\theta) \Vert^2_F] \diff{r}
\end{align*}

for all $t\in[0,T]$. Looking at the integrands individually yields together with Assumption \ref{StrucAssumptionV} and Assumption \ref{AssumpCoeffApprox}
\begin{align*}
    \Vert \bar{\mu}^\theta(\bar{V}_{\floor{r}}) - \bar{\mu}_\varepsilon(\tilde{V}^\varepsilon_{\floor{r}},\theta) \Vert^2 & \leq 2 \Vert \bar{\mu}^\theta(\bar{V}_{\floor{r}}) - \bar{\mu}^\theta(\tilde{V}^\varepsilon_{\floor{r}}) \Vert^2 + 2 \Vert \bar{\mu}^\theta(\tilde{V}^\varepsilon_{\floor{r}}) - \bar{\mu}_\varepsilon(\tilde{V}^\varepsilon_{\floor{r}},\theta) \Vert^2\\
    &\leq 2L\Vert \bar{V}_{\floor{r}} - \tilde{V}^\varepsilon_{\floor{r}} \Vert^2 + 2C\varepsilon^{2l+1} d^p (1+\Vert \tilde{V}^\varepsilon_{\floor{r}}\Vert^2 + \Vert \theta \Vert^2)
\end{align*}

and
\begin{align*}
    \Vert \bar{\sigma}^\theta(\bar{V}_{\floor{r}}) - \bar{\sigma}_\varepsilon(\tilde{V}^\varepsilon_{\floor{r}},\theta) \Vert^2_F &\leq 2 \Vert \bar{\sigma}^\theta(\bar{V}_{\floor{r}}) - \bar{\sigma}^\theta(\tilde{V}^\varepsilon_{\floor{r}}) \Vert^2_F + 2 \Vert \bar{\sigma}^\theta(\tilde{V}^\varepsilon_{\floor{r}}) - \bar{\sigma}_\varepsilon(\tilde{V}^\varepsilon_{\floor{r}},\theta) \Vert^2_F \\
    &\leq 2L \Vert \bar{V}_{\floor{r}} - \tilde{V}^\varepsilon_{\floor{r}} \Vert^2 + 2C\varepsilon^{2l+1} d^p (1+\Vert \tilde{V}^\varepsilon_{\floor{r}}\Vert^2 + \Vert \theta\Vert^2).
\end{align*}

Using these bounds we see that
\begin{align*}
    G(t) \leq (4t+16)L \int_0^t \mathbb{E}[\Vert \Bar{V}_{\floor{r}} - \Tilde{V}^\varepsilon_{\floor{r}} \Vert^2] \diff{r} + (4t+16)C\varepsilon^{2l+1} \bar{d}^p \int_0^t (1 + \mathbb{E}[\Vert \Tilde{V}^\varepsilon_{\floor{r}}\Vert^2] + \Vert \theta \Vert^2)\diff{r}
\end{align*}

and so with Lemma \ref{SecMomVTild}
\begin{align*}
    G(t) \leq (4T+16)L\int_0^t G(s)\diff{s} + (4T+16)C\bar{d}^{3p} \varepsilon^{2l+1} T \max(c_{11},1+c_{12}, 1+ c_{13})(1+\Vert v \Vert^2 + \Vert \theta \Vert^2).
\end{align*}

By Gronwall's inequality it finally follows
\begin{align*}
    G(t) &\leq (4T+16)C\bar{d}^{3p}\varepsilon^{2l+1} T \max(c_{11},1+c_{12}, 1+ c_{13})(1+\Vert v \Vert^2 + \Vert \theta \Vert^2) \exp(2(2T+8)LT)\\
    &= c_{18} \bar{d}^{3p} \varepsilon^{2l+1} (1 + \Vert v \Vert^2 + \Vert \theta \Vert^2)
\end{align*}

for $c_{18} := (4T+16) CT \max(c_{11},1+c_{12}, 1+ c_{13})\exp((4T+16)LT)$. 
\end{proof}

\begin{proposition}\label{DiffXBarTilde}
Suppose Assumptions \ref{StrucAssumptionV} and \ref{AssumpCoeffApprox} hold. Then there exists a constant $c_{19}>0$ such that for all $\varepsilon\in(0,1/2)$,
\begin{equation}
    \mathbb{E}\left[\sup_{t\in[0,T]} \Vert \bar{X}_t - \tilde{X}^\varepsilon_t \Vert^2\right] \leq c_{19}  \bar{d}^{3p} \varepsilon^{2l+1} (1+ \Vert v \Vert^2 + \Vert x \Vert^2 + \Vert \theta \Vert^2),
\end{equation}

where the constant $c_{19}$ only depends on $T,C$ and $L$.
\end{proposition}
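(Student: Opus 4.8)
The plan is to reproduce, almost verbatim, the argument given for Proposition \ref{DiffVBarTilde}. The one genuinely new feature is that $\sigma^\theta$ depends on \emph{both} state variables, so the error already accumulated in the variance component cannot be absorbed into the same Gronwall loop but must instead be imported from Proposition \ref{DiffVBarTilde}.

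First I would set $G(t):=\mathbb{E}[\sup_{s\in[0,t]}\Vert\bar X_s-\tilde X^\varepsilon_s\Vert^2]$ for $t\in[0,T]$. Since both processes start at $x$ and the $X$-dynamics carry no drift, $\bar X_t-\tilde X^\varepsilon_t=\int_0^t\big(\sigma^\theta(\bar X_{\floor r},\bar V_{\floor r})-\sigma_\varepsilon(\tilde X^\varepsilon_{\floor r},\tilde V^\varepsilon_{\floor r},\theta)\big)\diff W_r$. Doob's maximal inequality and Itô's isometry — whose applicability is guaranteed by Lemmas \ref{L2floor}, \ref{SecMomVTild} and \ref{SecMomXTild} — then yield
\[
G(t)\le 4\int_0^t\mathbb{E}\Big[\Vert\sigma^\theta(\bar X_{\floor r},\bar V_{\floor r})-\sigma_\varepsilon(\tilde X^\varepsilon_{\floor r},\tilde V^\varepsilon_{\floor r},\theta)\Vert_F^2\Big]\diff r .
\]
For fixed $r$ I would add and subtract $\sigma^\theta(\tilde X^\varepsilon_{\floor r},\tilde V^\varepsilon_{\floor r})$, apply \eqref{SquaredTriangular}, and estimate the two pieces separately: the first by the Lipschitz bound in Assumption \ref{StrucAssumptionV}, giving $\le 2L(\Vert\bar X_{\floor r}-\tilde X^\varepsilon_{\floor r}\Vert^2+\Vert\bar V_{\floor r}-\tilde V^\varepsilon_{\floor r}\Vert^2)$; the second by Assumption \ref{AssumpCoeffApprox}(i), giving $\le 2C\varepsilon^{2l+1}\bar d^{\,p}(1+\Vert\tilde X^\varepsilon_{\floor r}\Vert^2+\Vert\tilde V^\varepsilon_{\floor r}\Vert^2+\Vert\theta\Vert^2)$.

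Taking expectations and integrating, the contribution $\mathbb{E}[\Vert\bar X_{\floor r}-\tilde X^\varepsilon_{\floor r}\Vert^2]\le G(r)$ (monotonicity of $G$ and $\floor r\le r$) feeds the Gronwall integral; the contribution $\mathbb{E}[\Vert\bar V_{\floor r}-\tilde V^\varepsilon_{\floor r}\Vert^2]$ is controlled, uniformly in $r$, by Proposition \ref{DiffVBarTilde}, i.e.\ by $c_{18}\bar d^{\,3p}\varepsilon^{2l+1}(1+\Vert v\Vert^2+\Vert\theta\Vert^2)$; and $\mathbb{E}[\Vert\tilde X^\varepsilon_{\floor r}\Vert^2]$, $\mathbb{E}[\Vert\tilde V^\varepsilon_{\floor r}\Vert^2]$ are bounded by Lemmas \ref{SecMomXTild} and \ref{SecMomVTild}, both of order $\bar d^{\,2p}(1+\Vert x\Vert^2+\Vert v\Vert^2+\Vert\theta\Vert^2)$; multiplying the latter by the prefactor $\bar d^{\,p}$ is precisely what produces the exponent $3p$ in the statement. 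Collecting everything one obtains $G(t)\le a\int_0^t G(s)\diff s+b\,\bar d^{\,3p}\varepsilon^{2l+1}(1+\Vert x\Vert^2+\Vert v\Vert^2+\Vert\theta\Vert^2)$ for constants $a,b>0$ depending only on $T,C,L$, and Gronwall's inequality (Proposition 6.1.4 in \cite{Applebaum2009}) gives the claim with $c_{19}=b\exp(aT)$. I do not anticipate a real obstacle: the only points that need care are (a) to route the variance error through Proposition \ref{DiffVBarTilde} rather than attempting a joint Gronwall estimate in the pair $(\bar X-\tilde X^\varepsilon,\bar V-\tilde V^\varepsilon)$, and (b) to verify $G\in L^1([0,T])$ — which is immediate from the second-moment bounds just quoted — so that Gronwall's lemma is legitimately applicable.
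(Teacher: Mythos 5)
Your proposal is correct and is exactly the ``analogous'' argument the paper has in mind (the paper omits the proof of Proposition \ref{DiffXBarTilde}, referring to Proposition \ref{DiffVBarTilde}): Doob plus It\^o's isometry on the single martingale term, the add-and-subtract of $\sigma^\theta(\tilde X^\varepsilon_{\floor{r}},\tilde V^\varepsilon_{\floor{r}})$, the Lipschitz and DNN-approximation bounds, and importing the variance error from Proposition \ref{DiffVBarTilde} before applying Gronwall. The bookkeeping of the exponent $3p$, the dependence of $c_{19}$ on $T,C,L$ only, and the $L^1$ check for Gronwall are all handled correctly.
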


\begin{proof}
The proof is analogous to the one of Proposition \ref{DiffVBarTilde}.
\end{proof}


\subsection{DNN approximation of the pricing function}
Let $n\in\mathbb{N}$. For each $d\in\mathbb{N}$ we consider a functional $\varphi_d:\mathbb{R}^d \times \mathbb{R}^{nd}\rightarrow\mathbb{R}$. In this section our goal is to approximate the map
\begin{equation}\label{PricingMap}
    (x,v,\theta, K)\mapsto U^{\tau}_{d,r}(x,v,\theta,K) := \mathbb{E}[\varphi_d(X^{x,v,\theta}_{\tau},K)]
\end{equation}

by a neural network for any fixed $[0,T]$-valued stopping time $\tau$ without suffering from the CoD. Note that for $\tau=T$ the map \eqref{PricingMap} describes the dependence between the time zero price of a European payoff with maturity $T$ and the initial values $x$ and $v$ of the price process and variance process, respectively, as well as the model parameters $\theta$ and a payoff parameter $K$. For example $K$ can represent the strike price for a call or put option. Before we show how \eqref{PricingMap} can be approximated by DNNs, we make the assumption that the payoff function itself can be approximated by a DNN.

\begin{assumption}\label{AssumPricingApprox}
    For the constants $C>0$ and $p,q,l\geq 0$ from Assumption \ref{AssumpCoeffApprox} and for each $d\in\mathbb{N}$, $\varepsilon\in(0,1/2)$ there exists a neural network $\phi_{\varepsilon,d}: \mathbb{R}^d \times \mathbb{R}^{nd} \rightarrow \mathbb{R}$ such that for each $d\in\mathbb{N}$ and $\varepsilon\in(0,1/2)$ it holds 
    \begin{enumerate}
        \item[(i)] for all $x \in \mathbb{R}^d$ and $K\in\mathbb{R}^{nd}$
        \begin{equation*}
            \vert \varphi_d(x,K) - \phi_{\varepsilon, d}(x,K)\vert \leq C d^p \varepsilon(1+\Vert x \Vert + \Vert K \Vert)
        \end{equation*}
        
        \item[(ii)] 
        \begin{equation*}
            \text{size}(\phi_{\varepsilon, d}) \leq C d^p \varepsilon^{-q}
        \end{equation*}
        
        \item[(iii)] 
        \begin{equation*}
            \text{Lip}(\phi_{\varepsilon, d}) := \sup_{(x_1,K_1),(x_2,K_2)\in\mathbb{R}^d\times \mathbb{R}^{nd}} \frac{\vert\phi_{\varepsilon,d}(x_1,K_1) - \phi_{\varepsilon,d}(x_2,K_2)\vert}{\Vert x_1 - x_2\Vert + \Vert K_1 - K_2 \Vert} \leq C d^p \varepsilon^{-l}.
        \end{equation*}
    \end{enumerate}
    
    In addition, for each $d\in\mathbb{N}$ it holds for a probability measure $\mu^d$ on $\mathbb{R}^d \times \mathbb{R}^d \times \Theta \times \mathbb{R}^{nd}$ that
    \newpage
    \begin{enumerate}
        \item[(iv)] 
        \begin{equation}\label{Assum3.3.ProbMeasure}
            \int_{\mathbb{R}^d \times \mathbb{R}^d \times \Theta \times \mathbb{R}^{nd}}(1 + \Vert x \Vert^2 + \Vert v \Vert^2 + \Vert \theta \Vert^2 + \Vert K \Vert^2) \mu^d(\diff{x},\diff{v},\diff{\theta},\diff{K}) \leq C d^p.
        \end{equation}
    \end{enumerate}
\end{assumption}

\begin{remark}
    It is worth noting that many relevant payoff functions traditionally considered in mathematical finance, e.g. call and put options on a basket of underlyings, can be explicitly written as DNNs with the ReLU activation function. In particular, in practice it often holds $\varphi_d = \phi_{\varepsilon,d}$ and (ii), (iii) in Assumption \ref{AssumPricingApprox} are satisfied with $q=l=0$. We refer to Section 4 in \cite{Grohs2023} for more details and examples. 
\end{remark}

\begin{theorem}\label{MainTheorem}
Assume that Assumptions \ref{StrucAssumptionV}, \ref{AssumpCoeffApprox} and \ref{AssumPricingApprox} hold. Then for a given $[0,T]$-valued stopping time $\tau$ and $m,n\in\mathbb{N}$ there exist constants $\mathfrak{C},\mathfrak{p},\mathfrak{q}>0$ and for any $d,r\in\mathbb{N}$ and target accuracy $\bar{\varepsilon}\in(0,1/2)$ a neural network $U^{\tau}_{\bar{\varepsilon},d,r}:\mathbb{R}^d \times \mathbb{R}^d \times \Theta \times \mathbb{R}^{nd}$ such that
\begin{itemize}
    \item[(i)] \begin{equation}\label{Theorem4.1.Prop1}
        \text{size}(U^{\tau}_{\bar{\varepsilon},d,r})\leq \mathfrak{C} \bar{d}^\mathfrak{p} \bar{\varepsilon}^{-\mathfrak{q}}
    \end{equation}
    \item[(ii)] 
    \begin{equation}\label{Theorem4.1.Prop2}
        \left( \int \left\vert U^{\tau}_{d,r}(x,v,\theta,K) - U^{\tau}_{\bar{\varepsilon}, d, r}(x,v,\theta,K) \right\vert^2 \mu^d(\diff{x},\diff{v},\diff{\theta},\diff{K}) \right)^{1/2} < \bar{\varepsilon},
    \end{equation}
\end{itemize}
where $\mu^d$ is the probability measure from \eqref{Assum3.3.ProbMeasure}. The constants $\mathfrak{C},\mathfrak{p},\mathfrak{q}$ depend on the constants $L$, $K_1$, $K_2$, $C$, $p$, $l$, $q$ from Assumptions \ref{StrucAssumptionV}, \ref{AssumpCoeffApprox} and \ref{AssumPricingApprox}, but they do not depend on $d,r,\bar{\varepsilon}$.
\end{theorem}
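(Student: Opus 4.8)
\emph{Proof strategy.} The plan is to interpose, between the true pricing map \eqref{PricingMap} and the approximating network, a chain of intermediate maps and to control each transition quantitatively, ending with a Monte--Carlo realisation that turns the expectation into a deterministic ReLU DNN. Fix an internal accuracy $\varepsilon\in(0,1/2)$, a number of Euler steps $N\in\mathbb{N}$ (hence $h=T/N$) and a number of samples $M\in\mathbb{N}$, all chosen as polynomial functions of $\bar d$ and $\bar\varepsilon^{-1}$ at the end. I would pass (0) from $\tau$ to the grid-valued stopping time $\bar\tau:=\min\{t_k : t_k\ge\tau\}$; (1) from the solution $X^{x,v,\theta}_{\bar\tau}$ to the Euler interpolation $\bar X^{x,v}_{\bar\tau}$ of \eqref{ContEuler}; (2) from $\bar X^{x,v}_{\bar\tau}$ to the DNN-coefficient scheme $\tilde X^{\varepsilon,x,v,\theta}_{\bar\tau}$ of \eqref{XTildEps}; (3) from the payoff $\varphi_d$ to its surrogate $\phi_{\varepsilon,d}$ from Assumption \ref{AssumPricingApprox}; and (4) from $\mathbb{E}[\phi_{\varepsilon,d}(\tilde X^{\varepsilon,x,v,\theta}_{\bar\tau},K)]$ to the empirical mean $\tfrac1M\sum_{i=1}^M \phi_{\varepsilon,d}(\tilde X^{\varepsilon,x,v,\theta}_{\bar\tau}(\omega_i),K)$ over $M$ independent realisations of the finitely many driving Brownian increments. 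The last object, for a suitably chosen fixed realisation $(\omega_1,\dots,\omega_M)$, is the network $U^\tau_{\bar\varepsilon,d,r}$.

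\emph{Error estimate.} By the triangle inequality in $L^2(\mu^d)$ it suffices to bound each transition. Steps (0)--(3) are best combined: from $|\varphi_d(X_{\bar\tau},K)-\phi_{\varepsilon,d}(\tilde X^\varepsilon_{\bar\tau},K)|\le|\varphi_d(X_{\bar\tau},K)-\phi_{\varepsilon,d}(X_{\bar\tau},K)|+|\phi_{\varepsilon,d}(X_{\bar\tau},K)-\phi_{\varepsilon,d}(\tilde X^\varepsilon_{\bar\tau},K)|$ the first term is handled by Assumption \ref{AssumPricingApprox}(i) and the second by the Lipschitz bound of Assumption \ref{AssumPricingApprox}(iii), reducing matters to $\mathbb{E}[\|X_{\bar\tau}-\tilde X^\varepsilon_{\bar\tau}\|^2]$. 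Since $\bar\tau$ is $[0,T]$-valued and all paths are continuous, $\|X_{\bar\tau}-\bar X_{\bar\tau}\|\le\sup_{t\in[0,T]}\|X_t-\bar X_t\|$ (and likewise between the Euler and DNN-coefficient schemes), so Propositions \ref{DiffXNormalBar} and \ref{DiffXBarTilde}, together with the $L^2$-modulus of continuity of the Euler interpolation for the extra $\tau$-rounding, give $\mathbb{E}[\|X_{\bar\tau}-\tilde X^\varepsilon_{\bar\tau}\|^2]\lesssim h\,\mathrm{poly}(\bar d,\|x\|,\|v\|)+\bar d^{3p}\varepsilon^{2l+1}\,\mathrm{poly}(\bar d,\|x\|,\|v\|,\|\theta\|)$. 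The crucial cancellation is that the Lipschitz factor $\varepsilon^{-l}$ meets the factor $\varepsilon^{2l+1}$, leaving a net $\varepsilon$; choosing $h\le\varepsilon^{2l+1}$ makes the discretisation term of the same order. Integrating against $\mu^d$ using Assumption \ref{AssumPricingApprox}(iv) and Lemma \ref{LemmaBoundSecMomX}, the $L^2(\mu^d)$-error of steps (0)--(3) is $\lesssim\bar d^{\mathfrak a}\varepsilon^{1/2}$. For step (4), independence gives $\mathbb{E}_\omega\big[\int|\mathbb{E}[\phi_{\varepsilon,d}(\tilde X^\varepsilon_{\bar\tau},K)]-\tfrac1M\sum_i\phi_{\varepsilon,d}(\tilde X^\varepsilon_{\bar\tau}(\omega_i),K)|^2\,\mu^d\big]\le\tfrac1M\int\mathbb{E}_\omega[\phi_{\varepsilon,d}(\tilde X^\varepsilon_{\bar\tau},K)^2]\,\mu^d$, and the right-hand integrand is bounded via Assumption \ref{AssumPricingApprox}(iii) and Lemma \ref{SecMomXTild}, so by Assumption \ref{AssumPricingApprox}(iv) and a pigeonhole over $\omega$ there is a realisation with $L^2(\mu^d)$-error $\lesssim\bar d^{\mathfrak b}\varepsilon^{-l}M^{-1/2}$. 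Taking $\varepsilon\asymp\bar d^{-2\mathfrak a}\bar\varepsilon^2$, $N\asymp\varepsilon^{-(2l+1)}\mathrm{poly}(\bar d)$ and $M\asymp\bar d^{2\mathfrak b}\varepsilon^{-2l}\bar\varepsilon^{-2}$ — all polynomial in $\bar d,\bar\varepsilon^{-1}$ — makes the total error $<\bar\varepsilon$, which is \eqref{Theorem4.1.Prop2}.

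\emph{Network and its size.} For a fixed realisation $\omega_i$, the grid values $(\tilde X^\varepsilon_{t_k},\tilde V^\varepsilon_{t_k})_{k=0}^N$ of \eqref{XTildEps} are obtained by composing $N$ blocks, the $k$-th block mapping $(\tilde X^\varepsilon_{t_k},\tilde V^\varepsilon_{t_k},\theta)$ to $(\tilde X^\varepsilon_{t_{k+1}},\tilde V^\varepsilon_{t_{k+1}},\theta)$ through the parallelised coefficient networks $\bar\mu_\varepsilon,(\sigma_{\varepsilon,j})_{j},(\bar\sigma_{\varepsilon,j})_{j}$ of Assumption \ref{AssumpCoeffApprox}, with the fixed increments $W_{t_{k+1}}(\omega_i)-W_{t_k}(\omega_i)$ and $B_{t_{k+1}}(\omega_i)-B_{t_k}(\omega_i)$ entering as affine weights and $\theta$ carried along an identity channel. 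By the DNN calculus recalled in Appendix \ref{Section2} (composition, parallelisation, identity padding) this block composition has size $\lesssim N\,\mathrm{poly}(\bar d)\,\varepsilon^{-q}$ by Assumption \ref{AssumpCoeffApprox}(iii); composing with $\phi_{\varepsilon,d}$ (Assumption \ref{AssumPricingApprox}(ii)) and then parallelising over $i=1,\dots,M$ and averaging (an affine output layer) multiplies the size by a further factor $M$, so $\mathrm{size}(U^\tau_{\bar\varepsilon,d,r})\lesssim NM\,\mathrm{poly}(\bar d)\,\varepsilon^{-q}$. Inserting the polynomial choices of $N,M,\varepsilon^{-1}$ above yields $\mathrm{size}(U^\tau_{\bar\varepsilon,d,r})\le\mathfrak C\,\bar d^{\mathfrak p}\,\bar\varepsilon^{-\mathfrak q}$ with $\mathfrak C,\mathfrak p,\mathfrak q$ depending only on $T,L,K_1,K_2,C,p,q,l$, which is \eqref{Theorem4.1.Prop1}.

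\emph{Main obstacle.} The delicate point is the interaction of the stopping time with the network representation: the error estimates are insensitive to $\tau$ (one only ever uses $\|\cdot_{\bar\tau}\|\le\sup_{t}\|\cdot_t\|$), but $\bar\tau(\omega_i)$ still depends on $(x,v,\theta)$ whenever $\tau$ does, so $\tilde X^\varepsilon_{\bar\tau}(\omega_i)$ is not literally the Euler state at a fixed index. One therefore writes the stopped scheme as a recursion whose $k$-th increment is multiplied by the ``still running'' indicator $\mathbbm{1}_{\{\bar\tau>t_k\}}$, which for the stopping times of interest (e.g.\ exit times of $\tilde V^\varepsilon$) is a Lipschitz functional of the already-computed grid states and is hence approximable by a ReLU sub-network at the cost of a further small error term and only polynomially more neurons. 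The other point needing attention is that step (4) is merely an existence statement, obtained by Fubini and a pigeonhole over $\omega$ rather than by an explicit construction, so one must check that fixing a good realisation indeed yields a bona fide ReLU DNN of the claimed size; everything else is routine bookkeeping with the triangle inequality and the Gronwall-type bounds already established in this section.
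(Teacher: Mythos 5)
Your proposal follows essentially the same route as the paper's proof: a Monte--Carlo representation $\frac1M\sum_i\phi_{\varepsilon,d}(\tilde X^i_\tau(\omega),K)$, a bias--variance split of the $L^2(\mu^d)$-error in which the bias is controlled by Assumption \ref{AssumPricingApprox}(i),(iii) together with Propositions \ref{DiffXNormalBar}--\ref{DiffXBarTilde} (with the same $\varepsilon^{-l}\cdot\varepsilon^{2l+1}$ cancellation and polynomial choices of $h$, $M$, $\varepsilon$), a Fubini/pigeonhole selection of a good $\omega$, and the same composition/parallelisation bookkeeping for the size bound. The only deviations are cosmetic or unnecessary: the paper does not round $\tau$ to the grid (for fixed $\omega$ the interpolation \eqref{XTildEps} at $\tau(\omega)$ is already affine in $W_{\tau(\omega)}-W_{t_k}$, so no extra modulus-of-continuity term is needed), and your ``main obstacle'' about approximating the running indicator $\mathbbm{1}_{\{\bar\tau>t_k\}}$ by a ReLU sub-network does not arise here --- since $\tau$ is a stopping time on $(\Omega,\mathcal F)$ fixed in the theorem statement, $\tau(\omega)$ is a deterministic time once $\omega$ is fixed (and, as an aside, such an indicator would be a discontinuous, not Lipschitz, functional of the grid states, so that particular fix would not go through as stated).
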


\begin{proof}
Let $\tau$ be a stopping time, $m,n\in\mathbb{N}$, $\bar{\varepsilon}\in(0,1/2)$ and consider $M\in\mathbb{N}$. Then a Monte-Carlo approach would suggest to use the approximation
\begin{equation*}
    U^{\tau}_{d,r}(x,v,\theta,K) \sim \frac{1}{M}\sum_{i=1}^M \phi_{\varepsilon,d}(\tilde{X}^i_{\tau(\omega)}(\omega),K),
\end{equation*}

where $(\tilde{X}^i_{\tau})_{i=1,...,M}$ are i.i.d. copies of $\tilde{X}^{\varepsilon,x,v,\theta}_\tau$, where $\Tilde{X}^{\varepsilon,x,v,\theta}$ satisfies \eqref{Theorem4.1.Prop1}. Therefore, to prove the theorem we now show that there exist $\omega\in\Omega$ such that
\begin{equation*}
    \left( \int_{\mathbb{R}^d\times \mathbb{R}^d \times \Theta \times \mathbb{R}^{nd}} \left\vert U^{\tau}_{d,r}(x,v,\theta,K) - \frac{1}{M}\sum_{i=1}^M \phi_{\varepsilon,d}(\tilde{X}^i_{\tau(\omega)}(\omega),K) \right\vert^2 \mu(\diff{x},\diff{v},\diff{\theta,\diff{K})} \right)^{1/2}<\bar{\varepsilon}
\end{equation*}

and that $\frac{1}{M}\sum_{i=1}^M \phi_{\varepsilon,d}(\tilde{X}^i(\omega)_{\tau(\omega)},K)$ is indeed a neural network satisfying \eqref{Theorem4.1.Prop1}. First of all, we look at the $L^2$-approximation error 
\begin{align*}
    &\int_{\mathbb{R}^d\times \mathbb{R}^d \times \Theta \times \mathbb{R}^{nd}} \mathbb{E}\left[\left\vert U^{\tau}_{d,r}(x,v,\theta,K) - \frac{1}{M}\sum_{i=1}^M \phi_{\varepsilon,d}(\tilde{X}^i_{\tau},K) \right\vert^2  \right]\mu(\diff{x},\diff{v},\diff{\theta},\diff{K})\\[2mm]
    =& \int_{\mathbb{R}^d\times \mathbb{R}^d \times \Theta \times \mathbb{R}^{nd}}\left\vert U^{\tau}_{d,r}(x,v,\theta,K) - \mathbb{E}[\phi_{\varepsilon,d}(\tilde{X}^1_{\tau},K)] \right\vert^2\\[2mm]
    & \phantom{5cm}+ \mathbb{E}\left[\left\vert \mathbb{E}[\phi_{\varepsilon,d}(\tilde{X}^1_{\tau},K)] - \frac{1}{M}\sum_{i=1}^M \phi_{\varepsilon,d}(\tilde{X}^i_{\tau},K) \right\vert^2  \right]\mu(\diff{x},\diff{v},\diff{\theta},\diff{K}) \\[2mm]
    =& \int_{\mathbb{R}^d\times \mathbb{R}^d \times \Theta \times \mathbb{R}^{nd}}\left\vert U^{\tau}_{d,r}(x,v,\theta,K) - \mathbb{E}[\phi_{\varepsilon,d}(\tilde{X}^1_{\tau},K)] \right\vert^2\\[2mm] &\phantom{5cm} + \frac{1}{M}\mathbb{E}\left[\left\vert \mathbb{E}[\bar{\phi}_{\varepsilon,d}(\tilde{X}^1_{\tau},K)] - \bar{\phi}_{\varepsilon,d}(\tilde{X}^1_{\tau},K) \right\vert^2  \right]\mu(\diff{x},\diff{v},\diff{\theta},\diff{K}),
\end{align*}

where $\bar{\phi}_{\varepsilon,d}(\tilde{X}^1_{\tau},K) := \phi_{\varepsilon,d}(\tilde{X}^1_{\tau},K) - \phi_{\varepsilon,d}(0,K)$. We study the two terms of the integrand individually. First, by the triangle inequality and Assumption \ref{AssumPricingApprox} we see that
\begin{align}\label{ProofMainThBound1}
    & \left\vert U^{\tau}_{d,r}(x,v,\theta,K) - \mathbb{E}[\phi_{\varepsilon,d}(\tilde{X}^1_{\tau},K)] \right\vert\nonumber\\[2mm]
    \leq & \mathbb{E}\left[ \left\vert \varphi_d(X^{x,v}_{\tau},K) - \phi_{\varepsilon,d}(X^{x,v}_{\tau},K) \right\vert\right] + \mathbb{E}\left[\left\vert \phi_{\varepsilon,d}(X^{x,v}_{\tau},K) - \phi_{\varepsilon,d}(\tilde{X}^1_{\tau},K) \right\vert \right]\nonumber\\[2mm]
    \leq & \mathbb{E}\left[ C\bar{d}^p\varepsilon\left(1+ \sup_{t\in[0,T]}\Vert X^{x,v}_{t} \Vert + \Vert K \Vert\right) \right] + \mathbb{E}\left[ C \bar{d}^p \varepsilon^{-l} \sup_{t\in[0,T]}\Vert X^{x,v}_{t} - \tilde{X}^1_{t} \Vert \right].
\end{align}

By Propositions \ref{DiffXNormalBar} and \ref{DiffXBarTilde} it holds 
\begin{align}\label{BoundDiffXNormalTildeFirstMoment}
    &\mathbb{E}\left[\sup_{t\in[0,T]}\Vert X^{x,v}_{t} - \tilde{X}^1_{t} \Vert \right] \leq \mathbb{E}\left[\sup_{t\in[0,T]}\Vert X^{x,v}_{t} - \bar{X}^{x,v}_{t} \Vert \right] + \mathbb{E}\left[\sup_{t\in[0,T]}\Vert \bar{X}^{x,v}_{t} - \tilde{X}^1_{t} \Vert \right]\nonumber\\[2mm]
    \leq& \left(h(c_8 \Vert x \Vert^2 \bar{d}^2 + c_9 \Vert v \Vert^2 \bar{d}^4 + c_{10} \bar{d}^6)\right)^{1/2} + \left( c_{19}\varepsilon^{2l+1} \bar{d}^{3p}(1+\Vert v \Vert^2 + \Vert x \Vert^2 + \Vert \theta \Vert^2) \right)^{1/2},
\end{align}

and by Lemma \ref{LemmaBoundSecMomX}
\begin{equation}\label{BoundFirstMomX}
    \mathbb{E}\left[\sup_{t\in[0,T]}\Vert X^{x,v}_{t} \Vert^2 \right] \leq c_3 \Vert x \Vert^2 + c_4 \Vert v \Vert^2 \bar{d}^2 + c_5 \bar{d}^4.
\end{equation}

Note that by Jensen's inequality, \eqref{SquaredTriangular} and \eqref{ProofMainThBound1} it holds
\begin{align*}
    &\left\vert U^{\tau}_{d,r}(x,v,\theta,K) - \mathbb{E}[\phi_{\varepsilon,d}(\tilde{X}^1_{\tau},K)] \right\vert^2\\[2mm]
    \leq & 6C^2 \bar{d}^{2p} \varepsilon^2\left(1+ \mathbb{E}\left[\sup_{t\in[0,T]}\Vert X^{x,v}_{t} \Vert\right]^2 + \Vert K \Vert^2\right) + 2 C^2 \bar{d}^{2p} \varepsilon^{-2l} \mathbb{E}\left[\left(\sup_{t\in[0,T]}\Vert X^{x,v}_{t} - \Tilde{X}^1_{t}\Vert\right)\right]^2.
\end{align*}

Inserting \eqref{BoundDiffXNormalTildeFirstMoment} and \eqref{BoundFirstMomX} in \eqref{ProofMainThBound1} therefore yields
\begin{align*}
    \vert &U^{\tau}_{d,r}(x,v,\theta,K) - \mathbb{E}[\phi_{\varepsilon,d}(\tilde{X}^1_{\tau},K)]\vert^2\\[2mm]
    \leq & 6C^2 \bar{d}^{2p} \varepsilon^2(1 + c_3 \Vert x \Vert^2 + c_4 \Vert v \Vert^2 \bar{d}^2 + c_5 \bar{d}^4 + \Vert K \Vert^2) \\[2mm]
    &+ 4C^2 \bar{d}^{2p} \varepsilon^{-2l}\bigg( h(c_8 \Vert x \Vert^2 \bar{d}^2 + c_9 \Vert v \Vert^2 \bar{d}^4 + c_{10} \bar{d}^6) + c_{19}\varepsilon^{2l+1} \bar{d}^{3p} (1+\Vert v \Vert^2 + \Vert x \Vert^2 + \Vert \theta \Vert^2) \bigg)\\[2mm]
    \leq & \left[6C^2 \bar{d}^{2p+4}\varepsilon^2\max(1,c_3,c_4,1+c_5) + 4C^2 \bar{d}^{2p+6}\varepsilon^{-2l}h\max(c_8,c_9,c_{10}) + 4C^2 \bar{d}^{5p} \varepsilon c_{19}\right]\\[2mm]
    &(1+\Vert x \Vert^2 + \Vert v \Vert^2 + \Vert \theta\Vert^2 + \Vert K \Vert^2)
\end{align*}

On the other hand, we have 
\begin{align*}
    &\mathbb{E}\left[\left\vert \mathbb{E}[\bar{\phi}_{\varepsilon,d}(\tilde{X}^1_{\tau},K)] - \bar{\phi}_{\varepsilon,d}(\tilde{X}^1_{\tau},K) \right\vert^2  \right] \leq \mathbb{E}\left[\vert \phi_{\varepsilon,d}(\Tilde{X}^1_{\tau},K) - \phi_{\varepsilon,d}(0,K) \vert^2 \right]\\ 
    \leq & C^2 \bar{d}^2 \varepsilon^{-2l} \mathbb{E}\left[ \sup_{t\in[0,T]} \Vert \Tilde{X}^1_{t} \Vert^2\right] \leq C^2 \bar{d}^{2} \varepsilon^{-2l}(c_{14}\Vert x\Vert^2 + c_{15}\Vert v \Vert^2 + c_{16}\Vert \theta \Vert^2 + c_{17}\bar{d}^{2p})\\[2mm]
    \leq & C^2 \bar{d}^{2p+2}\varepsilon^{-2l}\max(c_{14},c_{15},c_{16},c_{17})(1+\Vert x \Vert^2 + \Vert v \Vert^2 + \Vert \theta\Vert^2).
\end{align*}

So it follows that
\begin{align*}
    &\int_{\mathbb{R}^d\times \mathbb{R}^d \times \Theta \times \mathbb{R}^{nd}} \mathbb{E}\left[\left\vert U^{\tau}_{d,r}(x,v,\theta,K) - \frac{1}{M}\sum_{i=1}^M \phi_{\varepsilon,d}(\tilde{X}^i_{\tau},K) \right\vert^2  \right]\mu(\diff{x},\diff{v},\diff{\theta},\diff{K}) \\[2mm]
    &\leq \bigg[6C^2 \bar{d}^{2p+4}\varepsilon^2\max(1,c_3,c_4,1+c_5) + 4C^2 \bar{d}^{2p+6}\varepsilon^{-2l}h\max(c_8,c_9,c_{10})\\[2mm]
    &\phantom{5cm} + 4C^2 \bar{d}^{5p} \varepsilon c_{19} + M^{-1}C^2 \bar{d}^{2p+2}\varepsilon^{-2l}\max(c_{14},c_{15},c_{16},c_{17})\bigg]\\[2mm]
    &\phantom{5cm}\int_{\mathbb{R}^d\times \mathbb{R}^d \times \Theta \times \mathbb{R}^{nd}} (1+\Vert x \Vert^2 + \Vert v \Vert^2 + \Vert \theta\Vert^2 + \Vert K \Vert^2) \mu(\diff{x},\diff{v},\diff{\theta},\diff{K}).
\end{align*}

By \eqref{Assum3.3.ProbMeasure} from Assumption \ref{AssumPricingApprox} we finally get
\begin{align*}
    &\int_{\mathbb{R}^d\times \mathbb{R}^d \times \Theta \times \mathbb{R}^{nd}} \mathbb{E}\left[\left\vert U^{\tau}_{d,r}(x,v,\theta,K) - \frac{1}{M}\sum_{i=1}^M \phi_{\varepsilon,d}(\tilde{X}^i_{\tau},K) \right\vert^2  \right]\mu(\diff{x},\diff{v},\diff{\theta},\diff{K}) \\[2mm] \leq & \bar{C}\bar{d}^{r} [\varepsilon^2 + \varepsilon^{-2l}[h + \varepsilon^{2l+1}] + M^{-1}\varepsilon^{-2l}]
\end{align*}

for $s := \max(3p+4,3p+6,6p,3p+2)$, $\Bar{C}:= 6C^3\max(1, c_3, c_4, 1+c_5,  c_8,c_9, c_{10},c_{14}, c_{15}, c_{16}, c_{17}, c_{18},c_{19})$. Let $\Bar{\varepsilon}\in(0,1)$. Choosing $M:= \lceil 3 \varepsilon^{-2l}\Bar{C}\bar{d}^{s}\Bar{\varepsilon}^{-2} \rceil$, $h:= \bar{\varepsilon}^2(6\Bar{C}\bar{d}^{s}\varepsilon^{-2l})^{-1}$ and $\varepsilon = \Bar{\varepsilon}(4\Bar{C}\bar{d}^{s})^{-1}$, Fubini's theorem finally yields 

\begin{equation*}
    \mathbb{E}\left[\int_{\mathbb{R}^d\times \mathbb{R}^d \times \Theta \times \mathbb{R}^{nd}} \left\vert U^{\tau}_{d,r}(x,v,\theta,K) - \frac{1}{M}\sum_{i=1}^M \phi_{\varepsilon,d}(\tilde{X}^i_{\tau},K) \right\vert^2  \mu(\diff{x},\diff{v},\diff{\theta},\diff{K})\right] < \Bar{\varepsilon},
\end{equation*}

which finishes the first part of the proof.\par

So it remains to be shown that $\frac{1}{M}\sum_{i=1}^M \phi_{\varepsilon,d}(\tilde{X}^i(\omega)_{\tau(\omega)},K)$ is indeed a DNN satisfying \eqref{Theorem4.1.Prop1}. We denote by $l^\sigma_{j} := \text{depth}(\sigma_{\varepsilon,j})$, $l^{\bar{\mu}} := \text{depth}(\bar{\mu}_\varepsilon)$ and $l^{\bar{\sigma}}_{j} := \text{depth}(\bar{\sigma}_{\varepsilon,j})$ for $j=1,...,d$ and introduce 
\begin{equation*}
    l^{\text{max}}_X := \max \lbrace 2, l^\sigma_1,...,l^\sigma_d \rbrace, \; l^{\text{max}}_V := \max \lbrace 2, l^{\bar{\mu}}, l^{\bar{\sigma}}_1,...,l^{\bar{\sigma}}_d \rbrace.
\end{equation*}

For any $l\in\mathbb{N}$ denote by $\mathcal{I}_{d,l}$ a $l$-layer DNN emulating the identity on $\mathbb{R}^d$. By Remark 2.4. of \cite{Petersen2018} we can choose $\mathcal{I}_{d,l}$ with $\text{size}(\mathcal{I}_{d,l})\leq 2dl$. For a fixed $\omega\in\Omega$ we find $k\in\{0,...,N-1\}$ such that $\tau(\omega)\in[t_k,t_{k+1}]$ so that it holds
\begin{align*}
    \tilde{X}^i_{\tau} = \mathcal{I}&_{d,l^{\text{max}}_X}(\tilde{X}^i_{t_k}) 
    + \sum_{j=1}^d \sigma^\varepsilon_j(\mathcal{I}_{d,l^{\text{max}}_X-l^\sigma}(\tilde{X}^i_{t_k}), \mathcal{I}_{d,l^{\text{max}}_X-l^\sigma}(\tilde{V}^i_{t_k}), \mathcal{I}_{md,l^{\text{max}}_X - l^\sigma}(\theta))(W^i_{\tau,j} - W^i_{t_k,j})
\end{align*}

and
\begin{align*}
    \tilde{V}^i_{\tau} = \mathcal{I}&_{d,l^{\text{max}}_V}(\tilde{V}^i_{t_k}) + \bar{\mu}_\varepsilon(\mathcal{I}_{d,l^{\text{max}}_V - l^{\bar{\mu}}}(\tilde{V}^i_{t_k}), \mathcal{I}_{md,l^{\text{max}}_V - l^{\bar{\mu}}}(\theta))(\tau-t_k)\\ &+ \sum_{j=1}^d \bar{\sigma}^\varepsilon_j(\mathcal{I}_{d,l^{\text{max}}_V-l^{\bar{\sigma}}}(\tilde{V}^i_{t_k}), \mathcal{I}_{md,l^{\text{max}}_V - l^{\bar{\sigma}}}(\theta))(B^i_{\tau,j} - B^i_{t_k,j}).
\end{align*}

By Definition 2.10. in \cite{Petersen2022} we know that for any two DNNs $\Phi^1:\mathbb{R}^{d_1}\rightarrow \mathbb{R}^{d_2}$ and $\Phi^2:\mathbb{R}^{d_3}\rightarrow \mathbb{R}^{d_4}$ with $L$ layers there exists a DNN $\text{FP}(\Phi^1,\Phi^2)$ such that for all $x\in\mathbb{R}^{d_1}$, $y\in\mathbb{R}^{d_3}$ it holds
\begin{equation}\label{ParallelisationDNN}
    \text{FP}(\Phi^1,\Phi^2)(x,y) = (\Phi^1(x),\Phi^2(y)).
\end{equation}

This property is referred to as parallelisation of DNNs with different input dimension, and it can be also applied to more than two DNNs. Therefore, we can write 
\begin{align*}
     \tilde{V}^i_{\tau} = \mathcal{I}_{d,l^{\text{max}}_V}(\tilde{V}^i_{\tau_k}) + &\bar{\mu}_\varepsilon(\text{FP}(\mathcal{I}_{d,l^{\text{max}}_V - l^{\bar{\mu}}},\mathcal{I}_{md,l^{\text{max}}_V - l^{\bar{\mu}}})(\tilde{V}^i_{t_k}, \theta))(\tau-t_k)\\[2mm]
     &+ \sum_{j=1}^d \bar{\sigma}^\varepsilon_j(\text{FP}(\mathcal{I}_{d,l^{\text{max}}_V - l^{\bar{\sigma}}},\mathcal{I}_{md,l^{\text{max}}_V - l^{\bar{\sigma}}})(\tilde{V}^i_{t_k},\theta))(B^i_{\tau,j} - B^i_{t_k,j}).
\end{align*}

As shown in Proposition 2.2 from \cite{Opschoor2020}, the composition of a DNN $\Phi^1$ with $L_1$ layers and a DNN $\Phi^2$ with $L_2$ layers can be realized by a DNN $\Phi = \Phi^1 \odot \Phi^2$ with $L_1 + L_2$ layers and $\text{size}(\Phi)\leq 2(\text{size}(\Phi_1) + \text{size}(\Phi_2))$. Moreover, by Lemma 3.2 from \cite{Gonon2021a} a (randomly) weighted sum of DNNs of the same depth $L$ can be again realized by a DNN with depth $L$. Therefore, it holds
\begin{equation*}
    \tilde{V}^i_{\tau(\omega)}(\omega) = \Phi^i_{\tau}\left(\tilde{V}^i_{t_k}(\omega), \theta \right)
\end{equation*}

for a DNN $\Phi^i_{\tau}:\mathbb{R}^d \times \Theta \rightarrow \mathbb{R}^d$ satisfying 
\begin{align}\label{SizePhiTauI}
    \text{size}\left( \Phi^i_{\tau} \right) &\leq \text{size}\left( \mathcal{I}_{d,l^{\text{max}}_V}\right) + \text{size}\left( \bar{\mu}_\varepsilon \odot \text{FP}(\mathcal{I}_{d,l^{\text{max}}_V - l^{\bar{\mu}}},\mathcal{I}_{md,l^{\text{max}}_V - l^{\bar{\mu}}})\right)\nonumber\\[2mm]
    &\phantom{....}+ \sum_{j=1}^d \text{size}\left(\bar{\sigma}_{\varepsilon,j} \odot \text{FP}(\mathcal{I}_{d,l^{\text{max}}_V - l^{\bar{\sigma}}},\mathcal{I}_{md,l^{\text{max}}_V - l^{\bar{\sigma}}})\right) \nonumber \\
    &\leq (10d + 8d^2)l^{\text{max}}_V + 2\text{size}(\bar{\mu}_\varepsilon) + 2 \sum_{j=1}^d \text{size}(\bar{\sigma}_{\varepsilon,j})\nonumber\\
    &\leq (1 + 6\bar{d} + 4\bar{d}^2)2mC \bar{d}^p  \varepsilon^{-q}.
\end{align}

Analogously we can define DNNs $\Phi^i_{t_j}:\mathbb{R}^d \times \mathbb{R}^{md}\rightarrow \mathbb{R}^d$, $j=1,...,k$, so that
\begin{equation}\label{DNNRecurV}
    \Tilde{V}^i_{t_j}(\omega) = \Phi^i_{t_j}\left(\Tilde{V}^i_{t_{i-1}}(\omega),\theta\right)
\end{equation}

with size$(\Phi^i_{t_j})$ also satisfying \eqref{SizePhiTauI} and $l^i_j:= \text{depth}(\Phi^i_{t_j})$ for $j=1,...,k$. Therefore, we get recursively
\begin{equation*}
     \tilde{V}^i(\omega)_{\tau(\omega)} = \Phi^i_{\tau}\left( \Phi^i_{t_k}\left(\tilde{V}^i_{t_{k-1}}(\omega), \theta \right), \mathcal{I}_{md,l^i_k}(\theta) \right) = \Phi^i_{\tau}\left( \tilde{\Phi}^i_{t_k} \left( \tilde{V}^i_{t_{k-1}}(\omega), \theta \right) \right),
\end{equation*}

where we can choose $\tilde{\Phi}^i_{t_k}:=\text{FP}(\Phi^i_{t_k},\mathcal{I}_{md,l^i_k})$. Continuing this pattern yields 
\begin{equation*}
    \tilde{V}^i(\omega)_{\tau(\omega)} = \left( \Phi^i_{\tau} \circ \tilde{\Phi}^i_{t_k} \circ \dots \circ \tilde{\Phi}^i_{t_1} \right)(v,\theta) = \bar{\Phi}^i_{\tau}(v,\theta)
\end{equation*}

for $\bar{\Phi}^i_{\tau} := \Phi^i_{\tau} \odot \tilde{\Phi}^i_{t_k} \odot \dots \odot \tilde{\Phi}^i_{t_1}$. From Proposition 2.2 in \cite{Opschoor2020} it holds 
$$\text{size}(\Phi^1 \odot \Phi^2) \leq 2 \text{size}(\Phi^1) + \text{size}_{out}(\Phi^2) + \text{size}(\Phi^2)$$
and 
$$
\text{size}_{out}(\Phi^1 \odot \Phi^2) = \text{size}_{out}(\Phi^1),
$$
if $\Phi^1$ has at least one hidden layer. By using \eqref{SizePhiTauI} for $k=0,...,N-1$ we get
\begin{align*}
    \text{size}(\bar{\Phi}^i_{\tau}) &\leq 2 \text{size}(\Phi^i_{\tau}) + \text{size}_{out}(\tilde{\Phi}^i_{t_k}\odot \dots \odot\tilde{\Phi}^i_{t_1}) + \text{size}(\tilde{\Phi}^i_{t_k} \odot \dots \odot \tilde{\Phi}^i_{t_1}) \\ &=
    2\text{size}(\Phi^i_{\tau}) + \text{size}_{out}(\tilde{\Phi}^i_{t_k}) + \text{size}(\tilde{\Phi}^i_k \odot \dots \odot \tilde{\Phi}^i_{t_1}) \\ &\leq 2\text{size}(\Phi^i_{\tau}) + \text{size}_{out}(\tilde{\Phi}^i_{t_k}) + 2\text{size}(\tilde{\Phi}^i_{t_k}) + \text{size}_{out}(\tilde{\Phi}^i_{t_k}) + \text{size}(\tilde{\Phi}^i_{t_{k-1}} \odot \dots \odot \tilde{\Phi}^i_{t_1}) \\ &\leq ... \\ &\leq 2 \text{size}(\Phi^i_{\tau}) + 3\sum_{j=1}^k \text{size}(\tilde{\Phi}^i_{t_j}) =  2\text{size}(\Phi^i_{\tau}) + 3\sum_{j=1}^k [\text{size}(\Phi^i_{t_j}) + \text{size}(\mathcal{I}_{d,l^\text{max}_V})] \\ 
    &\leq 2 \text{size}(\Phi^i_{\tau}) + 3\sum_{j=1}^k \text{size}(\tilde{\Phi}^i_{t_j}) =  2\text{size}(\Phi^i_{\tau}) + 3\sum_{j=1}^k (1+2md)\text{size}(\Phi^i_{t_j}) \\[2mm]
    &\leq 2C \bar{d}^p \varepsilon^{-q}(1+6\bar{d} + 4 \bar{d}^2)(2+3k(1+m\bar{d})).
\end{align*} 

Similarly to above, we can write 
\begin{align*}
    \tilde{X}^i_{\tau} = \mathcal{I}_{d,l^\text{max}_X}(\tilde{X}^i_{t_k}) & + \mu_\varepsilon\left(\text{FP}(\mathcal{I}_{d,l^\text{max}_X - l^\mu},\mathcal{I}_{md,l^\text{max}_X - l^\mu})(\tilde{X}^i_{t_k}, \theta) \right)(\tau -t_k)\\[2mm]
    & + \sum_{j=1}^d \sigma^\varepsilon_j\left(\text{FP}(\mathcal{I}_{d,l^\text{max}_X - l^\sigma},\mathcal{I}_{d,l^\text{max}_X - l^\sigma},\mathcal{I}_{md,l^\text{max}_X - l^\sigma})(\tilde{X}^i_{t_k}, \tilde{V}^i_{t_k}, \theta)\right)(W^i_{\tau,j} - W^i_{t_k,j}).
\end{align*}

Again, since the composition and weighted sums of DNNs can be again expressed as a DNN, it holds 
\begin{equation}\label{DNNRecurX}
    \tilde{X}^i(\omega)_{\tau(\omega)} = \Psi^i_{\tau}\left( \tilde{X}^i_{t_k}(\omega), \tilde{V}^i_{t_k}(\omega), \theta \right)
\end{equation}

for a DNN $\Psi^i_{\tau}:\mathbb{R}^d \times \mathbb{R}^d \times \Theta \rightarrow \mathbb{R}^d$ satisfying
\begin{align}\label{SizePsi}
    \text{size}\left( \Psi^i_{\tau} \right) &\leq \text{size}(\mathcal{I}_{d,l^\text{max}_X}) + \text{size}(\mu_\varepsilon \odot \text{FP}(\mathcal{I}_{d,l^\text{max}_X - l^\mu},\mathcal{I}_{md,l^\text{max}_X - l^\mu}))\\[2mm] &\phantom{....} + \sum_{j=1}^d \text{size}(\sigma_{\varepsilon,j} \odot \text{FP}(\mathcal{I}_{d,l^\text{max}_X - l^\sigma},\mathcal{I}_{d,l^\text{max}_X - l^\sigma},\mathcal{I}_{md,l^\text{max}_X - l^\sigma})) \nonumber \\[2mm]
    &\leq (1 + 6 \bar{d} + 6\bar{d}^2)2mC\bar{d}^p \varepsilon^{-q},
\end{align}

where we have used Assumption \ref{AssumpCoeffApprox}. Again we can define DNNs $\Psi^i_{t_j}:\mathbb{R}^d \times \mathbb{R}^d \times \mathbb{R}^{md} \rightarrow \mathbb{R}^d$, $j=1,...,k$, such that
\begin{equation*}
    \Tilde{X}^i_{t_j}(\omega) = \Psi^i_{t_j}\left(\Tilde{X}^i_{t_{j-1}}(\omega), \Tilde{V}^i_{t_{j-1}}(\omega), \theta \right)
\end{equation*}

with size$(\Psi^i_{t_j})$ satisfying \eqref{SizePsi} and 
$\Tilde{l}^i_j := \text{depth}(\Psi^i_{t_j}).$ Using \eqref{DNNRecurV} and \eqref{DNNRecurX}, we get recursively
\begin{equation*}
    \tilde{X}^i_{\tau}(\omega)(\omega) = \Psi^i_{\tau}\left(\Psi^i_{t_k}\left(\tilde{X}^i_{t_{k-1}}(\omega), \tilde{V}^i_{t_{k-1}}(\omega), \theta\right), \mathcal{I}_{d,\tilde{l}^i_{k} - l^i_k}\circ\Phi^i_{t_k}\left(\tilde{V}^i_{t_{k-1}}(\omega), \theta\right), \mathcal{I}_{md,\tilde{l}^i_{k}}(\theta)\right).
\end{equation*}

We can set $\tilde{\Psi}^i_{t_k}:=\text{FP}(\Psi^i_{t_k},\mathcal{I}_{d,\tilde{l}^i_k - l^i_k} \odot \Phi^i_{t_k},\mathcal{I}_{md,\tilde{l}^i_k})$. Then
\begin{equation*}
    \tilde{X}^i_{\tau(\omega)}(\omega) = \left(\Psi^i_{\tau}\circ \tilde{\Psi}^i_{t_k} \circ \dots \circ \tilde{\Psi}^i_{t_1}\right)(x,v,\theta) = \bar{\Psi}^i_{\tau}(x,v,\theta)
\end{equation*}

for $\bar{\Psi}^i_{\tau} := \Psi^i_{\tau}\odot \tilde{\Psi}^i_{t_k} \odot \dots \odot \tilde{\Psi}^i_{t_1}$. By using Proposition 2.2 from \cite{Opschoor2020} and \eqref{SizePsi}, we obtain
\begin{align}\label{SizePsiBar}
    &\text{size}(\bar{\Psi}^i_{\tau})\nonumber\\[1.5mm]  \leq& 2\text{size}(\Psi^i_{\tau}) + \text{size}_{out}(\tilde{\Psi}^i_{t_k} \odot \dots \odot \tilde{\Psi}^i_{t_1}) + \text{size}(\tilde{\Psi}^i_{t_k} \odot \dots \odot \tilde{\Psi}^i_{t_1}) \nonumber \\[1.5mm]
    = &2\text{size}(\Psi^i_{\tau}) + \text{size}_{out}(\tilde{\Psi}^i_{t_k}) + \text{size}(\tilde{\Psi}^i_{t_k} \odot \dots \odot \tilde{\Psi}^i_{t_1}) \nonumber \\[1.5mm]
    \leq & 2\text{size}(\Psi^i_{\tau}) + \text{size}_{out}(\tilde{\Psi}^i_{t_k}) + 2\text{size}(\tilde{\Psi}^i_{t_k}) + \text{size}_{out}(\tilde{\Psi}^i_{t_{k-1}}) + \text{size}(\tilde{\Psi}^i_{t_{k-1}} \odot \dots \odot \tilde{\Psi}^i_{t_1}) \nonumber \\[1.5mm]
    \leq & ... \nonumber\\
    \leq & 2\text{size}(\Psi^i_{\tau}) + 3 \sum_{j=1}^k [\text{size}(\Psi^i_{t_j}) + \text{size}(\mathcal{I}_{d,\tilde{l}^i_j - l^i_j} \odot \Phi^i_{t_j}) + \text{size}(\mathcal{I}_{md,\tilde{l}^i_j})] \nonumber \\[2mm]
    \leq & (1+6\bar{d} + 6\bar{d}^2)2mC\bar{d}^p \varepsilon^{-q} (2+6mk + 16mk\bar{d}).
\end{align}

Therefore, we can finally write for $\bar{l}^i_{\tau} := \text{depth}(\bar{\Psi}^i_{\tau})$
\begin{equation*}
    \frac{1}{M}\sum_{i=1}^M \phi_{\varepsilon,d}(\Tilde{X}^i(\omega)_{\tau(\omega)}, K) = \frac{1}{M}\sum_{i=1}^M \phi_{\varepsilon,d}\left(\Bar{\Psi}^i_{\tau}(x,v,\theta), \mathcal{I}_{n,\bar{l}^i_{\tau}}(K)\right).
\end{equation*}

By parallelisation it therefore follows
\begin{equation*}
    \frac{1}{M}\sum_{i=1}^M \phi_{\varepsilon,d}\left(\Bar{\Psi}^i_{\tau}(x,v,\theta), \mathcal{I}_{md,\bar{l}^i_{\tau}}(K)\right) = \frac{1}{M}\sum_{i=1}^M \phi_{\varepsilon,d}(\text{FP}(\bar{\Psi}^i_\tau, \mathcal{I}_{n,\bar{l}^i_\tau})(x,v,\theta,K)) = \frac{1}{M}\sum_{i=1}^M \bar{\Xi}^i_{\tau}(x,v,\theta,K)
\end{equation*}

for $\Bar{\Xi}^i_{\tau} := \phi_{\varepsilon,d} \odot \text{FP}(\bar{\Psi}^i_\tau, \mathcal{I}_{n,\bar{l}^i_\tau})$. By Lemma 3.2 in \cite{Gonon2021a} the sum of neural networks can be expressed by a neural network. Therefore, we can indeed choose $U^{\tau}_{\bar{\varepsilon},d,r}(x,v,\theta,K) = \frac{1}{M}\sum_{i=1}^M \Bar{\Xi}^i_{\tau}(x,v,\theta,K)$. It only remains to show that for this choice it holds $\text{size}(U^{\tau}_{\bar{\varepsilon},d,r})\leq \mathfrak{C} \bar{d}^\mathfrak{p}  \bar{\varepsilon}^{-\mathfrak{q}}$ for some constants $\mathfrak{C}, \mathfrak{p}, \mathfrak{q} > 0$. From Assumption \ref{AssumPricingApprox} and \eqref{SizePsiBar} it follows that
\begin{align*}
    &\text{size}(U^{\tau}_{\bar{\varepsilon},d,r}) \leq 2M \text{size}(\phi_{\varepsilon,d}) + 2M \text{size}(\mathcal{I}_{n,\bar{l}^i_\tau}) + 2\sum_{i=1}^M \text{size}(\bar{\Psi}^i_{\tau}) \\
    \leq & 2M\left[C\bar{d}^p \varepsilon^{-q} + (2n+1)(1+6\bar{d} + 6\bar{d}^2)2mC\bar{d}^p\varepsilon^{-q}(2+6mk+16mk\bar{d})\right] \\[2mm]
    \leq & 4MNCm\bar{d}^p \varepsilon^{-q}[1+(2n+1)(1+6\bar{d}+6\bar{d}^2)(8+16\bar{d})] \\[2mm]
    \leq & 15024 N\bar{C}Cmn\varepsilon^{-(q+2l)}\bar{d}^{p+3+s}\bar{\varepsilon}^{-2} \\[2mm]
    \leq & 90144 T m n C \bar{C}^{4+2l+q} 4^{2l+q+2} \bar{\varepsilon}^{-(4+4l+q)}\bar{d}^{p+3+s(2+4l+q)},
\end{align*}

where we used our choices for $M$, $h$ $\varepsilon$ and $r$ from the previous part of this proof. The proof is finished by setting $\mathfrak{C} := 90144 T m n C \bar{C}^{4+2l+q} 4^{2l+q+2}$, $\mathfrak{p} := p+3+s(2+4l+q)$ and $\mathfrak{q} := 4+4l+q$.
\end{proof}

\begin{remark}
For $\tau=T$ in Theorem \ref{MainTheorem} we get the existence of a DNN $U_{\bar{\varepsilon},d,m}$ approximating the time zero value of a European option with fixed maturity $T$, i.e. 
$$
    U_{\bar{\varepsilon},d,m} \sim \mathbb{E}[\varphi_d(X^{x,v}_T,K)],
$$
as a function of the spot price and variance of the risky asset, the parameters of the model and some payoff factor $K$ in the sense of Theorem \ref{MainTheorem}.
\end{remark}



\section{Deep Calibration of a Cross-Correlated Heston Model (CCH)}\label{Section4}
In this section, we want to apply our results to a multivariate version of the classical Heston model firstly proposed in \cite{Heston1993}. On the filtered probability space $(\Omega,\mathcal{F},\mathbb{F}=(\mathcal{F}_t)_{t\in[0,T]},\mathbb{Q})$ introduced in Section \ref{Section3}, let $\tilde{W}^0,\tilde{W}^1,..., \tilde{W}^d, \tilde{B}^1,...,\tilde{B}^d$ by independent one-dimensional standard $\mathbb{F}$-Brownian motions. We assume that the $d$-dimensional risky assets process $X = (X^1_t,...,X^d_t)^T_{t\in[0,T]}$ with variance process $V = (V^1_t,...,V^d_t)^T_{t\in[0,T]}$ satisfy the system 
\begin{equation}\label{CCHeston}
\begin{cases}
    \diff X^i_t = \sqrt{V^i_t}X^i_t \diff{\bar{W}}^{i}_t,\; X^i_0 > 0, \\[3mm]
    \diff V^i_t  = a_i(b_i - V^i_t) \diff{t} + \nu_i\sqrt{V^i_t} \diff{\bar{B}}^i_t,\; V^i_0 > 0,\\[3mm]
    \bar{W}^i_t = \rho_{X,i} \tilde{W}^0_t + \sqrt{1-\rho_{X,i}^2} \tilde{W}^i_t\\[3mm]
    \bar{B}^i_t = \rho_{V,i} \bar{W}^i_t + \sqrt{1-\rho_{V,i}^2} \tilde{B}^i_t
\end{cases}
\end{equation}

for the parameters $a_i, b_i, \nu_i >0$ and $\rho_{X,i},\rho_{V,i}\in[-1, 1]$. Thus the model parameters to calibrate are given by the vector $\theta = (a_1, b_1, \nu_1, \rho_{X,1}, \rho_{V,1},...,a_d, b_d, \nu_d, \rho_{X,d}, \rho_{V,d})^T \in \Theta^{\text{CCH}} \subset \mathbb{R}^{5d}$, where $\Theta^{\text{CCH}}$ denotes a given parameter space for the cross-correlation Heston model. We specify $\Theta^{\text{CCH}}$ later in more detail. Note that each single asset-variance pair $(X^i,V^i)$ follows a classical Heston model.

\begin{remark}
The common Brownian motion $\tilde{W}^0$ can be interpreted as a underlying stochastic driver of the entire financial market affecting each asset $X^i$, $i=1,...,d$, whereas $\tilde{W}^i$ is an idiosyncratic noise reflecting fluctuations individual to the $i$-th asset.
\end{remark}

In order to apply the results of Section \ref{Section3}, we now rewrite system \eqref{CCHeston} by explicitly including the correlation parameters $\rho_{X,1},...,\rho_{X,d}$ and $\rho_{V,1},...,\rho_{V,d}$ in the coefficient functions. Therefore, we introduce the modified system 
\begin{equation}\label{EnlargedCCHeston}
    \begin{cases}
        \diff X^{x,v}_t = \sigma^{\theta,\text{CCH}}(X^{x,v}_t,V^v_t) \diff{W}_t \\[2mm]
        \diff V^v_t  = \bar{\mu}^{\theta,\text{CCH}}(V^v_t)\diff{t} + \bar{\sigma}^{\theta,\text{CCH}}(V^v_t) \diff{W}_t \\[2mm]
        W_t = \left(\tilde{W}^0_t,\tilde{W}^1_t,...,\tilde{W}^d_t,\tilde{B}^1_t,...,\tilde{B}^d_t\right)^T
    \end{cases}
\end{equation}

for coefficient functions $\sigma^{\theta,\text{CCH}}:\mathbb{R}^d \times \mathbb{R}^d \rightarrow \mathbb{R}^{d\times (2d+1)}$, $\bar{\mu}^{\theta, \text{CCH}}:\mathbb{R}^d \rightarrow \mathbb{R}^d$ and $\bar{\sigma}^{\theta,\text{CCH}}: \mathbb{R}^d \rightarrow \mathbb{R}^{d\times (2d+1)}$ defined for any $x,v\in\mathbb{R}^d$ as

\begin{gather*}
    \sigma^{\theta,\text{CCH}}(x,v) := \\[2mm]
    \setcounter{MaxMatrixCols}{11}
    \begin{bNiceMatrix}[parallelize-diags=true, xdots/shorten=5mm]
    \sqrt{v_1}x_1 \rho_{X,1} & \sqrt{v_1} x_1\overline{\rho_{X,1}} & 0 & & \Cdots & 0 & 0 & \Cdots & \Cdots &  \Cdots & 0 \\
    \Vdots & 0 &    & \Ddots & & \Vdots & \Vdots & & & & \Vdots \\
    & & \Ddots & \Ddots & & & & & & & \\
    & \Vdots  & & & & 0 & & & & & \\
    \sqrt{v_d}x_d\rho_{X,d} & 0 & \Cdots & & 0 &  \sqrt{v_d}x_d\overline{\rho_{X,d}} & 0 & \Cdots & \Cdots & \Cdots & 0
    \end{bNiceMatrix},
\end{gather*}
\vspace{5mm}
\begin{equation*}
    \bar{\mu}^{\theta, \text{CCH}}(v) := 
    \begin{bNiceMatrix}
    a_1(b_1 - v_1) \\
    \Vdots \\
    a_d(b_d - v_d)
    \end{bNiceMatrix}
\end{equation*}

and 
\begin{equation*}
    \bar{\sigma}^{\theta,\text{CCH}}(v):=
\end{equation*}
\begin{align*}
    \setcounter{MaxMatrixCols}{11}
    \renewcommand{\arraystretch}{1.8}
    \setlength{\arraycolsep}{3.5pt}
    \begin{bNiceMatrix}[parallelize-diags=true, xdots/shorten=3mm]
    \sqrt{v_1}\nu_1 \rho_{X,1} \rho_{V,1} & \sqrt{v_1} \nu_1 \overline{\rho_{X,1}} \rho_{V,1}  &  0  &  & \Cdots & 0 &  \sqrt{v_1}\nu_1\overline{\rho_{V,1}} & 0 & & \Cdots & 0\\
    \Vdots & 0 &  & \Ddots & & \Vdots & 0 & \Ddots & \Ddots & & \Vdots\\
     & & \Ddots & \Ddots & & & \Ddots & \Ddots & & & \\
     & \Vdots & & & & 0 & \Vdots & & & & 0\\
     \sqrt{v_d}\nu_d \rho_{X,d} \rho_{V,d} & 0 & \Cdots & & 0 & \sqrt{v_d} \nu_d \overline{\rho_{X,d}} \rho_{V,d} & 0 & \Cdots & &  0 & \sqrt{v_d}\nu_d\overline{\rho_{V,d}}\\
    \end{bNiceMatrix}
\end{align*}

with $\overline{\rho_{X,i}} := \sqrt{1-\rho_{X,i}^2}$ and $\overline{\rho_{V,i}} := \sqrt{1-\rho_{V,i}^2}$. The SDE \eqref{EnlargedCCHeston} is now a special case of \eqref{ModelDynamics} with state space dimension $d$ as well as $r=2d+1$, $\bar{d} = 2d+1$ and $m=5$. In addition, we make the usual assumption that the Feller ratios satisfy
$$
    \frac{2a_i b_i}{\nu^2_i} > 0,\; i=1,...,d,
$$

which ensures that the processes $V^i$ are strictly positive. Note that the coefficient functions of \eqref{EnlargedCCHeston} do not satisfy the properties in Assumptions \ref{StrucAssumptionV} and \ref{AssumpCoeffApprox} due to the square root function. However, we can guarantee that the coefficient functions are Lipschitz and can be approximated by DNNs, by restricting their values to some compact intervals bounded away from the origin. Then we can apply Theorem \ref{MainTheorem} to approximate the option prices of European options on $X^{x,v}$ by a DNN. In a first step we consider some compact intervals $\mathcal{X},\mathcal{V}\subseteq \mathbb{R}^{+}$ chosen such that the initial conditions satisfy $x\in\mathcal{X}^d:= \mathcal{X}\times\dots\times\mathcal{X}$ and $v\in\mathcal{V}^d:=\mathcal{V}\times \dots\times\mathcal{V}$ respectively. Then the functions $\sigma^{\Theta,\text{CCH}}\vert_{\mathcal{X}^d \times \mathcal{V}^d}$, $\bar{\mu}^{\Theta,\text{CCH}}\vert_{\mathcal{V}^d}$ and $\bar{\sigma}^{\Theta,\text{CCH}}\vert_{\mathcal{V}^d}$ restricted to the compact sets $\mathcal{X}^d \times \mathcal{V}^d$ and $\mathcal{V}^d$ respectively, satisfy the hypotheses of Assumption \ref{StrucAssumptionV}.


\subsection{DNN approximation of the coefficient functions}
We need to prove the existence of ReLU DNNs approximating the coefficient functions of the SDE in \eqref{EnlargedCCHeston} with DNN sizes that grow at most  polynomially in $\bar{d}$. For a general introduction to ReLU neural network approximation we refer to \cite{Yarotsky2017}, \cite{Petersen2022}, \cite{Elbrachter2021} and \cite{Petersen2018}. Note that quantitative uniform approximation results assume a bounded domain for the function to be approximated. Since we want to approximate the option prices as functions not only of the initial values of the processes $X^{x,v}$ and $V^v$ and the strike price but also of the model parameters $\theta$, we assume that the model parameter can only vary in a certain bounded domain. Note that this a common assumption in model calibration as described in Section 3.2.2 in \cite{Horvath2020}.

\begin{assumption}\label{AssumBoundedParaCCH}
    There exist constants $\bar{a}, \bar{b}$ and $\bar{\nu}>0$ such that $0<a_i\leq \bar{a}, 0<b_i\leq \bar{b}$ and $0<\nu_i \leq \bar{\nu}$ and $a_i,b_i,\nu_i$ are such that the condition for the existence for the second moment of $X^i$ from Proposition 3.1 in \cite{Andersen2006} is satisfied for all $i=1,\dots,d$.
\end{assumption}

The parameter space is thus given by $\Theta^{\text{CCH}} := (0,\bar{a}]^d \times (0,\bar{b}]^d \times (0,\bar{\nu}]^d \times [-1,1]^d \times [-1,1]^d$. The approximation of the coefficient functions is mainly based on the fact that any one-dimensional function in $\mathcal{C}^1$ defined on a compact domain can be approximated arbitrarily well by a DNN as in Lemma \ref{DNNApproxC1}. Recall that for a constant $K>0$ and a function $f\in \mathcal{C}^1([-K,K]^d)$ the $\mathcal{C}^1$-norm is defined as
$$
    \Vert f \Vert_{\mathcal{C}^1([-K,K]^d)} := \max\{ \Vert f \Vert_\infty, \Vert \partial_1 f \Vert_\infty,..., \Vert \partial_d f \Vert_\infty \},
$$

where $\partial_i f$, $i=1,...,d$, denotes the first partial derivative of $f$ with respect to the $i$-th variable. For notational convenience we now introduce a single-layered DNN only performing a linear transformation on the input parameters.

\begin{notation}
    For fixed $i,j\in\mathbb{N}$ we denote by $\Phi^{\text{LT}}_{A,b}:\mathbb{R}^i \rightarrow \mathbb{R}^j$ the single layered DNN defined as
    \begin{equation}\label{LinearTransNetwork}
        \Phi^{\text{LT}}_{A,b}(x) = Ax + b,\;x\in\mathbb{R}^i,
    \end{equation}
    for a weight matrix $A\in\mathbb{R}^{j\times i}$ and bias $b\in\mathbb{R}^j$. In case that $b=0$ we just write $\Phi^{\text{LT}}_A$. Note that $\text{size}(\Phi^{\text{LT}}_{A,b})\leq j(i+1)$.
\end{notation}

\begin{lemma}\label{DNNApproxC1}
Let $K>0$, $M\geq 1$, $n\in\mathbb{N}$ and $p\in(0,\infty]$. Then there exists a constant $C>0$ such that for all $f\in \mathcal{C}^1([-K,K]^n)$ with $\Vert f \Vert_{\mathcal{C}^1([-K,K]^n)} \leq M$ and every $\varepsilon\in(0,1/2)$ there exists a DNN $\Phi^{f,\varepsilon}$ such that 
$$
    \Vert f - \Phi^{f,\varepsilon}\Vert_{L^p([-K,K]^n)} \leq \varepsilon
$$

with size$(\Phi^{f,\varepsilon})\leq C \varepsilon^{-2n}$ and depth$(\Phi^{f,\varepsilon})\leq C\varepsilon^{-1}$.
\end{lemma}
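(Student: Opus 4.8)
The plan is to reduce the statement to a known quantitative ReLU approximation rate for Sobolev balls and then track constants through an affine rescaling and a change of norm. First I would pull the domain back to the unit cube: let $A\colon[0,1]^n\to[-K,K]^n$ be the affine bijection $A(y)=2Ky-(K,\dots,K)$ and set $\tilde f:=f\circ A$. Then $\tilde f\in\mathcal C^1([0,1]^n)$ with $\|\tilde f\|_{\mathcal C^1([0,1]^n)}\le\max(1,2K)\,M=:M'$, so in particular $\tilde f$ lies in the Sobolev space $\mathcal W^{1,\infty}([0,1]^n)$ with $\|\tilde f\|_{\mathcal W^{1,\infty}}\le M'$. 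Crucially $M'$ depends only on $K$ and $M$, not on the individual function $f$, which is what will make the final constant $C$ uniform over the $\mathcal C^1$-ball.

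Next I would invoke the ReLU expression-rate result for $\mathcal W^{1,\infty}([0,1]^n)$ from \cite{Yarotsky2017} (alternatively \cite{Gühring2020}): there is $c=c(n,M')>0$ so that for every $g$ with $\|g\|_{\mathcal W^{1,\infty}([0,1]^n)}\le M'$ and every $\delta\in(0,1/2)$ there is a ReLU DNN $\Psi^{g,\delta}$ with $\|g-\Psi^{g,\delta}\|_{L^\infty([0,1]^n)}\le\delta$, $\text{size}(\Psi^{g,\delta})\le c\,(1+\log(1/\delta))\,\delta^{-n}$ and $\text{depth}(\Psi^{g,\delta})\le c\,(1+\log(1/\delta))$. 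Applying this to $\tilde f$ with a tolerance $\delta$ to be chosen, and then precomposing with $A^{-1}$ — an affine map that can be absorbed into the first layer without changing size or depth — produces a DNN $\Phi$ on $[-K,K]^n$ with $\|f-\Phi\|_{L^\infty([-K,K]^n)}\le\delta$. Finally I would pass from the sup-norm to the $L^p$-(quasi)norm using $\|f-\Phi\|_{L^p([-K,K]^n)}\le(2K)^{n/p}\,\|f-\Phi\|_{L^\infty([-K,K]^n)}$ (with $(2K)^{n/p}:=1$ if $p=\infty$), and choose $\delta:=(2K)^{-n/p}\,\varepsilon$, which gives the desired error $\|f-\Phi^{f,\varepsilon}\|_{L^p([-K,K]^n)}\le\varepsilon$.

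It then only remains to verify that the resulting size and depth bounds are dominated by $C\varepsilon^{-2n}$ and $C\varepsilon^{-1}$. With $\delta=(2K)^{-n/p}\varepsilon$ one has $1+\log(1/\delta)=O(1+\log(1/\varepsilon))$ and $\delta^{-n}=O(\varepsilon^{-n})$, so $\text{size}(\Phi^{f,\varepsilon})\le c'(1+\log(1/\varepsilon))\varepsilon^{-n}$ and $\text{depth}(\Phi^{f,\varepsilon})\le c'(1+\log(1/\varepsilon))$ for a constant $c'=c'(K,M,n,p)$; since $1+\log(1/\varepsilon)\le c''\varepsilon^{-n}$ and $1+\log(1/\varepsilon)\le c''\varepsilon^{-1}$ uniformly for $\varepsilon\in(0,1/2)$, both estimates are absorbed into $C\varepsilon^{-2n}$ and $C\varepsilon^{-1}$ after enlarging the constant. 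The target exponents are far from tight, so there is no danger in this last step.

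If instead a self-contained construction is wanted, the alternative route is: place a tensor grid of mesh $h$ on $[-K,K]^n$, use the tensor-product hat functions $\phi_{\mathbf j}(x)=\prod_{i=1}^n\psi\big((x_i-x_{j_i})/h\big)$ with $\psi(t)=\text{ReLU}(t+1)-2\,\text{ReLU}(t)+\text{ReLU}(t-1)$, note that $\{\phi_{\mathbf j}\}$ is a nonnegative partition of unity supported on cells of diameter $\le\sqrt n\,h$ so that $\|f-\sum_{\mathbf j}f(x_{\mathbf j})\phi_{\mathbf j}\|_\infty\le M\sqrt n\,h$ (as $\|\nabla f\|_\infty\le M$), realise each product $\phi_{\mathbf j}$ approximately by chaining Yarotsky's multiplication gadget, and assemble $\Phi^{f,\varepsilon}:=\sum_{\mathbf j}f(x_{\mathbf j})\tilde\phi_{\mathbf j}$ using the parallelisation and composition size rules already exploited in the proof of Theorem \ref{MainTheorem}; one then balances $h$ and the multiplication tolerance against $\varepsilon$. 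In either approach the only real work is the size/depth bookkeeping, and I expect that to be the main (mild) obstacle — there is no analytic subtlety beyond the elementary Lipschitz estimate and the standard behaviour of the multiplication network.
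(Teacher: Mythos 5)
Your proposal is correct and follows essentially the same route as the paper: pull $f$ back to the unit cube by an affine layer, invoke a quantitative ReLU approximation theorem for the $\mathcal{C}^1$/$\mathcal{W}^{1,\infty}$ unit ball (the paper uses Theorem 3.19 of Petersen--Zech, which already gives the $\varepsilon^{-2n}$ size and $\varepsilon^{-1}$ depth directly, whereas you use Yarotsky's sharper $\varepsilon^{-n}\log(1/\varepsilon)$ bounds and then absorb the logarithm), and track constants through composition. Your handling of the $L^p$ step via the sup-norm and the explicit $(2K)^{n/p}$ volume factor is in fact slightly more careful than the paper's identification of the two $L^p$ norms, which silently drops the Jacobian of the change of variables.
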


\begin{proof}
Let $K>0$, $M\geq 1$, $n\in\mathbb{N}$ and $p\in(0,\infty]$. Consider $f\in \mathcal{C}^1([-K,K]^n)$ with $\Vert f \Vert_{\mathcal{C}^1([-K,K]^n)}\leq M$ and define the function $g\in \mathcal{C}^1([0,1]^n)$ as
$$
    g(x) = f\left( (2\mathcal{I}_n x - \mathbf{1}_n)K \right),
$$

where $\mathcal{I}_n\in\mathbb{R}^{n\times n}$ denotes the identity matrix and $\mathbf{1}_n = (1,...,1)\in\mathbb{R}^n$. Then for the function $h:= g/M \in \mathcal{C}^1([0,1]^n)$ it holds $\Vert h \Vert_{\mathcal{C}^1([0,1]^n)}\leq 1$. So by Theorem 3.19 in \cite{Petersen2022} we find a constant $C^\prime>0$ such that for every $\hat{\varepsilon}\in(0,1/(2M))$ there exists a DNN $\Phi^{h,\hat{\varepsilon}}:[0,1]^n\rightarrow\mathbb{R}$ with size$(\Phi^{h,\hat{\varepsilon}})\leq C^\prime \hat{\varepsilon}^{-2n}$ and depth$(\Phi^{h,\hat{\varepsilon}})\leq C^\prime \hat{\varepsilon}^{-1}$ so that it holds 
$$
    \Vert h - \Phi^{h,\hat{\varepsilon}}\Vert_{L^p([0,1]^n)} \leq \hat{\varepsilon}.
$$

Consider $\varepsilon := \hat{\varepsilon}M \in (0,1/2)$ and the DNN $\Phi^{g,\varepsilon} := M \Phi^{h,\hat{\varepsilon}}$. Note that it holds
$$
    \Vert g - \Phi^{g,\varepsilon} \Vert_{L^p([0,1]^n)} = M \Vert h - \Phi^{h,\hat{\varepsilon}}\Vert_{L^p([0,1]^n)} \leq M \hat{\varepsilon} = \varepsilon.
$$

For $A:= \frac{1}{2K}\mathcal{I}_n$ and $b:= \frac{1}{2}\mathbf{1}_n$ we introduce the DNN $\Phi^{f,\varepsilon} := \Phi^{g,\varepsilon} \odot \Phi^{\text{LT}}_{A,b}:[-K,K]^n\rightarrow\mathbb{R}$, where $\Phi^{\text{LT}}_{A,b}$ is defined in \eqref{LinearTransNetwork}. So for all $z\in[-K,K]^n$ we have
$$
    \Vert f - \Phi^{f,\varepsilon} \Vert_{L^p([-K,K]^n)} = \left\Vert g - \Phi^{g,\varepsilon}\right\Vert_{L^p([0,1]^n)} \leq \varepsilon.
$$

Finally, by using that size$(\Phi^{g,\varepsilon}) = 2n$ we obtain that
$$
    \text{size}(\Phi^{f,\varepsilon}) \leq 2\text{size}(\Phi^{g,\varepsilon}) + 2\text{size}(\Phi^{\text{LT}}_{A,b}) \leq 2C^\prime \hat{\varepsilon}^{-2n} + 4n \leq 4\max(C^\prime M^{2n}, 2n)\varepsilon^{-2n}
$$

and with Definition 2.9 in \cite{Petersen2022} that
$$
    \text{depth}(\Phi^{f,\varepsilon}) = \text{depth}(\Phi^{g,\varepsilon}) + \text{depth}(\Phi^{\text{LT}}_{A,b}) - 1 = \text{depth}(\Phi^{g,\varepsilon}) \leq C^\prime M\varepsilon^{-1}.
$$
So the result follows by choosing $C := 4\max(C^\prime M^{2n}, 2n)$.
\end{proof}

\begin{remark}\label{CoDInDomainRemark}
It is worth noting that the size of the approximating DNN in Lemma \ref{DNNApproxC1} grows at most exponentially in the dimension $n$ of the underlying domain.
\end{remark}

Using Lemma \ref{DNNApproxC1} we can now prove the existence of DNNs approximating the coefficient functions of the SDE \eqref{EnlargedCCHeston} in the sense of Assumption \ref{AssumpCoeffApprox}, as shown in the following Lemma.

\begin{lemma}\label{ApproxCCHestonCoeff}
Under Assumption \ref{AssumBoundedParaCCH} and with the choice $m = 5$ in \eqref{ParameterVector}, there exist constants $C>0$ and $l,p,q\geq 0$ and for $\Bar{d} = 2d+1$ and any $\varepsilon\in(0,1/2)$ there exist neural networks $\bar{\mu}_{\varepsilon}^{\text{CCH}}:\mathcal{V}^d\times\Theta^{\text{CCH}}\rightarrow\mathbb{R}^d$, $\sigma_{\varepsilon,j}^{\text{CCH}}:\mathcal{X}^d\times\mathcal{V}^d\times\Theta^{\text{CCH}}\rightarrow\mathbb{R}^d$, $\bar{\sigma}_{\varepsilon,j}^{\text{CCH}}:\mathcal{V}^d\times\Theta^{\text{CCH}}\rightarrow\mathbb{R}^d$ for $j=1,...,2d+1$ such that for all $x\in\mathcal{X}^d$, $v\in\mathcal{V}^d$ and $\theta\in\Theta^{\text{CCH}}$ it holds
    \begin{enumerate}
        \item[(i)] \begin{align}\label{ApproxCCH01} & \Vert \sigma^{\theta,\text{CCH}}(x,v) - \sigma_\varepsilon^{\text{CCH}}(x,v,\theta) \Vert^2_F + \Vert \bar{\mu}^{\theta,\text{CCH}}(v) - \bar{\mu}_\varepsilon^{\text{CCH}}(v,\theta) \Vert^2 \nonumber \\[2mm] +& \Vert \bar{\sigma}^{\theta,\text{CCH}}(v) - \bar{\sigma}_\varepsilon^{\text{CCH}}(v,\theta) \Vert^2_F \leq  C\varepsilon^{2l+1} \bar{d}^p (1+\Vert x\Vert^2 + \Vert v \Vert^2 + \Vert \theta \Vert^2).\end{align}
        
        \item[(ii)] \begin{align}\label{ApproxCCH02} \Vert \sigma_\varepsilon^{\text{CCH}} (x,v,\theta) \Vert_F + \Vert \bar{\mu}_\varepsilon^{\text{CCH}} (v,\theta) \Vert + \Vert \bar{\sigma}_\varepsilon^{\text{CCH}}(v,\theta) \Vert_F \leq C(\bar{d}^p + \Vert x \Vert + \Vert v \Vert + \Vert \theta \Vert).\end{align}
        
        \item[(iii)] \begin{align}\label{ApproxCCH03}
        \text{size}(\bar{\mu_\varepsilon}^{\text{CCH}}) + \sum_{j=1}^{2d+1} \text{size}(\sigma_{\epsilon,j}^{\text{CCH}}) + \text{size}(\bar{\sigma}_{\varepsilon,j}^{\text{CCH}})\leq C \bar{d}^p \varepsilon^{-q},
        \end{align}
    \end{enumerate}

    where $\Vert \cdot \Vert_F$ denotes the Frobenius norm. The constants $C$, $l$, $p$, $q$ do not depend on $d$ and $\varepsilon$.
\end{lemma}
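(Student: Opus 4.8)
The plan is to exploit the factorised, sparse structure of the coefficient matrices in \eqref{EnlargedCCHeston}. Although the state dimension is $d$ and $\bar d=2d+1$, every nonzero entry of $\sigma^{\theta,\text{CCH}}$, $\bar\mu^{\theta,\text{CCH}}$ and $\bar\sigma^{\theta,\text{CCH}}$ depends on at most four real variables — the state coordinates $x_i,v_i$ of a single asset $i$ together with a subset of its own parameters $a_i,b_i,\nu_i,\rho_{X,i},\rho_{V,i}$ — and, up to relabelling $i$, there are only finitely many distinct such entry functions: $(v,x,\rho)\mapsto\sqrt v\,x\,\rho$, $(v,x,\rho)\mapsto\sqrt v\,x\,\sqrt{1-\rho^2}$, $(a,b,v)\mapsto a(b-v)$, $(v,\nu,\rho_X,\rho_V)\mapsto\sqrt v\,\nu\,\rho_X\rho_V$, $(v,\nu,\rho_X,\rho_V)\mapsto\sqrt v\,\nu\,\sqrt{1-\rho_X^2}\,\rho_V$, and $(v,\nu,\rho_V)\mapsto\sqrt v\,\nu\,\sqrt{1-\rho_V^2}$. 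Restricted to (a neighbourhood of) the relevant product of compact intervals, each of these is $\mathcal C^1$, hence extends to a $\mathcal C^1$ function on an enclosing cube $[-K,K]^{n}$, $n\le4$, with $\mathcal C^1$-norm bounded by a constant $M\ge1$ independent of $d$ and $i$; for this it is essential that $\mathcal V\subset(0,\infty)$ stays away from $0$ (so $v\mapsto\sqrt v$ is $\mathcal C^1$) and that the correlation variables lie in a set on which $\rho\mapsto\sqrt{1-\rho^2}$ is $\mathcal C^1$.

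Fixing such an $M$ and $K:=\max\{\sup\mathcal X,\sup\mathcal V,\bar a,\bar b,\bar\nu,1\}+1$, I would apply Lemma \ref{DNNApproxC1} with $p=\infty$ and input dimension $n\le4$ to each of the finitely many entry functions $g$: for every $\tilde\varepsilon\in(0,1/2)$ there is a ReLU DNN $\Phi^{g,\tilde\varepsilon}$ with $\|g-\Phi^{g,\tilde\varepsilon}\|_{L^\infty}\le\tilde\varepsilon$, $\text{size}(\Phi^{g,\tilde\varepsilon})\le C'\tilde\varepsilon^{-8}$ and $\text{depth}(\Phi^{g,\tilde\varepsilon})\le C'\tilde\varepsilon^{-1}$, where $C'$ is independent of $d$ because $K,M,n$ are (and continuity makes the $L^\infty$-bound pointwise on the cube). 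Next I would assemble $\sigma_{\varepsilon,j}^{\text{CCH}}$, $\bar\mu_\varepsilon^{\text{CCH}}$, $\bar\sigma_{\varepsilon,j}^{\text{CCH}}$ by, for each index $i$, precomposing the appropriate $\Phi^{g,\tilde\varepsilon}$ with a single-layer linear selection map $\Phi^{\text{LT}}$ extracting the relevant coordinates of $(x,v,\theta)$, stacking the resulting $O(\bar d)$ sub-networks by parallelisation ($\text{FP}$), and appending a linear output layer that places each scalar in the correct matrix slot and zeroes the remaining entries. Using the composition/parallelisation/weighted-sum size calculus invoked in the proof of Theorem \ref{MainTheorem} (e.g.\ \cite{Opschoor2020,Gonon2021a}), and since the total number of nonzero entries is $2d+d+3d=O(\bar d)$ while each sub-network has size $\le C'\tilde\varepsilon^{-8}$ plus a linear map of size $O(\bar d)$, one obtains $\text{size}(\bar\mu_\varepsilon^{\text{CCH}})+\sum_{j=1}^{2d+1}[\text{size}(\sigma_{\varepsilon,j}^{\text{CCH}})+\text{size}(\bar\sigma_{\varepsilon,j}^{\text{CCH}})]\le C\bar d^{p}\tilde\varepsilon^{-q}$ for $d$-independent constants; setting $\tilde\varepsilon=\varepsilon$ (one may take $q=8$ and $p$ a small fixed integer) proves \eqref{ApproxCCH03}.

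For \eqref{ApproxCCH01}, on the compact domain each nonzero entry is matched to uniform error $\le\varepsilon$ and the zero entries are reproduced exactly, so the squared Frobenius errors add up to at most (number of nonzero entries)$\cdot\varepsilon^2\le 6d\,\varepsilon^2\le 6\bar d\,\varepsilon$; since the right-hand side of \eqref{ApproxCCH01} is at least $C\,\varepsilon\,\bar d^p$ (as $1+\|x\|^2+\|v\|^2+\|\theta\|^2\ge1$ and $l=0$), this gives \eqref{ApproxCCH01} with $l=0$, $p\ge1$, $C\ge6$. For \eqref{ApproxCCH02}, the true entries are bounded on the compact cube by a $d$-independent constant $\kappa$ (e.g.\ $\kappa=(\sup\mathcal V)^{1/2}\sup\mathcal X$ for $\sigma^{\theta,\text{CCH}}$), so each network entry is $\le\kappa+1/2$ and summing the $O(\bar d)$ squared entries yields $\|\sigma_\varepsilon^{\text{CCH}}\|_F+\|\bar\mu_\varepsilon^{\text{CCH}}\|+\|\bar\sigma_\varepsilon^{\text{CCH}}\|_F\le C\bar d^{1/2}\le C(\bar d^{p}+\|x\|+\|v\|+\|\theta\|)$. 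Finally I would take $C,l,p,q$ as the maxima of the values produced by the three parts.

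The conceptual crux — and what rules out the curse of dimensionality — is the first step: the cross-correlated Heston coefficients split into $O(d)$ low-dimensional blocks, so Lemma \ref{DNNApproxC1}, whose size bound grows exponentially in the input dimension, is only ever invoked with a fixed input dimension $\le4$. The main obstacle I anticipate is the regularity of the entry functions at the edge of the parameter range: $v\mapsto\sqrt v$ is $\mathcal C^1$ only because $\mathcal V$ is bounded away from $0$, and $\rho\mapsto\sqrt{1-\rho^2}$ is \emph{not} $\mathcal C^1$ at $\rho=\pm1$, so applying Lemma \ref{DNNApproxC1} forces one to work on cubes where the correlation parameters lie in a fixed compact subinterval of $(-1,1)$ (or to supplement Lemma \ref{DNNApproxC1} with a separate, still polynomial-size, approximation of the square root near its singularity). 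Everything else is routine DNN-size bookkeeping under composition and parallelisation, of exactly the kind carried out in the proof of Theorem \ref{MainTheorem}.
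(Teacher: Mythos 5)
Your proposal is correct and follows essentially the same route as the paper: decompose the sparse CCH coefficient matrices into $O(d)$ nonzero entry functions each depending on only a bounded number of variables, approximate each by Lemma \ref{DNNApproxC1} with $p=\infty$ on a fixed low-dimensional cube (the paper uses six functions $f_1,\dots,f_6$ on cubes of dimension $m+1$ or $m+2$, giving $q=2(m+2)$, where you obtain $q=8$ by keeping only the active variables), and then assemble via linear selection maps, parallelisation and summation, with the same bookkeeping for (i)--(iii) up to the inessential choice $l=0$ versus the paper's $l=1/2$. Your caveat about $\rho\mapsto\sqrt{1-\rho^2}$ failing to be $\mathcal{C}^1$ at $\rho=\pm1$ is well taken: the paper's own proof asserts $f_1,\dots,f_6\in\mathcal{C}^1([-K,K]^{m+2})$ with $K\geq 1$ and thus quietly suffers from exactly this issue (as well as from $\sqrt{v}$ being undefined for $v<0$ on that cube), so one must either restrict the correlation parameters to a compact subinterval of $(-1,1)$, use a $\mathcal{C}^1$ extension from the actual parameter domain, or treat the square root near its singularity separately, as you suggest.
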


\begin{proof}
The proof consists of two steps. In a first step, we approximate the one-dimensional non-zero entries of the coefficient functions of \eqref{EnlargedCCHeston}. This can be done using Lemma \ref{DNNApproxC1}. We also exploit the fact that the dimension of the domains depends on $m$ and not on the state space dimension $\bar{d}$. In a second step, we use the parallelisation property of DNNs and the fact that the size of the resulting network grows linearly in the number of networks in the corresponding parallelisation. 

\textit{Step One:} Assume that Assumption \ref{AssumBoundedParaCCH} holds and choose $m=5$. Then we consider the functions
\begin{gather*}
    f_1(x,v,a,b,\nu,\rho_X,\rho_V) = x\sqrt{v}\rho_X,\\[2mm]
    f_2(x,v,a,b,\nu,\rho_X,\rho_V) =  x\sqrt{v}\sqrt{1-\rho_X^2}
\end{gather*}

on $\mathcal{X} \times \mathcal{V} \times (0,\bar{a}] \times (0,\bar{b}] \times (0,\bar{\nu}] \times [-1,1]^2$ and 
\begin{gather*}
    f_3(v,a,b,\nu,\rho_X,\rho_V) = a(b-v),\\[3mm]
    f_4(v,a,b,\nu,\rho_X,\rho_V) = \sqrt{v} \nu \rho_X \rho_V,\\[2mm]
    f_5(v,a,b,\nu,\rho_X,\rho_V) = \sqrt{v} \nu \sqrt{1-\rho_X^2} \rho_V,\\[2mm]
    f_6(v,a,b,\nu,\rho_X,\rho_V) = \sqrt{v} \nu \sqrt{1-\rho_V^2}
\end{gather*}

on $\mathcal{V} \times (0,\bar{a}] \times (0,\bar{b}] \times (0,\bar{\nu}] \times [-1,1]^2$. Note that for $K := \max\{1,\bar{x},\bar{v},\bar{a},\bar{b},\bar{\nu}\}$ it holds that $f_1,f_2 \in \mathcal{C}^1([-K,K]^{m+2})$ and $f_3,f_4,f_5,f_6 \in \mathcal{C}^1([-K,K]^{m+1})$ for $\bar{x}:= \max\{x \mid x \in\mathcal{X}\}$, $\bar{v}:= \max\{v \mid v \in\mathcal{V}\}$ and the constants introduced in Assumption \ref{AssumBoundedParaCCH}. Moreover, since $f_1,...,f_6$ are $\mathcal{C}^1$-functions with a compact domain, there exists a constant $M\geq 1$ such that 
\begin{equation}\label{MBoundCCH}
    \Vert f_i \Vert_{\mathcal{C}^1([-K,K]^{m+2})}, \Vert f_k \Vert_{\mathcal{C}^1([-K,K]^{m+1})} \leq M,\; i=1,2,\; k=3,4,5,6.
\end{equation}

So we can apply Lemma \ref{DNNApproxC1} with $p=\infty$ and get that there is a constant $\Tilde{C}>0$ such that for every $\varepsilon\in(0,1/2)$ there exist DNNs $\Phi^{f_1,\varepsilon},\Phi^{f_2,\varepsilon}:[-K,K]^{m+2} \rightarrow \mathbb{R}$ and $\Phi^{f_3,\varepsilon}, \Phi^{f_4,\varepsilon}, \Phi^{f_5,\varepsilon}, \Phi^{f_6,\varepsilon}:[-K,K]^{m+1} \rightarrow\mathbb{R}$ with
$$
    \Vert f_i - \Phi^{f_i,\varepsilon}\Vert_{L^\infty([-K,K]^{m+2})}  \leq \varepsilon,\; i=1,2, \; \Vert f_i - \Phi^{f_i,\varepsilon}\Vert_{L^\infty([-K,K]^{m+1})}  \leq \varepsilon,\; i=3,...,6
$$

and size$(\Phi^{f_i,\varepsilon})\leq \Tilde{C}\varepsilon^{-2(m+2)}$ for $i=1,...,6$.\par 

\textit{Step Two:} For any DNNs $\Phi^i:\mathbb{R}^{k}\rightarrow \mathbb{R}$, $i=1,...,n$, for some $k,n\in\mathbb{N}$ there a exists a DNN $\Phi^{\text{SUM}}:\mathbb{R}^{k}\rightarrow\mathbb{R}$ such that for all $x\in\mathbb{R}^k$
\begin{equation}\label{SumOfDNNs}
    \Phi^{\text{SUM}}(x) = \sum_{i=1}^k \Phi^i(x)
\end{equation}

with size$(\Phi^{\text{SUM}}) \leq \sum_{i=1}^k \text{size}(\Phi^i)$. Moreover, we define the matrices $I^i\in\mathbb{R}^{7\times 7d}$ and $\tilde{I}_i\in\mathbb{R}^{6\times 6d}$ for $i=1,...,d$ with entries 
\begin{equation*}
    \left[I_i\right]_{l,k} = \mathbbm{1}_{\{k = 7(i-1)+l\}} \text{ and }[\tilde{I}_i]_{l,k} = \mathbbm{1}_{\{k = 6(i-1)+l\}}.
\end{equation*}

Therefore, we can construct the DNN $\sigma^{\text{CCH}}_{\varepsilon,1}:\mathcal{X}^d\times \mathcal{V}^d \times \Theta^{\text{CCH}} \rightarrow \mathbb{R}^d$ as 
$$
    \sigma_{\varepsilon,1}^{\text{CCH}}(x,v,\theta) := \sum_{i=1}^d \left(\Phi^{\text{LT}}_{e_i} \circ \Phi^{f_1,\varepsilon} \circ \Phi^{\text{LT}}_{I_{i}}\right)\left([x_1,v_1,a_1,b_1,\nu_1,\rho_{X,1},\rho_{V,1},...,x_d,v_d,a_d,b_d,\nu_d,\rho_{X,d},\rho_{V,d}]^T\right),
$$

where size$(\Phi^{\text{LT}}_{I_{j-1}}) = 7$, size$(\Phi^{\text{LT}}_{e_i}) = 1$ and $e_k$ denoting the $k$-th unit vector of $\mathbb{R}^d$. Hence, it follows that size$(\sigma_{\varepsilon,1}^{\text{CCH}})\leq 2d \max(4\Tilde{C},18)\varepsilon^{-2(m+2)}$. Then we define $\sigma_{\varepsilon,j}^{\text{CCH}}:\mathcal{X}^d\times \mathcal{V}^d \times \Theta^{\text{CCH}} \rightarrow \mathbb{R}^d$, $j=2,...,d+1$, as
$$
    \sigma_{\varepsilon, j}^{\text{CCH}}(x,v,\theta) := \left(\Phi^{\text{LT}}_{e_{j-1}} \circ \Phi^{f_2,\varepsilon} \circ \Phi^{\text{LT}}_{I_{j-1}}\right)\left([x_1,v_1,a_1,b_1,\nu_1,\rho_{X,1},\rho_{V,1},...,x_d,v_d,a_d,b_d,\nu_d,\rho_{X,d},\rho_{V,d}]^T\right).
$$

By same techniques as in the proof of Theorem \ref{MainTheorem}, we obtain that size$(\sigma_{\varepsilon, j}^{\text{CCH}}) \leq 2 \max(4\Tilde{C},18)\varepsilon^{-2(m+2)}$. For the DNNs $\sigma_{\varepsilon, j}^{\text{CCH}}$ for $j=d+2,...,2d+1$, we can choose $\Phi^{\text{LT}}_{A,b}$ with $A$ and $b$ only having entries equal to zero and hence size$(\sigma^{\text{CCH}}_{\varepsilon, j}) = 0$. Using \eqref{SumOfDNNs} we can define the DNN $\Bar{\mu}_{\varepsilon}^{\text{CCH}}:\mathcal{V}^d \times \Theta^{\text{CCH}}\rightarrow\mathbb{R}^d$ as 
\begin{equation*}
    \Bar{\mu}_{\varepsilon}^{\text{CCH}}(v,\theta) := \sum_{i=1}^d \left(\Phi^{\text{LT}}_{e_i} \circ \Phi^{f_3,\varepsilon} \circ \Phi^{\text{LT}}_{\tilde{I}_{i}}\right)\left([v_1,a_1,b_1,\nu_1,\rho_{X,1},\rho_{V,1},...,v_d,a_d,b_d,\nu_d,\rho_{X,d},\rho_{V,d}]^T\right)
\end{equation*}

also with size$(\Bar{\mu}_{\varepsilon}^{\text{CCH}})\leq 2d \max(4\Tilde{C},18)\varepsilon^{-2(m+2)}$. Equivalently, we define $\Bar{\sigma}_{\varepsilon,1}^{\text{CCH}}:\mathcal{V}^d \times \Theta^{\text{CCH}}\rightarrow \mathbb{R}^d$ as 
\begin{equation*}
    \bar{\sigma}^{\text{CCH}}_{\varepsilon,1}(v,\theta) := \sum_{i=1}^d \left(\Phi^{\text{LT}}_{e_i} \circ \Phi^{f_4,\varepsilon} \circ \Phi^{\text{LT}}_{\tilde{I}_{i}}\right)\left([v_1,a_1,b_1,\nu_1,\rho_{X,1},\rho_{V,1},...,v_d,a_d,b_d,\nu_d,\rho_{X,d},\rho_{V,d}]^T\right)
\end{equation*}

with size$(\bar{\sigma}^{\text{CCH}}_{\varepsilon,1})\leq 2d \max(4\Tilde{C},18)\varepsilon^{-2(m+2)}$. Finally, we construct $\bar{\sigma}^{\text{CCH}}_{\varepsilon,j}:\mathcal{V}^d \times \Theta^{\text{CCH}} \rightarrow \mathbb{R}^d$ for $j=2,...,d+1$ as
\begin{equation*}
    \bar{\sigma}^{\text{CCH}}_{\varepsilon, j}(v,\theta) := \left(\Phi^{\text{LT}}_{e_{j-1}} \circ \Phi^{f_5,\varepsilon} \circ \Phi^{\text{LT}}_{\tilde{I}_{j-1}}\right)\left([v_1,a_1,b_1,\nu_1,\rho_{X,1},\rho_{V,1},...,v_d,a_d,b_d,\nu_d,\rho_{X,d},\rho_{V,d}]^T\right)
\end{equation*}

and for $j=d+2,...,2d+1$ as 
\begin{equation*}
    \bar{\sigma}^{\text{CCH}}_{\varepsilon, j}(v,\theta) := \left(\Phi^{\text{LT}}_{e_{j-(d+1)}} \circ \Phi^{f_6,\varepsilon} \circ \Phi^{\text{LT}}_{\tilde{I}_{j-(d+1)}}\right)\left([v_1,a_1,b_1,\nu_1,\rho_{X,1},\rho_{V,1},...,v_d,a_d,b_d,\nu_d,\rho_{X,d},\rho_{V,d}]^T\right),
\end{equation*}

where it holds size$(\bar{\sigma}_{\varepsilon, j}^{\text{CCH}}) \leq 2 \max(4\Tilde{C},18)\varepsilon^{-2(m+2)}$ for all $j=2,...,2d+1$. Therefore, it follows
\begin{align*}
    \text{size}(\Bar{\mu}^{\text{CCH}}_\varepsilon) + \sum_{j=1}^{2d+1}\text{size}(\sigma_{\epsilon,j}^{\text{CCH}}) + \text{size}(\bar{\sigma}_{\varepsilon,j}^{\text{CCH}}) 
    \leq 12d\max(4\Tilde{C},18)\varepsilon^{-2(m+2)}.
\end{align*}

Moreover, we obtain for all $x\in\mathcal{X}^d$, $v\in\mathcal{V}^d$ and $\theta\in\Theta^{\text{CCH}}$
\begin{align}\label{BoundDiffCoeffCCH}
    &\Vert \sigma^{\theta, \text{CCH}}(x,v) -  \sigma^{\text{CCH}}_{\varepsilon}(x,v,\theta)\Vert^2_F + \Vert \bar{\mu}^{\theta,\text{CCH}}(v) - \bar{\mu}^{\text{CCH}}_{\varepsilon}(v,\theta) \Vert^2\nonumber\\[2mm]
    & \phantom{1cm}+ \Vert \bar{\sigma}^{\theta,\text{CCH}}(v) - \bar{\sigma}^{\text{CCH}}_{\varepsilon}(v,\theta) \Vert^2_F \nonumber\\[2mm]
    =& \sum_{k=1}^d \bigg(\vert \sigma^{\theta,\text{CCH}}_{1,k}(x,v) - \left(\sigma^{\text{CCH}}_{\varepsilon,1}(x,v,\theta)\right)_k \vert^2 + \vert \sigma^{\theta,\text{CCH}}_{k,k+1}(x,v) - \left(\sigma^{\text{CCH}}_{\varepsilon,k+1}(x,v,\theta)\right)_{k} \vert^2\bigg) \nonumber\\[2mm]
    & \phantom{1cm} + \sum_{k=1}^d \vert \bar{\mu}^{\theta,\text{CCH}}_{1,k}(v) - \left(\bar{\mu}^{\text{CCH}}_{\varepsilon,1}(v,\theta)\right)_k \vert^2 + \sum_{k=1}^d \bigg(\vert \bar{\sigma}^{\theta,\text{CCH}}_{1,k}(v) - \left(\bar{\sigma}^{\text{CCH}}_{\varepsilon,1}(v,\theta)\right)_k \vert^2 \nonumber\\[2mm]
    & \phantom{1cm} + \vert \bar{\sigma}^{\theta,\text{CCH}}_{k,k+1}(v) - \left(\bar{\sigma}^{\text{CCH}}_{\varepsilon,k+1}(v,\theta)\right)_{k} \vert^2 + \vert \bar{\sigma}^{\theta,\text{CCH}}_{k,d+k+1}(v) - \left(\bar{\sigma}^{\text{CCH}}_{\varepsilon,d+k+1}(v,\theta)\right)_{k} \vert^2 \bigg) \nonumber\\[2mm]
    & \leq \bar{d} \sum_{i=1}^6 \Vert f_i - \Phi^{f_i,\varepsilon} \Vert^2_\infty \leq 6 \bar{d} \varepsilon^2(1 + \Vert x \Vert^2 + \Vert v \Vert^2 + \Vert \theta \Vert^2).
\end{align}

Lastly, we get for all $x\in\mathcal{X}^d$, $v\in\mathcal{V}^d$ and $\theta \in \Theta^{\text{CCH}}$ using \eqref{MBoundCCH}, \eqref{BoundDiffCoeffCCH} and the triangular inequality that
\begin{align*}
    & \Vert \sigma_{\varepsilon}^{\text{CCH}}(x,v,\theta)\Vert_F + \Vert \bar{\mu}_{\varepsilon}^{\text{CCH}}(v,\theta)\Vert + \Vert \bar{\sigma}_{\varepsilon}^{\text{CCH}}(v,\theta)\Vert_F \\[2mm]
    \leq &  \Vert \sigma_{\varepsilon}^{\text{CCH}}(x,v,\theta) - \sigma^{\theta,\text{CCH}}(x,v)\Vert_F +  \Vert \sigma^{\theta,\text{CCH}}(x,v)\Vert_F + \Vert \bar{\mu}^{\text{CCH}}_{\varepsilon}(v,\theta) - \bar{\mu}^{\theta,\text{CCH}}(v)\Vert \\[2mm]
    &\phantom{1cm} + \Vert \bar{\mu}^{\theta,\text{CCH}}(v)\Vert + \Vert \bar{\sigma}_{\varepsilon}^{\text{CCH}}(v,\theta) - \bar{\sigma}^{\theta,\text{CCH}}(v)\Vert_F +  \Vert \bar{\sigma}^{\theta,\text{CCH}}(v)\Vert_F \\[2mm]
    \leq & 3\sqrt{6}d^{1/2}\varepsilon + \sqrt{2}d^{1/2}\sqrt{M} + d^{1/2}\sqrt{M} + \sqrt{3}d^{1/2}\sqrt{M}\\[2mm]
    \leq & 24 M (\bar{d} + \Vert x \Vert + \Vert v \Vert + \Vert \theta \Vert).
\end{align*}

So the result follows by the choice $C=\max(24M, 12\max(4\Tilde{C},18))$, $l=1/2$, $p=1$ and $q = 2(m+2)$ in \eqref{ApproxCCH01}, \eqref{ApproxCCH02} and \eqref{ApproxCCH03}, respectively.
\end{proof}


\subsection{DNN approximation of the pricing function}
The Lipschitz property of the coefficient functions and Lemma \ref{ApproxCCHestonCoeff} hold only if restricted to the space $\mathcal{X}^d\times\mathcal{V}^d\times \Theta^{\text{CCH}}$. However, in general the processes $X^{x,v}$ and $V^v$ may attain values outside of $\mathcal{X}^d$ and $\mathcal{V}^d$, respectively. Therefore, we introduce the stopping times 
$$
    \tau_{\mathcal{X}} := \inf\{t\in[0,T] \mid X_t^{x,v}\notin\mathcal{X}^d\} \text{ and } \tau_{\mathcal{V}} := \inf\{t\in[0,T] \mid V_t^{v}\notin\mathcal{V}^d\} 
$$

and for $\tau := \tau_{\mathcal{X}} \wedge \tau_{\mathcal{V}}$ we consider the stopped processes $X^{x,v,\tau} = (X^{x,v,\tau}_t)_{t\in[0,T]}$ and $V^{v,\tau} = (V^{v,\tau}_t)_{t\in[0,T]}$ given by

$$
    X^{x,v,\tau}_t = X^{x,v}_t \mathbbm{1}_{\{t\leq \tau\}} + X^{x,v}_{\tau} \mathbbm{1}_{\{t > \tau\}} \text{ and } V^{v,\tau}_t = V^v_t \mathbbm{1}_{\{t \leq \tau\}} + V^v_{\tau} \mathbbm{1}_{\{t > \tau\}},\; t\in[0,T].
$$

Note that $\tau \neq 0$ due to the fact that $\mathcal{X}^d$ and $\mathcal{V}^d$ are chosen such that they contain the initial conditions $x$ and $v$ of $X^{x,v}$ and $V^{v}$ respectively. Moreover, by choosing the sets $\mathcal{X}$ and $\mathcal{V}$ sufficiently large the difference between the option prices written on the stopped and original price process can be made arbitrarily small. The following lemma shows that we can approximate the price of a European option written on the risky asset from \eqref{EnlargedCCHeston} stopped at $\tau$. 

\begin{corollary}\label{MainTheoremCCH}
For $n\in\mathbb{N}$, let $\varphi:\mathbb{R}^d\times \mathbb{R}^{nd}\rightarrow\mathbb{R}$ be a payoff function satisfying the properties of Assumption \ref{AssumPricingApprox}. Then for $m=5$ there exist constants $\mathfrak{C},\mathfrak{p},\mathfrak{q}>0$ not depending on $d, \bar{\varepsilon}$ and for any target accuracy $\bar{\varepsilon}\in(0,1/2)$ a neural network $U^{\text{CCH}}_{\bar{\varepsilon}}:\mathcal{X}^d\times\mathcal{V}^d\times\Theta^{\text{CCH}}\times \mathbb{R}^{kd}$ such that
\begin{itemize}
    \item[(i)] \begin{equation}\label{Theorem4.1.Prop1CCH}
        \text{size}(U^{\text{CCH}}_{\bar{\varepsilon}})\leq \mathfrak{C} \bar{d}^\mathfrak{p} \bar{\varepsilon}^{-\mathfrak{q}},
    \end{equation}
    \item[(ii)] 
    \begin{equation}\label{Theorem4.1.Prop2CCH}
        \left( \int \left\vert \mathbb{E}[\varphi(X^{x,v,\tau}_T,K)] - U^{\text{CCH}}_{\bar{\varepsilon}}(x,v,\theta,K) \right\vert^2 \mu(\diff{x},\diff{v},\diff{\theta},\diff{K}) \right)^{1/2} < \bar{\varepsilon}.
    \end{equation}
\end{itemize}
\end{corollary}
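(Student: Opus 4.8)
The plan is to recognise the CCH model as an instance of the setting of Theorem~\ref{MainTheorem} --- after replacing the coefficient functions of \eqref{EnlargedCCHeston} by globally Lipschitz extensions --- and then to apply that theorem with the $[0,T]$-valued stopping time $T\wedge\tau$, where $\tau=\tau_{\mathcal X}\wedge\tau_{\mathcal V}$. Observe first that $X^{x,v,\tau}_T=X^{x,v}_{T\wedge\tau}$ directly from the definition of the stopped process, so that $\mathbb{E}[\varphi(X^{x,v,\tau}_T,K)]=\mathbb{E}[\varphi(X^{x,v}_{T\wedge\tau},K)]$ is of the form \eqref{PricingMap} once the dynamics are written in the framework of Section~\ref{Section3}.

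\textbf{Extension of the coefficients and localisation.} As observed right after \eqref{EnlargedCCHeston}, the restrictions $\sigma^{\theta,\text{CCH}}|_{\mathcal X^d\times\mathcal V^d}$, $\bar\mu^{\theta,\text{CCH}}|_{\mathcal V^d}$ and $\bar\sigma^{\theta,\text{CCH}}|_{\mathcal V^d}$ satisfy the hypotheses of Assumption~\ref{StrucAssumptionV} with constants independent of $d,r,\theta$. Let $\Pi$ be the componentwise clipping DNN that projects each coordinate of $(x,v,\theta)$ onto the interval among $\mathcal X$, $\mathcal V$, $[0,\bar a]$, $[0,\bar b]$, $[0,\bar\nu]$, $[-1,1]$ to which it belongs on $\mathcal X^d\times\mathcal V^d\times\Theta^{\text{CCH}}$; then $\Pi$ is $1$-Lipschitz, acts as the identity on $\mathcal X^d\times\mathcal V^d\times\Theta^{\text{CCH}}$, and satisfies $\text{size}(\Pi)=O(\bar d)$. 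Define $\hat\sigma^\theta:=\sigma^{\theta,\text{CCH}}\circ\Pi$, $\hat{\bar\mu}^\theta:=\bar\mu^{\theta,\text{CCH}}\circ\Pi$ and $\hat{\bar\sigma}^\theta:=\bar\sigma^{\theta,\text{CCH}}\circ\Pi$ on all of $\mathbb{R}^d$; these are bounded and globally Lipschitz, so Assumption~\ref{StrucAssumptionV} holds for them globally with the same constants, and \eqref{ModelDynamics} with these coefficients has a pathwise unique solution $(\hat X^{x,v},\hat V^v)$ by Theorem~6.2.9 in \cite{Applebaum2009}. On $[0,\tau)$ the original pair $(X^{x,v},V^v)$ stays in $\mathcal X^d\times\mathcal V^d$, where $\Pi$ is the identity and the extended coefficients agree with the original ones; a standard localisation argument based on pathwise uniqueness then gives $(\hat X^{x,v}_{t\wedge\tau},\hat V^v_{t\wedge\tau})=(X^{x,v}_{t\wedge\tau},V^v_{t\wedge\tau})$ for all $t\in[0,T]$, hence $\hat X^{x,v}_{T\wedge\tau}=X^{x,v,\tau}_T$ and $\mathbb{E}[\varphi(X^{x,v,\tau}_T,K)]=U^{T\wedge\tau}_{d,r}(x,v,\theta,K)$ in the notation of \eqref{PricingMap}.

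\textbf{Verification of the remaining assumptions and conclusion.} For Assumption~\ref{AssumpCoeffApprox}, precompose the networks $\sigma^{\text{CCH}}_{\varepsilon,j}$, $\bar\mu^{\text{CCH}}_\varepsilon$, $\bar\sigma^{\text{CCH}}_{\varepsilon,j}$ of Lemma~\ref{ApproxCCHestonCoeff} with $\Pi$: composing ReLU networks multiplies the size by at most a constant and adds $\text{size}(\Pi)=O(\bar d)$, so \eqref{ApproxCCH03} is preserved; the approximation bound \eqref{ApproxCCH01} extends from $\mathcal X^d\times\mathcal V^d\times\Theta^{\text{CCH}}$ to all of $\mathbb{R}^d$ relative to the extended coefficients because $\Pi$ maps into that box; and the growth bound \eqref{ApproxCCH02} holds globally since the clipped input is bounded. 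Thus Assumption~\ref{AssumpCoeffApprox} holds with $p=1$, $l=1/2$ and $q=2(m+2)=14$. Assumption~\ref{AssumPricingApprox} holds for $\varphi$ and $\mu^d$ by hypothesis of the corollary, the moment condition \eqref{Assum3.3.ProbMeasure} being automatic since $\mu^d$ has compact support in the $(x,v,\theta)$-variables. Applying Theorem~\ref{MainTheorem} with stopping time $T\wedge\tau$, $m=5$ and $\bar d=r=2d+1$ then yields constants $\mathfrak C,\mathfrak p,\mathfrak q>0$ depending only on $L,K_1,K_2$ and on $C,p,q,l$ --- hence not on $d$ or $\bar\varepsilon$ --- and, for every $\bar\varepsilon\in(0,1/2)$, a DNN $U^{\text{CCH}}_{\bar\varepsilon}:=U^{T\wedge\tau}_{\bar\varepsilon,d,r}$ with $\text{size}(U^{\text{CCH}}_{\bar\varepsilon})\leq\mathfrak C\bar d^{\mathfrak p}\bar\varepsilon^{-\mathfrak q}$ and $L^2(\mu^d)$-error below $\bar\varepsilon$ for approximating $U^{T\wedge\tau}_{d,r}=\mathbb{E}[\varphi(X^{x,v,\tau}_T,\cdot)]$; restricting it to $\mathcal X^d\times\mathcal V^d\times\Theta^{\text{CCH}}\times\mathbb{R}^{nd}$ gives \eqref{Theorem4.1.Prop1CCH}--\eqref{Theorem4.1.Prop2CCH}.

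\textbf{Main obstacle.} The only genuine point is the mismatch of domains: both the Lipschitz property of the Heston coefficients and Lemma~\ref{ApproxCCHestonCoeff} are \emph{local}, valid on $\mathcal X^d\times\mathcal V^d\times\Theta^{\text{CCH}}$, whereas Theorem~\ref{MainTheorem} presupposes Assumptions~\ref{StrucAssumptionV} and~\ref{AssumpCoeffApprox} globally. The clipping--extension combined with stopping-time localisation bridges this gap; the delicate step is to check that no hidden $d$-dependence enters $\text{size}(\Pi)$ or the Lipschitz, growth and approximation constants of the extended coefficients, which it does not since $\Pi$ acts coordinatewise and those constants originate from the fixed low-dimensional functions $f_1,\dots,f_6$ of Lemma~\ref{ApproxCCHestonCoeff}. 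Everything else is bookkeeping and direct substitution into Theorem~\ref{MainTheorem}.
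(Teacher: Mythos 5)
Your proposal is correct and follows essentially the same route as the paper: reduce to Theorem \ref{MainTheorem} applied with the exit-time $\tau$, using Lemma \ref{ApproxCCHestonCoeff} and the local Lipschitz property to verify Assumptions \ref{StrucAssumptionV} and \ref{AssumpCoeffApprox}, together with the identity $\mathbb{E}[\varphi(X^{x,v,\tau}_T,K)]=\mathbb{E}[\varphi(X^{x,v}_{T\wedge\tau},K)]$. The paper's own proof is a two-line "follows directly from Theorem \ref{MainTheorem}" argument that leaves the local-versus-global domain mismatch implicit; your clipping-extension and pathwise-uniqueness localisation makes that step explicit, which is a welcome addition in rigour rather than a different method.
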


\begin{proof}
Let $n\in\mathbb{N}$ and $\varphi:\mathbb{R}^d\times \mathbb{R}^{nd}\rightarrow\mathbb{R}$ be a payoff function satisfying the properties of Assumption \ref{AssumPricingApprox}. For $m=5$ we have by the Lipschitz property on compact domains and Lemma \ref{ApproxCCHestonCoeff} that the coefficient functions of the system \eqref{EnlargedCCHeston} satisfy the Assumption \ref{StrucAssumptionV} and \ref{AssumpCoeffApprox} with $\Bar{d}=2d+1$ on  $\mathcal{X}^d\times\mathcal{V}^d\times\Theta^{\text{CCH}}\times\mathbb{R}^{nd}$. Therefore, the result follows directly from Theorem \ref{MainTheorem} and the observation that $\mathbb{E}[\varphi(X^{x,v,\tau}_{\tau},K)] = \mathbb{E}[\varphi(X^{x,v,\tau}_T,K)]$.
\end{proof}

Note that Lemma \ref{MainTheoremCCH} only gives the existence of a DNN approximating the option price written on the stopped asset process. In the following proposition we show how to generalise this result.

\begin{proposition}
For $n\in\mathbb{N}$ let $\varphi:\mathbb{R}^d\times \mathbb{R}^{nd}\rightarrow\mathbb{R}$ be a payoff function satisfying the properties of Assumption \ref{AssumPricingApprox}. Then for any $K\in\mathbb{R}^{nd}$ it holds
$$
     \mathbb{E}[\varphi(X^{x,v,\tau}_T,K)]\longrightarrow\mathbb{E}[\varphi(X^{x,v}_T,K)] 
$$

for $\tau \rightarrow T$.
\end{proposition}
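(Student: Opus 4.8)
The plan is to reduce the claim to the dominated convergence theorem. Since $\tau$ is $[0,T]$-valued, the definition of the stopped process gives $X^{x,v,\tau}_T = X^{x,v}_{\tau}$ pathwise, so
$$
\mathbb{E}[\varphi(X^{x,v,\tau}_T,K)] - \mathbb{E}[\varphi(X^{x,v}_T,K)] = \mathbb{E}\bigl[\varphi(X^{x,v}_{\tau},K) - \varphi(X^{x,v}_T,K)\bigr],
$$
and it suffices to show that the integrand converges to $0$ almost surely as $\tau\to T$ and is dominated by a fixed integrable random variable. Here $\tau\to T$ is meant in the natural way: the truncation sets $\mathcal{X},\mathcal{V}$ are enlarged (for instance $\mathcal{X}=\mathcal{V}=[1/n,n]$ with $n\to\infty$), and $\tau=\tau_{\mathcal{X}}\wedge\tau_{\mathcal{V}}$ is the associated exit time.

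For the almost sure convergence I would first argue that $\tau\uparrow T$ almost surely. By the Feller condition the variance components $V^i$ are strictly positive, and each $X^i$ is a positive continuous process with $X^i_0>0$; hence for almost every $\omega$ the continuous path $t\mapsto(X^{x,v}_t(\omega),V^v_t(\omega))$ on the compact interval $[0,T]$ is bounded and bounded away from $0$, so it is contained in $\mathcal{X}^d\times\mathcal{V}^d$ as soon as $\mathcal{X}$ and $\mathcal{V}$ are large enough, i.e. $\tau=T$ for all sufficiently large $n$. Together with path continuity this gives $X^{x,v}_{\tau}\to X^{x,v}_T$ almost surely. Next I would check that $\varphi=\varphi_d$ is continuous in its first argument: inserting $\pm\phi_{\varepsilon,d}$ and using Assumption \ref{AssumPricingApprox}(i) and (iii), for any $\varepsilon\in(0,1/2)$ and $x,y\in\mathbb{R}^d$,
$$
\bigl\vert\varphi_d(x,K)-\varphi_d(y,K)\bigr\vert \le Cd^p\varepsilon^{-l}\Vert x-y\Vert + Cd^p\varepsilon\bigl(2+\Vert x\Vert+\Vert y\Vert+2\Vert K\Vert\bigr);
$$
letting $y\to x$ and then $\varepsilon\to0$ shows continuity, whence $\varphi(X^{x,v}_{\tau},K)\to\varphi(X^{x,v}_T,K)$ almost surely.

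For the dominating function I would use the linear growth of $\varphi_d$ implied by the same assumption: fixing, say, $\varepsilon=1/4$ and inserting $\pm\phi_{1/4,d}(0,K)$ gives $\vert\varphi_d(x,K)\vert\le c_K(1+\Vert x\Vert)$ for a constant $c_K$ depending only on $d$ and $K$. Therefore, for every $\tau$,
$$
\bigl\vert\varphi(X^{x,v}_{\tau},K)-\varphi(X^{x,v}_T,K)\bigr\vert \le 2c_K\Bigl(1+\sup_{t\in[0,T]}\Vert X^{x,v}_t\Vert\Bigr),
$$
and by Lemma \ref{LemmaBoundSecMomX} together with Jensen's inequality the right-hand side has finite expectation. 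Dominated convergence then yields $\mathbb{E}[\varphi(X^{x,v}_{\tau},K)]\to\mathbb{E}[\varphi(X^{x,v}_T,K)]$, which is the assertion.

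The step I expect to require the most care is the claim $\tau\uparrow T$ almost surely, since this is the only place where the precise non-degeneracy of the exit time under enlargement of the truncation sets enters; it relies on the almost sure boundedness and strict positivity of the continuous sample paths of $X^{x,v}$ and $V^v$ on $[0,T]$, the latter via the Feller condition. The remaining ingredients — continuity and linear growth of $\varphi_d$, the running-maximum moment bound of Lemma \ref{LemmaBoundSecMomX}, and the passage to the limit — are routine.
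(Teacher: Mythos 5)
Your overall strategy (the pathwise identification $X^{x,v,\tau}_T = X^{x,v}_{\tau}$, almost sure convergence via $\tau\uparrow T$ and path continuity, continuity and linear growth of $\varphi_d$ extracted from Assumption \ref{AssumPricingApprox}, then dominated convergence at the level of the payoff) is sound and is a legitimate alternative to the paper's argument, which instead keeps the network $\phi_{\varepsilon,d}$ inside the estimate, bounds the error by $C\bar{d}^p\varepsilon(\cdots)+C\bar{d}^p\varepsilon^{-l}\,\mathbb{E}[\Vert X^{x,v}_T-X^{x,v,\tau}_T\Vert]$, sends $\tau\to T$ in $L^2$, and only then lets $\varepsilon\to 0$. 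However, there is one genuine gap: the integrability of your dominating function. You invoke Lemma \ref{LemmaBoundSecMomX} to conclude $\mathbb{E}[\sup_{t\in[0,T]}\Vert X^{x,v}_t\Vert^2]<\infty$, but that lemma is proved under Assumption \ref{StrucAssumptionV}, i.e.\ global Lipschitz continuity and sublinear growth of the coefficients, which the Heston coefficients do \emph{not} satisfy on all of $\mathbb{R}^d\times\mathbb{R}^d$ because of the square root --- this failure is the very reason the stopping time $\tau$ was introduced. The point is not cosmetic: in the Heston model $\mathbb{E}[\vert X^i_T\vert^2]$ can be infinite for admissible parameters, and even for a positive true martingale the first moment of the running supremum need not be finite without an $L^p$ bound for some $p>1$, so the domination step really does need a model-specific input.

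This is precisely where the paper's proof differs: it cites Proposition 2.5 in \cite{Andersen2006} to get that each $X^i$ is a true $\mathbb{Q}$-martingale, uses the second-moment existence condition built into Assumption \ref{AssumBoundedParaCCH} to get $\mathbb{E}[\vert X^i_T\vert^2]<\infty$, and then applies Doob's maximal inequality to obtain $\mathbb{E}[\sup_{t\in[0,T]}\vert X^i_t\vert^2]\leq 4\,\mathbb{E}[\vert X^i_T\vert^2]<\infty$. If you replace your appeal to Lemma \ref{LemmaBoundSecMomX} by this argument, your domination step goes through and the remainder of your proof (the claim $\tau\uparrow T$ a.s.\ via strict positivity and boundedness of the continuous paths, and the continuity and linear-growth estimates for $\varphi_d$ derived from Assumption \ref{AssumPricingApprox}) is correct; it is in fact a somewhat more transparent route than the paper's double limit in $\tau$ and $\varepsilon$.
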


\begin{proof}
Let $n\in\mathbb{N}$ and $\varphi:\mathbb{R}^d\times \mathbb{R}^{nd}\rightarrow\mathbb{R}$ be a payoff function satisfying the properties of Assumption \ref{AssumPricingApprox}. Then for any $K\in\mathbb{R}^{nd}$ and every $\varepsilon\in(0,1/2)$ we obtain by the triangle inequality that
\begin{align}\label{TriInequalityConv}
    &\bigg\vert \mathbb{E}[\varphi(X^{x,v}_T,K)] - \mathbb{E}[\varphi(X^{x,v,\tau}_T,K)] \bigg\vert \nonumber \\[2mm]
    \leq \hspace{1mm} & \mathbb{E}\left[\vert \varphi(X^{x,v}_T,K) - \phi_{\varepsilon,d}(X^{x,v}_T,K) \vert\right] + \mathbb{E}\left[\vert \phi_{\varepsilon,d}(X^{x,v}_T,K) - \phi_{\varepsilon,d}(X^{x,v,\tau}_T,K) \vert\right]\nonumber \\[2mm]
    &\hspace{3mm} + \mathbb{E}\left[\vert \phi_{\varepsilon,d}(X^{x,v,\tau}_T,K) - \varphi(X^{x,v,\tau}_T,K) \vert\right] \nonumber \\[2mm]
    \leq \hspace{1mm} & C \bar{d}^p \varepsilon \left( 2 + \mathbb{E}[\Vert X^{x,v}_T\Vert] + \mathbb{E}[\Vert X^{x,v,\tau}_T\Vert] + 2 \Vert K \Vert \right) + C\bar{d}^p \varepsilon^{-l} \mathbb{E}\left[\Vert X^{x,v}_T - X^{x,v,\tau}_T \Vert\right],
\end{align}

where $\phi_{\varepsilon,d}$ is the DNN from Assumption \ref{AssumPricingApprox}. For every $i=1,\dots,d$ the pair $(X^i,V^i)$ from \eqref{CCHeston} are a special case of the SDE considered in \cite{Andersen2006} with $\lambda = 1$, \textkappa $\;= a_i$, $\theta = b_i$, $\varepsilon = \nu_i$, $p=1/2$ and $\rho = \rho_{V,i}$. Hence by Proposition 2.5 in \cite{Andersen2006} the processes $X^i$, $i=1,\dots,d$, are true $\mathbb{Q}$-martingales. So by Doob's martingale inequality and Assumption \ref{AssumBoundedParaCCH} it follows that
\begin{equation}\label{SecondMomemntFiniteCCH}
    \mathbb{E}\left[\sup_{t\in[0,T]} \vert X^i_t\vert^2\right] \leq 4\mathbb{E}[\vert X^i_T\vert^2] < \infty
\end{equation}
    
and so by dominated convergence 
\begin{equation}\label{L2ConvTerminalValues}
    \lim_{\tau \rightarrow T} \mathbb{E}\left[\vert X^{i}_T - X^{i,\tau}_T \vert^2\right] = \lim_{\tau \rightarrow T} \mathbb{E}\left[\vert X^{i}_T - X^{i}_\tau\vert^2\right] = 0
\end{equation}

for all $i=1,\dots,d$. Finally, by applying \eqref{SecondMomemntFiniteCCH} to \eqref{TriInequalityConv} we obtain by Jensen's inequality
\begin{align*}
    &\bigg\vert \mathbb{E}[\varphi(X^{x,v}_T,K)] - \mathbb{E}[\varphi(X^{x,v,\tau}_T,K)] \bigg\vert \\[2mm]
    \leq \hspace{1mm} & 2C \bar{d}^{p} \varepsilon \left( 1 + \mathbb{E}\left[\sup_{t\in[0,T]} \Vert X^{x,v}_t \Vert^2\right]^{1/2} + \Vert K \Vert \right) + C\bar{d}^p \varepsilon^{-l} \mathbb{E}\left[\Vert X^{x,v}_T - X^{x,v,\tau}_T \Vert\right]\\[2mm]
    \leq \hspace{1mm} & 2C \bar{d}^{p} \varepsilon \left( 1 + \left(\sum_{i=1}^d \mathbb{E}\left[\sup_{t\in[0,T]} \vert X^{i}_t\vert^2\right]\right)^{1/2} + \Vert K \Vert \right) + C\bar{d}^p \varepsilon^{-l} \left(\sum_{i=1}^d \mathbb{E}\left[\vert X^{i}_T - X^{i,\tau}_T \vert^2\right]\right)^{1/2}.
\end{align*}

The result now follows from \eqref{L2ConvTerminalValues} by letting $\tau \rightarrow T$ and then by choosing $\varepsilon$ arbitrarily small.
\end{proof}



\section{Deep Calibration of the rough Bergomi Model}\label{Section5}
We now study deep calibration using approximations of option prices in the setting of another class of stochastic volatility models given by the rough Bergomi (rBergomi) model with a flat forward variance curve as introduced in \cite{Bayer2015}. To this purpose we extend the framework of Section \ref{Section4} to include Volterra processes as driver for the stochastic volatility. More specifically, consider two independent $1$-dimensional standard $\mathbb{F}$-Brownian motions $B,B^\perp$ on the filtered probability space $(\Omega,\mathcal{F},\mathbb{F}=(\mathcal{F}_t)_{t\in[0,T]},\mathbb{Q})$ from Section \ref{Section3}. Let $W=(W_t)_{t\in[0,T]}$ be defined as $W:= \rho B + \sqrt{1-\rho^2}B^\perp$ for a negative correlation parameter $\rho\in[-1,0]$. We assume that the risky asset process $X^{x,v}$ and variance process $V^v$ are now described by the following dynamics 
\begin{equation}\label{rBergomiModel}
    \begin{cases}
        \diff{X}^{x,v}_t = \sqrt{V^v_t} X^{x,v}_t \diff{W}_t,\, X_0 = x >0,\\[3mm]
        V_t^v = \nu\cdot \exp\left( \eta \sqrt{2H} \mathfrak{X}_t - \frac{1}{2} \eta^2 t^{2H} \right), \\[3mm]
        \diff\langle W,B \rangle_t = \rho \diff{t},
    \end{cases}
\end{equation}

for some constants $\eta,\nu > 0$ and where the process $\mathfrak{X} = (\mathfrak{X}_t)_{t\in[0,T]}$ defined as 
\begin{equation}\label{RiemannLouiville}
    \mathfrak{X}_t = \int_0^t (t-s)^{H-1/2}\diff{B}_s
\end{equation}

is a Volterra process and $H\in(0,1/2)$ is the so called Hurst parameter. As indicated in \cite{Bayer2015}, the process $\mathfrak{X}$ behaves very similarly to a fractional Brownian motion (fBM). However, it exhibits a slightly different dependence structure than a fBM especially for small values of $H$.\par

In this case the corresponding model parameters to calibrate are given by $\theta = (\nu,\eta,\rho,H)\in\Theta^{\text{rBergomi}}\subseteq (0,\infty)^2 \times [-1,0] \times (0,1/2)$, where $\Theta^{\text{rBergomi}}$ denotes the parameter space for the rBergomi model that we now specify.

As in Section \ref{CCHeston}, we assume that the model parameters are lying in some compact intervals for the remainder of this section. 

\begin{assumption}
    We assume that there exist $0<\underline{H} \leq \overline{H}< 1/2$, $0<\underline{\eta}\leq \overline{\eta}<\infty$ and $0<\underline{\nu}\leq\overline{\nu}<\infty$ such that $H\in[\underline{H},\overline{H}]$, $\eta\in[\underline{\eta},\overline{\eta}]$ and $\nu\in[\underline{\nu},\overline{\nu}]$, and that $\Theta^{\text{rBergomi}} := [\underline{\nu}, \overline{\nu}]\times[\underline{\eta}, \overline{\eta}]\times[-1,0]\times[\underline{H}, \overline{H}]$.
\end{assumption}

In order to apply Theorem \ref{DNNApproxC1} and similar results on approximation properties of DNNs later on, we first consider a modified version of the model \eqref{rBergomiModel} given by the truncated modification
\begin{equation}\label{TruncatedrBergomiModel}
    \begin{cases}
        \diff{X_t}^{D} = \sqrt{V^{D}_t} X^{D}_t \diff{W}_t,\, X_0 = x >0,\\[3mm]
        \sqrt{V_t^{D}} = \sqrt{\nu}\cdot \min\left(\max\left(\frac{1}{D},\exp\left(\eta \sqrt{\frac{H}{2}}{\mathfrak{X}}_{t} - \frac{1}{4}\eta^2 t^{2H} \right)\right), D\right), \\[3mm]
        \diff\langle W,B \rangle_t = \rho \diff{t},
    \end{cases}
\end{equation}

for a fixed $D\geq 1$. If $D$ is very large, we expect that option prices in the truncated model are very close to option prices in the original model \eqref{rBergomiModel}. We make this intuition precise in Proposition \ref{PropConvrBergomiPrice} below. We can establish the following upper bound for the second and fourth moments of the asset process $X^D$.
\begin{lemma}\label{MomentBoundrBergomiX}
    There exist constants $c_{20},c_{21}>0$ such that for every $t\in[0,T]$, $x>0$ and $(\nu,\eta,\rho,H)\in\Theta^{\text{rBergomi}}$ it holds 
    $$
        \mathbb{E}\left[\vert X^D_t \vert^2\right] \leq c_{20}\vert x \vert^2 \text{ and } \mathbb{E}\left[\vert X^D_t \vert^4\right] \leq c_{21}\vert x \vert^4.
    $$
\end{lemma}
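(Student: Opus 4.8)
The plan is to exploit the fact that, in the truncated model, the spot volatility is bounded by a deterministic constant. Indeed, since $\sqrt{V_t^{D}} = \sqrt{\nu}\,\min(\max(1/D,\exp(\eta\sqrt{H/2}\,\mathfrak{X}_t - \tfrac14\eta^2 t^{2H})),D) \le \sqrt{\nu}\,D$, we have $V_t^{D} \le \nu D^2 \le \overline{\nu}D^2$ uniformly in $t\in[0,T]$, in $\omega\in\Omega$ and in $(\nu,\eta,\rho,H)\in\Theta^{\text{rBergomi}}$. This reduces the statement to a Gronwall estimate for $t\mapsto \mathbb{E}[(X_t^{D})^{2k}]$ with $k=1,2$; note that $X^{D}>0$, so $\vert X_t^{D}\vert^{2k}=(X_t^{D})^{2k}$.

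Since $\diff X_t^{D} = \sqrt{V_t^{D}}\,X_t^{D}\diff W_t$ we have $\diff\langle X^{D}\rangle_t = V_t^{D}(X_t^{D})^2\diff t$, and Itô's formula applied to $x\mapsto x^{2k}$ gives $\diff (X_t^{D})^{2k} = 2k(X_t^{D})^{2k}\sqrt{V_t^{D}}\diff W_t + k(2k-1)(X_t^{D})^{2k}V_t^{D}\diff t$. To remove the stochastic-integral term in expectation I would localise along $\tau_n := \inf\{t\in[0,T] : X_t^{D}\ge n\}$: on $[0,t\wedge\tau_n]$ the integrand $2k(X_s^{D})^{2k}\sqrt{V_s^{D}}$ is bounded, so the stochastic integral up to $t\wedge\tau_n$ is a true martingale, and using $V_s^{D}\le\overline{\nu}D^2$ one obtains $\mathbb{E}[(X_{t\wedge\tau_n}^{D})^{2k}] = x^{2k} + k(2k-1)\mathbb{E}[\int_0^{t\wedge\tau_n}(X_s^{D})^{2k}V_s^{D}\diff s] \le x^{2k} + k(2k-1)\overline{\nu}D^2\int_0^t \mathbb{E}[(X_{s\wedge\tau_n}^{D})^{2k}]\diff s$. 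Gronwall's inequality (Proposition 6.1.4 in \cite{Applebaum2009}) then yields
\[
    \mathbb{E}\big[(X_{t\wedge\tau_n}^{D})^{2k}\big] \le x^{2k}\exp\big(k(2k-1)\overline{\nu}D^2 T\big),
\]
with a right-hand side independent of $n$; letting $n\to\infty$ and applying Fatou's lemma gives $\mathbb{E}[(X_t^{D})^{2k}]\le x^{2k}\exp(k(2k-1)\overline{\nu}D^2 T)$. Specialising to $k=1$ and $k=2$ finishes the proof with $c_{20}:=\exp(\overline{\nu}D^2 T)$ and $c_{21}:=\exp(6\overline{\nu}D^2 T)$. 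An equivalent route writes $X^{D}$ as a stochastic exponential, $X_t^{D}=x\,\mathcal{E}(M)_t$ with $M_t:=\int_0^t\sqrt{V_s^{D}}\diff W_s$, and uses the identity $(X_t^{D})^{p}=x^{p}\,\mathcal{E}(pM)_t\exp(\tfrac{p^2-p}{2}\int_0^t V_s^{D}\diff s)$: boundedness of $V^{D}$ makes $\mathcal{E}(pM)$ a true martingale by Novikov's criterion, and bounding the exponential factor by $\exp(\tfrac{p^2-p}{2}\overline{\nu}D^2 T)$ recovers the same estimate for $p\in\{2,4\}$.

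The only genuinely delicate point is the justification that the local-martingale term contributes nothing in expectation, i.e. the localisation-and-Fatou step above (or, equivalently, the verification of Novikov's condition in the second approach); the remaining computations are routine. I would stress that all constants produced depend only on $D$, $\overline{\nu}$ and $T$, hence the bounds are uniform over $t\in[0,T]$ and over $\Theta^{\text{rBergomi}}$, exactly as the statement requires — this uniformity is what will matter when these moment bounds feed into the comparison of the truncated and original models in Proposition \ref{PropConvrBergomiPrice}.
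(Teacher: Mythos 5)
Your proof is correct and follows essentially the same route the paper indicates (the paper's proof is just the one-line remark that the bound follows from the truncated dynamics and Itô's formula, i.e.\ exactly your argument: boundedness of $\sqrt{V^D}\le\sqrt{\overline{\nu}}\,D$, Itô applied to $x\mapsto x^{2k}$, localisation to kill the stochastic integral, and Gronwall). Your explicit treatment of the localisation/Fatou step and the alternative via the stochastic exponential are sound and supply the detail the paper omits.
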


\begin{proof}
    This follows directly from \eqref{TruncatedrBergomiModel} and Itô's formula.
\end{proof}


\subsection{Discrete-time approximation of the asset process}
As in \eqref{DiscreteTimePoints}, we consider the partition of the interval $[0,T]$ given by the discrete time points
$$
    0 = t_0 <... < t_k < ... < t_N = T
$$

with $t_k = kh$, $k=0,\dots,N$, for the grid size $h= T/N$. Without limitation of generality we set $h\in(0,1)$. Moreover, we obtain the following Euler-Maruyama Scheme for the process $X^D$
\begin{align}\label{EulerSchemeRBergomiX}
    \hat{X}^{D}_{t_{k+1}} = \hat{X}^{D}_{t_{k}} + \hat{X}^{D}_{t_{k}} \sqrt{V^D_{t_{k}}} \rho (B_{t_{k+1}} - B_{t_{k}}) + \hat{X}^{D}_{t_{k}} \sqrt{V^D_{t_{k}}} \sqrt{1-\rho^2} (B^\perp_{t_{k+1}} - B^\perp_{t_{k}}),\; \hat{X}^D_0 = x.
\end{align}

Further, we introduce the continuous-time interpolation $\bar{X}^D = (\bar{X}^D_t)_{t\in[0,T]}$ following the dynamics
\begin{equation}\label{TruncatedrBergomiCont}
        \diff \bar{X}^D_t = \bar{X}^D_{\floor{t}}\sqrt{V^D_{\floor{t}}}\rho \diff{B}_t + \bar{X}^D_{\floor{t}}\sqrt{V^D_{\floor{t}}}\sqrt{1-\rho^2} \diff{B^\perp_t},\; \bar{X}^D_0 = x.
\end{equation}

In particular, we can derive the following upper bound for the second and fourth moments of $\bar{X}^D$.
\begin{lemma}\label{MomentBoundrBergomiXBar}
    There exist constants $c_{22},c_{23}>0$ such that for every $x>0$ and $(\nu,\eta,\rho,H)\in\Theta^{\text{rBergomi}}$ it holds
    $$
        \mathbb{E}\left[\sup_{t\in[0,T]} \left\vert \bar{X}_t^D \right\vert^2\right] \leq c_{22}\vert x \vert^2 \text{ and }\mathbb{E}\left[\sup_{t\in[0,T]} \left\vert \bar{X}_t^D \right\vert^4\right] \leq c_{23}\vert x \vert^4.
    $$
\end{lemma}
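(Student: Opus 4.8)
The plan is to exploit the uniform boundedness of the volatility in the truncated model. Since $\nu\le\overline\nu$ and the exponential in \eqref{TruncatedrBergomiModel} is clipped to $[1/D,D]$, one has $\sqrt{V^D_t}\le\sqrt{\overline\nu}\,D=:c_V$ for every $t\in[0,T]$ and every $(\nu,\eta,\rho,H)\in\Theta^{\text{rBergomi}}$, with $c_V$ depending only on $\overline\nu$ and $D$. Writing $\bar X^D_t=x+M_t$ with $M_t=\int_0^t\bar X^D_{\floor s}\sqrt{V^D_{\floor s}}\bigl(\rho\,\diff B_s+\sqrt{1-\rho^2}\,\diff B^\perp_s\bigr)$, this turns both assertions into standard moment estimates for an It\^o process with a linearly-bounded, piecewise-frozen diffusion coefficient, closed by Gronwall's inequality exactly as in the proofs of Lemmas \ref{LemmaBoundSecMomV} and \ref{SecMomVTild}.

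For the second moment I would first show that $G(t):=\mathbb E[\sup_{s\in[0,t]}|\bar X^D_s|^2]$ is finite. This is bootstrapped from the Euler scheme \eqref{EulerSchemeRBergomiX}: conditioning on $\mathcal F_{t_k}$, using that the increments of $B,B^\perp$ over $[t_k,t_{k+1}]$ are independent of $\mathcal F_{t_k}$ while $V^D_{t_k}$ is $\mathcal F_{t_k}$-measurable and bounded by $c_V^2$, the cross term vanishes and $\mathbb E[|\hat X^D_{t_{k+1}}|^2\mid\mathcal F_{t_k}]=|\hat X^D_{t_k}|^2(1+V^D_{t_k}h)\le(1+c_V^2h)|\hat X^D_{t_k}|^2$, so $\max_k\mathbb E[|\hat X^D_{t_k}|^2]\le e^{c_V^2T}|x|^2$; the contribution of each sub-interval $[t_k,t_{k+1}]$ is then finite by Doob's maximal inequality and It\^o's isometry, and summing over $k$ gives $G(T)<\infty$. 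With finiteness available, applying \eqref{SquaredTriangular}, Doob's maximal inequality and It\^o's isometry to $\bar X^D_t=x+M_t$ and using $\sqrt{V^D}\le c_V$, $\rho^2\le1$, $1-\rho^2\le1$ yields $G(t)\le 3|x|^2+C\int_0^tG(s)\,\diff s$ with $C$ depending only on $c_V$, and Gronwall gives $G(T)\le c_{22}|x|^2$. For the fourth moment the argument is parallel: $M$ is a continuous local martingale with $\langle M\rangle_t=\int_0^t|\bar X^D_{\floor s}|^2V^D_{\floor s}\,\diff s\le c_V^2\int_0^t|\bar X^D_{\floor s}|^2\,\diff s$, so the Burkholder--Davis--Gundy inequality gives $\mathbb E[\sup_{s\le t}|M_s|^4]\le C_{\mathrm{BDG}}\,\mathbb E[\langle M\rangle_t^2]$, Jensen's inequality in time bounds $\langle M\rangle_t^2\le Tc_V^4\int_0^t|\bar X^D_{\floor s}|^4\,\diff s$, and $|x+M_t|^4\le 8(|x|^4+|M_t|^4)$ (two applications of \eqref{SquaredTriangular}) leads to $H(t):=\mathbb E[\sup_{s\in[0,t]}|\bar X^D_s|^4]\le 8|x|^4+C'\int_0^tH(s)\,\diff s$; a priori finiteness of $H$ comes from the same discrete recursion, now via $\mathbb E[(1+\xi_k)^4\mid\mathcal F_{t_k}]=1+6V^D_{t_k}h+3(V^D_{t_k}h)^2$ with $h\in(0,1)$, and Gronwall gives $H(T)\le c_{23}|x|^4$.

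The one point that needs genuine care, and which I expect to be the main (if mild) obstacle, is precisely the a priori finiteness of $G$ and $H$ before Gronwall can be invoked: it must be derived from the discrete scheme and the boundedness of $V^D$, not assumed. The remaining thing to verify is that the resulting constants $c_{22}:=3e^{CT}$ and $c_{23}:=8e^{C'T}$ depend only on $T$, $D$ and $\overline\nu$ and never on $x$ or on $(\nu,\eta,\rho,H)$ --- which holds because the truncation bound $\sqrt{V^D}\le c_V$ and the trivial estimates $\rho^2\le1$, $1-\rho^2\le1$ are uniform over $\Theta^{\text{rBergomi}}$.
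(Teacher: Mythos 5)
Your proposal is correct and follows essentially the same route the paper intends: the paper's proof is the one-line citation of the Burkholder--Davis--Gundy and Gronwall inequalities, which is exactly your argument once the uniform bound $\sqrt{V^D_t}\leq\sqrt{\overline{\nu}}D$ from the truncation is noted. Your additional care about the a priori finiteness of $G$ and $H$ (bootstrapped from the discrete scheme) is a legitimate detail that the paper's terse proof leaves implicit, and your moment recursions and constants check out.
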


\begin{proof}
    This follows directly by the Burkholder-Davis-Gundy inequality and Gronwall's inequality.
\end{proof}

Moreover, we have the following strong approximation result of the interpolating process $\bar{X}^D$.
\begin{lemma}\label{StrongApproxXDbar}
    There exists a constant $c_{24}>0$ such that for every $x>0$ and $(\nu,\eta,\rho,H)\in\Theta^{\text{rBergomi}}$ it holds 
    $$
        \mathbb{E}\left[\sup_{t\in[0,T]} \left\vert X^D_t - \bar{X}_t^D \right\vert^2\right] \leq c_{24}h^{\underline{H}}(1+\vert x\vert^2).
    $$
\end{lemma}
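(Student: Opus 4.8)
\emph{Proof idea.} The plan is to run the standard Burkholder--Davis--Gundy (BDG) plus Gronwall scheme already used for Lemma~\ref{MomentBoundrBergomiXBar}, after reducing the statement to a one-step modulus-of-continuity estimate for the volatility coefficient $\sqrt{V^D}$. Note first that the truncation in \eqref{TruncatedrBergomiModel} forces $\sqrt{\nu}/D\leq\sqrt{V^D_t}\leq\sqrt{\nu}D$ for all $t\in[0,T]$, so every stochastic integral below is well defined and $X^D$ has finite second sup-moment (again a direct consequence of BDG, Gronwall and the boundedness of $\sqrt{V^D}$); hence no localisation is needed. Subtracting the integral form of \eqref{TruncatedrBergomiCont} from that of \eqref{TruncatedrBergomiModel} yields
\[
X^D_t-\bar X^D_t=\rho\int_0^t\Delta_s\,\diff B_s+\sqrt{1-\rho^2}\int_0^t\Delta_s\,\diff B^\perp_s,\qquad \Delta_s:=\sqrt{V^D_s}\,X^D_s-\sqrt{V^D_{\floor s}}\,\bar X^D_{\floor s},
\]
and since $B,B^\perp$ are independent, BDG gives a constant $c>0$ with $\mathbb{E}[\sup_{u\leq r}|X^D_u-\bar X^D_u|^2]\leq c\int_0^r\mathbb{E}[\Delta_s^2]\,\diff s$ for all $r\in[0,T]$.

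Next I would split $\Delta_s=\sqrt{V^D_s}(X^D_s-\bar X^D_s)+\sqrt{V^D_s}(\bar X^D_s-\bar X^D_{\floor s})+(\sqrt{V^D_s}-\sqrt{V^D_{\floor s}})\bar X^D_{\floor s}$ and apply \eqref{SquaredTriangular}. Since $V^D_s\leq\nu D^2$, the first term contributes $3\nu D^2\,\mathbb{E}[\sup_{u\leq s}|X^D_u-\bar X^D_u|^2]$, which is exactly the term to be closed by Gronwall. For the second term, \eqref{TruncatedrBergomiCont}, It\^o's isometry and Lemma~\ref{MomentBoundrBergomiXBar} give $\mathbb{E}[|\bar X^D_s-\bar X^D_{\floor s}|^2]\leq\nu D^2\int_{\floor s}^{s}\mathbb{E}[|\bar X^D_{\floor u}|^2]\,\diff u\leq C|x|^2 h$, with $C$ uniform over $\Theta^{\text{rBergomi}}$.

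The third term is the crux. Writing $\sqrt{V^D_t}=\sqrt{\nu}\,\chi(e^{Z_t})$ with $Z_t:=\eta\sqrt{H/2}\,\mathfrak{X}_t-\tfrac{1}{4}\eta^2 t^{2H}$ and $\chi(y):=\min(\max(1/D,y),D)$, the $1$-Lipschitz property of $\chi$ and the elementary bound $|e^a-e^b|\leq(e^a+e^b)|a-b|$ give $|\sqrt{V^D_s}-\sqrt{V^D_{\floor s}}|\leq\sqrt{\nu}\,(e^{Z_s}+e^{Z_{\floor s}})\,|Z_s-Z_{\floor s}|$. Two applications of the Cauchy--Schwarz inequality together with the fourth-moment bound of Lemma~\ref{MomentBoundrBergomiXBar} then bound $\mathbb{E}[(\sqrt{V^D_s}-\sqrt{V^D_{\floor s}})^2|\bar X^D_{\floor s}|^2]$ by a constant multiple of $|x|^2\,\mathbb{E}[(e^{Z_s}+e^{Z_{\floor s}})^8]^{1/4}\,\mathbb{E}[|Z_s-Z_{\floor s}|^8]^{1/4}$. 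Here $Z_t$ is an affine image of the centred Gaussian $\mathfrak{X}_t$, whose mean and variance are bounded uniformly for $t\in[0,T]$ and over the compact set $\Theta^{\text{rBergomi}}$, so the exponential-moment factor is bounded; for the increment I would use the variance estimate $\mathbb{E}[|\mathfrak{X}_t-\mathfrak{X}_u|^2]\leq C_H|t-u|^{2H}$, obtained by decomposing \eqref{RiemannLouiville} as $\int_u^t(t-r)^{H-1/2}\diff B_r+\int_0^u[(t-r)^{H-1/2}-(u-r)^{H-1/2}]\diff B_r$ and computing the two variances (with $C_H$ uniform for $H\in[\underline H,\overline H]$ because $\overline H<1/2$), together with Gaussianity and $|t^{2H}-u^{2H}|\leq|t-u|^{2H}$, to conclude $\mathbb{E}[|Z_s-Z_{\floor s}|^8]^{1/4}\leq C h^{2H}\leq C h^{2\underline H}$. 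Thus the third term is at most $C|x|^2 h^{2\underline H}\leq C|x|^2 h^{\underline H}$.

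Finally, collecting the three estimates and using $h\leq h^{\underline H}$ (valid since $\underline H\in(0,1)$ and $h\in(0,1)$), one gets $\mathbb{E}[\sup_{u\leq r}|X^D_u-\bar X^D_u|^2]\leq 3c\nu D^2\int_0^r\mathbb{E}[\sup_{u\leq s}|X^D_u-\bar X^D_u|^2]\,\diff s+C|x|^2 h^{\underline H}$, and Gronwall's inequality closes the argument, with $c_{24}$ absorbing $C$, $T$, $3c\nu D^2$ and the factor $e^{3c\nu D^2 T}$, after bounding $|x|^2\leq 1+|x|^2$. I expect the only genuinely delicate step to be the third term, namely transporting the $2H$-H\"older regularity of the Riemann--Liouville process $\mathfrak{X}$ through the exponential and the truncation into an $L^8$-type modulus estimate for $\sqrt{V^D}$, with all constants uniform over the parameter ranges; everything else is the routine BDG/Gronwall argument of Lemma~\ref{MomentBoundrBergomiXBar}.
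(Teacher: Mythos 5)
Your proof is correct and follows the same skeleton as the paper's: write $X^D-\bar X^D$ as a stochastic integral, apply Burkholder--Davis--Gundy, split the integrand into a Gronwall term, a one-step increment of the Euler interpolation, and a modulus-of-continuity term for $\sqrt{V^D}$, then close with Gronwall. The local details differ in two places, both in your favour. First, for the volatility modulus the paper simply invokes the global Lipschitz property of the truncated exponential $y\mapsto\min(\max(1/D,e^y),D)$ (Lipschitz constant $D$), which is shorter than your route via $\vert e^a-e^b\vert\le(e^a+e^b)\vert a-b\vert$ plus Gaussian exponential moments; both are valid, but the truncation makes the paper's shortcut available and you could adopt it to avoid the $L^8$ exponential-moment bookkeeping. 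Second, and more substantively, for the increment of the Riemann--Liouville process the paper asserts the distributional identity $\mathfrak{X}_r-\mathfrak{X}_{\floor{r}}\overset{d}{=}\mathfrak{X}_{r-\floor{r}}$, which as stated accounts only for the ``forward'' piece $\int_{\floor{r}}^r(r-s)^{H-1/2}\diff{B}_s$ and ignores the contribution $\int_0^{\floor{r}}[(r-s)^{H-1/2}-(\floor{r}-s)^{H-1/2}]\diff{B}_s$; your decomposition of the increment into these two integrals and the explicit verification that both variances are $O(\vert t-u\vert^{2H})$ uniformly for $H\in[\underline H,\overline H]$ is the careful version of that step and is the argument one would want on record. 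The remaining estimates (the $\vert t^{2H}-u^{2H}\vert$ bound, the $h\le h^{\underline H}$ reduction, uniformity of constants over $\Theta^{\text{rBergomi}}$) match the paper's.
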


\begin{proof}
    Let $x>0$ and $(\nu,\eta,\rho,H)\in\Theta^{\text{rBergomi}}$. For every $t\in[0,T]$ we define $G(t):= \mathbb{E}\left[\sup_{s\in[0,t]} \left\vert X^D_s - \bar{X}_s^D \right\vert^2\right]$. Then by the Burkhölder-Davis-Gundy inequality together with Itô's isometry there exists a constant $C>0$ such that it holds
    $$
        G(t) \leq C \int_0^t \mathbb{E}\left[\left\vert X^D_r \sqrt{V^D_r} - \bar{X}^D_{\floor{r}} \sqrt{V^D_{\floor{r}}}\right\vert^2\right]\diff{r}.
    $$

    By \eqref{SquaredTriangular} we have that
    $$
        \mathbb{E}\left[\left\vert X^D_r \sqrt{V^D_r} - \bar{X}^D_{\floor{r}} \sqrt{V^D_{\floor{r}}}\right\vert^2\right]\leq 2 \mathbb{E}\left[\left\vert X^D_r \sqrt{V^D_r} -  X^D_{\floor{r}} \sqrt{V^D_{\floor{r}}}\right\vert^2\right] + 2 \mathbb{E}\left[\left\vert X^D_{\floor{r}} \sqrt{V^D_r} - \bar{X}^D_{\floor{r}} \sqrt{V^D_{\floor{r}}}\right\vert^2\right].
    $$

    For the first term it follows directly with Lemma \ref{MomentBoundrBergomiX} that
    \begin{equation}\label{bound:1}
        \mathbb{E}\left[\left\vert X^D_r \sqrt{V^D_r} -  X^D_{\floor{r}} \sqrt{V^D_{\floor{r}}}\right\vert^2\right] \leq \bar{\nu}^2 D^4 \int_{\floor{r}}^r \mathbb{E}\left[\vert X^D_s \vert^2\right]\diff{s} \leq \bar{\nu}^2 D^4 h \vert x \vert^2 c_{20}.
    \end{equation}

    Again by applying \eqref{SquaredTriangular} we obtain 
    \begin{equation}\label{bound:2}
        \mathbb{E}\left[\left\vert X^D_{\floor{r}} \sqrt{V^D_r} - \bar{X}^D_{\floor{r}} \sqrt{V^D_{\floor{r}}}\right\vert^2\right] \leq 2 \mathbb{E}\left[\left\vert X^D_{\floor{r}} \sqrt{V^D_r} - X^D_{\floor{r}} \sqrt{V^D_{\floor{r}}}\right\vert^2\right] + 2 \mathbb{E}\left[\left\vert X^D_{\floor{r}} \sqrt{V^D_{\floor{r}}} - \bar{X}^D_{\floor{r}} \sqrt{V^D_{\floor{r}}}\right\vert^2\right].
    \end{equation}

    We see that $\mathbb{E}\left[\left\vert X^D_{\floor{r}} \sqrt{V^D_{\floor{r}}} - \bar{X}^D_{\floor{r}} \sqrt{V^D_{\floor{r}}}\right\vert^2\right] \leq \bar{\nu}D^2 G(r)$ and by the Cauchy-Schwarz inequality it follows 
    \begin{equation}\label{bound:3}
        \mathbb{E}\left[\left\vert X^D_{\floor{r}} \sqrt{V^D_r} - X^D_{\floor{r}} \sqrt{V^D_{\floor{r}}}\right\vert^2\right] \leq \mathbb{E}\left[\left\vert X_{\floor{r}}^D \right\vert^4\right]^{1/2} \mathbb{E}\left[\left\vert\sqrt{V^D_r} - \sqrt{V^D_{\floor{r}}}\right\vert^4\right]^{1/2}.
    \end{equation}

    Moreover, since the truncated exponential function in the definition of $V^D$ is Lipschitz, there exists a constant $L>0$ such that 
    \begin{equation}\label{bound:4}
        \mathbb{E}\left[\left\vert\sqrt{V^D_r} - \sqrt{V^D_{\floor{r}}}\right\vert^4\right]\leq 2L^4 \bar{\nu}^2 \bar{H}^2 \mathbb{E}[\vert \mathfrak{X}_r - \mathfrak{X}_{\floor{r}} \vert^4] + \frac{1}{64}L^4 \bar{\nu}^2 \bar{\eta}^8 \vert r^{2H} - \floor{r}^{2H} \vert^4.
    \end{equation}

    Using the fact that $\mathfrak{X}_r - \mathfrak{X}_{\floor{r}} \overset{d}{=} \mathfrak{X}_{r-\floor{r}}$ we obtain by Gaussianity of the process $\mathfrak{X}$ for the fourth moment
    \begin{equation}\label{bound:5}
        \mathbb{E}\left[\vert \mathfrak{X}_r - \mathfrak{X}_{\floor{r}}\vert^4\right] \leq \frac{3}{4\underline{H}^2}h^{2\overline{H}}.
    \end{equation}

    Lastly, note that for $\floor{r}<h$ it directly follows $\vert r^{2H} - \floor{r}^{2H}\vert^4 \leq (2h)^{8\underline{H}}$. For the case that $\floor{r}\geq h$ by the mean-value theorem we obtain that $\vert r^{2H} - \floor{r}^{2H}\vert^4\leq 16 \overline{H}^4 h^{8\underline{H}}$. Therefore, for general $r\in[0,t]$ we have that
    \begin{equation}\label{bound:6}
        \vert r^{2H} - \floor{r}^{2H}\vert^4 \leq h^{8\underline{H}}\max\{2^{8\underline{H}}, 16\overline{H}^4\}.
    \end{equation}

    In conclusion, by using \eqref{bound:1}, \eqref{bound:2}, \eqref{bound:3}, \eqref{bound:4}, \eqref{bound:5}, \eqref{bound:6} and the fact that $\sqrt{a+b}\leq \sqrt{a} + \sqrt{b}$ for all $a,b>0$ it follows for every $t\in[0,T]$
    \begin{align*}
        G(t) \leq & 2C \bar{\nu}^2 D^4 h \vert x\vert^2c_{20}T + 4C \int_0^t \mathbb{E}\left[\left\vert X^D_{\floor{r}}\sqrt{V^D_r} - X^D_{\floor{r}}\sqrt{V^D_{\floor{r}}}\right\vert^2\right]\diff{r} \\[2mm] &\phantom{...} + 4C \int_0^t \mathbb{E}\left[\left\vert X^D_{\floor{r}}\sqrt{V^D_{\floor{r}}} - \bar{X}^D_{\floor{r}}\sqrt{V^D_{\floor{r}}}\right\vert^2\right]\diff{r} \\[2mm]
        \leq & 4C\bar{\nu}^2 D^4 h c_{20} \vert x\vert^2 T + \sqrt{c_{21}}\vert x\vert^2\left(\sqrt{96}L^2\bar{\nu}h^{\underline{H}} + L^2\bar{\nu}\bar{\mu}^4 h^{4\underline{H}}\max\{2^{2\underline{H}},4\bar{H}^2\}\right)T + 8C\bar{\nu}D^2\int_0^t G(r)\diff{r}.
    \end{align*}

    So the result finally follows by observing that $G(t)\in L^1([0,T])$ by \eqref{SquaredTriangular} and then applying Gronwall's inequality. 
\end{proof}


\subsection{DNN approximation of the Volterra process}
In \eqref{TruncatedrBergomiCont} we observe that it is crucial to simulate the process $\sqrt{V^v}$ on the discrete time points $t_1,...,t_N$. In particular, we need to generate 
\begin{align}\label{DiscreteTimeRBergomiV}
    \sqrt{{V}^{D}_{t_{k}}} &= \sqrt{\nu}\cdot \min\left(\max\left(\frac{1}{D},\exp\left(\eta \sqrt{\frac{H}{2}}{\mathfrak{X}}_{t_k} - \frac{1}{4}\eta^2 {t_k}^{2H} \right)\right), D\right),\; k=1,\dots,N,
\end{align}

which depend on the values of $\mathfrak{X}$. Therefore, we first consider an approximation for the Volterra process in \eqref{RiemannLouiville} for a fixed time point $t\in[0,T]$. Note that $\mathfrak{X}$ belongs to the class of truncated Brownian semistationary processes considered in \cite{Bennedsen2017}. To simulate this kind of processes for a fixed time point $t\in[0,T]$, we now follow a hybrid scheme proposed in \cite{Bennedsen2017} to take care of the steepness of the kernel function near zero, where pure Riemann sums produce bad results. In particular, for some fixed $t$ consider the discrete-time grid defined as 
\begin{equation}\label{GridHybridScheme} 
    G(t) := \{t^t_k := k h(t) \mid k=0,...,N\}
\end{equation}

with $h(t) = t / N$. In particular, note that it holds $h(T) = T/N = h$. Then for any $\mathfrak{K}\in\{0,1,...,k\}$ we introduce
$$
    \bar{\mathfrak{X}}^{\mathfrak{K}}_{t} = \sum_{k=N-\mathfrak{K}+1}^{N} \int_{t^t_{k-1}}^{t^t_k} (t-s)^{H-1/2}\diff{B}_s + \sum_{k=1}^{N-\mathfrak{K}} (t-t^t_k)^{H-1/2}(B_{t^t_{k}} - B_{t^t_{k-1}}).
$$

Note that this scheme is realised by a linear combination of Wiener integrals and Riemann sums. In the special case of $\mathfrak{K}=0$, it corresponds to an approximation purely driven by Riemann sums. In \cite{Bennedsen2017} it is shown that choosing $\mathfrak{K}$ small already improves the approximation error significantly. Therefore, we focus on the special case $\mathfrak{K}=1$. In particular, we consider
\begin{equation}\label{DiscreteTimeVolterra}
    \bar{\mathfrak{X}}_{t} = \int_{t^t_{N-1}}^{t}(t - s)^{H-1/2} \diff{B}_s + \sum_{k=1}^{N-1} (t - t^t_k)^{H-1/2}(B_{t^t_{k}} - B_{t^t_{k-1}}).
\end{equation}

In this way we obtain a strong approximation result for the process $\mathfrak{X}$ in the sense of the following Lemma. 

\begin{lemma}\label{ApproximationHybirdScheme}
Let $\mathfrak{X} = (\mathfrak{X}_t)_{t\in[0,T]}$ be as in \eqref{RiemannLouiville}. Then there exists a constant $c_{25}>0$ such that for every $t\in[0,T]$ and $H\in(0,1/2)$
$$
    \mathbb{E}\left[\vert \mathfrak{X}_{t} - \bar{\mathfrak{X}}_{t} \vert^4\right] \leq c_{25} h^{4H}.
$$
\end{lemma}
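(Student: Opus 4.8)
The plan is to exploit that $\mathfrak{X}_t-\bar{\mathfrak{X}}_t$ is a centred Gaussian random variable and to reduce its fourth moment to a single rescaled deterministic integral. Since $t^t_N=t$, the stochastic integral $\int_{t^t_{N-1}}^{t}(t-s)^{H-1/2}\diff{B}_s$ appearing in $\bar{\mathfrak{X}}_t$ cancels exactly against the corresponding piece of $\mathfrak{X}_t=\int_0^{t^t_{N-1}}(t-s)^{H-1/2}\diff{B}_s+\int_{t^t_{N-1}}^{t}(t-s)^{H-1/2}\diff{B}_s$, so that
\begin{equation*}
    \mathfrak{X}_t-\bar{\mathfrak{X}}_t=\sum_{k=1}^{N-1}\int_{t^t_{k-1}}^{t^t_k}\bigl[(t-s)^{H-1/2}-(t-t^t_k)^{H-1/2}\bigr]\diff{B}_s .
\end{equation*}
This is a Wiener integral of a deterministic $L^2$-integrand (finite because $(t-\cdot)^{H-1/2}\in L^2([0,t])$), hence a centred Gaussian variable $Z$, for which $\mathbb{E}[Z^4]=3(\mathbb{E}[Z^2])^2$. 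It therefore suffices to bound $\mathbb{E}[Z^2]$ by a constant multiple of $h^{2H}$.

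By It\^o's isometry, $\mathbb{E}[Z^2]=\sum_{k=1}^{N-1}\int_{t^t_{k-1}}^{t^t_k}\bigl[(t-s)^{H-1/2}-(t-t^t_k)^{H-1/2}\bigr]^2\diff{s}$. I would then substitute $u=t-s$, use $t-t^t_k=(N-k)h(t)$, reindex via $j=N-k$, and finally rescale $w=u/h(t)$; these elementary changes of variables give the clean identity
\begin{equation*}
    \mathbb{E}[Z^2]=h(t)^{2H}\sum_{j=1}^{N-1}\int_j^{j+1}\bigl[w^{H-1/2}-j^{H-1/2}\bigr]^2\diff{w} .
\end{equation*}
The remaining sum is bounded uniformly in $N$ and in $H\in(0,1/2)$: since $w\mapsto w^{H-3/2}$ is decreasing, the mean value theorem gives $|w^{H-1/2}-j^{H-1/2}|\leq(\tfrac12-H)\,j^{H-3/2}(w-j)\leq\tfrac12\, j^{H-3/2}(w-j)$ for $w\in[j,j+1]$, $j\geq1$, whence $\sum_{j=1}^{N-1}\int_j^{j+1}\bigl[w^{H-1/2}-j^{H-1/2}\bigr]^2\diff{w}\leq\tfrac{1}{12}\sum_{j\geq1}j^{2H-3}\leq\tfrac{1}{12}\sum_{j\geq1}j^{-2}=\tfrac{\pi^2}{72}$, the last inequality using $2H-3<-2$. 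Since $h(t)=t/N\leq T/N=h$ and $2H>0$, this yields $\mathbb{E}[Z^2]\leq\tfrac{\pi^2}{72}h^{2H}$ and hence $\mathbb{E}[|\mathfrak{X}_t-\bar{\mathfrak{X}}_t|^4]=3(\mathbb{E}[Z^2])^2\leq\tfrac{\pi^4}{1728}h^{4H}$, so one may take $c_{25}:=\pi^4/1728$ (any explicit constant works).

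The only genuinely delicate point is obtaining a bound on the deterministic sum that is \emph{uniform} in the Hurst parameter over all of $(0,1/2)$: mere convergence of $\sum_j j^{2H-3}$ needs only $H<1$, whereas the single $H$-independent constant relies on $H<1/2$, which forces $2H-3<-2$ and hence $j^{2H-3}\leq j^{-2}$. Everything else --- the cancellation of the Wiener-integral part, the Gaussian fourth-moment identity, It\^o's isometry, and the two substitutions --- is routine bookkeeping; no moment estimates on $X^D$ or Gronwall arguments are needed here, in contrast with the preceding lemmas.
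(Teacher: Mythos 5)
Your proposal is correct and follows essentially the same route as the paper: the same telescoping cancellation of the Wiener-integral piece, the Gaussian identity $\mathbb{E}[Z^4]=3(\mathbb{E}[Z^2])^2$ with It\^o's isometry, a mean-value-theorem bound on the kernel increment, and a $\zeta$-type series bound that is uniform in $H\in(0,1/2)$. Your bookkeeping (bounding the variance first via the rescaled sum $\sum_j j^{2H-3}\leq\zeta(2)$ and only then squaring) is in fact slightly cleaner than the paper's, which squares first and compares $(\sum_k a_k)^2$ with $\sum_k a_k^2$.
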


\begin{proof}
For a fixed timepoint $t\in[0,T]$ and the corresponding grid $\eqref{GridHybridScheme}$ we can write 
$$
    \mathfrak{X}_t = \int_{t^t_{N-1}}^t (t-s)^{H-1/2}\diff{B}_s + \int_0^{t^t_{N-1}}(t-s)^{H-1/2}\diff{B}_s. 
$$

Therefore, it directly follows
\begin{align*}
    \mathbb{E}\left[\vert \mathfrak{X}_{t} - \bar{\mathfrak{X}}_{t} \vert^4\right] &= \mathbb{E}\left[\left\vert \int_0^{t^t_{N-1}} (t-s)^{H-1/2} \diff{B}_s - \sum_{k=1}^{N-1} (t-t^t_k)^{H-1/2}(B_{t^t_k} - B_{t^t_{k-1}})\right\vert^4\right] \\[2mm]
    & = \mathbb{E}\left[\left\vert \sum_{k=1}^{N-1} \int_{t^t_{k-1}}^{t^t_k} \left[ (t-s)^{H-1/2} - (t-t^t_k)^{H-1/2} \right] \diff{B}_s \right\vert^4\right].   
\end{align*}

Using the fact that for any $Y\sim\mathcal{N}(0,\sigma^2)$ it holds $\mathbb{E}[\vert Y\vert^4] = 3\sigma^4$, we obtain by Itô's isometry

\begin{equation}\label{FourthMomErrorTemp1}
    \mathbb{E}\left[\vert \mathfrak{X}_{t} - \bar{\mathfrak{X}}_{t} \vert^4\right] = 3\left(\sum_{k=1}^{N-1} \int_{t^t_{k-1}}^{t^t_k} \left[(t-s)^{H-1/2} - (t - t^t_k)^{H-1/2}\right]^2\diff{s}\right)^2.
\end{equation}

Further, by the mean-value theorem we obtain that 
\begin{equation}\label{KernelInequality}
    \vert x^\alpha - y^\alpha \vert \leq \vert \alpha \vert \min(x,y)^{\alpha-1}\vert x - y\vert,
\end{equation}

for every $x,y>0$ and $\alpha<1$. Note that $H-1/2\in(-1/2,0)$ and for every $s\in(t^t_{k-1},t^t_k)$ it holds $0 < t - t^t_k < t - s$. Hence, applying \eqref{KernelInequality} to \eqref{FourthMomErrorTemp1} yields
\begin{align*}
    \mathbb{E}\left[\vert \mathfrak{X}_{t} - \bar{\mathfrak{X}}_{t} \vert^4\right] \leq 3\left(\sum_{k=1}^{N-1} \int_{t^t_{k-1}}^{t^t_k} (H-1/2)^2 (t-t^t_k)^{2H-3}h(t)^2 \diff{s}\right)^2 \leq 3 h(t)^6 \left(H-1/2\right)^4\sum_{k=1}^{N-1} (t-t^t_k)^{4H-6}.
\end{align*}

Since $4H-6 \in (-6,-4)$ and $t-t^t_k = (N-k)h(t)$, we obtain
$$
    \sum_{k=1}^{N-1} (t-t^t_k)^{4H-6} =  h(t)^{4H-6} \sum_{k=1}^{N-1} k^{4H-6} \leq h(t)^{4H-6} \sum_{k=1}^{N-1} \frac{1}{k^4} \leq h(t)^{4H-6}\zeta(4),
$$

where $\zeta$ denotes the Riemann zeta function. Therefore, finally we obtain for the approximation error 
$$
    \mathbb{E}\left[\vert \mathfrak{X}_{t} - \bar{\mathfrak{X}}_{t} \vert^4\right] \leq \frac{3}{16}\zeta(4)h(t)^{4H}
$$

using that $(H-1/2)^4 \leq 1/16$ for any $H\in(0,1/2)$. This concludes the proof for $c_{25} := \frac{3}{16}\zeta(4)$ and with the observation that $h(t)\leq h(T) = h$.
\end{proof}
We can now define a second approximating process $\tilde{\mathfrak{X}}^{\varepsilon} = (\tilde{\mathfrak{X}}^\varepsilon_t)_{t\in[0,T]}$ which turns out to be a DNN for a fixed $t\in[0,T]$ and $\omega\in\Omega$. For every $t$ with corresponding grid $G(t)$ we observe
$$
    \int_{t^t_{N-1}}^{t}(t - s)^{H-1/2} \diff{B}_s \overset{d}{=} \frac{1}{\sqrt{2H}}h(t)^{H} Z_t
$$

for some $Z_t\sim\mathcal{N}(0,1)$ independent of $B_{u} - B_{v}$ for every $u>v\geq t$. By Theorem \ref{DNNApproxC1} for a fixed $t\in[0,T]$ and any $\varepsilon\in(0,1/2)$ we find a DNN $\Phi^{t,1}:[\underline{H},\overline{H}]\rightarrow\mathbb{R}$ so that for all $H\in[\underline{H},\overline{H}]$ it holds 
\begin{equation}\label{BoundDNN1rBergomi}
    \vert h(t)^{H} / \sqrt{2H} - \Phi^{t,1}(H) \vert \leq \varepsilon. 
\end{equation}

Moreover, for any $k=1,...,N-1$ there exists a DNN $\Phi^{t,k,2}:[\underline{H},\overline{H}] \rightarrow \mathbb{R}$ such that for all $H\in[\underline{H},\overline{H}]$ it holds 
\begin{equation}\label{BoundDNN2rBergomi}
    \vert (t - t^t_k)^{H-1/2} - \Phi^{t,k,2}(H) \vert \leq \varepsilon.
\end{equation}

Based on this, we define the process $\tilde{\mathfrak{X}}^\varepsilon=(\tilde{\mathfrak{X}}^\varepsilon_t)_{t\in[0,T]}$ by 
\begin{equation}\label{VolterraDNN}
    \tilde{\mathfrak{X}}^{\varepsilon}_{t} := \Phi^{t,1}(H) Z_t + \sum_{k=1}^{N-1} \Phi^{t,k,2}(H)(B_{t^t_k} - B_{t^t_{k-1}}).
\end{equation}

For a fixed $\omega\in\Omega$ this again turns out to be a weighted sum of DNNs and is hence a DNN itself. In particular, there exists a constant $c_{26}>0$ such that for every $k=1,...,N$, it holds 
\begin{equation}\label{SizeDNNVolterra}
    \text{size}(\tilde{\mathfrak{X}}^\varepsilon_{t_k}(\omega))\leq N c_{26}\varepsilon^{-2}.
\end{equation}

\begin{lemma}\label{ApproxDNNVolterra}
There exist constants $c_{27},c_{28}>0$ such that for every $t\in[0,T]$, $\varepsilon\in(0,1/2)$ and $(\nu,\eta,H,\rho)\in\Theta^{\text{rBergomi}}$ it holds
$$
    \mathbb{E}\left[\vert \mathfrak{X}_{t} - \tilde{\mathfrak{X}}^{\varepsilon}_{t} \vert^4\right] \leq c_{27} \frac{h^{4H}}{H^2} + c_{28}\varepsilon^4.
$$
\end{lemma}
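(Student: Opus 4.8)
The plan is to insert the hybrid-scheme approximation $\bar{\mathfrak{X}}_t$ from \eqref{DiscreteTimeVolterra} between $\mathfrak{X}_t$ and $\tilde{\mathfrak{X}}^\varepsilon_t$ and treat the two resulting errors separately. Applying \eqref{SquaredTriangular} twice (so that $|a+b|^4\le 8(|a|^4+|b|^4)$), I would write
\[
\mathbb{E}\left[\vert \mathfrak{X}_{t} - \tilde{\mathfrak{X}}^{\varepsilon}_{t} \vert^4\right] \le 8\,\mathbb{E}\left[\vert \mathfrak{X}_{t} - \bar{\mathfrak{X}}_{t} \vert^4\right] + 8\,\mathbb{E}\left[\vert \bar{\mathfrak{X}}_{t} - \tilde{\mathfrak{X}}^{\varepsilon}_{t} \vert^4\right].
\]
The first term is already under control by Lemma \ref{ApproximationHybirdScheme}, which bounds it by $8c_{25}h^{4H}$, so the work reduces to the DNN-approximation error $\bar{\mathfrak{X}}_t - \tilde{\mathfrak{X}}^{\varepsilon}_t$.

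For the second term I would exploit that $\bar{\mathfrak{X}}_t$ and $\tilde{\mathfrak{X}}^\varepsilon_t$ are built from the same Gaussian inputs. Setting $Z_t := \sqrt{2H}\,h(t)^{-H}\int_{t^t_{N-1}}^t (t-s)^{H-1/2}\diff B_s$, one has $Z_t\sim\mathcal{N}(0,1)$, and $Z_t$ is independent of the increments $B_{t^t_k}-B_{t^t_{k-1}}$, $k=1,\dots,N-1$, since the former is $\sigma(B_s-B_{t^t_{N-1}}:s\in[t^t_{N-1},t])$-measurable while the latter are $\mathcal{F}_{t^t_{N-1}}$-measurable. Subtracting \eqref{VolterraDNN} from \eqref{DiscreteTimeVolterra} then yields
\[
\bar{\mathfrak{X}}_t - \tilde{\mathfrak{X}}^{\varepsilon}_t = \left(\frac{h(t)^H}{\sqrt{2H}} - \Phi^{t,1}(H)\right)Z_t + \sum_{k=1}^{N-1}\left((t-t^t_k)^{H-1/2} - \Phi^{t,k,2}(H)\right)(B_{t^t_k}-B_{t^t_{k-1}}),
\]
which is a linear combination of independent centered Gaussians, hence a centered Gaussian with variance
\[
\sigma_t^2 = \left(\frac{h(t)^H}{\sqrt{2H}} - \Phi^{t,1}(H)\right)^2 + h(t)\sum_{k=1}^{N-1}\left((t-t^t_k)^{H-1/2} - \Phi^{t,k,2}(H)\right)^2 .
\]

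Now I would invoke the DNN bounds \eqref{BoundDNN1rBergomi} and \eqref{BoundDNN2rBergomi}: each squared term is at most $\varepsilon^2$, so $\sigma_t^2 \le \varepsilon^2 + (N-1)h(t)\varepsilon^2 \le \varepsilon^2(1+Nh(t)) = (1+t)\varepsilon^2 \le (1+T)\varepsilon^2$, using $Nh(t)=t$. Since the fourth moment of a centered Gaussian with variance $\sigma^2$ equals $3\sigma^4$, this gives $\mathbb{E}[|\bar{\mathfrak{X}}_t - \tilde{\mathfrak{X}}^{\varepsilon}_t|^4] = 3\sigma_t^4 \le 3(1+T)^2\varepsilon^4$. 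Combining with the first term, $\mathbb{E}[|\mathfrak{X}_t - \tilde{\mathfrak{X}}^{\varepsilon}_t|^4] \le 8c_{25}h^{4H} + 24(1+T)^2\varepsilon^4$; since $0<H<1/2$ forces $H^{-2}\ge 1$, the first summand is at most $8c_{25}h^{4H}/H^2$, so the claim holds with $c_{27}:=8c_{25}$ and $c_{28}:=24(1+T)^2$. The only genuinely non-routine point is the second paragraph: setting up the coupling so that $\bar{\mathfrak{X}}_t$ and $\tilde{\mathfrak{X}}^\varepsilon_t$ share the same $Z_t$ and the same Brownian increments, and verifying the independence that makes their difference a centered Gaussian whose variance is exactly the sum of the squared coefficient errors; after that, Lemma \ref{ApproximationHybirdScheme} and the Gaussian fourth-moment identity leave nothing delicate.
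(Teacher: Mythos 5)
Your proof is correct under one reading of the construction, and it follows the paper's top-level decomposition (split off $\bar{\mathfrak{X}}_t$, bound $\mathbb{E}[\vert\mathfrak{X}_t-\bar{\mathfrak{X}}_t\vert^4]$ by Lemma \ref{ApproximationHybirdScheme}, then handle $\bar{\mathfrak{X}}_t-\tilde{\mathfrak{X}}^\varepsilon_t$ via Gaussianity, It\^o's isometry and the fourth-moment identity $\mathbb{E}[Y^4]=3\sigma^4$). Where you genuinely diverge is in the treatment of $Z_t$: you \emph{define} $Z_t$ as the normalized Wiener integral $\sqrt{2H}\,h(t)^{-H}\int_{t^t_{N-1}}^t(t-s)^{H-1/2}\diff B_s$, so that $\tfrac{1}{\sqrt{2H}}h(t)^H Z_t$ coincides pathwise with the last-interval integral and $\bar{\mathfrak{X}}_t-\tilde{\mathfrak{X}}^\varepsilon_t$ becomes a sum of independent centered Gaussians whose variance is exactly the sum of squared coefficient errors, giving a clean $O(\varepsilon^4)$ bound. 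The paper does not assume this coupling: it only uses $\int_{t^t_{N-1}}^{t}(t-s)^{H-1/2}\diff B_s \overset{d}{=}\tfrac{1}{\sqrt{2H}}h(t)^H Z_t$ and therefore must bound $\mathbb{E}\big[\vert\int_{t^t_{N-1}}^{t}(t-s)^{H-1/2}\diff B_s-\tfrac{1}{\sqrt{2H}}h(t)^H Z_t\vert^4\big]$ as a nonzero quantity of order $h^{4H}/H^2$; that is precisely the origin of the $c_{27}h^{4H}/H^2$ term in the statement, which in your argument only appears because you artificially insert $H^{-2}\geq 1$. Your route buys a sharper constant (no extra $H^{-2}$ blow-up from the DNN part); the paper's route buys robustness to how $Z_t$ is realized.

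The one point you must address is whether \eqref{VolterraDNN} actually pins $Z_t$ down as your normalized integral. The paper introduces $Z_t$ only through a distributional identity, ``for some $Z_t\sim\mathcal{N}(0,1)$ independent of $B_u-B_v$ for $u>v\geq t$'', which is also satisfied by a freshly sampled Gaussian independent of the whole path of $B$ (as one would use in a simulation). Under that reading, your identification $\tfrac{1}{\sqrt{2H}}h(t)^H Z_t=\int_{t^t_{N-1}}^t(t-s)^{H-1/2}\diff B_s$ is unjustified, the difference is a genuine Gaussian of variance $\tfrac{1}{H}h(t)^{2H}$ (if independent) or at least of order $h(t)^{2H}/H$, and you must add the corresponding term $\lesssim h^{4H}/H^2$ to your bound --- which then reproduces the paper's estimate. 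Your independence verification for the increments $B_{t^t_k}-B_{t^t_{k-1}}$, $k\leq N-1$, versus $Z_t$ is fine under your coupling, and the remaining computations (variance $\sigma_t^2\leq(1+T)\varepsilon^2$ via \eqref{BoundDNN1rBergomi}--\eqref{BoundDNN2rBergomi} and $Nh(t)=t$, then $3\sigma_t^4$) are correct.
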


\begin{proof}
Let $t\in[0,T],\varepsilon, H \in(0,1/2)$ and $(\nu,\eta,H,\rho)\in\Theta^{\text{rBergomi}}$. Then by \eqref{SquaredTriangular} we obtain 
\begin{equation}
    \mathbb{E}\left[\vert \mathfrak{X}_{t} - \tilde{\mathfrak{X}}^{\varepsilon}_{t} \vert^4\right] \leq 8 \mathbb{E}\left[\vert \mathfrak{X}_{t} - \bar{\mathfrak{X}}_{t} \vert^4\right] + 8 \mathbb{E}\left[\vert \bar{\mathfrak{X}}_{t} - \tilde{\mathfrak{X}}^{\varepsilon}_{t} \vert^4\right].
\end{equation}

For the second summand we obtain again with \eqref{SquaredTriangular} that 
\begin{align*}
    \mathbb{E}\left[\vert \bar{\mathfrak{X}}_{t} - \tilde{\mathfrak{X}}^{\varepsilon}_{t} \vert^4\right] 
    &\leq 8\mathbb{E}\left[\left\vert \int_{t^t_{N-1}}^{t}(t - s)^{H-1/2}\diff{B}_s  - \Phi^{t,1}(H)Z_t \right\vert^4\right] \\[2mm]
    & \phantom{..} + 8\mathbb{E}\left[\left\vert \sum_{k=1}^{N-1} \int_{t^t_{k-1}}^{t^t_k} \left[(t - t^t_k)^{H-1/2} - \Phi^{t,k,2}(H)\right] \diff{B}_s\right\vert^4\right].
\end{align*}

Then Itô's isometry and \eqref{BoundDNN2rBergomi} yield 
$$
    \mathbb{E}\left[\left\vert \sum_{k=1}^{N-1} \int_{t^t_{k-1}}^{t^t_k} \left[(t - t^t_k)^{H-1/2} - \Phi^{t,k,2}(H)\right] \diff{B}_s\right\vert^4\right] = 3\left(\sum_{k=1}^{N-1} h(t) \left\vert(t - t^t_k)^{H-1/2} - \Phi^{t,k,2}(H)\right\vert^2 \right)^2 \leq 3 t^2 \varepsilon^4.
$$

Moreover, using \eqref{SquaredTriangular}, the independence between $B$ and $Z_t$ as well as the bound \eqref{BoundDNN1rBergomi} we have that 
\begin{align*}
    &\mathbb{E}\left[\left\vert \int_{t^t_{N-1}}^{t}(t - s)^{H-1/2}\diff{B}_s  - \Phi^{t,1}(H)Z_t \right\vert^4\right] \\[4mm]
    \leq&\phantom{.} 8 \mathbb{E}\left[\left\vert \int_{t^t_{N-1}}^{t}(t - s)^{H-1/2}\diff{B}_s  - \frac{1}{\sqrt{2H}}h(t)^{H}Z_t \right\vert^4\right] + 8\mathbb{E}\left[\left\vert \frac{1}{\sqrt{2H}}h(t)^{H}Z_t  - \Phi^{t,1}(H)Z_t \right\vert^4\right] \\[4mm]
    \leq& 24\left( \frac{1}{4H^2} h(t)^{4H} + \varepsilon^4 \right).
\end{align*}

Using again the fact that $h(t)\leq h(T) = h$ finishes the proof with $c_{27}:= 8c_{25} + 384$ and $c_{28}:= 192 T^2 + 1536$.
\end{proof}


\subsection{DNN approximation of the system}
Now that we have introduced an approximation for the Volterra process $\mathfrak{X}$, we construct a DNN approximation for both the variance process $V^D$ and the asset process $X^D$ in \eqref{TruncatedrBergomiModel}. In the following Lemma it is shown that the $\min$ and $\max$ function in the definition of $V^D$ can be in fact explicitly written as DNNs.

\begin{lemma}\label{MinMaxDNNs}
    For a fixed $N\in\mathbb{R}$ there are DNNs $\Phi^{\text{max}}_N:\mathbb{R}\rightarrow\mathbb{R}$ and $\Phi^{\text{min}}_N:\mathbb{R}\rightarrow\mathbb{R}$ such that 
    \begin{equation*}
        \max(x,N) = \Phi^{\text{max}}_N(x) \; \text{ and } \min(x,N) = \Phi^{\text{min}}_N(x)
    \end{equation*}
    for every $x\in\mathbb{R}$. In particular,  size$(\Phi^{\text{max}}_N)=$size$(\Phi^{\text{min}}_N)\leq 12$, where equality holds if $N\neq 0$.
\end{lemma}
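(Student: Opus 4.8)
The plan is to exhibit $\Phi^{\text{max}}_N$ and $\Phi^{\text{min}}_N$ directly as small ReLU networks and then read off the size. Writing $\varrho(y):=\max(y,0)$ for the ReLU activation, the elementary starting point is the identity $\max(x,N)=N+\varrho(x-N)$, together with its mirror image $\min(x,N)=-\max(-x,-N)=N-\varrho(N-x)$, each of which is immediate from the two cases $x\le N$ and $x>N$.

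To make the equality of the two sizes manifest I would realise both functions over one and the same network architecture. Using $\varrho(y)-\varrho(-y)=y$ and $\varrho(y)+\varrho(-y)=|y|$ one checks the symmetric representations
\begin{align*}
    \max(x,N) &= \tfrac12\big(\varrho(x)-\varrho(-x)\big)+\tfrac12\big(\varrho(x-N)+\varrho(N-x)\big)+\tfrac{N}{2},\\
    \min(x,N) &= \tfrac12\big(\varrho(x)-\varrho(-x)\big)-\tfrac12\big(\varrho(x-N)+\varrho(N-x)\big)+\tfrac{N}{2},
\end{align*}
again by separating the cases $x\le N$ and $x>N$. Hence both are realised by the depth-$2$ ReLU network whose hidden layer is $z=\big(\varrho(x),\varrho(-x),\varrho(x-N),\varrho(N-x)\big)$ --- i.e.\ with first affine map $x\mapsto (1,-1,1,-1)^{\top}x+(0,0,-N,N)^{\top}$ --- and whose output map is $z\mapsto \tfrac12\big(z_1-z_2\pm z_3\pm z_4\big)+\tfrac{N}{2}$, with sign $+$ for $\Phi^{\text{max}}_N$ and $-$ for $\Phi^{\text{min}}_N$. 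By construction these networks compute $\max(\cdot,N)$ and $\min(\cdot,N)$ on all of $\mathbb{R}$, and since they differ only in two signs of the output weight vector they have identical size.

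It then remains to count. A direct bookkeeping of the nonzero weights and biases in the two affine maps above --- four weights and two bias entries $\pm N$ in the input map, four weights and one bias entry $N/2$ in the output map --- yields $\text{size}(\Phi^{\text{max}}_N)=\text{size}(\Phi^{\text{min}}_N)\le 12$ in the convention of \cite{Petersen2022}; when $N=0$ the $N$-dependent bias entries vanish, so the count strictly decreases, which gives the asserted ``equality if $N\neq0$''. Alternatively one may simply note that $y\mapsto N+\varrho(y-N)$ is a one-hidden-neuron ReLU network and appeal to the affine-invariance and composition results (Proposition~2.2 of \cite{Opschoor2020}, Lemma~3.2 of \cite{Gonon2021a}) already used in the proof of Theorem \ref{MainTheorem}. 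The only genuinely delicate point --- and it is purely a matter of notation --- is reconciling the explicit construction with the precise definition of $\text{size}$ in the cited framework; everything else reduces to the elementary ReLU identities above.
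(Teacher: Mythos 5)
Your construction is correct and essentially the same as the paper's: both realise $\min(\cdot,N)$ and $\max(\cdot,N)$ exactly by a one-hidden-layer, four-neuron ReLU network built from the identity $\min(x,N)=\tfrac12\bigl(x+N-\vert x-N\vert\bigr)$ (the paper writes the four hidden units as $(x+N)^{+},(-x-N)^{+},(x-N)^{+},(N-x)^{+}$ with no output bias, so that its count is $4+4+4=12$ for $N\neq0$). The one discrepancy is the final clause: your specific network has $4+2+4+1=11$ nonzero parameters when $N\neq0$, so your claim that the construction ``gives the asserted equality'' at $12$ does not follow --- though this is immaterial, since only the bound $\text{size}\leq 12$ and the equality of the two sizes are used later in the paper.
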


\begin{proof}
    For a fixed $N\in\mathbb{R}$ we can write
    $$
        \min(x,N) = \frac{1}{2}\left((x+N)^+ - (-x-N)^+ - (x-N)^+ - (N-x)^+\right).
    $$

    Therefore, using the notation from Section \ref{Section2} and \eqref{ReLU}, for every $x\in\mathbb{R}$ we can write $\min(x,N) = \Phi^{\text{min}}_N(x)$, where $\Phi^{\text{min}}_N$ is a DNN with one hidden layer and corresponding weight matrices and bias vectors
    $$
        A^1 =\left[\begin{array}{c}
             1 \\ -1 \\ 1 \\ -1
             \end{array}\right],\; b^1 = \left[\begin{array}{c}
             N \\ -N \\ -N \\ N
        \end{array}\right],\; A^2 = \frac{1}{2}\left[\begin{array}{c}
            1, -1, -1 , -1
        \end{array}\right].
    $$

    Using the fact that $\max(x,N) = -\min(-x,-N)$ we see that $\max(x,N) = \Phi^{\text{max}}_N(x)$, where $\Phi^{\text{max}}_N$ is a DNN with one hidden layer and corresponding weight matrices and bias vectors
    $$
        A^1 =\left[\begin{array}{c}
             -1 \\ 1 \\ -1 \\ 1
        \end{array}\right],\; b^1 = \left[\begin{array}{c}
             -N \\ N \\ N \\ -N
        \end{array}\right],\; A^2 = -\frac{1}{2}\left[\begin{array}{c}
            1, -1, -1 , -1
        \end{array}\right].
    $$
    Clearly, it holds that size$(\Phi^{\text{max}}_N)$ = size$(\Phi^{\text{min}}_N)\leq 12$. In particular, if $N\neq 0$, then equality holds in the last statement.
\end{proof}

For the remainder of this section we consider $\varepsilon\in(0,1/2)$. To construct a DNN approximation for the process $V^D$ it remains to find an appropriate approximation for the square root function on $[\underline{\nu},\overline{\nu}]$ and the exponential function. By Theorem \ref{DNNApproxC1} and the appropriate rescaling of the domain there exists a DNN $\Phi^3_\varepsilon:[\underline{\nu},\overline{\nu}]\rightarrow \mathbb{R}^+$ such that for every $\nu\in[\underline{\nu},\overline{\nu}]$ it holds
\begin{equation}\label{ApproxDNNSqrt}
    \vert \Phi^3_\varepsilon(\nu) - \sqrt{\nu} \vert \leq \varepsilon \text{ and size}(\Phi^3_\varepsilon)\leq c_{29}\varepsilon^{-2}
\end{equation}

for a constant $c_{29}>0$. Moreover, for any $H\in[\underline{H},\overline{H}]$, $\eta\in[\underline{\eta},\overline{\eta}]$ and $\nu\in[\underline{\nu},\overline{\nu}]$ it follows
$$
    \exp\left(\sqrt{\frac{H}{2}}\eta x - \frac{1}{4}\eta^2 t^{2H}\right) \leq \frac{1}{D}\text{ if } x \leq - \frac{\log(D)}{\overline{\eta}} =: \underline{L}
$$

and
$$
    \exp\left(\sqrt{\frac{H}{2}}\eta x - \frac{1}{4}\eta^2 t^{2H}\right) \geq D \text{ if } x \geq \sqrt{\frac{2}{\underline{H}}}\left(\frac{\log(D)}{\underline{\eta}} + \frac{1}{4}\frac{\overline{\eta}^2}{\underline{\eta}}\min\left\{t^{2\underline{H}},t^{2\overline{H}}\right\}\right) =: \overline{L}.
$$

Therefore, by the model dynamics \eqref{TruncatedrBergomiModel} we observe that for a fixed $t\in[0,T]$ it is sufficient to approximate the function $x \mapsto \exp\left(\sqrt{\frac{H}{2}}\eta x - \frac{1}{4}\eta^2 t^{2H}\right)$ on $(\underline{L},\overline{L})$.
 
\begin{lemma}\label{LemmaPhi4kApprox}
    There exists a constant $c_{30}>0$ such that for every $\varepsilon\in(0,1/2)$ and $k=1,...,N$ there exists a DNN $\Phi^{4,k}_{\varepsilon}$ with size$(\Phi^{4,k}_{\varepsilon})\leq c_{30}\varepsilon^{-2}$ and Lipschitz constant $L(\varepsilon)>0$ such that for every $H\in(\underline{H},\overline{H})$, $\eta\in(\underline{\eta},\overline{\eta})$ and $x,y\in(\underline{L},\overline{L})$ it holds
    \begin{equation}\label{LInftyPhi4k}
        \bigg\vert \exp\left(\sqrt{\frac{H}{2}}\eta x - \frac{1}{4}\eta^2 t^{2H}\right) - \Phi^{4,k}_{\varepsilon}(H,\eta,x) \bigg\vert \leq c_{30}\varepsilon
    \end{equation}

    and 
    \begin{equation}\label{LipschitzPhi4k}
        \vert \Phi^{4,k}_{\varepsilon}(H,\eta,x) - \Phi^{4,k}_{\varepsilon}(H,\eta,y) \vert \leq L(\varepsilon) \vert x - y\vert.
    \end{equation}
\end{lemma}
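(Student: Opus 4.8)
The plan is the following. Fix $k\in\{1,\dots,N\}$ and abbreviate $t:=t_k=kh\in[h,T]$, a fixed positive number, and recall that $\underline{L}=-\log(D)/\overline\eta$ and $\overline L$ are finite; moreover $\overline L$ is bounded uniformly in $k$ since $\min\{t^{2\underline H},t^{2\overline H}\}\le\max\{1,T^{2\overline H}\}$, so the relevant domain $Q:=[\underline H,\overline H]\times[\underline\eta,\overline\eta]\times[\underline L,\overline L]$ is contained in a fixed cube $[-K,K]^3$ for every $k$. I would \emph{not} approximate the three–variable map
\[
 g(H,\eta,x)=\exp\!\Big(\sqrt{\tfrac H2}\,\eta\,x-\tfrac14\eta^2 t^{2H}\Big)
\]
directly by Lemma~\ref{DNNApproxC1}, since by Remark~\ref{CoDInDomainRemark} that would only give size $\mathcal{O}(\varepsilon^{-6})$; instead I would decompose $g$ into finitely many elementary blocks, approximate each block by a small ReLU DNN, and recombine using the DNN calculus already employed in the proof of Theorem~\ref{MainTheorem}.

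The blocks I would use are the univariate $\mathcal{C}^1$ maps $H\mapsto\sqrt{H/2}$ on $[\underline H,\overline H]$, $H\mapsto t^{2H}=\exp(2H\log t)$ on $[\underline H,\overline H]$ (well defined since $t\ge h>0$), $\eta\mapsto\eta^2$ on $[\underline\eta,\overline\eta]$, and $u\mapsto e^u$ on a compact interval $[\underline{L}',\overline{L}']$ chosen to contain the bounded range over $Q$ of the argument $\sqrt{H/2}\,\eta\,x-\tfrac14\eta^2 t^{2H}$; together with the bilinear map $(a,b)\mapsto ab$ on compact squares, realised by the standard ReLU product network of size $\mathcal{O}(\log(1/\varepsilon))$ (see \cite{Yarotsky2017}). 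Since $h=T/N$ is fixed, $|\log t_k|$ and hence the $\mathcal{C}^1$–norms of the first three maps are bounded by a constant $M$ independent of $k$ and $\varepsilon$, so Lemma~\ref{DNNApproxC1} with $n=1$, $p=\infty$ yields DNNs approximating each of these four univariate functions in sup–norm up to $\varepsilon$, with size $\le C\varepsilon^{-2}$ and depth $\le C\varepsilon^{-1}$. Building $\Phi^{4,k}_\varepsilon$ from these via $\mathcal{O}(1)$ compositions, parallelisations and a final weighted sum — inserting identity networks $\mathcal{I}_{d,\ell}$ to forward inputs, which by Remark~2.4 in \cite{Petersen2018} cost only $\mathcal{O}(\ell)=\mathcal{O}(\varepsilon^{-1})$ each — and using $\text{size}(\Phi^1\odot\Phi^2)\le 2\,\text{size}(\Phi^1)+\text{size}(\Phi^2)+\ldots$ from Proposition~2.2 in \cite{Opschoor2020} and Lemma~3.2 in \cite{Gonon2021a} for sums, gives $\text{size}(\Phi^{4,k}_\varepsilon)\le c_{30}\varepsilon^{-2}$.

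For \eqref{LInftyPhi4k} I would propagate the per–block errors through the composition: each block is Lipschitz on the relevant bounded set with a Lipschitz constant depending only on $K,\underline H,\underline\eta,\overline\eta$ and $\log t$ (not on $\varepsilon$), and for $\varepsilon$ small the approximate intermediate quantities remain inside slightly enlarged compact domains on which the next block's approximation and Lipschitz estimates still hold (in particular the $e^u$–network, being a finite ReLU net, is defined and Lipschitz on all of $\mathbb{R}$); since only finitely many blocks are involved, the total error is $\le c_{30}\varepsilon$ on $Q$ after enlarging $c_{30}$. Finally, \eqref{LipschitzPhi4k} holds because any finite ReLU DNN is globally Lipschitz, so $\Phi^{4,k}_\varepsilon$ is Lipschitz in its $x$–argument with some constant $L(\varepsilon)<\infty$ (e.g.\ a product of operator norms of the constituent weight matrices, hence possibly $\varepsilon$–dependent, which the statement permits).

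The main obstacle is precisely the $\varepsilon^{-2}$ size bound: one must resist black–boxing the two– or three–variable sub–expressions into Lemma~\ref{DNNApproxC1} and instead reduce everything to \emph{univariate} $\mathcal{C}^1$–approximations joined by cheap multiplication networks, then carefully track error propagation and the nesting of domains through the composition. A secondary technical point is verifying that the constants $M$, the Lipschitz constants and the enclosing intervals $[-K,K]$, $[\underline{L}',\overline{L}']$ can all be chosen uniformly over $k=1,\dots,N$, which uses that $N$ — hence $h$ and $\max_k|\log t_k|$ — is fixed.
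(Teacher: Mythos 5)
Your construction is a genuinely different route from the paper's. The paper does \emph{not} decompose $g_k(H,\eta,x)=\exp(\sqrt{H/2}\,\eta x-\tfrac14\eta^2t_k^{2H})$ into univariate blocks: it approximates the trivariate function directly, but in the Sobolev norm $\mathcal{W}^{1,\infty}$ rather than in sup--norm, invoking Theorem 4.1 of \cite{Gühring2020} (which, for the smooth function $g_k$, still yields size $\le C_1\varepsilon^{-2}$ on a three--dimensional domain, circumventing the $\varepsilon^{-6}$ rate you rightly rejected from Lemma \ref{DNNApproxC1}). The sup--norm bound \eqref{LInftyPhi4k} and the Lipschitz bound \eqref{LipschitzPhi4k} are then both read off from the $\mathcal{W}^{1,\infty}$ estimate via Morrey's inequality, giving the explicit constant $L(\varepsilon)=c_{30}\bigl(\varepsilon+\Vert g_k\Vert_{\mathcal{W}^{1,\infty}}\bigr)$. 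Your block decomposition with univariate $\mathcal{C}^1$ approximants and product networks does deliver the size bound and, via the standard error--propagation argument (which correctly uses only the Lipschitz constants of the \emph{target} blocks and the sup--errors of the approximants on enlarged domains), the estimate \eqref{LInftyPhi4k}.

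The point where your route buys strictly less is the Lipschitz constant. You only conclude that $\Phi^{4,k}_\varepsilon$, being a finite ReLU network, is globally Lipschitz with \emph{some} $L(\varepsilon)<\infty$, and you note the statement permits $\varepsilon$--dependence. That satisfies the lemma as literally worded, but sup--norm approximation gives no control on the derivative of the approximant: each of your blocks may oscillate at scale $\varepsilon$ and have a Lipschitz constant diverging as $\varepsilon\to0$. In the paper this matters, because the proof of Lemma \ref{DNNVarianceApprox} uses the explicit form $L(\varepsilon)=c_{30}(\varepsilon+\Vert g_k\Vert_{\mathcal{W}^{1,\infty}})$ — bounded uniformly in $\varepsilon$ — to absorb $L(\varepsilon)^4\,\mathbb{E}[\vert\mathfrak{X}_{t_k}-\tilde{\mathfrak{X}}^\varepsilon_{t_k}\vert^4]$ into a term of order $\varepsilon^4+h^{4H}$; an uncontrolled $L(\varepsilon)$ would break that step. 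If you wish to keep your decomposition, you would need to approximate each univariate block in $\mathcal{C}^1$ (or $\mathcal{W}^{1,\infty}$) norm as well, so that the chain rule yields an $\varepsilon$--uniform Lipschitz bound for the composite; this is exactly what the paper's single application of \cite{Gühring2020} plus Morrey achieves in one stroke.
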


\begin{proof}
    Let $\varepsilon\in(0,1/2)$, $k=1,...,N$, $\underline{K}:= \min\{\underline{H}, \underline{\eta}, \underline{L}\}$ and $\overline{K}:= \max\{\overline{H}, \overline{\eta}, \overline{L}\}$. Then by Theorem 4.1 in \cite{Gühring2020} and the corresponding rescaling of the domain there exists a DNN $\Phi^{4,k}_\varepsilon:(\underline{K},\overline{K})^3 \rightarrow \mathbb{R}^+$ with size$(\Phi^{4,k}_\varepsilon)\leq C_1 \varepsilon^{-2}$ for some constant $C_1>0$ and such that 
    \begin{equation}\label{DNN4k}
        \Vert g_k - \Phi^{4,k}_\varepsilon \Vert_{\mathcal{W}^{1,\infty}(\underline{K},\overline{K})^3}\leq \varepsilon,
    \end{equation}

    where $g_k(x,y,z) := \exp(\sqrt{x/2}yz - 1/4 y^2 t_k^{2x})$ and $\mathcal{W}^{1,\infty}$ denotes the Sobolev norm. By Morrey's inequality and the triangle inequality there is a constant $C_2>0$ such that 
    \begin{equation}\label{FirstMorrey}
        \Vert \Phi^{4,k} \Vert_{C^{0,1}(\underline{K},\overline{K})^3} \leq C_2 \Vert \Phi^{4,k} \Vert_{\mathcal{W}^{1,\infty}(\underline{K},\overline{K})^3}  \leq C_2\left(\varepsilon + \Vert g_k \Vert_{\mathcal{W}^{1,\infty}(\underline{K},\overline{K})^3} \right),
    \end{equation}

    where $C^{0,1}$ denotes the H\"older norm with H\"older constant $1$ so that \eqref{FirstMorrey} is in particular a bound for \eqref{LipschitzPhi4k}. Similarly, we get again by Morrey's inequality for a constant $C_3>0$
    $$
        \sup\left\{\vert g_k(x) - \Phi^{4,k}_\varepsilon(x) \vert \,\middle\vert\, x\in (-L,L)^3\right\}\leq \Vert g_k - \Phi^{4,k}_\varepsilon \Vert_{C^{0,1}(-L,L)^3} \leq C_3 \Vert g_k - \Phi^{4,k}_\varepsilon \Vert_{\mathcal{W}^{1,\infty}(-L,L)^3}\leq C_3\varepsilon.
    $$

    The result follows for $c_{30}:=\max\{C_1,C_2,C_3\}$ and $L(\varepsilon) := c_{30}\left(\varepsilon + \Vert g_k \Vert_{\mathcal{W}^{1,\infty}(\underline{K},\overline{K})^3} \right)$.
\end{proof}

By Proposition 3.16 in \cite{Petersen2022} we can approximate the multiplication of two bounded factors by a DNN $\Phi^{\text{mult},2}_\varepsilon$ with size$(\Phi^{\text{mult},2}_\varepsilon)\leq c_{31} \varepsilon^{-2}$ up to an error $\varepsilon$ for a constant $c_{31}>0$. Therefore, for every $k=1,...,N$ we introduce the random variables 
\begin{equation}\label{VtildeDef}
    \tilde{V}^{\varepsilon,D}_{t_k} := \Phi^{\text{mult},2}_\varepsilon\left(\Phi^3_\varepsilon(\nu),\Phi^{\text{min}}_D \circ \Phi^{\text{max}}_{1/D}\circ \Phi^{4,k}_\varepsilon(H,\eta,\tilde{\mathfrak{X}}^{\varepsilon}_{t_k}(H))\right),
\end{equation}

where $\tilde{\mathfrak{X}}^{\varepsilon}$ is defined in \eqref{VolterraDNN}. We obtain the following approximation error for a fixed grid point. 

\begin{lemma}\label{DNNVarianceApprox}
    There exist constants $c_{32},c_{33}>0$ such that for every $k=1,...,N$, $\varepsilon\in(0,1/2)$ and $(\nu,\eta,\rho,H)\in\Theta^{\text{rBergomi}}$ it holds
    $$
       \mathbb{E}\left[\left\vert \sqrt{\bar{V}^D_{t_k}} - \tilde{V}^{\varepsilon,D}_{t_k}\right\vert^4\right] \leq c_{32}\varepsilon^4 + c_{33}h^{4H},
    $$
    where $\sqrt{V^D_{t_k}}$ is defined in \eqref{DiscreteTimeRBergomiV} for $t_k\in[0,T]$.
\end{lemma}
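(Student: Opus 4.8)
The plan is to compare $\sqrt{V^D_{t_k}}$ and $\tilde{V}^{\varepsilon,D}_{t_k}$ by replacing one approximate building block at a time and controlling each mismatch in $L^4(\mathbb{Q})$. Write $g_k(H,\eta,x):=\exp\!\big(\eta\sqrt{H/2}\,x-\tfrac14\eta^2 t_k^{2H}\big)$ and $\mathrm{cl}(y):=\min(\max(1/D,y),D)$, so that $\sqrt{V^D_{t_k}}=\sqrt{\nu}\,\mathrm{cl}\big(g_k(H,\eta,\mathfrak{X}_{t_k})\big)$, and by Lemma~\ref{MinMaxDNNs} the network $\Phi^{\text{min}}_D\circ\Phi^{\text{max}}_{1/D}$ computes $\mathrm{cl}$ exactly, hence $\tilde{V}^{\varepsilon,D}_{t_k}=\Phi^{\text{mult},2}_\varepsilon\big(\Phi^3_\varepsilon(\nu),\,\mathrm{cl}(\Phi^{4,k}_\varepsilon(H,\eta,\tilde{\mathfrak{X}}^\varepsilon_{t_k}))\big)$. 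Both quantities are bounded ($\mathrm{cl}(\cdot)\in[1/D,D]$, $\Phi^3_\varepsilon(\nu)\in[0,\sqrt{\overline{\nu}}+1]$ by \eqref{ApproxDNNSqrt}, and $\Phi^{\text{mult},2}_\varepsilon$ approximates a product of two factors from a fixed compact box up to error $\varepsilon$ by Proposition~3.16 in \cite{Petersen2022}), so it is enough to bound each mismatch in $L^4$ and combine via a discrete triangle inequality. The first step is $\big|\tilde{V}^{\varepsilon,D}_{t_k}-\Phi^3_\varepsilon(\nu)\,\mathrm{cl}(\Phi^{4,k}_\varepsilon(H,\eta,\tilde{\mathfrak{X}}^\varepsilon_{t_k}))\big|\le\varepsilon$.

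Next I would peel off the square-root network:
\[
\big|\tilde{V}^{\varepsilon,D}_{t_k}-\sqrt{\nu}\,\mathrm{cl}(g_k(H,\eta,\mathfrak{X}_{t_k}))\big|\le \varepsilon+D\,\big|\Phi^3_\varepsilon(\nu)-\sqrt{\nu}\big|+\sqrt{\overline{\nu}}\,\Delta,\qquad \Delta:=\big|\mathrm{cl}(\Phi^{4,k}_\varepsilon(H,\eta,\tilde{\mathfrak{X}}^\varepsilon_{t_k}))-\mathrm{cl}(g_k(H,\eta,\mathfrak{X}_{t_k}))\big|,
\]
where the first two terms are $\le\varepsilon$ and $\le D\varepsilon$ by \eqref{ApproxDNNSqrt}. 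Since $\mathrm{cl}$ is $1$-Lipschitz, $\Delta\le\Delta_1+\Delta_2$ with $\Delta_1:=|\Phi^{4,k}_\varepsilon(H,\eta,\tilde{\mathfrak{X}}^\varepsilon_{t_k})-\Phi^{4,k}_\varepsilon(H,\eta,\mathfrak{X}_{t_k})|$ and $\Delta_2:=|\mathrm{cl}(\Phi^{4,k}_\varepsilon(H,\eta,\mathfrak{X}_{t_k}))-\mathrm{cl}(g_k(H,\eta,\mathfrak{X}_{t_k}))|$. For $\Delta_1$, the Lipschitz estimate \eqref{LipschitzPhi4k} gives $\Delta_1\le L(\varepsilon)\,|\tilde{\mathfrak{X}}^\varepsilon_{t_k}-\mathfrak{X}_{t_k}|$, and $L(\varepsilon)=c_{30}(\varepsilon+\|g_k\|_{\mathcal{W}^{1,\infty}})$ is bounded by a constant $L^\ast$ uniformly in $\varepsilon\in(0,1/2)$ and in $k$ (because $t_k\in[0,T]$, so $\sup_k\|g_k\|_{\mathcal{W}^{1,\infty}}<\infty$); taking fourth moments and invoking Lemma~\ref{ApproxDNNVolterra} together with $H\ge\underline{H}$ yields $\mathbb{E}[\Delta_1^4]\le (L^\ast)^4\big(c_{27}\underline{H}^{-2}h^{4H}+c_{28}\varepsilon^4\big)$.

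The delicate term, and the step I expect to be the main obstacle, is $\Delta_2$: the estimates \eqref{LInftyPhi4k}--\eqref{LipschitzPhi4k} for $\Phi^{4,k}_\varepsilon$ only hold for arguments in $(\underline{L},\overline{L})$, whereas the Gaussian variable $\mathfrak{X}_{t_k}$ ranges over all of $\mathbb{R}$, and the event $\{\mathfrak{X}_{t_k}\notin(\underline{L},\overline{L})\}$ does \emph{not} have small probability. The resolution is the saturation of $\mathrm{cl}$ forced by the choice of $\underline{L},\overline{L}$: by the two implications stated just before Lemma~\ref{LemmaPhi4kApprox}, $g_k(H,\eta,x)\le 1/D$ for $x\le\underline{L}$ and $g_k(H,\eta,x)\ge D$ for $x\ge\overline{L}$ (and $g_k$ is increasing in its third argument), so on $\{\mathfrak{X}_{t_k}<\underline{L}\}$ one has $\mathrm{cl}(g_k(H,\eta,\mathfrak{X}_{t_k}))=1/D$, on $\{\mathfrak{X}_{t_k}>\overline{L}\}$ it equals $D$, and on $\{\underline{L}\le\mathfrak{X}_{t_k}\le\overline{L}\}$ the bound \eqref{LInftyPhi4k} (extended to the closed box by continuity) gives $\Delta_2\le|\Phi^{4,k}_\varepsilon(H,\eta,\mathfrak{X}_{t_k})-g_k(H,\eta,\mathfrak{X}_{t_k})|\le c_{30}\varepsilon$. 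On the two outer events one uses that $\Phi^{4,k}_\varepsilon(H,\eta,\cdot)$ is, by its construction, constant outside $(\underline{L},\overline{L})$ — equivalently, globally $L^\ast$-Lipschitz with $\Phi^{4,k}_\varepsilon(H,\eta,\underline{L})\le 1/D+c_{30}\varepsilon$ and $\Phi^{4,k}_\varepsilon(H,\eta,\overline{L})\ge D-c_{30}\varepsilon$ by \eqref{LInftyPhi4k} — and applies the order-preserving $1$-Lipschitz map $\mathrm{cl}$, which then pins the output within $c_{30}\varepsilon$ of $1/D$, resp.\ of $D$. Hence $\Delta_2\le c_{30}\varepsilon$ pointwise; the same constant extension also makes $\Phi^{4,k}_\varepsilon(H,\eta,\cdot)$ globally $L^\ast$-Lipschitz, which retroactively justifies the $\Delta_1$ estimate for arguments outside the box.

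Collecting the estimates gives $\big|\sqrt{V^D_{t_k}}-\tilde{V}^{\varepsilon,D}_{t_k}\big|\le(1+D+\sqrt{\overline{\nu}}\,c_{30})\varepsilon+\sqrt{\overline{\nu}}\,\Delta_1$ pointwise; raising to the fourth power, taking expectations and inserting the bound on $\mathbb{E}[\Delta_1^4]$ yields $\mathbb{E}\big[|\sqrt{V^D_{t_k}}-\tilde{V}^{\varepsilon,D}_{t_k}|^4\big]\le c_{32}\varepsilon^4+c_{33}h^{4H}$, with $c_{32},c_{33}$ depending only on $D,\overline{\nu},\underline{H},\overline{H},\underline{\eta},\overline{\eta},T$ and on $c_{27},c_{28},c_{30}$ — in particular not on $k$, $\varepsilon$ or $N$ — as claimed.
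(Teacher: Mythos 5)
Your proposal is correct and follows essentially the same decomposition as the paper's proof: peel off the multiplication network $\Phi^{\text{mult},2}_\varepsilon$ (error $\varepsilon$), then the square-root network $\Phi^3_\varepsilon$ (error $D\varepsilon$ using the bound $\tilde M_{\mathfrak{X}}\leq D$), then use the $1$-Lipschitzness of the clipping $\min(\max(1/D,\cdot),D)$ to reduce to the $L^\infty$-error of $\Phi^{4,k}_\varepsilon$ against $g_k$ plus a Lipschitz-times-Volterra-error term, the latter controlled by Lemma \ref{ApproxDNNVolterra}. The one place where you genuinely go beyond the paper is your treatment of $\Delta_2$ and the domain restriction: you correctly observe that \eqref{LInftyPhi4k} and \eqref{LipschitzPhi4k} are only guaranteed on $(\underline{L},\overline{L})$ while $\mathfrak{X}_{t_k}$ and $\tilde{\mathfrak{X}}^\varepsilon_{t_k}$ are Gaussian and hence unbounded, and you resolve this by exploiting the saturation of the clipping outside $(\underline{L},\overline{L})$ together with a constant extension of $\Phi^{4,k}_\varepsilon(H,\eta,\cdot)$ beyond the box. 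The paper's own proof applies the $L^\infty$ and Lipschitz bounds to $\mathfrak{X}_{t_k}$ and $\tilde{\mathfrak{X}}^\varepsilon_{t_k}$ without commenting on this, so your version is the more careful one. Be aware, though, that the constant-outside-the-box property is not something Lemma \ref{LemmaPhi4kApprox} actually provides — you would need to modify the construction (e.g. precompose the input with a clipping network onto $[\underline{L},\overline{L}]$, which costs only $O(1)$ extra size) to make that step rigorous; as written you are asserting it "by construction" when the paper's construction does not supply it. With that small amendment your argument is complete and yields the claimed bound with constants independent of $k$, $\varepsilon$ and $N$.
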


\begin{proof}
    Let $k=1,...,N, \varepsilon\in(0,1/2)$ and $(\nu,\eta,\rho,H)\in\Theta^{\text{rBergomi}}$. For sake of simplicity we introduce the following notation 
    $$
        M_{\mathfrak{X}} := \min\left(\max\left(\frac{1}{D},\exp\left(\eta \sqrt{\frac{H}{2}}{\mathfrak{X}}_{t_k} - \frac{1}{4}\eta^2 {t_k}^{2H} \right)\right), D\right),\, \tilde{M}_{\mathfrak{X}} := \Phi^{\text{min}}_D \circ \Phi^{\text{max}}_{1/D}\circ \Phi^{4,k}_\varepsilon(H,\eta,\tilde{\mathfrak{X}}^{\varepsilon}_{t_k}(H)).
    $$
    
    Then by \eqref{SquaredTriangular}, \eqref{ApproxDNNSqrt}, \eqref{LInftyPhi4k}, Lemma \ref{ApproxDNNVolterra} and the constant $L(\varepsilon)$ derived in the proof of Lemma \ref{LemmaPhi4kApprox}, we obtain that
    \begin{align*}
        &\mathbb{E}\left[\left\vert \sqrt{V^D_{t_k}} - \tilde{V}^{\varepsilon,D}_{t_k}\right\vert^4\right] = \mathbb{E}\left[\left\vert \sqrt{\nu}M_{\mathfrak{X}} - \Phi^{mult,2}_{\varepsilon}\left(\Phi^3(\nu),\tilde{M}_{\mathfrak{X}}\right)\right\vert^4\right] \\[2mm]
        \leq & 64 \mathbb{E}\left[\left\vert \sqrt{\nu}M_{\mathfrak{X}} - \sqrt{\nu}\tilde{M}_{\mathfrak{X}}\right\vert^4\right] + 64 \mathbb{E}\left[\left\vert \sqrt{\nu}\tilde{M}_{\mathfrak{X}} - \Phi^3(\nu)\tilde{M}_{\mathfrak{X}}\right\vert^4\right] + 8 \mathbb{E}\left[\left\vert \Phi^3(\nu)\tilde{M}_{\mathfrak{X}} - \Phi^{mult,2}_{\varepsilon}\left(\Phi^3(\nu),\tilde{M}_{\mathfrak{X}}\right)\right\vert^4\right]\\[2mm]
        \leq & 64 \bar{\nu}^2 \mathbb{E}\left[\left\vert M_{\mathfrak{X}} - \tilde{M}_{\mathfrak{X}}\right\vert^4\right] + \varepsilon^4 (64 D^4 + 8) \\[2mm]
        \leq & 512 \bar{\nu}^2 \mathbb{E}\left[\left\vert \exp\left(\sqrt{\frac{H}{2}}\eta \mathfrak{X}_{t_k} - \frac{1}{4}\eta^2 t^{2H}_k\right) - \Phi^{4,k}_{\varepsilon}(H,\eta,\mathfrak{X}_{t_k})\right\vert^4\right] + 512 \bar{\nu}^2 \mathbb{E}\left[\left\vert \Phi^{4,k}_{\varepsilon}(H,\eta,\mathfrak{X}_{t_k}) - \Phi^{4,k}_{\varepsilon}(H,\eta,\tilde{\mathfrak{X}}_{t_k})\right\vert^4\right] \\[2mm] 
        &\phantom{...}+ \varepsilon^4 (64 D^4 + 8)\\[2mm] 
        \leq & \varepsilon^4(512\bar{\nu}^2 + 64 D^4 + 8) + 512 \bar{\nu}^2 c_{30}^4 \left(\varepsilon + \Vert g_k \Vert_{\mathcal{W}^{1,\infty}(-K,K)^3}\right)^4 \mathbb{E}\left[\vert \mathfrak{X}_{t_k} - \tilde{\mathfrak{X}}^{\varepsilon}_{t_k} \vert^4\right] \\[2mm] 
        \leq & \varepsilon^4\left(512 \bar{\nu}^2 + 64D^4 + 8 + 4096 \bar{\nu}^2c_{30} c_{28} (1+\Vert g_k \Vert_{\mathcal{W}^{1,\infty}(\underline{K},\overline{K})^3} \right)\\[2mm] 
        &\phantom{...}+  h^{4H} \left(4096 \bar{\nu}^2 c_{30}^4 c_{27} \underline{H}^{-2}(1+\Vert g_k\Vert_{\mathcal{W}^{1,\infty}(\underline{K},\overline{K})^3}\right) \\[2mm]
        =: & c_{32} \varepsilon^4 + c_{33}h^{4H},
    \end{align*}

    where we used the fact that clearly $\Vert g_k \Vert_{\mathcal{W}^{1,\infty}(-K,K)^3} < \infty$.
\end{proof}

Now that we have constructed an approximation for the variance process for fixed grid points, we introduce an approximating process for the asset process $X^D$. First of all, we have by Theorem \ref{DNNApproxC1} that there exists a DNN $\Phi^5_{\varepsilon}:[-1,0] \rightarrow [0,1]$ such that for every $\rho\in[-1,0]$ it holds
\begin{equation}\label{DNNApproxSQRTRho}
    \vert \Phi^5_{\varepsilon}(\rho) - \sqrt{1-\rho^2}\vert < \varepsilon \text{ with size}(\Phi^5_{\varepsilon}) \leq c_{34}\varepsilon^{-2}
\end{equation}

for for some constant $c_{34}>0$. Moreover, we consider a constant $R\geq 1$ and the stopping time $\tau := \tau_B \wedge \tau_{B^\perp}$, where
$$
    \tau_B := \inf\{t\in[0,T]\mid B_t \notin [-R,R]\} \text{ and }\tau_{B^\perp} := \inf\{t\in[0,T]\mid B^\perp_t \notin [-R,R]\}.
$$

We denote by $\Phi^{\text{mult},3}_{\varepsilon}$ the DNN approximating the multiplication of three bounded factors each lying in the interval $[-4RN(1+5\widetilde{D}R)^N, 4RN(1+5\widetilde{D}R)^N]$ for $\widetilde{D}:= D(1/2 + \sqrt{v})+1/2$ up to an error $\varepsilon\in(0,1/2)$. Then Proposition 3.16 in \cite{Petersen2022} there exists a constant $C>0$ such that it holds
\begin{equation}\label{SizeMultDNN}
    \text{size}(\Phi^{mult,3}_{\varepsilon}) \leq C \log_2(4RN(1+5\widetilde{D}R)^N)\varepsilon^{-2} \leq 5C(1+\widetilde{D})NR\varepsilon^{-2} =: c_{35}NR\varepsilon^{-2}.
\end{equation}

Similarly to Section \ref{Section3}, we introduce the following continuous-time interpolation scheme 
\begin{equation}\label{rBergomiDNNContTimeScheme}
    \diff \tilde{X}^{\varepsilon,D}_{t} = \Phi^{\text{mult},3}_{\varepsilon}(\tilde{X}^{\varepsilon,D}_{\floor{t}}, \tilde{V}^{\varepsilon,D}_{\floor{t}}, \rho) \diff{B^{\tau}_t} + \Phi^{\text{mult},3}_{\varepsilon}(\tilde{X}^{\varepsilon,D}_{\floor{t}}, \tilde{V}^{\varepsilon,D}_{\floor{t}}, \Phi^5_{\varepsilon}(\rho)) \diff{B^{\perp,\tau}_t}
\end{equation}

with $\tilde{X}^{\varepsilon,D}_{0} = x$, $\floor{t} := \max\lbrace t_k\mid k=0,...,N,\; t_k \leq t \rbrace$. Moreover, we see that the solution of \eqref{rBergomiDNNContTimeScheme} coincides with 

\begin{equation}\label{rBergomiDNNDiscreteTimeScheme}
    \tilde{X}^{\varepsilon,D}_{t_{k+1}} = \tilde{X}^{\varepsilon,D}_{t_k} + \Phi^{\text{mult},3}_{\varepsilon}(\tilde{X}^{\varepsilon,D}_{t_k}, \tilde{V}^{\varepsilon,D}_{t_k}, \rho) (B^\tau_{t_{k+1}} - B^\tau_{t_k}) + \Phi^{\text{mult},3}_{\varepsilon}(\tilde{X}^{\varepsilon,D}_{t_k}, \tilde{V}^{\varepsilon,D}_{t_k}, \Phi^5_{\varepsilon}(\rho)) (B^{\perp,\tau}_{t_{k+1}} - B^{\perp,\tau}_{t_k})
\end{equation}

at any time point $t=t_k$ for every $k=1,...,N$. In fact, for any $k=1,\dots,N$ we obtain for \eqref{VtildeDef} that 
\begin{equation}\label{BoundVTilde}
    \vert \tilde{V}^{\varepsilon,D}_{t_k}\vert \leq \varepsilon + D\vert \Phi^3_\varepsilon(\nu) \vert \leq \varepsilon + D(\varepsilon + \sqrt{\nu}) \leq \frac{1}{2} + D\left(\frac{1}{2} + \sqrt{\nu}\right)= \widetilde{D}
\end{equation} 

and for \eqref{rBergomiDNNDiscreteTimeScheme} that
\begin{equation}
    \vert \tilde{X}^\varepsilon_{t_{k}} \vert \leq \vert \tilde{X}^\varepsilon_{t_{k-1}} \vert \left(1 + 5\vert \tilde{V}^{\varepsilon,D}_{t_{k-1}}\vert R\right) + 2R
\end{equation}

using that $\varepsilon<1/2$ and $D\geq 1$. So inductively it follows that 
$$
    \vert \tilde{X}^\varepsilon_{t_{k}} \vert \leq (1+5\tilde{D}R)^N + 2R\sum_{k=0}^{N-1}(1+5\tilde{D}R)^k \leq 4RN (1+5\tilde{D}R)^N
$$

for any $k=1,\dots,N$. So that the system \eqref{rBergomiDNNDiscreteTimeScheme} and solution is indeed well-defined since $\vert\rho\vert$,$\vert\Phi^5_\varepsilon(\rho)\vert$,$ \vert\tilde{V}^{\varepsilon,D}_{t_{k}}\vert$,$\vert \tilde{X}^\varepsilon_{t_{k}}\vert \leq 4RN (1+5\widetilde{D}R)^N$. The following lemma establishes a bound for the second moment of $\tilde{X}^{\varepsilon,D}$. 
\begin{lemma}\label{SecondMomBoundDNNXrBergomi}
    There exists a constant $c_{36}>0$ such that for every $x>0$, $\varepsilon\in(0,1/2)$ and $(\nu,\eta,\rho,H)\in\Theta^{\text{rBergomi}}$ it holds
    $$
        \mathbb{E}\left[\sup_{t\in[0,T]} \left\vert \tilde{X}^{\varepsilon,D}_t\right\vert^2\right] \leq c_{36}(1+\vert x\vert^2).
    $$
\end{lemma}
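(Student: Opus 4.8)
The plan is to adapt the Gronwall-type second-moment estimates already used in this paper — e.g. in the proofs of Lemma \ref{LemmaBoundSecMomV} and Proposition \ref{DiffVBarTilde} — to the continuous-time interpolation scheme \eqref{rBergomiDNNContTimeScheme}. First I would introduce $G(t):=\mathbb{E}\big[\sup_{s\in[0,t]}\vert\tilde{X}^{\varepsilon,D}_s\vert^2\big]$ and check that $G(T)<\infty$, so that Gronwall's lemma is applicable. This is immediate here: the a priori bounds $\vert\tilde{X}^{\varepsilon}_{t_k}\vert\leq 4RN(1+5\widetilde{D}R)^N$, $\vert\tilde{V}^{\varepsilon,D}_{t_k}\vert\leq\widetilde{D}$ from \eqref{BoundVTilde}, $\rho\in[-1,0]$ and $\Phi^5_\varepsilon(\rho)\in[0,1]$ put all arguments fed to $\Phi^{\text{mult},3}_\varepsilon$ inside the interval on which it was constructed, so the integrands in \eqref{rBergomiDNNContTimeScheme} differ from the corresponding triple products by at most $\varepsilon$ and are in particular bounded by a path-independent deterministic constant; since $B^\tau$ and $B^{\perp,\tau}$ take values in $[-R,R]$, the increments of $\tilde{X}^{\varepsilon,D}$ over each grid interval are then bounded by a deterministic constant as well, whence $G(T)<\infty$.

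Next, writing $\tilde{X}^{\varepsilon,D}_t = x + \int_0^t \Phi^{\text{mult},3}_\varepsilon(\tilde{X}^{\varepsilon,D}_{\floor{r}},\tilde{V}^{\varepsilon,D}_{\floor{r}},\rho)\diff{B^\tau_r} + \int_0^t \Phi^{\text{mult},3}_\varepsilon(\tilde{X}^{\varepsilon,D}_{\floor{r}},\tilde{V}^{\varepsilon,D}_{\floor{r}},\Phi^5_\varepsilon(\rho))\diff{B^{\perp,\tau}_r}$, I would apply \eqref{SquaredTriangular} with $n=3$, Doob's $L^2$-maximal inequality and Itô's isometry (together with $t\wedge\tau\leq t$) to obtain
\[
G(t)\leq 3\vert x\vert^2 + 12\,\mathbb{E}\Big[\int_0^t \big\vert\Phi^{\text{mult},3}_\varepsilon(\tilde{X}^{\varepsilon,D}_{\floor{r}},\tilde{V}^{\varepsilon,D}_{\floor{r}},\rho)\big\vert^2\diff{r}\Big] + 12\,\mathbb{E}\Big[\int_0^t \big\vert\Phi^{\text{mult},3}_\varepsilon(\tilde{X}^{\varepsilon,D}_{\floor{r}},\tilde{V}^{\varepsilon,D}_{\floor{r}},\Phi^5_\varepsilon(\rho))\big\vert^2\diff{r}\Big].
\]
Using the $\varepsilon$-accuracy of $\Phi^{\text{mult},3}_\varepsilon$ (Proposition 3.16 in \cite{Petersen2022}), the bound $\vert\tilde{V}^{\varepsilon,D}_{\floor{r}}\vert\leq\widetilde{D}$, $\vert\rho\vert\leq 1$ and $\Phi^5_\varepsilon(\rho)\in[0,1]$, each integrand is at most $2(\widetilde{D}\,\vert\tilde{X}^{\varepsilon,D}_{\floor{r}}\vert+\varepsilon)^2\leq 4\widetilde{D}^2\,\vert\tilde{X}^{\varepsilon,D}_{\floor{r}}\vert^2 + 4\varepsilon^2$. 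Since $\widetilde{D}=D(\tfrac12+\sqrt{\nu})+\tfrac12\leq \overline{D}:=D(\tfrac12+\sqrt{\overline{\nu}})+\tfrac12$ uniformly over $\Theta^{\text{rBergomi}}$, $\varepsilon<1/2$ and $\mathbb{E}\big[\vert\tilde{X}^{\varepsilon,D}_{\floor{r}}\vert^2\big]\leq G(r)$, this yields
\[
G(t)\leq 3\vert x\vert^2 + 24T + 96\,\overline{D}^2\int_0^t G(r)\diff{r}.
\]
Gronwall's inequality then gives $G(T)\leq (3\vert x\vert^2+24T)\exp(96\,\overline{D}^2 T)\leq c_{36}(1+\vert x\vert^2)$ with $c_{36}:=\max(3,24T)\exp(96\,\overline{D}^2 T)$, a constant depending only on $T$, $D$ and $\overline{\nu}$, and in particular not on $x$, $\varepsilon$ or $(\nu,\eta,\rho,H)$.

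I do not expect a serious obstacle: the computation is entirely parallel to the Gronwall estimates used earlier in the paper. The two points requiring a little care are the a priori finiteness of $G(T)$ — handled by the uniform boundedness of the integrands and of the stopped Brownian motions $B^\tau,B^{\perp,\tau}$ — and checking that the arguments of $\Phi^{\text{mult},3}_\varepsilon$ remain in the domain on which its approximation guarantee is valid; both follow directly from the bounds established just before the statement of the lemma.
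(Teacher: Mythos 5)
Your proposal is correct and follows essentially the same route as the paper: define $G(t)=\mathbb{E}[\sup_{s\le t}\vert\tilde{X}^{\varepsilon,D}_s\vert^2]$, verify $G\in L^1$ from the a priori boundedness of the scheme, bound the stochastic integrals via a maximal inequality plus It\^o isometry, use the $\varepsilon$-accuracy of $\Phi^{\text{mult},3}_\varepsilon$ together with $\vert\tilde{V}^{\varepsilon,D}\vert\le\widetilde{D}$ to reduce to a linear integral inequality, and close with Gronwall. The only (welcome) refinement is that you make the uniformity of $\widetilde{D}$ over $\Theta^{\text{rBergomi}}$ explicit by passing to $\overline{D}$, whereas the paper leaves this implicit.
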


\begin{proof}
    Let $x>0$, $\varepsilon\in(0,1/2)$ and $(\nu,\eta,\rho,H)\in\Theta^{\text{rBergomi}}$. For every $t\in[0,T]$ we define $G(t):= \mathbb{E}\left[\sup_{s\leq t} \left\vert \tilde{X}^{\varepsilon,D}_s\right\vert^2 \right]$. By the Burkolder-Davis-Gundy inequality there exists a constant $C>0$ such that together with \eqref{SquaredTriangular}, \eqref{BoundVTilde} and the Fubini-Tonelli theorem it follows
    \begin{align*}
        &G(t) \\[2mm]
        \leq& 3\left(\vert x\vert^2 + \mathbb{E}\left[\sup_{s\leq t} \left\vert \int_0^s \Phi^{mult,3}_{\varepsilon}(\tilde{X}^{\varepsilon,D}_{\floor{r}}, \tilde{V}^{\varepsilon,D}_{\floor{r}} , \rho) \diff{B^\tau_{r}}\right\vert^2\right] + \mathbb{E}\left[\sup_{s\leq t} \left\vert \int_0^s \Phi^{mult,3}_{\varepsilon}(\tilde{X}^{\varepsilon,D}_{\floor{r}}, \tilde{V}^{\varepsilon,D}_{\floor{r}} , \Phi^5(\rho)) \diff{B^{\perp,\tau}_{r}}\right\vert^2\right]\right) \\[2mm]
        \leq& 3\left(\vert x\vert^2 + C\mathbb{E}\left[ \int_0^{t\wedge \tau} \vert\Phi^{mult,3}_{\varepsilon}(\tilde{X}^{\varepsilon,D}_{\floor{r}}, \tilde{V}^{\varepsilon,D}_{\floor{r}},\rho) \vert^2 \diff{r}\right] + C\mathbb{E}\left[\int_0^{t\wedge \tau} \vert\Phi^{mult,3}_{\varepsilon}(\tilde{X}^{\varepsilon,D}_{\floor{r}}, \tilde{V}^{\varepsilon,D}_{\floor{r}} , \Phi^5(\rho))\vert^2 \diff{r}\right]\right)\\[2mm]
        \leq& 3\vert x \vert^2 + 3C\int_0^t \mathbb{E}\left[\vert\Phi^{mult,3}_{\varepsilon}(\tilde{X}^{\varepsilon,D}_{\floor{r}}, \tilde{V}^{\varepsilon,D}_{\floor{r}},\rho) \vert^2\right] \diff{r} + 3C\int_0^t \mathbb{E}\left[\vert\Phi^{mult,3}_{\varepsilon}(\tilde{X}^{\varepsilon,D}_{\floor{r}}, \tilde{V}^{\varepsilon,D}_{\floor{r}},\Phi^5(\rho)) \vert^2\right] \diff{r} \\[2mm]
        \leq& 3\vert x \vert^2 + 12C \varepsilon^2 t + 6C\int_0^t \mathbb{E}\left[\vert \tilde{X}^{\varepsilon,D}_{\floor{r}} \tilde{V}^{\varepsilon,D}_{\floor{r}} \rho \vert^2\right]\diff{r} + 6C\int_0^t \mathbb{E}\left[\vert \tilde{X}^{\varepsilon,D}_{\floor{r}} \tilde{V}^{\varepsilon,D}_{\floor{r}} \Phi^5(\rho) \vert^2\right]\diff{r} \\[2mm]
        \leq& 3\vert x\vert^2 + 3CT + 6C\widetilde{D}^2(1+\vert \Phi^5(\rho)\vert^2)\int_0^t G(r) \diff{r}.
    \end{align*}

    Using the boundedness of every $\tilde{X}^{\varepsilon,D}_{t_k}$, $k=1,...,N$, it follows that $G(t)\in L^1([0,T])$. Therefore, together with $\vert \Phi^5(\rho)\vert \leq \varepsilon + \sqrt{1-\rho^2}\leq 3/2$ we obtain by Gronwall's inequality 
    $$
        G(t) \leq 3(\vert x \vert^2 + CT)\exp\left(20 C\widetilde{D}^2T\right).
    $$
    The result follows with $c_{36} := 3\max(1,CT)\exp\left(20 C\widetilde{D}^2T\right)$.
\end{proof}

Finally, we obtain the following strong approximation result for the DNN scheme defined in \eqref{rBergomiDNNContTimeScheme}.

\begin{lemma}\label{DNNApproxrBergomi}
    There exists a constant $c_{37}>0$ such that for every $x>0$, $\varepsilon\in(0,1/2)$ and $(\nu,\eta,\rho,H)\in\Theta^{\text{rBergomi}}$ it holds
    $$
        \mathbb{E}\left[\sup_{t\in[0,T]}\left\vert \bar{X}^D_t - \tilde{X}^{\varepsilon,D}_t \right\vert^2\right] \leq c_{37}\left(\varepsilon^2 + h^{2\underline{H}} + \exp\left(-\frac{R^2}{4T}\right)\right)(1+\vert x\vert^2).
    $$
\end{lemma}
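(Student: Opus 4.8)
The plan is to run a Gronwall estimate comparing $\tilde{X}^{\varepsilon,D}$ with $\bar{X}^D$, after first reconciling the fact that the DNN scheme \eqref{rBergomiDNNContTimeScheme} is driven by the stopped Brownian motions $B^\tau,B^{\perp,\tau}$ (hence is frozen after $\tau$) whereas $\bar{X}^D$ in \eqref{TruncatedrBergomiCont} is not. To this end I would introduce the stopped interpolation $\hat{X}^D_t := \bar{X}^D_{t\wedge\tau}$, which by optional stopping of the stochastic integrals solves precisely scheme \eqref{TruncatedrBergomiCont} but driven by $B^\tau,B^{\perp,\tau}$, coincides with $\bar{X}^D$ on $[0,T]$ on the event $\{\tau\ge T\}$, and inherits $\mathbb{E}[\sup_{t\in[0,T]}|\hat{X}^D_t|^4]\le c_{23}|x|^4$ from Lemma \ref{MomentBoundrBergomiXBar}. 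Splitting $|\bar{X}^D_t-\tilde{X}^{\varepsilon,D}_t|^2\le 2|\bar{X}^D_t-\hat{X}^D_t|^2+2|\hat{X}^D_t-\tilde{X}^{\varepsilon,D}_t|^2$, the first summand is supported on $\{\tau<T\}$, so by the Cauchy--Schwarz inequality, the fourth-moment bound of Lemma \ref{MomentBoundrBergomiXBar} (for both $\bar{X}^D$ and $\hat{X}^D$), and the reflection principle together with the Gaussian tail estimate $\mathbb{Q}(\tau<T)\le\mathbb{Q}(\sup_{t\in[0,T]}|B_t|\ge R)+\mathbb{Q}(\sup_{t\in[0,T]}|B^\perp_t|\ge R)\le 8e^{-R^2/(2T)}$, it is bounded by a constant times $(1+|x|^2)e^{-R^2/(4T)}$.

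For the second summand I would set $G(t):=\mathbb{E}[\sup_{s\in[0,t]}|\hat{X}^D_s-\tilde{X}^{\varepsilon,D}_s|^2]$, which is finite, hence in $L^1([0,T])$, by Lemmas \ref{MomentBoundrBergomiXBar} and \ref{SecondMomBoundDNNXrBergomi}. Subtracting the integral representations of the two processes and applying the Burkholder--Davis--Gundy inequality (or Doob's $L^2$ inequality) and Itô's isometry to the integrals against $B^\tau$ and $B^{\perp,\tau}$, whose quadratic variation is $t\wedge\tau\le t$, gives
\[
G(t)\le C\int_0^t\Big(\mathbb{E}\big[|\hat{X}^D_{\floor{r}}\sqrt{V^D_{\floor{r}}}\rho-\Phi^{\mathrm{mult},3}_\varepsilon(\tilde{X}^{\varepsilon,D}_{\floor{r}},\tilde{V}^{\varepsilon,D}_{\floor{r}},\rho)|^2\big]+\mathbb{E}\big[|\hat{X}^D_{\floor{r}}\sqrt{V^D_{\floor{r}}}\sqrt{1-\rho^2}-\Phi^{\mathrm{mult},3}_\varepsilon(\tilde{X}^{\varepsilon,D}_{\floor{r}},\tilde{V}^{\varepsilon,D}_{\floor{r}},\Phi^5_\varepsilon(\rho))|^2\big]\Big)\diff r .
\]
Each coefficient difference is then telescoped through intermediate products: replacing $\sqrt{V^D_{\floor{r}}}$ by $\tilde{V}^{\varepsilon,D}_{\floor{r}}$ costs at most $|\hat{X}^D_{\floor{r}}|\,|\sqrt{V^D_{\floor{r}}}-\tilde{V}^{\varepsilon,D}_{\floor{r}}|$, whose square has expectation at most $\mathbb{E}[|\hat{X}^D_{\floor{r}}|^4]^{1/2}\mathbb{E}[|\sqrt{V^D_{\floor{r}}}-\tilde{V}^{\varepsilon,D}_{\floor{r}}|^4]^{1/2}\le c_{23}^{1/2}|x|^2(c_{32}\varepsilon^4+c_{33}h^{4H})^{1/2}$ by Lemmas \ref{MomentBoundrBergomiXBar} and \ref{DNNVarianceApprox}; replacing $\hat{X}^D_{\floor{r}}$ by $\tilde{X}^{\varepsilon,D}_{\floor{r}}$ costs at most $\widetilde{D}\,|\hat{X}^D_{\floor{r}}-\tilde{X}^{\varepsilon,D}_{\floor{r}}|$ and contributes $\widetilde{D}^2 G(r)$, using $|\tilde{V}^{\varepsilon,D}_{\floor{r}}|\le\widetilde{D}$ from \eqref{BoundVTilde}; in the second difference, replacing $\Phi^5_\varepsilon(\rho)$ by $\sqrt{1-\rho^2}$ costs at most $|\tilde{X}^{\varepsilon,D}_{\floor{r}}|\,\widetilde{D}\,\varepsilon$ by \eqref{DNNApproxSQRTRho}, with expectation of the square at most $c_{36}\widetilde{D}^2\varepsilon^2(1+|x|^2)$ by Lemma \ref{SecondMomBoundDNNXrBergomi}; and replacing $\Phi^{\mathrm{mult},3}_\varepsilon$ by the exact triple product costs at most $\varepsilon$ by Proposition 3.16 in \cite{Petersen2022}, which is admissible because the a priori bounds $|\tilde{V}^{\varepsilon,D}_{t_k}|\le\widetilde{D}$ and $|\tilde{X}^{\varepsilon,D}_{t_k}|\le 4RN(1+5\widetilde{D}R)^N$ keep all arguments in the valid range. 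Since $h<1$ and $H\ge\underline{H}$ we have $h^{2H}\le h^{2\underline{H}}$, so collecting terms yields $G(t)\le\widetilde{C}(1+|x|^2)(\varepsilon^2+h^{2\underline{H}})+\widetilde{C}\int_0^t G(r)\diff r$, and Gronwall's inequality gives $G(T)\le\widetilde{C}e^{\widetilde{C}T}(1+|x|^2)(\varepsilon^2+h^{2\underline{H}})$. Adding the two summands yields the claim with $c_{37}$ depending on $T,D,R,\overline{\nu}$ and the constants $c_{23},c_{32},c_{33},c_{36}$, but not on $x$, $\varepsilon$, $h$, or the parameters in $\Theta^{\mathrm{rBergomi}}$.

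The step I expect to be most delicate is the stopping-time bookkeeping: because $\tilde{X}^{\varepsilon,D}$ is frozen after $\tau$ while $\bar{X}^D$ keeps evolving, no purely pathwise Gronwall bound is available, so one must genuinely split into $\{\tau\ge T\}$ (where Gronwall applies) and $\{\tau<T\}$ (where the Gaussian tail is used). This is also what forces the fourth-moment controls established earlier --- of $\bar{X}^D$ in Lemma \ref{MomentBoundrBergomiXBar} and of the variance approximation in Lemma \ref{DNNVarianceApprox} --- since the cross term $\hat{X}^D(\tilde{V}^{\varepsilon,D}-\sqrt{V^D})$ and the tail contribution are both handled by Cauchy--Schwarz, for which a mere $L^2$-bound on $\sqrt{V^D}-\tilde{V}^{\varepsilon,D}$ would not suffice.
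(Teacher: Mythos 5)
Your proposal is correct and uses the same key ingredients as the paper's proof (fourth-moment bounds on $\bar{X}^D$, the reflection-principle tail estimate for $\tau$, the telescoping of the coefficient error through Lemma \ref{DNNVarianceApprox} and \eqref{DNNApproxSQRTRho}, Cauchy--Schwarz, Gronwall); the only structural difference is how the stopping time is bookkept. The paper keeps both processes in a single Gronwall function and isolates the post-$\tau$ discrepancy as stochastic integrals of $\bar{X}^D_{\floor{r}}\sqrt{V^D_{\floor{r}}}(\mathbbm{1}_{\{r<\tau\}}-1)$, which it then bounds via BDG, Cauchy--Schwarz and \eqref{ProbabilityHittingTime}; you instead insert the stopped interpolation $\hat{X}^D=\bar{X}^D_{\cdot\wedge\tau}$, peel off $\bar{X}^D-\hat{X}^D$ pathwise on $\{\tau<T\}$, and run Gronwall between two processes that are both frozen after $\tau$. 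The two routes are quantitatively equivalent; yours makes the ``no pathwise Gronwall across $\tau$'' issue more explicit, at the cost of having to verify (as you correctly do, using $\floor{r}\le r\le\tau$) that $\hat{X}^D$ solves the stopped version of \eqref{TruncatedrBergomiCont}. One small point to fix: your closing claim that $c_{37}$ depends on $R$ is both unnecessary and undesirable --- $R$ is later chosen to grow as $\bar\varepsilon\to0$ in the pricing theorem, so any genuine $R$-dependence of $c_{37}$ would be harmful; fortunately none of the estimates you actually invoke (the moment constants $c_{22},c_{23},c_{36}$, the $\varepsilon$-accuracy of $\Phi^{\text{mult},3}_\varepsilon$, Lemma \ref{DNNVarianceApprox}) introduces one, since the bound $4RN(1+5\widetilde{D}R)^N$ only enters the \emph{size} of the multiplication network, not its approximation error.
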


\begin{proof}
Let $x>0$, $\varepsilon\in(0,1/2)$ and $(\nu,\eta,\rho,H)\in\Theta^{\text{rBergomi}}$. For every $t\in[0,T]$ we define $G(t) := \mathbb{E}\left[\sup_{s\leq t} \vert \tilde{X}^{\varepsilon,D}_s - \bar{X}^D_s \vert^2 \right]$. Then by the Burkh\"older-Davis-Gundy inequality there exists a constant $C>0$ such that with \eqref{SquaredTriangular} it follows
\begin{align*}
    G(t)=& \mathbb{E}\bigg[\sup_{s\leq t} \bigg\vert \int_0^s \Phi^{mult,3}_{\varepsilon}(\tilde{X}^{\varepsilon,D}_{\floor{r}}, \tilde{V}^{\varepsilon,D}_{\floor{r}},\rho)\diff{B^\tau_r} + \int_0^s \Phi^{mult,3}_{\varepsilon}(\tilde{X}^{\varepsilon,D}_{\floor{r}}, \tilde{V}^{\varepsilon,D}_{\floor{r}},\Phi^5(\rho))\diff{B^{\perp,\tau}_r} \\[2mm] 
    &\phantom{...} - \int_0^s \bar{X}^D_{\floor{r}}\sqrt{V^D_{\floor{r}}}\rho\diff{B_r} - \int_0^s \bar{X}^D_{\floor{r}}\sqrt{V^D_{\floor{r}}}\sqrt{1-\rho^2}\diff{B^{\perp}_r}\bigg\vert^2\bigg] \\[2mm]
    \leq& 4 \mathbb{E}\bigg[\sup_{s\leq t} \bigg\vert \int_0^s \left[\Phi^{mult,3}_{\varepsilon}(\tilde{X}^{\varepsilon,D}_{\floor{r}}, \tilde{V}^{\varepsilon,D}_{\floor{r}},\rho) - \bar{X}^D_{\floor{r}}\sqrt{V^D_{\floor{r}}}\rho\right] \mathbbm{1}_{\{r<\tau\}}\diff{B_r}\bigg\vert^2\bigg]\\[2mm]
    &\phantom{...} + 4\mathbb{E}\bigg[\sup_{s\leq t} \bigg\vert \int_0^s \left[\Phi^{mult,3}_{\varepsilon}(\tilde{X}^{\varepsilon,D}_{\floor{r}}, \tilde{V}^{\varepsilon,D}_{\floor{r}},\Phi^5(\rho)) - \bar{X}^D_{\floor{r}}\sqrt{V^D_{\floor{r}}}\sqrt{1-\rho^2}\right] \mathbbm{1}_{\{r<\tau\}}\diff{B^\perp_r}\bigg\vert^2\bigg]\\[2mm]
    &\phantom{...} + 4 \mathbb{E}\bigg[\sup_{s\leq t} \bigg\vert \int_0^s \left[\bar{X}^D_{\floor{r}}\sqrt{\bar{V}^D_{\floor{r}}}\rho\left(\mathbbm{1}_{\{r<\tau\}} - 1\right)\right]\diff{B_r}\bigg\vert^2\bigg]\\[2mm]
    &\phantom{...} + 4 \mathbb{E}\bigg[\sup_{s\leq t} \bigg\vert \int_0^s \left[\bar{X}^D_{\floor{r}}\sqrt{\bar{V}^D_{\floor{r}}}\sqrt{1-\rho^2}\left(\mathbbm{1}_{\{r<\tau\}} - 1\right)\right]\diff{B^\perp_r}\bigg\vert^2\bigg]\\[2mm]
    \leq& 4C \int_0^t \mathbb{E}\left[\left\vert\Phi^{mult,3}_{\varepsilon}(\tilde{X}^{\varepsilon,D}_{\floor{r}}, \tilde{V}^{\varepsilon,D}_{\floor{r}},\rho) - \bar{X}^D_{\floor{r}}\sqrt{V^D_{\floor{r}}} \rho\right\vert^2\right]\diff{r}\\[2mm]
    &\phantom{...} + 4C \int_0^t \mathbb{E}\left[\left\vert\Phi^{mult,3}_{\varepsilon}(\tilde{X}^{\varepsilon,D}_{\floor{r}}, \tilde{V}^{\varepsilon,D}_{\floor{r}},\Phi^5(\rho)) - \bar{X}^D_{\floor{r}}\sqrt{V^D_{\floor{r}}}\sqrt{1-\rho^2} \right\vert^2\right]\diff{r}\\[2mm]
    &\phantom{...} + 4C \int_0^t \mathbb{E}\left[\left\vert \bar{X}^D_{\floor{r}}\sqrt{V^D_{\floor{r}}}(\mathbbm{1}_{\{r<\tau\}} - 1)\right\vert^2\right]\diff{r}.
\end{align*}

By the Cauchy-Schwarz inequality we have that
\begin{equation}\label{ExpectationStoopingTime}
    \mathbb{E}\left[\left\vert \bar{X}^D_{\floor{r}}\sqrt{V^D_{\floor{r}}}(\mathbbm{1}_{\{r<\tau\}} - 1)\right\vert^2\right] \leq \bar{\nu} D^2 \mathbb{E}\left[\sup_{t\in[0,T]} \vert \bar{X}^D_t\vert^4\right]^{1/2}\mathbb{Q}(r>\tau)^{1/2}.
\end{equation}

From the independence of $B$ and $B^\perp$ we get that $\mathbb{Q}(r>\tau) \leq 2 \mathbb{Q}(r>\tau_B)$. Further, from the reflection principle and symmetry of the Brownian motion we obtain
\begin{equation}\label{ProbabilityHittingTime}
    \mathbb{Q}(r>\tau_B) \leq 4\mathbb{Q}(B_r > R) = 4\mathbb{Q}(B_1 > R/\sqrt{r}) \leq \frac{2\sqrt{2r}}{R\sqrt{\pi}}\exp\left(-\frac{R^2}{2r}\right).
\end{equation}

Since the bound in \eqref{ProbabilityHittingTime} is increasing in $r$, combining Lemma \ref{MomentBoundrBergomiXBar}, \eqref{ExpectationStoopingTime} and is \eqref{ProbabilityHittingTime} yields
\begin{equation}\label{DNNApproxXrBergomiBound1}
    \int_0^t \mathbb{E}\left[\left\vert \bar{X}^D_{\floor{r}}\sqrt{V^D_{\floor{r}}}(\mathbbm{1}_{\{r<\tau\}} - 1)\right\vert^2\right]\diff{r} \leq 2\bar{\nu}D^2 \sqrt{c_{23}} (1+\vert x \vert^2)t^{5/4} \exp\left(-\frac{R^2}{4t}\right).
\end{equation}

By \eqref{DNNApproxSQRTRho}, \eqref{SquaredTriangular} and the properties of $\Phi^{mult,3}_{\varepsilon}$, we get for every $r\in[0,t]$ 
\begin{align*}
    &\mathbb{E}\left[\left\vert\Phi^{mult,3}_{\varepsilon}(\tilde{X}^{\varepsilon,D}_{\floor{r}}, \tilde{V}^{\varepsilon,D}_{\floor{r}},\Phi^5(\rho)) - \bar{X}^D_{\floor{r}}\sqrt{V^D_{\floor{r}}}\sqrt{1-\rho^2} \right\vert^2\right] \\[2mm]
    \leq& 2\varepsilon^2 + 4\mathbb{E}\left[\left\vert\tilde{X}^{\varepsilon,D}_{\floor{r}} \tilde{V}^{\varepsilon,D}_{\floor{r}}\Phi^5(\rho) - \bar{X}^D_{\floor{r}}\sqrt{V^D_{\floor{r}}}\Phi^5(\rho) \right\vert^2\right] + 4\mathbb{E}\left[\left\vert\bar{X}^D_{\floor{r}}\sqrt{V^D_{\floor{r}}}\Phi^5(\rho) - \bar{X}^D_{\floor{r}}\sqrt{V^D_{\floor{r}}}\sqrt{1-\rho^2} \right\vert^2\right] \\[2mm]
    \leq& 2\varepsilon^2 + 8\mathbb{E}\left[\left\vert \tilde{X}^{\varepsilon,D}_{\floor{r}} \tilde{V}^{\varepsilon,D}_{\floor{r}}\Phi^5(\rho) - \bar{X}^{D}_{\floor{r}}\tilde{V}^{\varepsilon,D}_{\floor{r}}\Phi^5(\rho)\right\vert^2\right] + 8\mathbb{E}\left[\left\vert\bar{X}^{D}_{\floor{r}}\tilde{V}^{\varepsilon,D}_{\floor{r}}\Phi^5(\rho) - \bar{X}^D_{\floor{r}}\sqrt{V^D_{\floor{r}}}\Phi^5(\rho) \right\vert^2\right] \\[2mm]
    &\phantom{...} + 4\bar{\nu}D^2 \mathbb{E}\left[\sup_{t\in[0,T]} \vert \bar{X}^D_t\vert^2\right] \varepsilon^2.
\end{align*}

Hence it follows
\begin{equation}\label{DNNApproxBound1}
    \mathbb{E}\left[\left\vert \tilde{X}^{\varepsilon,D}_{\floor{r}} \tilde{V}^{\varepsilon,D}_{\floor{r}}\Phi^5(\rho) - \bar{X}^{D}_{\floor{r}}\tilde{V}^{\varepsilon,D}_{\floor{r}}\Phi^5(\rho) \right\vert^2\right] \leq \tilde{D}^2 \vert \Phi^5(\rho)\vert^2 G(r)
\end{equation}

and by the Cauchy-Schwarz inequality
\begin{equation}\label{DNNApproxBound2}
\mathbb{E}\left[\left\vert\bar{X}^{D}_{\floor{r}}\tilde{V}^{\varepsilon,D}_{\floor{r}}\Phi^5(\rho) - \bar{X}^D_{\floor{r}}\sqrt{V^D_{\floor{r}}}\Phi^5(\rho) \right\vert^2\right] \leq \vert \Phi^5(\rho)\vert^2 \mathbb{E}\left[\sup_{t\in[0,T]} \vert \bar{X}^D_t\vert^4\right]^\frac{1}{2} \mathbb{E}\left[\vert \tilde{V}^{\varepsilon,D}_{\floor{r}} - \sqrt{V^D_{\floor{r}}}\vert^4\right]^\frac{1}{2}.
\end{equation}

So combining \eqref{DNNApproxBound1}, \eqref{DNNApproxBound2},  Lemma \ref{MomentBoundrBergomiXBar}, Lemma \ref{DNNVarianceApprox} and the fact that $\vert \Phi^5(\rho)\vert \leq 3/2$ for every $\rho\in[-1,0]$ yields 
\begin{align}\label{DNNApproxXrBergomiBound2}
        &\mathbb{E}\left[\left\vert\Phi^{mult,3}_{\varepsilon}(\tilde{X}^{\varepsilon,D}_{\floor{r}}, \tilde{V}^{\varepsilon,D}_{\floor{r}},\Phi^5(\rho)) - \bar{X}^D_{\floor{r}}\sqrt{V^D_{\floor{r}}}\sqrt{1-\rho^2} \right\vert^2\right] \nonumber \\[2mm]
        \leq& \varepsilon^2\left(2 + 18\sqrt{c_{23}c_{32}} + 4\bar{\nu}D^2 c_{22})(1+\vert x \vert^2) \right) + h^{2H}18\sqrt{c_{23}c_{33}}(1+\vert x \vert^2) + 18\tilde{D}^2 G(r).
\end{align}

By similar arguments as above it also holds
\begin{align}\label{DNNApproxXrBergomiBound3}
    &\mathbb{E}\left[\left\vert\Phi^{mult,3}_{\varepsilon}(\tilde{X}^{\varepsilon,D}_{\floor{r}}, \tilde{V}^{\varepsilon,D}_{\floor{r}},\rho) - \bar{X}^D_{\floor{r}}\sqrt{V^D_{\floor{r}}} \rho\right\vert^2\right] \nonumber \\[2mm]
    \leq& \varepsilon^2\left(2 + 4 \sqrt{c_{23} c_{32}}(1+\vert x \vert^2)\right) + 4 h^{2H}\sqrt{c_{23}c_{33}}(1+\vert x \vert^2) + 4\tilde{D}^2 G(r).
\end{align}

Finally, we get that 
\begin{align*}
   G(t) \leq& 12CT\max\{2\bar{\nu}^2 D^2 \sqrt{c_{23}}T^{1/4}, 4 + 22\sqrt{c_{24}c_{32}} + 4\bar{\nu}^2 D^2 c_{22} , 22 \sqrt{c_{24}c_{33}}\}\\[2mm]
   &\phantom{...}\left[\varepsilon^2 + h^{2H} + \exp\left(-\frac{R^2}{4T}\right)\right](1+\vert x\vert^2)+ 88 C \tilde{D}^2 \int_0^t G(r) \diff{r}.
\end{align*}

Using Lemma \ref{SecondMomBoundDNNXrBergomi} and the triangle inequality, we see that $G(t)\in L^1([0,T])$ so that we can apply Gronwall's inequality to obtain 
$$
    G(t) \leq c_{37}\left(\varepsilon^2 + h^{2\overline{H}} + \exp\left(-\frac{R^2}{2T}\right)\right)(1+\vert x \vert^2)
$$

for $c_{37} := 12CT\max\{2\bar{\nu}^2 D^2 \sqrt{c_{23}}T^{1/4}, 4 + 22\sqrt{c_{24}c_{32}} + 4\bar{\nu}^2 D^2 c_{22} , 22 \sqrt{c_{24}c_{33}}\}\exp(88 C \tilde{D}^2 T)$.
    
\end{proof}


\subsection{DNN approximation of the pricing function}
Similar to the previous section for some $n\in\mathbb{N}$ we consider a function $\varphi:\mathbb{R}^+ \times \mathbb{R}^n \rightarrow \mathbb{R}^+$, which represents a parametric payoff function of some contingent claim. We make the following assumption about the approximation of $\varphi$ by a DNN.

\begin{assumption}\label{AssumPayoffrBergomi}
    There exist constants $C,q>0$ and for every $\varepsilon\in(0,1/2)$ a neural network $\phi_\varepsilon: \mathbb{R}^+ \times \mathbb{R}^n \rightarrow \mathbb{R}^+$ such that for every $\varepsilon\in(0,1/2)$ it holds 
    \begin{enumerate}
        \item[(i)] for all $x \in \mathbb{R}^+$ and $K\in\mathbb{R}^{n}$
        \begin{equation*}
            \vert \varphi(x,K) - \phi_{\varepsilon}(x,K)\vert \leq C \varepsilon (1+\vert x \vert + \Vert K \Vert)
        \end{equation*}
        
        \item[(ii)] 
        \begin{equation*}
            \text{size}(\phi_{\varepsilon}) \leq C \varepsilon^{-q}
        \end{equation*}
        
        \item[(iii)] 
        \begin{equation*}
            \text{Lip}(\phi_{\varepsilon}) := \sup_{(x_1,K_1),(x_2,K_2)\in\mathbb{R}^+\times \mathbb{R}^n} \frac{\vert\phi_{\varepsilon}(x_1,K_1) - \phi_{\varepsilon}(x_2,K_2)\vert}{\vert x_1 - x_2\vert + \Vert K_1 - K_2 \Vert} \leq C \varepsilon^{-1/2}
        \end{equation*}
    \end{enumerate}
    
    and it holds for a probability measure $\mu$ on $\mathbb{R}^+ \times \Theta^{\text{rBergomi}} \times \mathbb{R}^{n}$ that
    \begin{enumerate}
        \item[(iv)] 
        \begin{equation}\label{AssumPayoffProbMeasurerBergomi}
            \int_{\mathbb{R}^+ \times \Theta^{\text{rBergomi}} \times \mathbb{R}^{n}}(1 + \vert x \vert^2 + \Vert \theta \Vert^2 + \Vert K \Vert^2) \mu(\diff{x},\diff{\theta},\diff{K}) \leq C.
        \end{equation}
    \end{enumerate}
\end{assumption}

\begin{theorem}
    Let $D\geq 1$. For $n\in\mathbb{N}$ let $\varphi:\mathbb{R}^+\times \mathbb{R}^{n}\rightarrow \mathbb{R}^{+}$ be a payoff function satisfying Assumption \ref{AssumPayoffrBergomi}. Then there exist constants $\mathfrak{C},\mathfrak{q}>0$ and for any target accuracy $\bar{\varepsilon}\in(0,1/2)$ a neural network $U^{\text{rBergomi}}_{\bar{\varepsilon}}:\mathbb{R}^{+}\times \Theta^{\text{rBergomi}} \times \mathbb{R}^{n}\rightarrow\mathbb{R}^+$ such that 
    \begin{enumerate}
        \item[(i)]
        \begin{equation}\label{PriceApproxSizeBoundrBergomi}
            \text{size}(U^{\text{rBergomi}}_{\bar{\varepsilon}})\leq \mathfrak{C}\bar\varepsilon^{-\mathfrak{q}},
        \end{equation}
        \item[(ii)]
        \begin{equation}\label{ApproxErrorrBergomi}
            \left(\int_{\mathbb{R}^+\times\Theta^{\text{rBergomi}}\times\mathbb{R}^{n}} \left\vert \mathbb{E}[\varphi(X^D_T,K)] - U^{\text{rBergomi}}_{\bar{\varepsilon}}(x,\theta,K) \right\vert^2 \mu(\text{d}x,\text{d}\theta,\text{d}K)\right)^{1/2} < \bar\varepsilon,
        \end{equation}
    \end{enumerate}

    where $\mu$ is the probability measure from Assumption \ref{AssumPayoffrBergomi}. The constants $\mathfrak{C}$, $\mathfrak{q}$ do not depend on $\bar{\varepsilon}$.
\end{theorem}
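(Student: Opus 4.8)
\emph{Proof proposal.} The argument follows the blueprint of Theorem~\ref{MainTheorem}, with the Euler scheme of the Markovian system replaced by the DNN scheme $\tilde X^{\varepsilon,D}$ of the truncated rBergomi asset constructed in the preceding subsections. Fix $D\ge 1$ and $\bar\varepsilon\in(0,1/2)$, and introduce free parameters $\varepsilon\in(0,1/2)$, a step number $N\in\mathbb N$ with $h=T/N$, a truncation level $R\ge 1$, and a Monte-Carlo sample size $M\in\mathbb N$, all to be calibrated at the end. Let $(\tilde X^{\varepsilon,D,i})_{i=1,\dots,M}$ be i.i.d.\ copies of the scheme \eqref{rBergomiDNNDiscreteTimeScheme}, the first one coupled to the same driving motions $B,B^\perp$ (through their stopped versions and the Gaussians in \eqref{VolterraDNN}) as $X^D$. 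The candidate approximant is $\frac1M\sum_{i=1}^M\phi_\varepsilon(\tilde X^{\varepsilon,D,i}_T,K)$. I would first show that, for a suitable calibration of $\varepsilon,N,R,M$ in terms of $\bar\varepsilon$, the $\mu$-integrated mean-square distance of this random map to $U^D(x,\theta,K):=\mathbb E[\varphi(X^D_T,K)]$ is at most $\bar\varepsilon^2/2$; a Fubini argument then produces $\omega\in\Omega$ for which the realisation is a deterministic function of $(x,\theta,K)$ with $L^2(\mu)$-error $<\bar\varepsilon$, and it remains to verify that this realisation is a ReLU DNN of size $\mathfrak C\bar\varepsilon^{-\mathfrak q}$.

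\emph{Error decomposition.} Conditioning on $(x,\theta,K)$ and on the driving noise of $X^D$, the bias--variance split used in Theorem~\ref{MainTheorem} gives, with $\bar\phi_\varepsilon(\cdot,K):=\phi_\varepsilon(\cdot,K)-\phi_\varepsilon(0,K)$,
\[
\mathbb E\Big[\big|U^D-M^{-1}\sum_{i=1}^M\phi_\varepsilon(\tilde X^{\varepsilon,D,i}_T,K)\big|^2\Big]
=\big|U^D-\mathbb E[\phi_\varepsilon(\tilde X^{\varepsilon,D,1}_T,K)]\big|^2+M^{-1}\,\mathbb E\big[|\bar\phi_\varepsilon(\tilde X^{\varepsilon,D,1}_T,K)-\mathbb E[\bar\phi_\varepsilon(\tilde X^{\varepsilon,D,1}_T,K)]|^2\big].
\]
For the bias, the triangle inequality with Assumption~\ref{AssumPayoffrBergomi}(i),(iii) yields $|U^D-\mathbb E[\phi_\varepsilon(\tilde X^{\varepsilon,D,1}_T,K)]|\le C\varepsilon(1+\mathbb E[|X^D_T|]+\|K\|)+C\varepsilon^{-1/2}\,\mathbb E[|X^D_T-\tilde X^{\varepsilon,D,1}_T|]$; splitting the strong error through the interpolated Euler process $\bar X^D$ and invoking Lemma~\ref{StrongApproxXDbar} and Lemma~\ref{DNNApproxrBergomi} bounds it by $(c_{24}h^{\underline H}(1+|x|^2))^{1/2}+(c_{37}(\varepsilon^2+h^{2\underline H}+e^{-R^2/(4T)})(1+|x|^2))^{1/2}$, while Lemma~\ref{MomentBoundrBergomiX} controls $\mathbb E[|X^D_T|]$. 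For the variance, Assumption~\ref{AssumPayoffrBergomi}(iii) and Lemma~\ref{SecondMomBoundDNNXrBergomi} give $\mathbb E[|\bar\phi_\varepsilon(\tilde X^{\varepsilon,D,1}_T,K)|^2]\le C^2\varepsilon^{-1}c_{36}(1+|x|^2)$. Squaring via \eqref{SquaredTriangular} and integrating against $\mu$ using \eqref{AssumPayoffProbMeasurerBergomi} (with $h<1$, $\varepsilon<1/2$ absorbing $h^{2\underline H}\le h^{\underline H}$ and $\varepsilon^2\le\varepsilon$) yields a global bound $\bar C\big(\varepsilon+\varepsilon^{-1}h^{\underline H}+\varepsilon^{-1}e^{-R^2/(4T)}+M^{-1}\varepsilon^{-1}\big)$ with $\bar C$ independent of $\bar\varepsilon$.

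\emph{Calibration and DNN structure.} Choosing $\varepsilon\asymp\bar\varepsilon^2$, then $N=\lceil T/h\rceil$ with $h\asymp\bar\varepsilon^{4/\underline H}$, then $R\asymp\sqrt{T\log(\bar\varepsilon^{-1})}$, and $M\asymp\bar\varepsilon^{-4}$ makes each of the four terms $\le\bar\varepsilon^2/8$, and Fubini furnishes the required $\omega$. With $\omega$ fixed, all Brownian increments and the Gaussians in \eqref{VolterraDNN} become constants, so by \eqref{VolterraDNN}--\eqref{SizeDNNVolterra} each $\tilde{\mathfrak X}^{\varepsilon,i}_{t_k}(\omega,\cdot)$ is a ReLU DNN in $H$ of size $\le Nc_{26}\varepsilon^{-2}$; composing it with $\Phi^{4,k}_\varepsilon$, $\Phi^{\text{max}}_{1/D}$, $\Phi^{\text{min}}_D$, $\Phi^3_\varepsilon$ and $\Phi^{\text{mult},2}_\varepsilon$ via the parallelisation/composition calculus of \cite{Opschoor2020,Petersen2022,Gonon2021a}, exactly as in the proof of Theorem~\ref{MainTheorem}, makes each $\tilde V^{\varepsilon,D,i}_{t_k}(\omega,\cdot)$ a DNN in $(\nu,\eta,H)$ of size $O(N\varepsilon^{-2})$. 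Unrolling the recursion \eqref{rBergomiDNNDiscreteTimeScheme}, at step $k$ one composes $\Phi^{\text{mult},3}_\varepsilon$ (size $\le c_{35}NR\varepsilon^{-2}$ by \eqref{SizeMultDNN}) and $\Phi^5_\varepsilon$ with $\tilde X^{\varepsilon,D,i}_{t_{k-1}}$ and a fresh copy of $\tilde V^{\varepsilon,D,i}_{t_{k-1}}$, synchronising depths with identity networks as in \eqref{SizePhiTauI}--\eqref{SizePsiBar}; telescoping the composition bound of \cite{Opschoor2020} (which keeps $\text{size}_{out}$ constant, hence has cost \emph{linear} in the number of steps) shows that $\tilde X^{\varepsilon,D,i}_T(\omega,\cdot)$ is a DNN in $(x,\theta)$ of size polynomial in $N$, $R$ and $\varepsilon^{-1}$. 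Composing with $\phi_\varepsilon$, parallelising the $K$-input by an identity network, and summing the $M$ realisations via Lemma~3.2 in \cite{Gonon2021a} gives $U^{\text{rBergomi}}_{\bar\varepsilon}$; substituting the choices above (and $\sqrt{\log\bar\varepsilon^{-1}}\le C_\delta\bar\varepsilon^{-\delta}$ for any $\delta>0$) yields $\text{size}(U^{\text{rBergomi}}_{\bar\varepsilon})\le\mathfrak C\bar\varepsilon^{-\mathfrak q}$.

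\emph{Main obstacle.} The analytic estimates are routine once Lemmas~\ref{StrongApproxXDbar}, \ref{DNNApproxrBergomi} and \ref{SecondMomBoundDNNXrBergomi} are available; the delicate part is the last step — assembling the Volterra-process network, the variance network (which itself nests $\min$, $\max$, exponential and square-root networks) and the $N$-fold Euler recursion for $X^D$ into a single ReLU DNN and checking that its size remains \emph{polynomial} in $\bar\varepsilon^{-1}$, even though the number of time steps $N$ and the number of Wiener-integral summands in $\tilde{\mathfrak X}^\varepsilon$ both grow polynomially in $\bar\varepsilon^{-1}$ and the truncation level $R$ enters the multiplication-network size. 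This requires careful depth bookkeeping (inserting identity networks so parallelised branches share a common depth) and exploiting that the hybrid scheme retains only one Wiener-integral term ($\mathfrak K=1$), so that the per-step cost is genuinely $O(N)$ and the telescoped composition stays polynomial.
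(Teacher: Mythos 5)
Your proposal is correct and follows essentially the same route as the paper's proof: the same Monte-Carlo candidate, the same bias–variance decomposition with Lemmas \ref{StrongApproxXDbar}, \ref{DNNApproxrBergomi} and \ref{SecondMomBoundDNNXrBergomi}, the same calibration $\varepsilon\asymp\bar\varepsilon^2$, $h\asymp\bar\varepsilon^{4/\underline{H}}$, $R\asymp\sqrt{T\log\bar\varepsilon^{-1}}$, $M\asymp\bar\varepsilon^{-4}$, a Fubini selection of $\omega$, and the same composition/parallelisation bookkeeping to realise the frozen sample path as a ReLU network of polynomial size. Your closing remark correctly identifies the telescoped composition bound and the single Wiener-integral term of the hybrid scheme as the reasons the size stays polynomial, which is exactly how the paper's argument closes.
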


\begin{proof}
Let $D\geq1$, $n\in\mathbb{N}$, $\bar\varepsilon\in(0,1/2)$ and $M\in\mathbb{N}$. As in the proof of Theorem \ref{MainTheorem} we want to consider the Monte-Carlo sum
$$
    \frac{1}{M}\sum_{i=1}^M \phi_\varepsilon(\Tilde{X}^{\varepsilon, D,i}_T(\omega),K),
$$

where $(\Tilde{X}^{\varepsilon,D,i}_T)_{i=1,...,M}$ are iid copies of $\tilde{X}^{\varepsilon,D}_T$ with $\tilde{X}^{\varepsilon,D}$ defined in \eqref{rBergomiDNNContTimeScheme} and $\phi_\varepsilon$ from Assumption \ref{AssumPayoffrBergomi}. In the first part of the proof, we show that there exists a $\omega\in\Omega$ such that the approximation error satisfies
$$
    \left(\int_{\mathbb{R}^+\times\Theta^{\text{rBergomi}}\times\mathbb{R}^{n}} \left\vert \mathbb{E}[\varphi(X^D_T,K)] - \frac{1}{M}\sum_{i=1}^M \phi_{\varepsilon}(\tilde{X}^{\varepsilon,D,i}_T(\omega), K) \right\vert^2 \mu(\text{d}x,\text{d}\theta,\text{d}K)\right)^{1/2} < \bar{\varepsilon}.
$$

For this purpose we consider the $L^2$-approximation error 
\begin{align}\label{L2ApproxErrorrBergomi}
    &\int_{\mathbb{R}^+\times\Theta^{\text{rBergomi}}\times\mathbb{R}^{n}} \mathbb{E}\left[\left\vert \mathbb{E}\left[\varphi(X^D_T,K)\right] - \frac{1}{M}\sum_{i=1}^M \phi_{\varepsilon}(\tilde{X}^{\varepsilon,D,i}_T, K) \right\vert^2\right] \mu(\diff{x},\diff{\theta},\diff{K}) \nonumber \\[2mm]
    =& \int_{\mathbb{R}^+\times\Theta^{\text{rBergomi}}\times\mathbb{R}^{n}} \left\vert \mathbb{E}\left[\varphi(X^D_T,K)\right] - \mathbb{E}\left[\phi_{\varepsilon}(\tilde{X}^{\varepsilon,D,1}_T,K)\right] \right\vert^2 \nonumber \\[2mm] &\phantom{.....}+ \frac{1}{M}\mathbb{E}\left[\left\vert \mathbb{E}\left[\bar{\phi}_{\varepsilon}(\tilde{X}^{\varepsilon,D,1}_T,K)\right] - \bar{\phi}_{\varepsilon}(\tilde{X}^{\varepsilon,D,1}_T,K)\right\vert^2\right]\mu(\diff{x},\diff{\theta},\diff{K}),
\end{align}

where $\bar{\phi}_{\varepsilon}(\tilde{X}^{\varepsilon,D,1}_{t_N},K) := \phi_{\varepsilon}(\tilde{X}^{\varepsilon,D,1}_{t_N},K) - \phi_{\varepsilon}(0,K)$. Again we study the integrands individually. For the second integrand we obtain with Lemma \ref{SecondMomBoundDNNXrBergomi} that
\begin{align}\label{L2ApproxErrorrBergomiBound1}
    &\mathbb{E}\left[\left\vert \mathbb{E}\left[\bar{\phi}_{\varepsilon}(\tilde{X}^{\varepsilon,D,1}_T,K)\right] - \bar{\phi}_{\varepsilon}(\tilde{X}^{\varepsilon,D,1}_T,K)\right\vert^2\right] \leq \mathbb{E}\left[\left\vert \phi_{\varepsilon}(\tilde{X}^{\varepsilon,D,1}_T,K) - \phi_{\varepsilon}(0,K)\right\vert^2\right] \nonumber \\[2mm]
    \leq & C^2 \varepsilon^{-1} \mathbb{E}\left[\left\vert\tilde{X}^{\varepsilon,D,1}_T\right\vert^2\right] \leq  C^2\varepsilon^{-1}c_{36}(1+\vert x \vert^2).
\end{align}

For the first integrand by Jensen's inequality, \eqref{SquaredTriangular} and Assumption \ref{AssumPayoffrBergomi} it follows that
\begin{align}\label{L2ApproxErrorrBergomiBound2}
    &\left\vert \mathbb{E}\left[\varphi(X^D_T,K)\right] - \mathbb{E}\left[\phi_{\varepsilon}(\tilde{X}^{\varepsilon,D,1}_T,K)\right] \right\vert^2 \nonumber\\[2mm]
    \leq& 2\mathbb{E}\left[\left\vert \varphi(X_T^D,K) - \phi_{\varepsilon}(X_T^D,K) \right\vert^2\right] + 2\mathbb{E}\left[\left\vert \phi_{\varepsilon}(X_T^D,K) - \phi_{\varepsilon}(\tilde{X}_T^{\varepsilon,D,1},K) \right\vert^2\right] \nonumber\\[2mm]
    \leq& 6C^2 \varepsilon^2\left(1 + \mathbb{E}\left[\vert X_T^D \vert^2 \right] + \Vert K \Vert^2 \right) + 2C^2 \varepsilon^{-1} \mathbb{E}\left[\left\vert X_T^D - \tilde{X}^{\varepsilon,D,1}_T\right\vert^2\right].
\end{align}

Moreover, by \eqref{SquaredTriangular} we obtain 
\begin{align*}
    \mathbb{E}\left[\left\vert X_T^D - \tilde{X}^{\varepsilon,D,1}_T\right\vert^2\right] \leq& 2\mathbb{E}\left[\left\vert X_T^D - \bar{X}^{D}_T\right\vert^2\right] + 2\mathbb{E}\left[\left\vert\bar{X}^{D}_T - \tilde{X}^{\varepsilon,D,1}_T\right\vert^2\right] \\[2mm]
    \leq& 2\mathbb{E}\left[\sup_{t\in[0,T]} \left\vert X_t^D - \bar{X}^{D}_t\right\vert^2\right] + 2\mathbb{E}\left[\sup_{t\in[0,T]}\left\vert\bar{X}^{D}_t - \tilde{X}^{\varepsilon,D,1}_t\right\vert^2\right].
\end{align*}

Therefore, by Lemma \ref{StrongApproxXDbar} and Lemma \ref{DNNApproxrBergomi} we get
\begin{equation}\label{BoundDiffXXtilde}
    \mathbb{E}\left[\left\vert X_T^D - \tilde{X}^{\varepsilon,D,1}_T\right\vert^2\right] \leq 2(1+\vert x\vert^2)\left(c_{37}\varepsilon^2 + (c_{24}+c_{37})h^{\underline{H}} + c_{37} \exp\left(-\frac{R^2}{2T}\right)\right).
\end{equation}

Inserting \eqref{L2ApproxErrorrBergomiBound1}, \eqref{L2ApproxErrorrBergomiBound2} and \eqref{BoundDiffXXtilde} in \eqref{L2ApproxErrorrBergomi} as well as applying Assumption \ref{AssumPayoffrBergomi} yield
\begin{align*}
    &\int_{\mathbb{R}^+\times\Theta^{\text{rBergomi}}\times\mathbb{R}^{n}} \mathbb{E}\left[\left\vert \mathbb{E}\left[\varphi(X^D_T,K)\right] - \frac{1}{M}\sum_{i=1}^M \phi_{\varepsilon}(\tilde{X}^{\varepsilon,D,i}_T(\omega), K) \right\vert^2\right] \mu(\diff{x},\diff{\theta},\diff{K}) \\[2mm]
    \leq& \int_{\mathbb{R}^+\times\Theta^{\text{rBergomi}}\times\mathbb{R}^{n}} \bigg(18 C^2 \varepsilon^2\max(1,c_{20}) + 4C^2\left(c_{37} \varepsilon + (c_{25} + c_{37})h^{\underline{H}} \varepsilon^{-1} + c_{37} \varepsilon^{-1}\exp\left(-\frac{R^2}{4T}\right)\right)\\[2mm]
    &\phantom{...} + \frac{1}{M} C^2 \varepsilon^{-1}c_{36}\bigg)(1+\vert x\vert^2 + \Vert K \Vert^2 + \Vert \theta\Vert^2)\mu(\diff{x},\diff{\theta},\diff{K}) \\[2mm]
    \leq & \bar{C}\left(\varepsilon + h^{\underline{H}} \varepsilon^{-1} + \varepsilon^{-1}\exp\left(-\frac{R^2}{4T}\right) + M^{-1}\varepsilon^{-1}\right)
\end{align*}

for $\bar{C}:= 72 C^2 \max(1 + c_{37},c_{20}+c_{37},c_{25}+c_{37}, c_{36})$. Finally, by choosing 
$$
    M:= \lceil 16\bar{C}\bar{\varepsilon}^{-4}\rceil,\; R:= \sqrt{4T\log(16\bar{C}^2\bar{\varepsilon}^{-4})},\; h:= \left(\frac{\varepsilon \bar{\varepsilon}^2}{4\bar{C}}\right)^\frac{1}{\underline{H}},\; \varepsilon := (4\bar{C})^{-1}\bar{\varepsilon}^2
$$

and applying Fubini's theorem we have that
$$
    \mathbb{E}\left[\int_{\mathbb{R}^+\times\Theta^{\text{rBergomi}}\times\mathbb{R}^{n}} \left\vert \mathbb{E}[\varphi(X^D_T,K)] - \frac{1}{M}\sum_{i=1}^M \phi_\varepsilon(\tilde{X}^{\varepsilon,D,i}_T,K) \right\vert^2 \mu(\diff{x},\diff{\theta},\diff{K})\right] < \bar\varepsilon,
$$

which finishes the first part of the proof. So it remains to show that $\frac{1}{M}\sum_{i=1}^M \phi_\varepsilon(\tilde{X}^{\varepsilon,D,i}_T(\omega),K)$ is indeed a DNN satisfying \eqref{PriceApproxSizeBoundrBergomi}. For this purpose we define
$l_3:= \text{depth}(\Phi^{3}_\varepsilon)$, $l_4 := \text{depth}(\Phi^{4,1}_\varepsilon) = ... = \text{depth}(\Phi^{4,N}_\varepsilon)$, $l_\mathfrak{X}:= \text{depth}(\tilde{\mathfrak{X}}^\varepsilon_{t_1}(\omega)) = ... = \text{depth}(\tilde{\mathfrak{X}}^\varepsilon_{t_N}(\omega))$, $l_{\min} := \text{depth}(\Phi^{\min}_D)$ and $l_{\max}:= \text{depth}(\Phi^{\max}_{\frac{1}{D}})$. Moreover, we set
$$
    l^{\max,2}_V := \max\{2,l_3,l_4,l_\mathfrak{X},l_{\min},l_{\max}\}.
$$

Using that depth$(\Phi^{\max}_{\frac{1}{D}})$ = depth$(\Phi^{\min}_D) = 2$ we obtain with \eqref{SizeDNNVolterra}, \eqref{ApproxDNNSqrt} and Lemma \ref{LemmaPhi4kApprox}
$$
    l^{\max,2}_V \leq 2 + Nc_{26}\varepsilon^{-2} + c_{29} \varepsilon^{-2} + c_{30}\varepsilon^{-2} \leq \varepsilon^{-2}N(2 + c_{26} + c_{29} + c_{30}).
$$ 

Then using size$(\Phi^{\max}_{\frac{1}{D}})$ = size$(\Phi^{\min}_D) \leq 12$ it follows directly by \eqref{SizeDNNVolterra}, \eqref{ApproxDNNSqrt}, \eqref{VtildeDef}, Lemma \ref{LemmaPhi4kApprox} as well as the size of the composition/parallelisation of DNNs for every $i=1,...,M$ that
\begin{align*}
    \text{size}(\tilde{V}^{\varepsilon,D,i}_{t_k}(\omega))=& \text{size}\bigg(\Phi^{\text{mult}}_{\varepsilon} \odot \text{FP}\bigg(\mathcal{I}_{4l^{\max,2}_V - l_4}\odot \Phi^4_{\varepsilon},\mathcal{I}_{4l^{\max,2}_V - l_{\max} - l_{\min} - l_3}\\[2mm]
    &\phantom{...}\odot \Phi^{\min}_D \odot \Phi^{\max}_{\frac{1}{D}}\odot \Phi^{3,k}_{\varepsilon}\odot \text{FP}\left(\mathcal{I}_{l^{\max,2}_V},\mathcal{I}_{l^{\max,2}_V},\mathcal{I}_{l^{\max,2}_V - l_{\mathfrak{X}}}\odot \tilde{\mathfrak{X}}^{\varepsilon}_{t_k}\right)\bigg)\bigg)\\[2mm]
    \leq& 2\text{size}(\Phi^{\text{mult}}_{\varepsilon}) + 4\text{size}(\mathcal{I}_{4l^{\max,2}_V - l_4}) + 4\text{size}(\Phi^3_{\varepsilon}) + 4\text{size}(\mathcal{I}_{4l^{\max,2}_V - l_{\max} - l_{\min} - l_3}) +  8\text{size}(\Phi^{\min}_D) \\[2mm]
    &\phantom{...} + 16\text{size}(\Phi^{\max}_{\frac{1}{D}}) + 32\text{size}(\Phi^{4,k}_{\varepsilon}) + 64\text{size}(\mathcal{I}_{l^{\max,2}_V}) + 64\text{size}(\mathcal{I}_{l^{\max,2}_V - l_{\mathfrak{X}}}) + 64 \text{size}(\tilde{\mathfrak{X}}^{\varepsilon}_{t_k}) \\[2mm]
    \leq& 2\max\{176(2 + c_{26} + c_{29} + c_{30}), c_{31}\}N\varepsilon^{-2} =: c_{38}N\varepsilon^{-2}.
\end{align*}

Moreover, we define $l_5 := \text{depth}(\Phi^5_{\varepsilon})$, $l_V := \text{depth}(\tilde{V}^{\varepsilon, D, i}_{t_k}(\omega))$ and 
$$
    l^{\max,2}_X := \max\{2, l_5, l_V\}.
$$

Then as before using \eqref{DNNApproxSQRTRho} it holds in particular
$$
    l^{\max,2}_X \leq 2 + c_{34}\varepsilon^{-2} + c_{38}N\varepsilon^{-2} \leq N\varepsilon^{-2}(2+c_{34} + c_{38}).
$$

From \eqref{rBergomiDNNDiscreteTimeScheme} it holds together with the DNNs emulating the identity for every $i=1,...,M$ that
\begin{align*}
	\tilde{X}^{\varepsilon,D,i}_{t_{k+1}} := & \mathcal{I}_{l^{\max,2}_X}(\tilde{X}^{\varepsilon,D,i}_{t_{k}}) + \Phi^{\text{mult},3}_{\varepsilon}\left(\mathcal{I}_{l^{\max,2}_X}(\tilde{X}^{\varepsilon,D,i}_{t_{k}}) , \mathcal{I}_{l^{\max,2}_X - l_v}\left(\tilde{V}^{\varepsilon,i}_{t_k}(v,\rho,H)\right), \mathcal{I}_{l^{\max,2}_X}(\rho) \right) (B^{\tau,i}_{t_{k+1}}-B^{\tau,i}_{t_k}) \\[2mm]
	&\phantom{...} +\Phi^{\text{mult},3}_{\varepsilon}\left(\mathcal{I}_{l^{\max,2}_X}(\Tilde{X}^{\varepsilon,D,i}_{t_{k}}) , \mathcal{I}_{l^{\max,2}_X - l_v}\left(\tilde{V}^{\varepsilon,i}_{t_k}(v,\rho,H)\right), \mathcal{I}_{l^{\max,2}_X - l_5}(\Phi^5(\rho))\right) (B^{\perp,\tau,i}_{t_{k+1}}-B^{\perp,\tau,i}_{t_k})
\end{align*}

with $\tilde{X}^{\varepsilon,D,i}_{t_0} = x$.  For a fixed $\omega\in\Omega$ we again get by (randomly) weighted summation and parallelisation that there exists a DNN $\Phi^{i}_{k+1}:\mathbb{R}\times [\underline{\nu},\overline{\nu}] \times [\underline{\rho},\overline{\rho}]\times[-1,0]\times[\underline{H},\overline{H}]\to \mathbb{R}^+$ such that 
\begin{align*}
	\tilde{X}^{\varepsilon,D,i}_{t_{k+1}}(\omega) &= \Phi^{i}_{k+1}\left(\tilde{X}^{\varepsilon,D,i}_{t_{k}}(\omega), \nu, \eta,\rho, H \right).
\end{align*}

In particular, we get using \eqref{SizeMultDNN}
\begin{align*}
	\text{size}(\Phi^{i}_{k+1}) &\leq \text{size}(\mathcal{I}_{l^{\max,2}_X}) + \text{size}\left(\Phi^{\text{mult},3}_{\varepsilon} \odot \text{FP}(\mathcal{I}_{l^{\max,2}_X}, \mathcal{I}_{l^{\max,2}_X - l_V}\odot \bar{V}^{\varepsilon}_{t_k}(\omega), \mathcal{I}_{l^{\max,2}_X})\right)\\[2mm]
	&\phantom{...} + \text{size}\left(\Phi^{\text{mult},3}_{\varepsilon} \odot \text{FP}(\mathcal{I}_{l^{\max,2}_X}, \mathcal{I}_{l^{\max,2}_X - l_V}\odot \bar{V}^{\varepsilon}_{t_k}(\omega), \mathcal{I}_{l^{\max,2}_X - l_5} \odot \Phi^5)\right) \\[2mm]
	&\leq 30 l^{\max,2}_X + 2c_{35}NR\varepsilon^{-2} + 2c_{34}\varepsilon^{-2} + 8c_{38}N\varepsilon^{-2} \\[2mm]
	& \leq 30N\varepsilon^{-2}(2 + c_{34} + c_{38}) + 4c_{35}NR\varepsilon^{-2} + 2c_{34}\varepsilon^{-2} + 8c_{38}N\varepsilon^{-2} \\[2mm]
	&\leq 38 NR \varepsilon^{-2}(2+ c_{34} + c_{35} + 2c_{38}) \\[2mm]
	&=: NRc_{39}\varepsilon^{-2}.
\end{align*}

We define $l_{k+1} := \text{depth}(\Phi^{i}_{k+1})$. Since this holds for every $k=0,...,N-1$, we get recursively and by parallelisation that
\begin{align*}
	\tilde{X}^{\varepsilon,D,i}_{t_{k+1}}(\omega) &= \Phi^{i}_{k+1}\left(\tilde{X}^{\varepsilon,D,i}_{t_{k}}(\omega), \nu, \eta,\rho, H \right) = \Phi^{i}_{k+1}\left(\Phi^{i}_{k}\left(\tilde{X}^{\varepsilon,D,i}_{t_{k-1}}(\omega), \nu, \eta,\rho, H \right), \nu, \eta,\rho, H \right) \\[2mm]
	&= \Phi^{i}_{k+1}\left(\Phi^{i}_{k}\left(\tilde{X}^{\varepsilon,D}_{t_{k-1}}(\omega), \nu, \eta,\rho, H \right), \mathcal{I}_{l_k}(\nu), \mathcal{I}_{l_k}(\eta),\mathcal{I}_{l_k}(\rho), \mathcal{I}_{l_k}(H) \right)\\[2mm]
	&= \Phi^{i}_{k+1}\left(\tilde{\Phi}^{i}_k\left(\tilde{X}^{\varepsilon,D}_{t_{k-1}}(\omega), \nu, \eta, \rho,H\right)\right) = \Phi^{i}_{k+1}\left(\tilde{\Phi}^{i}_k\left(\tilde{\Phi}^{i}_{k-1}\left(\tilde{X}^{\varepsilon,D,i}_{t_{k-2}}(\omega), \nu, \eta, \rho,H\right)\right)\right) \\[2mm]
	&= \dots = \\[2mm]
	&= \Phi^{i}_{k+1}\circ \tilde{\Phi}^{i}_k \circ \dots \circ \Phi^{i}_1(x,\nu,\eta,\rho,H) \\[2mm]
	&= \bar{\Phi}^{i}_{k+1}(x,\nu,\eta,\rho,H)
\end{align*}

for $ \bar{\Phi}^{i}_{k+1} := \Phi^{i}_{k+1}\odot \tilde{\Phi}^{i}_k \odot \dots \odot \tilde{\Phi}^{i}_1$ and $\tilde{\Phi}^{i}_k:= \text{FP}(\Phi^i_k, \mathcal{I}_{l_k},\mathcal{I}_{l_k},\mathcal{I}_{l_k},\mathcal{I}_{l_k})$. For the latter it holds
$$
	\text{size}(\tilde{\Phi}^{i}_k) \leq \text{size}(\Phi^{i}_k) + 4\text{size}(\mathcal{I}_{l_k}) \leq 9\text{size}(\Phi^i_k) \leq 9Nc_{39}\varepsilon^{-2}.
$$

Based on this we obtain analogous to the proof of Theorem \ref{MainTheorem} that
$$
	\text{size}(\bar{\Phi}^{i}_{k+1}) \leq \text{size}(\Phi^{i}_{k+1}) + 3 \sum_{j=1}^k \text{size}(\tilde{\Phi}^{i}_j) \leq N\varepsilon^{-2} c_{39} + 27kN\varepsilon^{-2} c_{39} = \varepsilon^{-2}(1+27k)NR c_{39}.
$$

In particular, we have
$$
	\tilde{X}^{\varepsilon,D,i}_T(\omega) = \tilde{X}^{\varepsilon,D,i}_{t_N}(\omega) = \bar{\Phi}^{i}_N(x,\nu,\eta,\rho,H) = \bar{\Phi}^{i}_N(x,\theta)
$$

with size$(\bar{\Phi}^{i}_N) \leq \varepsilon^{-2}(1+27N)N c_{39}$. For $l_{\bar{\Phi}} := \text{depth}(\bar{\Phi}^i_N)$ we obtain
$$
	\frac{1}{M}\sum_{i=1}^M \phi_{\varepsilon}(\tilde{X}^{\varepsilon,D,i}_{T}(\omega), K) = \frac{1}{M}\sum_{i=1}^M \phi_{\varepsilon}(\bar{\Phi}^{i}_N(x,\theta), \mathcal{I}_{n,l_{\bar{\Phi}}}(K)) = \frac{1}{M}\sum_{i=1}^M \bar{\Psi}^{i}(x,\theta,K)
$$

for the DNN $\bar{\Psi}^{i} := \phi_{\varepsilon} \odot \text{FP}(\bar{\Phi}^{i}_N, \mathcal{I}_{n,l_{\bar{\Phi}}})$. We therefore choose $U^{\text{rBergomi}}_{\bar{\varepsilon}}(x,\theta,K):= \frac{1}{M}\sum_{i=1}^M \bar{\Psi}^{i}(x,\theta,K)$ and it follows
\begin{align*}
	\text{size}(U^{\text{rBergomi}}_{\bar{\varepsilon}}) &\leq 2M\text{size}(\phi_{\varepsilon}) + 2M\text{size}(\bar{\Phi}^N) + 2M\text{size}(\mathcal{I}_{n,l_{\bar{\Phi}}})\\[2mm]
	&\leq 2MC\varepsilon^{-q} + 2M(1+2n)(1+27N)Nc_{39}\varepsilon^{-2}\\[2mm]
    &\leq 8\bar{C}\bar{\varepsilon}^{-2} \varepsilon^{-(q+3)}(C + (1+2n)(1+27N)NRc_{39})\\[2mm]
    &\leq 32 \bar{C}^{q+4+4/\underline{H}} 4^{q+3+4/\underline{H}}T^5(C+(1+2n)(1+27)c_{39})(1+\log(16\bar{C}^2)) \bar{\varepsilon}^{-2(q+6+4/\underline{H})},
\end{align*}

where we have used our choices for $\varepsilon$, $h$, $R$ and $M$ from the first part of this proof. Setting $\mathfrak{C}:= 32 \bar{C}^{q+4+4/\underline{H}} 4^{q+3+4/\underline{H}}T^5(C+(1+2n)(1+27)c_{39})(1+\log(16\bar{C}^2))$ and $\mathfrak{q}:=2(q+6+4/\underline{H})$ finishes the proof.
\end{proof}

In the following proposition we show that the option price in the truncated rBergomi model converges to the option price in the standard model in \eqref{rBergomiModel} for bounded payoff functions. The limitation to bounded payoff functions arises from the fact that there are no sufficient conditions for the existence of $p$-th moments of $X^{x,v}$ for $p>1$ yet established in the literature as pointed out in \cite{Gassiat2019} or Section 4.2 in \cite{Friz2021}.

\begin{proposition}\label{PropConvrBergomiPrice}
    For $n\in\mathbb{N}$ let $\varphi:\mathbb{R}^+\times\mathbb{R}^n\rightarrow \mathbb{R}$ be a bounded payoff function. Then for any $K\in\mathbb{R}^n$ it holds
    $$
        \mathbb{E}\left[\varphi(X^D_T,K)\right] \rightarrow \mathbb{E}\left[\varphi(X^{x,v}_T,K)\right]
    $$

    for $D\rightarrow +\infty$.
\end{proposition}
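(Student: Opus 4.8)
The plan is to show that the truncated asset process coincides with the original one on an event of probability tending to $1$, so that the difference of the two option prices is bounded by $2\Vert\varphi\Vert_\infty$ times the probability of the complementary event. Write $Y_t := \exp\big(\eta\sqrt{H/2}\,\mathfrak{X}_t - \tfrac14\eta^2 t^{2H}\big)$, so that $\sqrt{V^v_t} = \sqrt{\nu}\,Y_t$ and $\sqrt{V^D_t} = \sqrt{\nu}\,\min(\max(1/D,Y_t),D)$; hence $\sqrt{V^D_t}$ and $\sqrt{V^v_t}$ differ only where $Y_t\notin[1/D,D]$. Since $\mathfrak{X}$ admits a $\mathbb{Q}$-a.s.\ continuous modification on $[0,T]$ (it is a centred Gaussian process with locally Hölder sample paths; cf.\ \cite{Bayer2015}), so has $Y$, and therefore $0<\underline{Y} := \inf_{t\in[0,T]}Y_t \le \overline{Y} := \sup_{t\in[0,T]}Y_t <\infty$ $\mathbb{Q}$-a.s.

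Next I would introduce the stopping time $\sigma_D := \inf\{t\in[0,T]\mid Y_t\notin[1/D,D]\}$, with the convention $\inf\emptyset:=T$. By continuity of $Y$ we have $Y_t\in[1/D,D]$ for all $t\le\sigma_D$, hence $\sqrt{V^D_s}=\sqrt{V^v_s}$ for all $s\le\sigma_D$. Using the explicit representations
\begin{equation*}
    X^{x,v}_t = x\exp\Big(\int_0^t\sqrt{V^v_s}\,\diff{W}_s - \tfrac12\int_0^t V^v_s\,\diff{s}\Big),\qquad
    X^{D}_t = x\exp\Big(\int_0^t\sqrt{V^{D}_s}\,\diff{W}_s - \tfrac12\int_0^t V^{D}_s\,\diff{s}\Big)
\end{equation*}
of the unique solutions of \eqref{rBergomiModel} and \eqref{TruncatedrBergomiModel} (driven by the same Brownian motions), together with the identity $\int_0^{t\wedge\sigma_D}H_s\,\diff{W}_s = \int_0^t H_s\mathbbm{1}_{\{s\le\sigma_D\}}\,\diff{W}_s$ applied to $H_s=\sqrt{V^{D}_s}-\sqrt{V^v_s}$, which vanishes on $\{s\le\sigma_D\}$, one obtains that the Itô integrals agree up to $\sigma_D$; the Lebesgue integrals agree up to $\sigma_D$ pathwise for the same reason. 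Restricting to $\{\sigma_D=T\}$, on which $t\wedge\sigma_D=t$ for all $t\le T$, this yields $X^{D}_T = X^{x,v}_T$ $\mathbb{Q}$-a.s.\ on $\{\sigma_D=T\}$.

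Finally I would estimate $\mathbb{Q}(\sigma_D<T)$. Since $\{\sigma_D<T\}\subseteq\{\underline{Y}<1/D\}\cup\{\overline{Y}>D\}$ and $\underline{Y}>0$, $\overline{Y}<\infty$ $\mathbb{Q}$-a.s., continuity of measure gives $\mathbb{Q}(\sigma_D<T)\to 0$ as $D\to\infty$. Boundedness of $\varphi$ then yields
\begin{equation*}
    \big|\mathbb{E}[\varphi(X^{D}_T,K)] - \mathbb{E}[\varphi(X^{x,v}_T,K)]\big|
    = \big|\mathbb{E}\big[(\varphi(X^{D}_T,K)-\varphi(X^{x,v}_T,K))\mathbbm{1}_{\{\sigma_D<T\}}\big]\big|
    \le 2\Vert\varphi\Vert_\infty\,\mathbb{Q}(\sigma_D<T),
\end{equation*}
and letting $D\to\infty$ finishes the proof. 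I expect the main technical point to be the rigorous justification that the two stochastic integrals coincide on $\{\sigma_D=T\}$, equivalently that the two SDE solutions agree up to $\sigma_D$: this rests on the local property of stochastic integration and pathwise uniqueness for the (stopped) linear equations, which must be invoked with care but is classical. Continuity of the Volterra path $\mathfrak{X}$ and the tail estimate for $\sigma_D$ are routine, and — pleasantly — the argument uses only boundedness of $\varphi$, not continuity.
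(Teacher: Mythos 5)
Your proof is correct and follows essentially the same route as the paper: both decompose the error over the event that the truncation is ever active on $[0,T]$ and its complement, bound the contribution of the bad event by $2\Vert\varphi\Vert_\infty$ times its probability, and use that $X^D_T = X^{x,v}_T$ on the good event. The only minor differences are that you spell out the a.s.\ coincidence of the two solutions in more detail (via the stochastic-exponential representation and the local property of the It\^o integral), and you obtain $\mathbb{Q}(\sigma_D<T)\to 0$ softly from a.s.\ path continuity of $\mathfrak{X}$ and continuity of measure, whereas the paper derives it quantitatively from Markov's inequality combined with a Gaussian supremum bound (Theorem 4.2 in \cite{Nourdin2012}).
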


\begin{proof}
    Let $\varphi:\mathbb{R}^+\times\mathbb{R}^n\rightarrow \mathbb{R}$ be a bounded payoff function for some $n\in\mathbb{N}$. Consider the set
    $$
        A_D := \left\{\omega\in\Omega\,\middle\vert\,\exists t\in[0,T] \text{ such that } \eta \sqrt{\frac{H}{2}}{\mathfrak{X}}_{t} - \frac{1}{4}\eta^2 t^{2H} \notin[-\log(D),\log(D)]\right\}.
    $$
    
    By Markov's inequality together with Theorem 4.2 in \cite{Nourdin2012} we get that $\mathbb{Q}(A_D)\rightarrow 0$ as $D\rightarrow +\infty$. Therefore, we obtain by the triangle inequality and the boundedness of $\varphi$
    \begin{align*}
        &\lim_{D\to\infty}\mathbb{E}[\vert \varphi(X^D_T,K) - \varphi(X^{x,v}_T,K)\vert] \\[2mm] =& \lim_{D\to\infty} \mathbb{E}[\vert \varphi(X^D_T,K) - \varphi(X^{x,v}_T,K)\vert\mathbbm{1}_{A_D}] + \lim_{D\to\infty}\mathbb{E}[\vert \varphi(X^D_T,K) - \varphi(X^{x,v}_T,K)\vert\mathbbm{1}_{A^c_D}] \\[2mm]
        \leq& \lim_{D\to\infty} 2\sup_y \vert \varphi(y,K)\vert \mathbb{Q}(A_D) + \lim_{D\to\infty} \mathbb{E}[\vert \varphi(X^D_T,K) - \varphi(X^{x,v}_T,K)\vert\mathbbm{1}_{A^c_D}] \\[2mm]
        =& \lim_{D\to\infty} \mathbb{E}[\vert \varphi(X^D_T,K) - \varphi(X^{x,v}_T,K)\vert\mathbbm{1}_{A^c_D}] \\
        =& 0.
    \end{align*}

    The last step follows by the definition of $A_D$.
\end{proof}


\newpage
\printbibliography

@article{Gonon2021,
   author = {Lukas Gonon and Christoph Schwab},
   journal = {ArXiv},
   title = {Deep ReLU Neural Network Approximation for Stochastic Differential Equations with Jumps},
   volume = {abs/2102.11707},
   year = {2021},
}

@article{Elbrachter2021,
   author = {Dennis Elbrachter and Dmytro Perekrestenko and Philipp Grohs and Helmut Bolcskei},
   doi = {10.1109/TIT.2021.3062161},
   issue = {5},
   journal = {IEEE Transactions on Information Theory},
   month = {5},
   pages = {2581-2623},
   publisher = {Institute of Electrical and Electronics Engineers (IEEE)},
   title = {Deep Neural Network Approximation Theory},
   volume = {67},
   year = {2021},
}

@book{Applebaum2009,
   author = {David Applebaum},
   doi = {10.1017/CBO9780511809781},
   isbn = {9780521738651},
   month = {4},
   publisher = {Cambridge University Press},
   title = {Lévy Processes and Stochastic Calculus},
   url = {https://www.cambridge.org/core/product/identifier/9780511809781/type/book},
   year = {2009},
}

@article{Horvath2020,
   author = {Blanka Horvath and Aitor Muguruza and Mehdi Tomas},
   doi = {https://doi.org/10.1080/14697688.2020.1817974},
   issue = {1},
   journal = {Quantitative Finance},
   month = {10},
   pages = {11-27},
   publisher = {Informa UK Limited},
   title = {Deep learning volatility: a deep neural network perspective on pricing and calibration in (rough) volatility models},
   volume = {21},
   year = {2020},
}

@article{Petersen2018,
   author = {Philipp Petersen and Felix Voigtlaender},
   doi = {https://doi.org/10.1016/j.neunet.2018.08.019},
   journal = {Neural Networks},
   month = {12},
   pages = {296-330},
   publisher = {Elsevier BV},
   title = {Optimal approximation of piecewise smooth functions using deep ReLU neural networks},
   volume = {108},
   year = {2018},
}

@article{Gonon2021a,
   author = {Lukas Gonon and Christoph Schwab},
   doi = {https://doi.org/10.1007/s00780-021-00462-7},
   issue = {4},
   journal = {Finance and Stochastics},
   month = {8},
   pages = {615-657},
   publisher = {Springer Science and Business Media LLC},
   title = {Deep ReLU network expression rates for option prices in high-dimensional, exponential Lévy models},
   volume = {25},
   year = {2021},
}

@article{Opschoor2020,
   author = {Joost A A Opschoor and Philipp C Petersen and Christoph Schwab},
   doi = {https://doi.org/10.1142/S0219530519410136},
   issue = {05},
   journal = {Analysis and Applications},
   month = {2},
   pages = {715-770},
   publisher = {World Scientific Pub Co Pte Lt},
   title = {Deep ReLU networks and high-order finite element methods},
   volume = {18},
   year = {2020},
}

@article{Heston1993,
   abstract = {I use a new technique to derive a closed-form solution for the price of a European call option on an asset with stochastic volatility. The model allows arbitrary correlation between volatility and spot-asset returns. I introduce stochastic interest rates and show how to apply the model to bond options and foreign currency options. Simulations show that correlation between volatility and the spot asset's price is important for explaining return skewness and strike-price biases in the Black-Scholes (1973) model. The solution technique is based on characteristic functions and can be applied to other problems. Article published by Oxford University Press on behalf of the Society for Financial Studies in its journal, The Review of Financial Studies.},
   author = {Steven L Heston},
   issue = {2},
   journal = {Review of Financial Studies},
   pages = {327-343},
   title = {A Closed-Form Solution for Options with Stochastic Volatility with Applications to Bond and Currency Options},
   volume = {6},
   url = {https://EconPapers.repec.org/RePEc:oup:rfinst:v:6:y:1993:i:2:p:327-43},
   year = {1993},
}

@article{Yarotsky2017,
   author = {Dmitry Yarotsky},
   doi = {https://doi.org/10.1016/j.neunet.2017.07.002},
   journal = {Neural Networks},
   month = {10},
   pages = {103-114},
   publisher = {Elsevier BV},
   title = {Error bounds for approximations with deep ReLU networks},
   volume = {94},
   year = {2017},
}

@article{Petersen2022,
   author = {Philipp Christian Petersen},
   title = {Neural Network Theory (Lecture Notes)},
   year = {2022},
}

@article{Bayer2015,
   author = {Christian Bayer and Peter K. Friz and Jim Gatheral},
   doi = {10.2139/ssrn.2554754},
   issn = {1556-5068},
   journal = {SSRN Electronic Journal},
   title = {Pricing Under Rough Volatility},
   year = {2015},
}

@article{Bennedsen2017,
   author = {Mikkel Bennedsen and Asger Lunde and Mikko S. Pakkanen},
   doi = {10.1007/s00780-017-0335-5},
   issn = {0949-2984},
   issue = {4},
   journal = {Finance and Stochastics},
   month = {10},
   pages = {931-965},
   title = {Hybrid scheme for Brownian semistationary processes},
   volume = {21},
   year = {2017},
}

@article{Gassiat2019,
   author = {Paul Gassiat},
   doi = {10.1214/19-ECP239},
   issn = {1083-589X},
   issue = {none},
   journal = {Electronic Communications in Probability},
   month = {1},
   title = {On the martingale property in the rough Bergomi model},
   volume = {24},
   year = {2019},
}

@article{Buechel2020,
   author = {Patrick Büchel and Michael Kratochwil and Maximilian Nagl and Daniel Roesch},
   doi = {10.2139/ssrn.3667070},
   issn = {1556-5068},
   journal = {SSRN Electronic Journal},
   title = {Deep Calibration of Financial Models: Turning Theory Into Practice},
   year = {2020},
}

@article{Liu2019,
   author = {Shuaiqiang Liu and Anastasia Borovykh and Lech A. Grzelak and Cornelis W. Oosterlee},
   doi = {10.1186/s13362-019-0066-7},
   issn = {2190-5983},
   issue = {1},
   journal = {Journal of Mathematics in Industry},
   month = {12},
   pages = {9},
   title = {A neural network-based framework for financial model calibration},
   volume = {9},
   year = {2019},
}

@article{Andersen2006,
   author = {Leif B. G. Andersen and Vladimir V. Piterbarg},
   doi = {10.1007/s00780-006-0011-7},
   issn = {0949-2984},
   issue = {1},
   journal = {Finance and Stochastics},
   month = {12},
   pages = {29-50},
   title = {Moment explosions in stochastic volatility models},
   volume = {11},
   year = {2006},
}

@article{Hernandez2016,
   author = {Andres Hernandez},
   doi = {10.2139/ssrn.2812140},
   issn = {1556-5068},
   journal = {SSRN Electronic Journal},
   title = {Model Calibration with Neural Networks},
   year = {2016},
}

@article{Hull1990,
   author = {John Hull and Alan White},
   doi = {10.1093/rfs/3.4.573},
   issn = {0893-9454},
   issue = {4},
   journal = {Review of Financial Studies},
   month = {10},
   pages = {573-592},
   title = {Pricing Interest-Rate-Derivative Securities},
   volume = {3},
   year = {1990},
}

@article{Dimitroff2018,
   author = {Georgi Dimitroff and Dirk Röder and Christian P. Fries},
   doi = {10.2139/ssrn.3252432},
   issn = {1556-5068},
   journal = {SSRN Electronic Journal},
   title = {Volatility Model Calibration With Convolutional Neural Networks},
   year = {2018},
}

@article{Bayer2018,
   author = {Christian Bayer and Benjamin Stemper},
   journal = {arXiv:1810.03399},
   title = {Deep calibration of rough stochastic volatility models},
   url = {http://arxiv.org/abs/1810.03399},
   year = {2018},
}

@book{Oksendal2003,
   author = {Bernt Øksendal},
   city = {Berlin, Heidelberg},
   doi = {10.1007/978-3-642-14394-6},
   isbn = {978-3-540-04758-2},
   publisher = {Springer Berlin Heidelberg},
   title = {Stochastic Differential Equations},
   year = {2003},
}

@book{Protter2005,
   author = {Philip E. Protter},
   city = {Berlin, Heidelberg},
   doi = {10.1007/978-3-662-10061-5},
   isbn = {978-3-642-05560-7},
   publisher = {Springer Berlin Heidelberg},
   title = {Stochastic Integration and Differential Equations},
   volume = {21},
   year = {2005},
}

@article{Hagan2002,
   author = {Patrick Hagan and Deep Kumar and Andrew Lesniewski and Diana Woodward},
   journal = {Wilmott Magazine},
   pages = {84-108},
   title = {Managing Smile Risk},
   volume = {September},
   year = {2002},
}

@article{Bollerslev1986,
   author = {Tim Bollerslev},
   doi = {10.1016/0304-4076(86)90063-1},
   issn = {03044076},
   issue = {3},
   journal = {Journal of Econometrics},
   month = {4},
   pages = {307-327},
   title = {Generalized autoregressive conditional heteroskedasticity},
   volume = {31},
   year = {1986},
}

@article{Shephard2005,
   author = {Neil Shephard},
   journal = {Economics Papers},
   title = {Stochastic Volatility},
   year = {2005},
}

@article{Gatheral2018,
   author = {Jim Gatheral and Thibault Jaisson and Mathieu Rosenbaum},
   doi = {10.1080/14697688.2017.1393551},
   issn = {1469-7688},
   issue = {6},
   journal = {Quantitative Finance},
   month = {6},
   pages = {933-949},
   title = {Volatility is rough},
   volume = {18},
   year = {2018},
}

@article{ElEuch2018Hedge,
   author = {Omar El Euch and Mathieu Rosenbaum},
   doi = {10.1214/18-AAP1408},
   issn = {1050-5164},
   issue = {6},
   journal = {The Annals of Applied Probability},
   month = {12},
   title = {Perfect hedging in rough Heston models},
   volume = {28},
   year = {2018},
}

@article{ElEuch2018,
   author = {Omar El Euch and Jim Gatheral and Mathieu Rosenbaum},
   doi = {10.2139/ssrn.3116887},
   issn = {1556-5068},
   journal = {SSRN Electronic Journal},
   title = {Roughening Heston},
   year = {2018},
}

@book{Biagini2008,
   author = {Francesca Biagini and Yaozhong Hu and Bernt Øksendal and Tusheng Zhang},
   city = {London},
   doi = {10.1007/978-1-84628-797-8},
   isbn = {978-1-85233-996-8},
   publisher = {Springer London},
   title = {Stochastic Calculus for Fractional Brownian Motion and Applications},
   year = {2008},
}

@article{Nualart2009,
   author = {David Nualart},
   doi = {10.5802/afst.1113},
   issn = {2258-7519},
   issue = {1},
   journal = {Annales de la Faculté des sciences de Toulouse : Mathématiques},
   month = {2},
   pages = {63-78},
   title = {Stochastic calculus with respect to fractional Brownian motion},
   volume = {15},
   year = {2009},
}

@article{ElEuch2019,
   author = {Omar El Euch and Mathieu Rosenbaum},
   doi = {10.1111/mafi.12173},
   issn = {09601627},
   issue = {1},
   journal = {Mathematical Finance},
   month = {1},
   pages = {3-38},
   title = {The characteristic function of rough Heston models},
   volume = {29},
   year = {2019},
}

@article{Akahori2017,
   author = {Jirô Akahori and Xiaoming Song and Tai-Ho Wang},
   journal = {Risks},
   title = {Probability Density of Lognormal Fractional SABR Model},
   year = {2017},
}

@article{Fukasawa2022,
   abstract = {<p lang="fr"><p style='text-indent:20px;'>Following an approach originally suggested by Balland in the context of the SABR model, we derive an ODE that is satisfied by normalized volatility smiles for short maturities under a rough volatility extension of the SABR model that extends also the rough Bergomi model. We solve this ODE numerically and further present a very accurate approximation to the numerical solution that we dub the <i>rough SABR formula</i>.</p></p>},
   author = {Masaaki Fukasawa and Jim Gatheral},
   doi = {10.3934/fmf.2021003},
   issn = {2769-6715},
   issue = {1},
   journal = {Frontiers of Mathematical Finance},
   pages = {81},
   title = {A rough SABR formula},
   volume = {1},
   year = {2022},
}

@book{Goodfellow2016,
   author = {Ian J. and Bengio, Yoshua and Courville, Aaron Goodfellow},
   city = {Cambridge, MA, USA},
   publisher = {MIT Press},
   title = {Deep Learning},
   year = {2016},
}

@article{Mhaskar1993,
   author = {H. N. Mhaskar},
   doi = {10.1007/BF02070821},
   issn = {1019-7168},
   issue = {1},
   journal = {Advances in Computational Mathematics},
   month = {2},
   pages = {61-80},
   title = {Approximation properties of a multilayered feedforward artificial neural network},
   volume = {1},
   year = {1993},
}

@article{Hornik1989,
   author = {Kurt Hornik and Maxwell Stinchcombe and Halbert White},
   doi = {10.1016/0893-6080(89)90020-8},
   issn = {08936080},
   issue = {5},
   journal = {Neural Networks},
   month = {1},
   pages = {359-366},
   title = {Multilayer feedforward networks are universal approximators},
   volume = {2},
   year = {1989},
}

@article{Gühring2020,
   abstract = {<p>We analyze to what extent deep Rectified Linear Unit (ReLU) neural networks can efficiently approximate Sobolev regular functions if the approximation error is measured with respect to weaker Sobolev norms. In this context, we first establish upper approximation bounds by ReLU neural networks for Sobolev regular functions by explicitly constructing the approximate ReLU neural networks. Then, we establish lower approximation bounds for the same type of function classes. A trade-off between the regularity used in the approximation norm and the complexity of the neural network can be observed in upper and lower bounds. Our results extend recent advances in the approximation theory of ReLU networks to the regime that is most relevant for applications in the numerical analysis of partial differential equations.</p>},
   author = {Ingo Gühring and Gitta Kutyniok and Philipp Petersen},
   doi = {10.1142/S0219530519410021},
   issn = {0219-5305},
   issue = {05},
   journal = {Analysis and Applications},
   month = {9},
   pages = {803-859},
   title = {Error bounds for approximations with deep ReLU neural networks in Ws,p norms},
   volume = {18},
   year = {2020},
}

@article{Cybenko1989,
   author = {G. Cybenko},
   doi = {10.1007/BF02551274},
   issn = {0932-4194},
   issue = {4},
   journal = {Mathematics of Control, Signals, and Systems},
   month = {12},
   pages = {303-314},
   title = {Approximation by superpositions of a sigmoidal function},
   volume = {2},
   year = {1989},
}

@article{Mhaskar2016,
   abstract = {<p>The paper briefly reviews several recent results on hierarchical architectures for learning from examples, that may formally explain the conditions under which Deep Convolutional Neural Networks perform much better in function approximation problems than shallow, one-hidden layer architectures. The paper announces new results for a non-smooth activation function — the ReLU function — used in present-day neural networks, as well as for the Gaussian networks. We propose a new definition of relative dimension to encapsulate different notions of sparsity of a function class that can possibly be exploited by deep networks but not by shallow ones to drastically reduce the complexity required for approximation and learning.</p>},
   author = {H. N. Mhaskar and T. Poggio},
   doi = {10.1142/S0219530516400042},
   issn = {0219-5305},
   issue = {06},
   journal = {Analysis and Applications},
   month = {11},
   pages = {829-848},
   title = {Deep vs. shallow networks: An approximation theory perspective},
   volume = {14},
   year = {2016},
}

@article{Bölcskei2017,
   author = {Helmut Bölcskei and Philipp Grohs and Gitta Kutyniok and Philipp Christian Petersen},
   journal = {SIAM J. Math. Data Sci.},
   pages = {8-45},
   title = {Optimal Approximation with Sparsely Connected Deep Neural Networks},
   volume = {1},
   year = {2017},
}

@article{Elbrächter2022,
   author = {Dennis Elbrächter and Philipp Grohs and Arnulf Jentzen and Christoph Schwab},
   doi = {10.1007/s00365-021-09541-6},
   issn = {0176-4276},
   issue = {1},
   journal = {Constructive Approximation},
   month = {2},
   pages = {3-71},
   title = {DNN Expression Rate Analysis of High-Dimensional PDEs: Application to Option Pricing},
   volume = {55},
   year = {2022},
}

@article{Hutzenthaler2020,
   author = {Martin Hutzenthaler and Arnulf Jentzen and Thomas Kruse and Tuan Anh Nguyen and Philippe von Wurstemberger},
   doi = {10.1098/rspa.2019.0630},
   issn = {1364-5021},
   issue = {2244},
   journal = {Proceedings of the Royal Society A: Mathematical, Physical and Engineering Sciences},
   month = {12},
   title = {Overcoming the curse of dimensionality in the numerical approximation of semilinear parabolic partial differential equations},
   volume = {476},
   year = {2020},
}

@article{Reisinger2020,
   author = {Christoph Reisinger and Yufei Zhang},
   doi = {10.1142/S0219530520500116},
   issn = {0219-5305},
   issue = {06},
   journal = {Analysis and Applications},
   month = {11},
   pages = {951-999},
   title = {Rectified deep neural networks overcome the curse of dimensionality for nonsmooth value functions in zero-sum games of nonlinear stiff systems},
   volume = {18},
   year = {2020},
}

@article{Kutyniok2022,
   author = {Gitta Kutyniok and Philipp Petersen and Mones Raslan and Reinhold Schneider},
   doi = {10.1007/s00365-021-09551-4},
   issn = {0176-4276},
   issue = {1},
   journal = {Constructive Approximation},
   month = {2},
   pages = {73-125},
   title = {A Theoretical Analysis of Deep Neural Networks and Parametric PDEs},
   volume = {55},
   year = {2022},
}

@article{Beck2023,
   author = {Christian Beck and Martin Hutzenthaler and Arnulf Jentzen and Benno Kuckuck},
   doi = {10.3934/dcdsb.2022238},
   issn = {1531-3492},
   issue = {6},
   journal = {Discrete and Continuous Dynamical Systems - B},
   pages = {3697-3746},
   title = {An overview on deep learning-based approximation methods for partial differential equations},
   volume = {28},
   year = {2023},
}

@article{Grohs2023,
   abstract = {<p> Artificial neural networks (ANNs) have very successfully been used in numerical simulations for a series of computational problems ranging from image classification/image recognition, speech recognition, time series analysis, game intelligence, and computational advertising to numerical approximations of partial differential equations (PDEs). Such numerical simulations suggest that ANNs have the capacity to very efficiently approximate high-dimensional functions and, especially, indicate that ANNs seem to admit the fundamental power to overcome the curse of dimensionality when approximating the high-dimensional functions appearing in the above named computational problems. There are a series of rigorous mathematical approximation results for ANNs in the scientific literature. Some of them prove convergence without convergence rates and some of these mathematical results even rigorously establish convergence rates but there are only a few special cases where mathematical results can rigorously explain the empirical success of ANNs when approximating high-dimensional functions. The key contribution of this article is to disclose that ANNs can efficiently approximate high-dimensional functions in the case of numerical approximations of Black-Scholes PDEs. More precisely, this work reveals that the number of required parameters of an ANN to approximate the solution of the Black-Scholes PDE grows at most polynomially in both the reciprocal of the prescribed approximation accuracy <inline-formula content-type="math/mathml"> <math alttext="epsilon greater-than 0"> <semantics> <mrow> <mi> ε <!-- ε --> </mi> <mo>&gt;</mo> <mn>0</mn> </mrow> <annotation encoding="application/x-tex">\varepsilon &gt; 0</annotation> </semantics> </math> </inline-formula> and the PDE dimension <inline-formula content-type="math/mathml"> <math alttext="d element-of double-struck upper N"> <semantics> <mrow> <mi>d</mi> <mo> ∈ <!-- ∈ --> </mo> <mrow class="MJX-TeXAtom-ORD"> <mi mathvariant="double-struck">N</mi> </mrow> </mrow> <annotation encoding="application/x-tex">d \in \mathbb \{N\}</annotation> </semantics> </math> </inline-formula> . We thereby prove, for the first time, that ANNs do indeed overcome the curse of dimensionality in the numerical approximation of Black-Scholes PDEs. </p>},
   author = {Philipp Grohs and Fabian Hornung and Arnulf Jentzen and Philippe von Wurstemberger},
   doi = {10.1090/memo/1410},
   issn = {0065-9266},
   issue = {1410},
   journal = {Memoirs of the American Mathematical Society},
   month = {4},
   title = {A Proof that Artificial Neural Networks Overcome the Curse of Dimensionality in the Numerical Approximation of Black–Scholes Partial Differential Equations},
   volume = {284},
   year = {2023},
}

@article{Barron1993,
   author = {A.R. Barron},
   doi = {10.1109/18.256500},
   issn = {0018-9448},
   issue = {3},
   journal = {IEEE Transactions on Information Theory},
   month = {5},
   pages = {930-945},
   title = {Universal approximation bounds for superpositions of a sigmoidal function},
   volume = {39},
   year = {1993},
}

@article{Friz2021,
   author = {P. K. Friz and P. Gassiat and P. Pigato},
   doi = {10.1214/20-AAP1608},
   issn = {1050-5164},
   issue = {2},
   journal = {The Annals of Applied Probability},
   month = {4},
   title = {Precise asymptotics: Robust stochastic volatility models},
   volume = {31},
   year = {2021},
}

@book{Nourdin2012,
   author = {Ivan Nourdin},
   city = {Milano},
   doi = {10.1007/978-88-470-2823-4},
   isbn = {978-88-470-2822-7},
   publisher = {Springer Milan},
   title = {Selected Aspects of Fractional Brownian Motion},
   year = {2012},
}


\newpage
\appendix
\section{Appendix}
\subsection{Deep neural networks}\label{Section2}
We follow the introduction to deep neural networks in \cite{Gonon2021} to give a short reminder on the most crucial terminology. Let $d,d^\prime,L\in\mathbb{N}$. A function $\psi:\mathbb{R}^d\rightarrow \mathbb{R}^{d^\prime}$ is a (feedforward) DNN with $L-1$ hidden layers and single activation function $\varrho$ if it can be written as
\begin{equation}
    \psi = W_L \circ (\varrho \circ W_{L-1}) \circ \dots \circ (\varrho \circ W_1),
\end{equation}

where $W_l:\mathbb{R}^{d_{l-1}}\rightarrow\mathbb{R}^{d_l}$ for $l=1,...,L$ is an affine function 
\begin{equation}
    W_l(x) := A^l x + b^l,\, x \in\mathbb{R}^{d_{l-1}}
\end{equation}

defined by weight matrices $A^l\in\mathbb{R}^{d_l \times d_{l-1}}$ and bias vectors $b^l \in \mathbb{R}^{d_l}$ for $d_0 := d$ and $d_L := d^\prime$. The total numbers of the DNN's layers is called depth of the DNN, i.e. 
$$
    \text{depth}(\psi) := L + 1.
$$ 

The size of the DNN is defined as
$$
    \text{size}(\psi) := \vert \lbrace (i,j,l): \, A^l_{i,j} \neq 0 \rbrace\vert + \vert \lbrace (i,l):\, b^l_i \neq 0 \rbrace \vert.
$$

Sometimes we are especially interested in the size of the output layer, which is defined by
$$\text{size}_{out}(\psi) := \vert \lbrace (i,j): \, A^L_{i,j} \neq 0 \rbrace\vert + \vert \lbrace i:\, b^L_i \neq 0 \rbrace \vert.$$
Moreover, we define the width of the DNN as
$$
    \text{width}(\psi) := \max_{l=0,...,L}d_{l}.
$$
Throughout this work we solely consider the ReLU activation function whose $d$-dimensional version is defined by
\begin{equation}\label{ReLU}
    \varrho:\mathbb{R}^d\rightarrow \mathbb{R}^d,\; \varrho(\theta) := (\max\lbrace \theta_i,0\rbrace)_{i=1,...,d},
\end{equation}
where $\theta_i$ is the $i$-th entry of the vector $\theta$.

\subsection{Proofs of Section \ref{Section3}}\label{ProofSection3}
\begin{proof}[Lemma \ref{LemmaBoundSecMomX}]
Define the function $G(t):= \mathbb{E}[\sup_{s\in[0,t]}\Vert X^{x,v}_s\Vert^2]$ for $t\in[0,T]$. By Doob's maximal inequality, Itô's isometry, Lemma \ref{LemmaL2XV} and \eqref{SquaredTriangular} it follows
    \begin{align}\label{Lemma4.3.Bound1}
        G(t) &\leq 2\Vert x \Vert^2 + 2 \mathbb{E}\left[\sup_{s\in[0,t]}\left\Vert \int_0^s \sigma^\theta(X^{x,v}_r,V^v_r)\diff{W}_r \right\Vert^2\right] \nonumber \\
        &\leq 2\Vert x \Vert^2 + 8\mathbb{E}\left[\left\Vert \int_0^t \sigma^\theta(X^{x,v}_s,V^v_s)\diff{W}_s \right\Vert^2\right] \nonumber \\
        &= 2\Vert x \Vert^2 + 8 \int_0^t\mathbb{E}\left[ \left\Vert\sigma^\theta(X^{x,v}_s,V^v_s)\right\Vert^2_F\right]\diff{s}.
    \end{align}
    
    For the integral term we obtain
    \begin{align}\label{Lemma4.3.Bound3}
        &\int_0^t  \mathbb{E}\left[\left\Vert\sigma^\theta(X^{x,v}_s,V^v_s)\right\Vert^2_F\right]\diff{s}\nonumber\\
        \leq & 2 \int_0^t \mathbb{E}\left[ \left\Vert\sigma^\theta(X^{x,v}_s,V^v_s) - \sigma^\theta(0,V^v_s)\right\Vert^2_F\right] \diff{s} + 2\int_0^t \mathbb{E}\left[ \left\Vert\sigma^\theta(0,V^v_s)\right\Vert^2_F\right]\diff{s} \nonumber\\
        \leq & 2L\int_0^t \mathbb{E}\left[\Vert X^{x,v}_s \Vert^2\right]\diff{s} + 2\int_0^t \mathbb{E}\left[d r K_1(1+\Vert V^v_s \Vert^2)\right]\diff{s} \nonumber\\
        \leq & 2L\int_0^t \mathbb{E}\left[\Vert X^{x,v}_s \Vert^2\right]\diff{s} + 2d r K_1 t(1+c_1\Vert v\Vert^2 + c_2 \bar{d}^2)
    \end{align}
    
    by Lemma \ref{LemmaBoundSecMomV}. Therefore, by Lemma \ref{LemmaL2XV} and \eqref{Lemma4.3.Bound3} it follows that $G(t)\in L^1([0,T])$. By \eqref{Lemma4.3.Bound1} and \eqref{Lemma4.3.Bound3} we obtain
    \begin{equation*}
        G(t) \leq 2\Vert x \Vert^2 + 16 c_1 K_1 T \Bar{d}^2 \Vert v \Vert^2 + 16 K_1 T(1+c_2)\bar{d}^4 + 16 L \int_0^t G(s) \diff{s}.
    \end{equation*}
    
    So finally, by Gronwall's inequality
    \begin{equation}
        G(t) \leq (2 \Vert x \Vert^2 + 16c_1 K_1 T \Vert v \Vert^2 \bar{d}^2 + 16 K_1 T (1+c_2)\bar{d}^4)\exp(16LT).
    \end{equation}
    
    Finally, we get that for all $t\in[0,T]$
    \begin{equation}
        G(t) \leq c_3 \Vert x \Vert^2 + c_4 \Vert v \Vert^2 \bar{d}^2 + c_5 \bar{d}^4,
    \end{equation}
    
    where 
    \begin{equation*}
        c_3 := 2\exp(16LT),\; c_4 := 16 c_1 K_1 T \exp(16LT)\; \text{ and }  c_5 := 16 K_1 T (1+c_2) \exp(16LT).
    \end{equation*}
\end{proof}

\begin{proof}[Proposition \ref{DiffVNormalBar}]\label{ProofProp2.1}
    Define the functional $G(t) := \mathbb{E}\left[ \sup_{s\in[0,t]} \Vert V^v_s - \bar{V}^v_s \Vert^2 \right]$ for $t\in[0,T]$. Then it holds by Cauchy-Schwarz inequality, Doob's maximal inequality, Itô's isometry, Assumption \ref{StrucAssumptionV} and Lemma \ref{LemmaL2XV} that
    \begin{align*}
        G(t) &\leq 2\mathbb{E}\left[ \sup_{s\in[0,t]} \left\Vert \int_0^s \bar{\mu}^\theta(V^v_r) - \bar{\mu}^\theta(\bar{V}^v_{\floor{r}}) \diff{r} \right\Vert^2 \right] + 2\mathbb{E}\left[ \sup_{s\in[0,t]} \left\Vert \int_0^s \bar{\sigma}^\theta(V^v_r) - \bar{\sigma}^\theta(\bar{V}^v_{\floor{r}}) \diff{B}_r \right\Vert^2 \right] \\
        &\leq 2t \mathbb{E}\left[\int_0^t \left\Vert\bar{\mu}^\theta(V^v_r) - \bar{\mu}^\theta(\bar{V}^v_{\floor{r}})\right\Vert^2 \diff{r} \right] + 8 \mathbb{E}\left[ \left\Vert \int_0^t \bar{\sigma}^\theta(V^v_r) - \bar{\sigma}^\theta(\bar{V}^v_{\floor{r}}) \diff{B}_r \right\Vert^2 \right] \\
        &= 2t \int_0^t\mathbb{E}\left[ \left\Vert\bar{\mu}^\theta(V^v_r) - \bar{\mu}^\theta(\bar{V}^v_{\floor{r}})\right\Vert^2 \right]\diff{r} + 8\int_0^t\mathbb{E}\left[ \Vert \bar{\sigma}^\theta(V^v_r) - \bar{\sigma}^\theta(\bar{V}^v_{\floor{r}}) \Vert^2_F \right]\diff{r}\\
        &\leq 2tL \int_0^t \mathbb{E}\left[ \Vert V^v_r - \bar{V}^v_{\floor{r}}\Vert^2 \right]\diff{r} + 8L\int_0^t \mathbb{E}[\Vert V^v_r - \bar{V}^v_{\floor{r}} \Vert^2]\diff{r}\\
        &\leq 2L\max(2t,8)\int_0^t \mathbb{E}[\Vert V^v_r - \bar{V}^v_{\floor{r}} \Vert^2]\diff{r}.
    \end{align*}
    
    Applying the triangular inequality yields
    \begin{equation}
        G(t) \leq 2L\max(2t,8)\int_0^t \mathbb{E}[\Vert V^v_r\Vert^2] + \mathbb{E}[\Vert \bar{V}^v_{\floor{r}}\Vert^2] \diff{r},
    \end{equation}
    
    which is bounded by Lemma \ref{LemmaL2XV} and \ref{L2floor}. So it follows that $G(t)\in L^1([0,T])$ for any $t\in[0,T]$. Moreover, it clearly holds
    \begin{equation}
        \mathbb{E}[\Vert V^v_r - \bar{V}^v_{\floor{r}} \Vert^2] \leq 2\mathbb{E}[\Vert V^v_r - V^v_{\floor{r}}\Vert^2] + 2\mathbb{E}[\Vert V^v_{\floor{r}} - \bar{V}^v_{\floor{r}}\Vert^2].
    \end{equation}
    
    By Itô's isometry and the Cauchy-Schwarz inequality, for any $r\in [0,t]$ we have
    \begin{align*}
        \mathbb{E}[\Vert V^v_r - V^v_{\floor{r}}\Vert^2]&\leq 2\mathbb{E}\left[\left\Vert\int_{\floor{r}}^r\bar{\mu}^\theta(V^v_s)\diff{s} \right\Vert^2\right] + 2\mathbb{E}\left[\left\Vert\int_{\floor{r}}^r\bar{\sigma}^\theta(V^v_s)\diff{B}_s \right\Vert^2\right]\\
        &\leq 2(r-\floor{r})\int_{\floor{r}}^r \mathbb{E}[\Vert \bar{\mu}^\theta(V^v_s)\Vert^2]\diff{s} + 2\int_{\floor{r}}^r \mathbb{E}[\Vert \bar{\sigma}^{\theta}(V^v_s)\Vert^2_F]\diff{s}\\
        &\leq 2(r-\floor{r})dK_2 \int_{\floor{r}}^r \left(1+\mathbb{E}[\Vert V^v_s\Vert^2]\right)\diff{s} + 2d r K_2\int_{\floor{r}}^r \left(1 + \mathbb{E}[\Vert V^v_s\Vert^2]\right)\diff{s}\\
        &= 2K_2[(r-\floor{r})d + d r]\int_{\floor{r}}^r \left(1+ \mathbb{E}[\Vert V^v_s \Vert^2]\right) \diff{s}.
    \end{align*}
    
    By Lemma \ref{LemmaBoundSecMomV} we have that
    \begin{equation}\label{Bounds VrVfloor}
        \mathbb{E}[\Vert V^v_r - V^v_{\floor{r}}\Vert^2] \leq 2K_2[(r-\floor{r})d+dr]h(1+c_1\Vert v\Vert^2 + c_2\bar{d}^2)
    \end{equation}
    
    Using \eqref{Bounds VrVfloor} we get
    \begin{align*}
        G(t) &\leq 4L\max(2t,8)\left( \int_0^t G(r)\diff{r} + \int_0^t \mathbb{E}[\Vert V^v_r - V^v_{\floor{r}}\Vert^2] \diff{r} \right)\\
        &\leq 4L\max(2t,8)\left( \int_0^t G(r)\diff{r} + 4K_2\bar{d}^2 h T (1 +c_1\Vert v\Vert^2 + c_2 \bar{d}^2)\right).
    \end{align*}
    
    By Gronwall's inequality it follows
    \begin{equation}
        G(t) \leq 16 L\max(2T,8)K_2 \bar{d}^2 h T (1 +c_1\Vert v \Vert^2 + c_2\bar{d}^2)\exp(4\max(2T,8)LT).
    \end{equation}
    
    So we finally get that for all $t\in[0,T]$
    \begin{equation}
        G(t) \leq h(c_6 \Vert v \Vert^2 \bar{d}^2 + c_7 \bar{d}^4),
    \end{equation}
    
    where 
    \begin{equation*}
        c_6 := 16L\max(2T,8)c_1 K_2 T \exp(4\max(2T,8)LT )
    \end{equation*}
    
    and
    \begin{equation*}
        c_7 := 16L\max(2T,8)(1+c_2) K_2 T \exp(4\max(2T,8)LT).
    \end{equation*}
\end{proof}

\begin{proof}[Lemma \ref{SecMomVTild}]
    Define the function $G(t) := \mathbb{E}\left[\sup_{s\in[0,t]}\Vert \tilde{V}^\varepsilon_s \Vert^2\right]$ for $t\in[0,T]$. By the Cauchy-Schwarz inequality, Doob's maximal inequality, Itô's isometry, Assumption \ref{AssumpCoeffApprox} and \eqref{SquaredTriangular} it follows
    \begin{align*}
        G(t) &
        \leq 3\Vert v \Vert^2 + 3\mathbb{E}\left[\sup_{s\in[0,t]} \left\Vert \int_0^s \bar{\mu}_\varepsilon(\tilde{V}^\varepsilon_{\floor{r}},\theta)\diff{r} \right\Vert^2 \right] + 3\mathbb{E}\left[\sup_{s\in[0,t]} \left\Vert \int_0^s \bar{\sigma}_\varepsilon(\tilde{V}^\varepsilon_{\floor{r}},\theta) \diff{B}_r \right\Vert^2 \right]\\
        &\leq 3 \Vert v \Vert^2 + 3t \int_0^t \mathbb{E}\left[ \Vert \bar{\mu}_\varepsilon(\tilde{V}^\varepsilon_{\floor{s}}, \theta) \Vert^2\right] \diff{s} + 12 \int_0^t \mathbb{E}\left[\left\Vert \bar{\sigma}_\varepsilon(\tilde{V}^\varepsilon_{\floor{s}},\theta)\right\Vert^2_F
        \right] \diff{s} \\
        &\leq 3 \Vert v \Vert^2 + 9C^2 T^2 \bar{d}^{2p}  + 9C^2 T\int_0^t \mathbb{E}[\Vert \tilde{V}^\varepsilon_{\floor{s}}\Vert^2]\diff{s} + 9C^2 T^2\Vert \theta \Vert^2 \\ 
        &\phantom{1cm} + 36 C^2 T \bar{d}^{2p} + 36 C^2 \int_0^t \mathbb{E}[\Vert \tilde{V}^\varepsilon_{\floor{s}}\Vert^2]\diff{s} + 36C^2 T \Vert \theta\Vert^2,
    \end{align*}
    
    which is finite since $\tilde{V}^\varepsilon_{\floor{t}}\in L^2(\Omega,\mathcal{F},\mathbb{Q})$ for all $t\in[0,T]$. This can be shown inductively by similar arguments as in the proof of Lemma \ref{L2floor}. Therefore, it follows $G(t)\in L^1([0,T])$. Furthermore, we see 
    \begin{align*}
        G(t)&
        \leq 3\Vert v \Vert^2 + 36C^2 T(T+1) \Vert \theta \Vert^2 + 36 C^2 T(T+1)\bar{d}^{2p} + 3(C^2T + 12C^2) \int_0^t G(s)\diff{s}.
    \end{align*}
    
    By Gronwall's inequality it finally follows
    \begin{align*}
        G(t) &\leq \left(3\Vert v \Vert^2 + 36C^2 T(T+1) \Vert \theta \Vert^2 + 36 C^2 T(T+1)\bar{d}^{2p}\right)\exp(3(C^2T + 12C^2)T)\\
        &\leq c_{11}\Vert v \Vert^2 + c_{12}\Vert \theta \Vert^2 + c_{13} \bar{d}^{2p} 
    \end{align*}
    
    for 
    \begin{equation*}
        c_{11} := 3\exp(3(C^2T + 12C^2)T),\; c_{12} := 36 C^2 T(T+1) \exp(3(C^2T + 12C^2)T)
    \end{equation*}
    
    and
    \begin{equation*}
        c_{13} := 36 C^2 T(T+1)\exp(3(C^2T + 12C^2)T).
    \end{equation*}
\end{proof}

\begin{proof}[Lemma \ref{SecMomXTild}]
Define the function $G(t) := \mathbb{E}\left[\sup_{s\in[0,t]}\Vert \tilde{X}^\varepsilon_s \Vert^2\right]$ for $t\in[0,T]$. By Doob's maximal inequality, Itô's isometry, Assumption \ref{AssumpCoeffApprox} and \eqref{SquaredTriangular} it follows
\begin{align*}
    G(t)
    &\leq 2\Vert x \Vert^2 + 2\mathbb{E}\left[\sup_{s\in[0,t]}\left\Vert \int_0^s \sigma_\varepsilon(\tilde{X}^\varepsilon_{\floor{r}},\tilde{V}^\varepsilon_{\floor{r}},\theta)\diff{W}_r\right\Vert^2\right]\\
    & \leq 2 \Vert x \Vert^2 + 8 \mathbb{E}\left[\left\Vert\int_0^t  \sigma_\varepsilon(\tilde{X}^\varepsilon_{\floor{s}},\tilde{V}^\varepsilon_{\floor{s}},\theta)\diff{W_s}\right\Vert^2\right]\\
    & = 2\Vert x \Vert^2 + 8\int_0^t \mathbb{E}\left[\left\Vert \sigma_\varepsilon(\tilde{X}^\varepsilon_{\floor{s}},\tilde{V}^\varepsilon_{\floor{s}},\theta) \right\Vert^2_F\right]\diff{s}\\
    &\leq 2 \Vert x \Vert^2 + 8C^2\bigg( 4T \bar{d}^{2p}  + 4T \Vert \theta \Vert^2 + 4 \int_0^t \left( \mathbb{E}[\Vert \Tilde{X}^{\varepsilon}_{\floor{s}}\Vert^2] + \mathbb{E}[\Vert \Tilde{V}^v_{\floor{s}} \Vert^2] \right) \diff{s} \bigg).
\end{align*}

As in the proof of Lemma \ref{SecMomVTild} we get that $G\in L^1([0,T])$. Moreover, we get by Lemma \ref{SecMomVTild}
\begin{align*}
    G(t) \leq 2 &\Vert x \Vert^2 + 32 C^2 T c_{11} \Vert v \Vert^2 + 32 C^2 T (1 + c_{12}) \Vert \theta \Vert^2 \\ &+ 32 C^2 T (1 + c_{13})\bar{d}^{2p} + 32C^2 \int_0^t G(s)\diff{s}.
\end{align*}

So by Gronwall's inequality we finally get
\begin{align*}
    G(t) &\leq \bigg(2\Vert x \Vert^2 + 32 C^2 T c_{11} \Vert v \Vert^2 + 32 C^2 T (1 + c_{12}) \Vert \theta \Vert^2 + 32 C^2 T (1 + c_{13})\bar{d}^{2p}\bigg)\exp(32C^2 T)\\
    &= c_{14} \Vert x \Vert^2 + c_{15}\Vert v \Vert^2 + c_{16}\Vert \theta \Vert^2 + c_{17}\bar{d}^{2p}  
\end{align*}

for 
\begin{equation}
    c_{14}:=2 \exp(32 C^2 T),\; c_{15}:= 32C^2 T c_{11} \exp(32 C^2 T),
\end{equation}

and
\begin{equation}
    c_{16}:= 32C^2 T (1 + c_{12}) \exp(32 C^2 T),\; c_{17}:= 32 C^2 T (1 + c_{13})\exp(32 C^2 T).
\end{equation}
\end{proof}

\end{document}